%% file: main.tex
\documentclass[11pt]{article}

\input{packages}
\input{commands}

\title{Independence-Number Parameterized Space Complexity for Directed Connectivity Certificate}

\author[(1)]{Ho-Lin Chen \thanks{holinchen\imageat{}ntu\imagedot{}edu\imagedot{}tw}}

\author[(2)]{Tsun-Ming Cheung \thanks{bencheung\imageat{}iis\imagedot{}sinica\imagedot{}edu\imagedot{}tw}}

\author[(2)]{Peng-Ting Lin \thanks{ptlin\imageat{}iis\imagedot{}sinica\imagedot{}edu\imagedot{}tw}}

\author[(2)]{Meng-Tsung Tsai \thanks{mttsai\imageat{}iis\imagedot{}sinica\imagedot{}edu\imagedot{}tw}}

\affil[(1)]{Department of Electrical Engineering, National Taiwan University}
\affil[(2)]{Institute of Information Science,
Academia Sinica}

\begin{document}

\maketitle

\begin{abstract}
We study the space complexity of computing a sparse subgraph of a directed graph that certifies connectivity in the streaming and distributed models. Formally, for a directed graph $G=(V,A)$ and $k\in \mathbb{N}$, a \emph{$k$-node (resp. $k$-arc) strong connectivity certificate} is a subgraph $H=(V,A')\subseteq G$ such that for every pair of distinct nodes $s,t\in V$, the number of pairwise internally node-disjoint (resp. pairwise arc distjoint) paths from $s$ to $t$ in $H$ is at least $k$ or the corresponding number in $G$.

Similar notions for undirected graphs have been studied extensively in various streaming models 
(Cheriyan et al. [SIAM J. Comput. 1993]; Sun and Woodruff [APPROX/RANDOM 2015]; Ahn et al. [SODA 2012]; Assadi and Shah [SOSA 2023]) with several efficient algorithms known. 
In contrast, streaming algorithms for directed connectivity are scarce due to the inherent hardness of directed connectivity problems: even $s$--$t$ reachability for a single pair $(s,t)$ in a digraph has a high space lower bound in standard streaming models. Consequently, any streaming algorithm for strong connectivity certificates would inherit these lower bounds for general graphs. 

In light of the inherent hardness of directed connectivity problems, several prior work focused on restricted graph classes, showing that several problems that are hard in general become efficiently solvable when the input graph is a tournament (i.e., a directed complete graph) (Chakrabarti et al. [SODA 2020]; Baweja, Jia, and Woddruff [ITCS 2022]), or close to a tournament in edit distance (Ghosh and Kuchlous [ESA 2024]).

Extending this line of work, our main result shows, at a qualitative level, that the streaming complexity of computing strong connectivity certificates and related directed connectivity problems is parameterized by the independence number of the input graph, demonstrating a continuum of hardness for directed graph connectivity problems. 
Moreover, our parameterization by independence number, an extrinsic graph parameter not inherent to connectivity problems, appears novel in the streaming context.

Quantitatively, for an $n$-node graph with independence number $\alpha$, we give $p$-pass randomized algorithms that compute a $k$-node or $k$-arc strong connectivity certificate of size $O(\alpha n)$ using $\tilde{O}(k^{1-1/p}\alpha n^{1+1/p})$ space in the insertion-only model, and $\tilde{O}(k^{1-O(1/\sqrt{p})}\alpha n^{1+O(1/\sqrt{p})})$ space in the turnstile model. In particular, the algorithm can be made deterministic for $k=1$ with the same pass and space usage in both models.

For the lower bound, we show that even when $k=1$, any $p$-pass streaming algorithm for a 1-node or 1-arc strong connectivity certificate in the insertion-only model requires $\Omega(\alpha n/p)$ space, which immediately implies the same bounds for all larger $k$ and in the turnstile model. 
To derive these lower bounds, we introduce the gadget-embedding tournament framework to construct direct-sum-type hard instances with a prescribed independence number, which is applicable to lower-bounding a wide range of directed graph problems.

Finally, in the distributed setting, we present protocols for computing directed strong connectivity certificates and strongly connected component (SCC) decompositions, with round complexity parameterized by the independence number.
\end{abstract}

\thispagestyle{empty}

\addtocontents{toc}{\protect\thispagestyle{empty}}
\tableofcontents
\thispagestyle{empty}
\setcounter{page}{0}

\newpage
\input{intro}
\input{notation}
\input{ConnCerts}

\input{algo}
\input{hardness}
\input{apps}
\input{distributed}

\bibliographystyle{alpha}
\bibliography{ref}

\appendix
\crefalias{section}{appendix}
\input{appendix}

\end{document}

%% file: packages.tex
\usepackage{amsmath,amssymb,amsthm}
\usepackage{thmtools}
\usepackage{mathtools}
\usepackage{graphicx} 
\usepackage[margin=1in]{geometry}
\usepackage[table]{xcolor}
\definecolor{darkgreen}{rgb}{0,0.5,0}
\usepackage[colorlinks=true, citecolor=darkgreen,backref=page]{hyperref}

\usepackage[textsize=footnotesize]{todonotes}
\usepackage{setspace}
\usepackage{microtype}
\usepackage{enumitem}
\usepackage{subcaption}
\usetikzlibrary{backgrounds,calc}
\usepackage{booktabs}
\usepackage{multirow} 
\usepackage{hhline}
\usepackage{authblk}
\usepackage[ruled,vlined]{algorithm2e}

\usepackage[capitalise]{cleveref}

%% file: commands.tex
\definecolor{purple}{HTML}{1A1AB3}
\colorlet{darkgreen}{green!50!black}

\DeclareEmphSequence{\color{purple}\itshape}

\AtBeginEnvironment{algorithm}{%
  \DeclareEmphSequence{\itshape}
}
\AtEndEnvironment{algorithm}{%
  \DeclareEmphSequence{\color{purple}\itshape}
}

\AtBeginEnvironment{thebibliography}{%
  \DeclareEmphSequence{\itshape}
}
\AtEndEnvironment{thebibliography}{%
  \DeclareEmphSequence{\color{purple}\itshape}
}

\theoremstyle{plain}
\newtheorem{theorem}{Theorem}[section]
\newtheorem{corollary}[theorem]{Corollary}
\newtheorem{proposition}[theorem]{Proposition}
\newtheorem{lemma}[theorem]{Lemma}
\newtheorem{fact}[theorem]{Fact}
\newtheorem{claim}[theorem]{Claim}
\newtheorem{observation}[theorem]{Observation}

\theoremstyle{definition}
\newtheorem{definition}[theorem]{Definition}

\crefname{claim}{Claim}{Claims}

\def\DISJ{\textup{\textsf{DISJ}}}
\def\AND{\textup{\textsf{AND}}}
\def\ind{\textup{\textsf{ind}}}
\newcommand{\set}[1]{\{#1\}}
\newcommand{\setB}[1]{\left\{#1\right\}}
\newcommand{\abs}[1]{|#1|}

\def\dbN{\mathbb{N}}
\def\Ex{\mathbb{E}}

\def\apx{\textup{\textsf{-apx}}}
\def\MaxW{\textup{\textsf{MaxW-}}}
\def\MinW{\textup{\textsf{MinW-}}}

\def\CYCLE{\textup{\textsf{CYCLE}}}

\def\SCC{\textup{\textsf{SCC}}}
\def\REACH{\textup{\textsf{REACH}}}
\def\TRIBES{\textup{\textsf{TRIBES}}}
\def\MSSS{\textup{\textsf{MSSS}}}
\def\HamCycle{\textup{\textsf{HamCycle}}}
\def\HamPath{\textup{\textsf{HamPath}}}
\def\PairReach{\textup{\textsf{PairReach}}}
\def\OutBranch{\textup{\textsf{OutBranch}}}
\def\InBranch{\textup{\textsf{InBranch}}}
\def\MaxColorTree{\textup{\textsf{MaxColourTree}}}
\def\ConnCert{\textup{\textsf{ConnCert}}}
\def\AConnCert{\textup{\textsf{AConnCert}}}
\def\NConnCert{\textup{\textsf{NConnCert}}}
\def\TopoSort{\textup{\textsf{TopoSort}}}
\def\MCC{\textup{\textsf{MCC}}}
\def\SAT{\textup{\textsf{2SAT}}}
\def\StrBridge{\textup{\textsf{StrongBridge}}}
\def\TranClos{\textup{\textsf{TranClosure}}}
\def\DisjBranch{\textup{\textsf{ArcDisjBranch}}}
\def\indBranch{\textup{\textsf{IndBranch}}}
\def\DistDom{\textup{\textsf{DistDom}}}

\def\calX{\mathcal{X}}
\def\calY{\mathcal{Y}}
\def\IC{\textup{\textsf{IC}}}

\def\Rcc{\textup{\textsf{R}}}
\def\arc{\curvearrowright}

\def\polylog{\textup{\textrm{polylog}}}

\newcommand{\plog}{\operatorname{polylog}}
\newcommand{\poly}{\operatorname{poly}}

\def\tdO{\tilde{O}}

\def\imageat{\includegraphics[scale=0.0234]{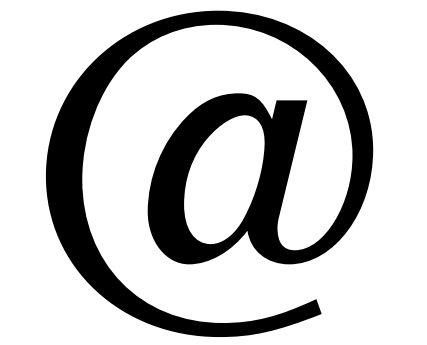}}
\def\imagedot{\includegraphics[scale=0.0234]{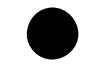}}

\def\reach{\rightsquigarrow}

%% file: intro.tex
\section{Introduction}\label{sec:intro}

We study sparse certificates for directed graph connectivity in streaming models. All graphs throughout this paper are simple unless stated otherwise. 
There are two notions of connectivity in graphs, namely \emph{node-connectivity} and \emph{arc-connectivity}, which in turn lead to two corresponding types of connectivity certificates.
For any digraph $G = (V, A)$ and a pair of distinct nodes $s,t \in V$, let $\kappa_{st}(G)$ (resp. $\lambda_{st}(G)$) be the maximum number of pairwise internally node-disjoint paths (resp. pairwise arc-disjoint paths) from $s$ to $t$. A \emph{$k$-node (resp. $k$-arc) strong connectivity certificate} of a digraph $G = (V, A)$ is defined to be a subgraph $H = (V, A')$ of $G$ with $A' \subseteq A$ such that for all pairs of distinct nodes $s, t \in V$ 
\[
    \kappa_{st}(H) \ge \min\{k, \kappa_{st}(G)\} \qquad\mbox{ (resp. } \lambda_{st}(H) \ge \min\{k, \lambda_{st}(G)\} \mbox{)}.
\]
When $k=1$, the notions of $1$-arc and $1$-node strong connectivity certificate coincide, however this equivalence does not hold for $k\ge 2$. The notion of strong connectivity certificates was first introduced by Nagamochi and Ibaraki \cite{NI92} for undirected graphs, and various notions of connectivity certificates for undirected graphs have been broadly studied in the annotated model ~\cite{GS24}, turnstile model~\cite{AhnGM12,GuhaMT15,AssadiS23} and insertion-only model~\cite{CheriyanKT93,EppsteinGIN97,SunW15}. 

Directed connectivity certificates support a broader range of applications since computing an undirected certificate is reducible to computing a directed one.  
The $k$-arc strong connectivity certificate enables efficient topological sorting, identification of strongly connected components, and applications to other computational problems, such as computing a minimum chain cover, finding an approximately minimum spanning strong subgraph, identifying $k$ arc-disjoint out-branchings, and detecting strong bridges.
The $k$-node strong connectivity certificate, on the other hand, allows us to test $k$-connectivity directly and to identify small node cuts and balanced separators. These applications will be discussed in detail in~\cref{sec:application_intro,sec:app}.

While directed certificates enable more applications than their undirected counterparts, obtaining them in the streaming model is significantly harder.
It is known that $s$--$t$ reachability problem in a digraph, where we are given distinct nodes $s,t\in V$ and must decide whether there is a directed path from $s$ to $t$, requires $\Omega(n^{2-o(1)})$ space\footnote{Following standard streaming conventions~(e.g. \cite{AY19,ChangFHT20,CC19}), we measure space lower bounds in bits and space upper bounds in words, with each word consisting of $O(\log n)$ bits.} for any $o(\sqrt{\log n})$-pass streaming algorithm~\cite{CKP+21}.
Since $G$ contains a directed path from $s$ to $t$ if and only if $\lambda_{st}(G)\ge 1$ or, equivalently, $\kappa_{st}(G)\ge 1$. It follows that, for any $k\ge 1$, computing a $k$-arc or a $k$-node strong connectivity certificate inherits the lower bound of the reachability problem. In contrast, for digraphs with independence number at most $\alpha$, Chen, Lin, and Tsai \cite{CLT25} showed that a $1$-arc strong connectivity certificate can be obtained via the $O(1/\varepsilon)$-pass $O(\alpha n^{1+\varepsilon})$-space algorithm for any $\varepsilon>0$ in the insertion-only streaming model. 
Crucially, this parameterized upper bound \cite{CLT25} aligns with the observation that in many previous hardness results \cite{CKP+21,GO16,BJW22,CGM+20}, the hard instances are layered graphs, which have linear independence number.

More broadly, there has been an emerging literature in studying graph streaming problems on restricted classes of digraphs.
One notable class of digraphs that has received considerable attention recently is the class of \emph{tournament} graphs, in which each pair of nodes shares exactly one arc, or equivalently digraphs with independence number 1.

Previous work on graph streaming problems for tournaments has been sporadic \cite{CFR10,BJW22}.
The first systematic study of the tractability of graph streaming problems on tournament graphs was initiated by Chakrabarti, Ghosh, McGregor, and Vorotnikova \cite{CGM+20}. They showed that several generally hard problems like topological sorting and acyclicity testing admit one-pass semi-streaming algorithms. 
Ghosh and Kuchlous \cite{GK24} built on this line of work and improved the streaming upper and lower bounds for tournaments, both in terms of bound optimality and the understanding of additional streaming problems. \cite{GK24} further showed that for digraphs $d$-close to tournaments, quantified by the number of arcs needed to be added or deleted to form a tournament, certain upper and lower bound results follow from the tournament case with an additive increase of $d$. 

Motivated by tractability results for digraphs with the minimal independence number (i.e., $1$), we establish upper and lower bounds in terms of the independence number for an extensive list of graph streaming problems, fully generalizing many prior results for tournaments. 
As a more natural measure of closeness to a tournament than the one proposed in \cite{GK24}, our parameterized bounds by independence number can be viewed as a complete interpolation result between tournaments and general graphs. 

The parameterized complexity framework to studying provably hard problems is not unfamiliar in the streaming literature. 
Examples of problems studied under the parameterized approach include vertex cover and maximum matching \cite{Chitnis2014,CC19,CCE+16}, treewidth \cite{CC19}, edge-dominating set \cite{Fafianie2014}, min-ones satisfiability \cite{ABB+19}, diameter and connectivity for undirected graphs \cite{OvL24}. 
Crucially, many of these studies follow the fixed-parameter tractability framework, in which the parameters under consideration are typically endemic to the problems, for example, the threshold for the parametrized vertex cover problem. Notable exceptions include the algorithms for approximate maximum matching based on arboricity in \cite{CCE+16} and topological sorting based on the independence number in \cite{CLT25}. 
Our work identifies an extrinsic parameter (independence number) for the streaming problems of interest, devises parameterized algorithms -- a less-studied direction, and derives parameterized streaming lower bounds -- an even more under-explored avenue. 

\subsection{Main Results}\label{sec:main-result}
Our main result begins with a graph-theoretic lemma. We show that a $k$-node strong connectivity certificate is also $k$-arc strong connectivity certificate:
\begin{lemma}\label{thm:node-vs-arc}
    For $k \in \dbN$ and a digraph $G$, if $H$ is a $k$-node strong connectivity certificate of $G$, then $H$ is also a $k$-arc strong connectivity certificate of $G$.
\end{lemma}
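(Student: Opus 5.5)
The plan is to argue by contradiction through the arc version of Menger's theorem, and then turn the resulting small arc cut of $H$ into a small node cut that separates the two endpoints of a single arc of $G$ missing from $H$. The $k$-node certificate property is then to be applied to that pair.

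First, fix distinct $s,t$ and suppose $\lambda_{st}(H) < \min\{k,\lambda_{st}(G)\}$. Let $f := \lambda_{st}(H)$. By Menger, there is an arc set $F \subseteq A'$ with $|F| = f$ whose removal kills all $s$--$t$ paths in $H$. Let $S$ be the set of nodes reachable from $s$ in $H - F$. Then $s \in S$, $t \notin S$, and every arc of $H$ leaving $S$ lies in $F$, so $|\delta^+_H(S)| \le f$. On the other hand, every $s$--$t$ path in $G$ crosses $\delta^+_G(S)$, so $|\delta^+_G(S)| \ge \lambda_{st}(G) \ge f+1$. Hence there is an arc $(u,v)\in A\setminus A'$ with $u \in S$ and $v \notin S$.

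Second, I bound $\kappa_{uv}(H)$. Any $u$--$v$ path in $H$ uses some arc $(x,y)\in\delta^+_H(S)$. Two internally node-disjoint $u$--$v$ paths cannot use the same such arc. If they did, they would share $x$ or $y$ as an internal node unless $(x,y)=(u,v)$, and $(u,v)\notin A'$. Hence $\kappa_{uv}(H) \le |\delta^+_H(S)| \le f < k$. Since $H$ is a $k$-node certificate, this forces $\kappa_{uv}(G) \le \kappa_{uv}(H) \le f$.

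Third, and this is where the contradiction must come from, I need $\kappa_{uv}(G) \ge f+1$ for a suitably chosen missing crossing arc $(u,v)$. The arc $(u,v)$ is itself one path. Since $|\delta^+_G(S)| \ge f+1$, I would try to route $f$ further internally disjoint $u$--$v$ paths through the other crossing arcs, using the $\lambda_{st}(G)\ge f+1$ arc-disjoint $s$--$t$ paths. Concretely, if $\kappa_{uv}(G)\le f$, node-Menger in $G-(u,v)$ gives a set $X$ with $|X|\le f-1$ separating $u$ from $v$. I would then choose $F$ and $(u,v)$ extremally, for example taking $S$ inclusion-minimal among minimum $s$--$t$ arc cuts of $H$, and among missing crossing arcs picking one whose tail is closest to $s$. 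The aim is to show that $X$ together with $(u,v)$ yields an $s$--$t$ arc separator in $G$ of size at most $f$, contradicting $\lambda_{st}(G)\ge f+1$. Iterating over all missing crossing arcs may be needed, replacing $S$ by $S$ plus the part of $V\setminus S$ that $u$ reaches in $G-(u,v)-X$ and repeating on a strictly smaller cut structure.

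This third step is the main obstacle. Node-disjointness in $G$ between the specific pair $u,v$ is not directly controlled by arc-disjointness between $s$ and $t$. So the crux is an uncrossing or induction argument, for instance induction on $|A\setminus A'|$ or on $|S|$, that converts a small node separator for some missing arc into a small $s$--$t$ arc cut in $G$. I would try the induction on the number of missing crossing arcs first: add them back to $H$ one at a time, and show that each addition preserves the property $\lambda_{st}(\cdot)\le f$, because the added arc's endpoints are already separated by at most $f-1$ nodes.
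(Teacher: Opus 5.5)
Your Steps 1 and 2 are correct, but the proof is not finished: Step 3 is left open, and the route you sketch for it is neither needed nor sound as stated. You aim for $\kappa_{uv}(G)\ge f+1$ by rerouting the arc-disjoint $s$--$t$ paths, or by turning a node set $X$ that separates $u$ from $v$ in $G-(u,v)$ into an $s$--$t$ arc cut. There is no reason for $X$ together with $(u,v)$ to separate $s$ from $t$, since $s$--$t$ paths of $G$ can avoid $u$, $v$ and $X$ entirely, and none of the extremal choices you list addresses this. Your closing induction idea also has its justification backwards. Adding an arc is harmless to every cut of size less than $k$ when its endpoints are \emph{highly} connected in $H$, not when they are separated by few nodes.

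The missing observation is one line and uses only $(u,v)\in A\setminus A'$. The single arc $(u,v)$ is a $u$--$v$ path with no internal nodes and is not in $H$. Adding it to a maximum family of internally node-disjoint $u$--$v$ paths in $H$ gives $\kappa_{uv}(G)\ge\kappa_{uv}(H)+1$. This immediately contradicts the inequality $\kappa_{uv}(G)\le\kappa_{uv}(H)$ you derived in Step 2. So you only need $\kappa_{uv}(G)>\kappa_{uv}(H)$, never $\kappa_{uv}(G)\ge f+1$.

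Put differently, for every arc $(u,v)$ of $G$ missing from $H$, the certificate property forces $\kappa_{uv}(H)\ge k$; otherwise $\min\set{k,\kappa_{uv}(G)}\ge\kappa_{uv}(H)+1$. Hence $\lambda_{uv}(H)\ge k$, so no missing arc can cross an arc cut of $H$ of size less than $k$. This is exactly the fact the paper's proof rests on. The paper takes a minimal $H_*$ with $H\subseteq H_*\subseteq G$ that is a $k$-arc certificate and shows every arc of $H_*\setminus H$ can be removed. Your cut-based setup, completed with this observation, gives a slightly more direct proof working with a single violating pair $(s,t)$ and one crossing missing arc.
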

This result is not merely a trivial consequence of the implication from $k$-node-connectivity to $k$-arc-connectivity, since a strong connectivity certificate must correctly certify both the positive and negative cases of whether the connectivity exceeds the threshold $k$. We elaborate on this point further in \cref{sec:node-vs-arc}.

This implication is instrumental to our algorithm when a $k$-node strong connectivity certificate is easier to compute than a $k$-arc strong connectivity certificate, which is indeed the case here, as illustrated by our alternative streaming algorithm for $k$-arc strong connectivity certificates in \cref{sec:karc}.
With this lemma, we can restrict our attention to $k$-node strong connectivity certificates when designing algorithms, even for applications that require $k$-arc strong connectivity certificates. For this reason, from now on we may simple use a \emph{$k$-strong connectivity certificate} to refer to a $k$-node strong connectivity certificate for general-purpose applications.

For 1-connectivity, we extend the \emph{insertion-only} streaming algorithm of~\cite{CLT25} to $1$-strong connectivity certificate algorithms (\cref{algo:1conn}) in both the insertion-only and turnstile model. Moreover, we show that the resulting certificate has arboricity at most $\alpha+2$. 

\begin{theorem}\label{thm:main-1conn}
Let $G=(V,A)$ be an $n$-node digraph with independence number at most $\alpha$. For $p\in \dbN$, there exist $p$-pass deterministic algorithm for computing a $1$-strong connectivity certificate of $G$ with arboricity at most $\alpha+2$ using 
\begin{enumerate}[label=\textup{(\alph*)}]
    \item $O(\alpha n^{1+1/p})$ space in the insertion-only model;
    \item $O(\alpha n^{1+O(1/\sqrt{p}) })$ space in the turnstile model.
\end{enumerate}
\end{theorem}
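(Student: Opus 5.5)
We would follow the strategy of \cite{CLT25} and compute a certificate built from a small out-branching structure plus a sparse "back-arc" set, using the structure of graphs with bounded independence number. The key graph-theoretic fact is a Gallai--Milgram type bound. Every digraph with independence number at most $\alpha$ has a path cover by at most $\alpha$ directed paths. More useful for streaming is the following observation: if $S\subseteq V$ is a set of nodes, then for any node $v\notin S$ with no arcs to or from some $\alpha$ chosen nodes, independence forces adjacency. In particular, in any set of $\alpha+1$ nodes some pair is joined by an arc.

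\textbf{Plan.} The algorithm contracts the graph round by round, mimicking a Borůvka/Kosaraju hybrid.
\begin{enumerate}[label=\textup{(\arabic*)}]
\item In each pass we maintain a partition of $V$ into groups. Each group is certified strongly connected by stored arcs, or is a chain in a partial topological order.
\item For each node we sample, deterministically, the first $\alpha n^{1/p}$ incident arcs to distinct groups. In the insertion-only model this takes $O(\alpha n^{1+1/p})$ space per pass.
\item The bounded independence number guarantees that among any $\alpha+1$ groups, representatives are adjacent. So storing arcs to $\alpha n^{1/p}$ groups suffices to either merge groups into SCCs or to place a node relative to all but a $1/n^{1/p}$ fraction of groups.
\item The number of unresolved groups therefore shrinks by a factor $n^{1/p}$ per pass, and after $p$ passes everything is resolved.
\end{enumerate}
The certificate keeps, for each SCC, an in-branching and an out-branching from a root, which is $2$ forests. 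It also keeps a set of inter-component arcs witnessing reachability between SCCs. By the independence argument, this inter-component set decomposes into $\alpha$ forests, since the condensation is a DAG whose reachability is certified by a path cover of at most $\alpha$ chains. Summing gives arboricity at most $\alpha+2$.

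\textbf{Correctness.} We argue reachability preservation in two parts. Within an SCC, the in- and out-branchings give full mutual reachability. Between SCCs, the condensation of $G$ has independence number at most $\alpha$. Hence by Gallai--Milgram it has a cover by at most $\alpha$ paths, and each consecutive pair in a path corresponds to an original arc. It remains to show that any $u\reach v$ in $G$ with $u,v$ in different SCCs is realized by composing the chains. Reachability in the condensation is a total-ish order on each chain, and cross-chain reachability must be witnessed. Here we keep, for each SCC and each chain, the earliest arc into that chain. This adds at most $\alpha$ per node, but these arcs are absorbed into the $\alpha$ forests.

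\textbf{Turnstile model.} Deterministic first-$t$ sampling fails under deletions. We would instead use deterministic sparse recovery ($t$-sparse recovery sketches via Vandermonde/syndrome decoding) to recover arcs from a node into groups. These sketches only succeed when the number of surviving arcs is small. So we would run a two-level recursion: $\sqrt{p}$ stages, each using $\sqrt{p}$ passes, to first shrink the degree into unresolved groups before recovery. This yields the $n^{O(1/\sqrt p)}$ overhead.

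\textbf{Main obstacle.} We expect the hardest step to be the progress lemma: storing $\alpha n^{1/p}$ arcs per node guarantees a factor-$n^{1/p}$ reduction in unresolved structure per pass. This requires a careful potential argument using the independence number on the current contracted graph.
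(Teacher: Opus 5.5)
Your plan has a genuine gap at exactly the step you flag: the progress lemma is unproved, and the ingredients you offer cannot prove it.

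The independence bound says only that among any $\alpha+1$ nodes \emph{some pair} is adjacent. It says nothing about which groups a fixed node is adjacent to, let alone reaches, so it cannot justify ``placing a node relative to all but a $1/n^{1/p}$ fraction of groups.''

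More seriously, step (2) keeps the first $\alpha n^{1/p}$ arcs to distinct groups \emph{in stream order}, and stream order has nothing to do with reachability. Take the transitive tournament with arcs $v_i\to v_j$ for $i<j$, and let the stream list $v_1$'s arcs to late nodes first. Then the unique arc into $v_2$ is discarded, and nothing stored dominates it.

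Your correctness paragraph avoids this only by assuming the SCCs of $G$ and a Gallai--Milgram path cover of the whole condensation are already known. That is precisely the reachability information a few-pass algorithm lacks, so as a description of the algorithm it is circular. Its offline content is correct: SCC in/out-branchings plus, per node and per chain, only the arc to the earliest out-neighbour on that chain. This is the paper's transitive-closure-preserving pruning. The turnstile sketch inherits the gap, since its ``degree-shrinking'' stage again presupposes the progress lemma.

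The missing idea is to obtain the chains recursively and to make progress by construction rather than by a potential argument.
\begin{enumerate}
\item Fix an arbitrary equipartition of $V$ into $b=n^{1/p}$ parts. Recursively compute certificates $H_i$ of $G[V_i]$; the deepest level just stores its subgraphs in one pass, costing $n^2/b^{p-1}$ in total.
\item Since $G[V_i]$ inherits independence number at most $\alpha$, each $H_i$ has an offline-computable chain cover with at most $\alpha$ chains.
\item One further pass records, for every $x\notin V_i$ and every chain of $H_i$, the arc from $x$ to its \emph{first} out-neighbour in the chain order. This is $O(\alpha bn)$ arcs, i.e.\ $\alpha n^{1/p}$ per node. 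That is where your per-node budget genuinely comes from: one arc per (sibling part, chain) pair, chosen by chain order, not by stream order.
\item Any $s$--$t$ path in $G$ splits into within-part blocks joined by crossing arcs $x\to y$. Replace each $x\to y$ by the recorded arc $x\to u$ followed by a path $u\reach y$ inside $H_i$. This shows $W\cup\bigcup_i H_i$ has the same transitive closure as $G$.
\item Pruning then restores arboricity $\alpha+2$, which keeps $\sum_i|A(H_i)|=O(\alpha n)$ at the next level.
\end{enumerate}
The space recurrence solves to $\max\{n^2/b^{p-1},\alpha bn\}=O(\alpha n^{1+1/p})$.

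In the turnstile model the only change is that ``first out-neighbour on a chain'' becomes minimum selection under deletions. A $q$-pass Munro--Paterson multi-ary search handles this with $n^{1/q}$ counters per (node, chain) pair. With recursion depth $d$ and $p=dq+1$ this gives $O(\alpha n^{1+O(1/\sqrt p)})$, and no sparse recovery is needed.
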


Building on the $1$-strong connectivity certificate algorithm, our $k$-strong connectivity certificate algorithm (\cref{algo:kconn}) uses it as a building block in combination with the sampling technique of Guha, McGregor, and Tench~\cite{GuhaMT15} and the refinement by Assadi and Shah~\cite{AssadiS23} for undirected $k$-node connectivity certificates.
In more detail, the $k$-strong connectivity certificate algorithm samples sufficiently many node sets $V_1,\ldots,V_r \subseteq V(G)$ and computes $1$-strong connectivity certificates for the subgraph induced by each set. With a suitable sampling procedure, the union of these $r$ individual $1$-strong connectivity certificates forms a $k$-strong connectivity certificate. 
The $k$-strong connectivity certificate algorithm inherits the dependence on $\alpha$ in its space complexity, while the connectivity threshold $k$ appears in the space usage but not the number of passes.

\begin{theorem}\label{thm:main-kconn}
Let $G=(V,A)$ be an $n$-node digraph with independence number at most $\alpha$. For $p\in \dbN$ and $k\geq 2$, there exist $p$-pass randomized algorithm such that with probability $1 - 1/n^{\Omega(1)}$, a $k$-strong connectivity certificate of $G$ of size $O(k \alpha n\log n)$ is computed using:
\begin{enumerate}[label=\textup{(\alph*)}]
    \item $O(k^{1-1/p}\alpha n^{1+1/p}\log n)$ space in the insertion-only model; and
    \item $O(k^{1- O(1/\sqrt{p})}\alpha n^{1+ O(1/\sqrt{p}) }\log n)$ space in the turnstile model.
\end{enumerate}
\end{theorem}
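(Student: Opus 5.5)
We plan to prove \cref{thm:main-kconn} by reducing to the $1$-strong connectivity certificate algorithm of \cref{thm:main-1conn}, following the vertex-sampling approach of Guha, McGregor, and Tench~\cite{GuhaMT15} as refined by Assadi and Shah~\cite{AssadiS23}. Concretely, we sample $r = \Theta(k\log n)$ node sets $V_1,\dots,V_r\subseteq V$, each obtained by including every node independently with probability $1/k$ (or a sampling scheme of comparable quality), run the $1$-certificate algorithm on each induced subgraph $G[V_i]$ in parallel within the same $p$ passes, and output $H=\bigcup_i H_i$. Since induced subgraphs of $G$ have independence number at most $\alpha$, each $H_i$ has arboricity at most $\alpha+2$ and hence $O(\alpha|V_i|)$ arcs; with $\Ex|V_i| = n/k$ and a Chernoff bound, $|H| = O(r\alpha n/k) = O(k\alpha n\log n)$ w.h.p. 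Arcs of the stream are routed to every instance $i$ with both endpoints in $V_i$; in the turnstile model deletions are routed the same way, so linear-sketch-based instances remain valid.

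The correctness argument is the core. Fix distinct $s,t$ and let $k' = \min\{k,\kappa_{st}(G)\}$. Suppose for contradiction $\kappa_{st}(H) < k'$; by Menger's theorem there is a separator $S\subseteq V\setminus\{s,t\}$ with $|S|\le k'-1$ such that $H-S$ has no $s$--$t$ path (if $(s,t)\in A$ we must separately ensure the arc is kept; we note some $V_i$ contains both $s,t$ and then $H_i$ preserves reachability from $s$ to $t$ in $G[V_i]$ via the direct arc, and we handle direct arcs by counting them appropriately in Menger). Since $\kappa_{st}(G)\ge k'>|S|$, $G-S$ has an $s$--$t$ path, and in fact $\kappa_{st}(G-S)\ge 1$. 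The key step: with probability $\Omega(k^{-2}\cdot(1-1/k)^{k})=\Omega(1/k^2)$ a given $V_i$ contains $s,t$ and avoids $S$; but we also need $V_i$ to contain an $s$--$t$ path of $G-S$, which is not guaranteed by a single path since paths may be long. This is exactly the obstacle the Assadi--Shah refinement addresses; we would instead argue via the certificate property: if $V_i\supseteq\{s,t\}$, $V_i\cap S=\emptyset$, then since $H_i$ preserves reachability in $G[V_i]$, it suffices that $s$ reaches $t$ in $G[V_i]$. To avoid needing whole paths, we use an inductive/union-bound-over-pairs structure: it suffices to show that for every arc $(u,v)\in A$ with $u,v\notin S$, $u$ reaches $v$ in $H-S$; then the path in $G-S$ lifts to $H-S$. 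For a fixed arc $(u,v)$ and fixed $S$ of size $<k$, the probability that some $V_i$ contains $u,v$ and avoids $S$ is $1-(1-\Omega(1/k^2))^r$, so the natural choice is $r=\Theta(k^2\log n)$ with sampling probability $1/k$; to attain the stated size $O(k\alpha n\log n)$ we follow~\cite{AssadiS23} and union-bound not over all $S$ but over arcs $(u,v)$ only, exploiting that we require \emph{one} good $V_i$ per pair and per separator via the argument that arcs of $G$ not in $H$ are $k$-connected within $H$ (the undirected analysis transfers since it only uses reachability within sampled induced subgraphs).

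Finally, the space bound: each instance uses $O(\alpha |V_i|^{1+1/p})$ space in insertion-only; summing, $r\cdot\alpha (n/k)^{1+1/p} = O(k\log n\cdot \alpha n^{1+1/p}k^{-1-1/p}) $ times the appropriate factor, giving $O(k^{1-1/p}\alpha n^{1+1/p}\log n)$ when $r=\Theta(k\log n)$ with the $k^2$ correction absorbed by the refined sampling; analogously for turnstile with exponent $O(1/\sqrt p)$. By \cref{thm:node-vs-arc}, the node certificate is also an arc certificate. The main obstacle is the probabilistic covering step: showing that $\Theta(k\log n)$ samples suffice so that, for every small separator $S$ and every arc avoiding $S$, some sample contains the arc and misses $S$; we would first try the direct union bound over the $n^2$ arcs and rely on the Assadi--Shah argument to avoid enumerating separators.
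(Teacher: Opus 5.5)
Your proposal has a genuine gap at the step you yourself flag as the main obstacle. The parameter bookkeeping around that step is also wrong.

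\textbf{Sample count.} $r=\Theta(k\log n)$ samples at rate $1/k$ cannot work. Take a transitive tournament ($\alpha=1$). Each arc $(v_i,v_{i+1})$ is the only $v_i$--$v_{i+1}$ path, so it must lie in every certificate. It can enter $H=\bigcup_i H_i$ only through a sample containing both endpoints, and it misses every sample with probability $(1-k^{-2})^{r}=n^{-\Theta(1/k)}$. So once $k$ exceeds the hidden constant, some of these $n-1$ arcs are lost with high probability, and most of them once $k\gg\log n$.

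Your arithmetic also runs backwards. With $r=\Theta(k\log n)$ you get size $O(\alpha n\log n)$ and space $O(k^{-1/p}\alpha n^{1+1/p}\log n)$. The stated bounds are exactly what $r=\Theta(k^2\log n)$ and $\rho=1/k$ give: $r\cdot O(\alpha n/k)=O(k\alpha n\log n)$ arcs and $r\cdot O(\alpha(n/k)^{1+1/p})=O(k^{1-1/p}\alpha n^{1+1/p}\log n)$ space. There is no ``$k^2$ correction'' to absorb; the paper simply takes $r=192\lambda\rho^{-2}\log n$.

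\textbf{Missing correctness argument.} More seriously, you never show that $\Theta(k^2\log n)$ samples suffice. Your target condition is fine: for every arc $(u,v)$ and every $S$ with $|S|<k$, $u$ reaches $v$ in $H-S$. Your way of proving it is not. A single sample contains $u,v$ and misses $S$ with probability $\Theta(k^{-2})$, so with $r=\Theta(k^2\log n)$ each fixed triple $(u,v,S)$ fails with probability $n^{-\Theta(1)}$. A union bound over the roughly $n^{k-1}$ sets $S$ then forces $r=\Theta(k^3\log n)$, losing a factor $k$ in both size and space. Invoking Assadi--Shah to ``avoid enumerating separators'' does not supply an argument.

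The paper still union-bounds over all $X$ with $|X|\le k-1$, but it drives the failure probability per $X$ down to $n^{-\Omega(k)}$ using a dichotomy at threshold $2k$:
\begin{itemize}
\item If $\kappa_{ab}(G)<2k$, fix a single node cut $W$ of $G-(a,b)$ with $|W|\le 2k-2$. Any sample containing $a,b$ and avoiding $W$ (probability at least $\rho^2/16$) forces $(a,b)\in Q_i$. Since only one event per arc is needed, failure probability $n^{-\Omega(1)}$ suffices.
\item If $\kappa_{ab}(G)\ge 2k$, then for every $X$ the graph $G-X$ still has $k$ internally disjoint $a$--$b$ paths of length at least $2$. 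Split each into first arc, middle subpath and last arc. Conditioned on which samples meet $\{a,b\}\cup X$, the survival events of different paths are independent because the paths are node-disjoint. Hence with probability $1-n^{-\Omega(k)}$ more than $2k/3$ paths survive in each of the three parts, so some whole path survives.
\end{itemize}
``Survival'' here requires the covering sample to exclude $a$ (resp.\ $b$) unless it is an endpoint of the arc being covered. This guarantees the spliced replacement path avoids the arc $(a,b)$ itself, which is exactly the issue you left as ``counting direct arcs appropriately.''

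Finally, a Menger argument shows that these two properties, holding for all arcs, imply the certificate property. A too-small cut in $Q$ (after removing $(s,t)$ if present) must be crossed in $G$ by some arc $(u,v)\notin Q$. By the first property $\kappa_{uv}(G)\ge 2k$, and then by the second $\kappa_{uv}(Q)\ge k$, which contradicts the cut.
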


Moreover, if the input digraph is guaranteed to be $k$-arc-strong, that is, $\lambda_{st}(G)\ge k$ for every distinct pair of nodes $s,t\in V(G)$, we obtain a deterministic algorithm that computes a $k$-arc connectivity certificate. Notably, the arboricity of the output certificate is linear in $k$, which is asymptotically tight among all the $k$-arc strong connectivity certificates of $G$.
\begin{theorem}\label{thm:main-kstrong}
    Let $k\in \dbN$, and $G=(V,A)$ be a \emph{$k$-arc-strong} $n$-node digraph with independence number at most $\alpha$. For $p\in \dbN$, there exist $kp$-pass deterministic algorithm for computing a $k$-strong connectivity certificate of $G$ with arboricity at most $k$ using 
    \begin{enumerate}[label=\textup{(\alph*)}]
        \item $O(k^2\alpha n^{1+1/p})$ space in the insertion-only model;
        \item $O(k^2\alpha n^{1+O(1/\sqrt{p}) })$ space in the turnstile model.
    \end{enumerate}
\end{theorem}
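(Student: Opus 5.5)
The plan is to build the certificate as a union of $k$ arc-disjoint "strong spanning subgraphs" obtained by $k$ successive runs of the $1$-strong certificate algorithm from \cref{thm:main-1conn}. The key tool is the directed analogue of Nash-Williams/Edmonds tree packing. By Edmonds' branching theorem, a $k$-arc-strong digraph contains, for a fixed root $r$, $k$ arc-disjoint out-branchings and, separately, $k$ arc-disjoint in-branchings. A subgraph $H$ that contains, for each $i\le k$, an out-branching $B_i^+$ and an in-branching $B_i^-$ rooted at $r$, with the $B_i^+$ pairwise arc-disjoint and the $B_i^-$ pairwise arc-disjoint, is $k$-arc-strong. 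Indeed, for any $s\ne t$ and any arc set $F$ with $|F|<k$, some $B_i^-$ and some $B_j^+$ avoid $F$, so $s\reach r\reach t$ survives in $H-F$. Since $G$ is $k$-arc-strong, the requirement $\lambda_{st}(H)\ge\min\{k,\lambda_{st}(G)\}=k$ is exactly this. Each such $H$ has at most $2k(n-1)$ arcs, and a union of $2k$ forests has arboricity at most $2k$. To reach the claimed bound $k$ I would sharpen the counting to about $k$ forests, for example using the fact that $B_i^+\cup B_i^-$ is a strong spanning subgraph; I expect this constant to be a secondary issue.

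The algorithm runs $k$ phases of $p$ passes each. Phase $i$ computes an out-branching and an in-branching from $r$ in $G_i:=G-\bigcup_{j<i}(B_j^+\cup B_j^-)$. It does so by running the $p$-pass algorithm of \cref{thm:main-1conn} on $G_i$, with the stored $O(k n)$ earlier arcs filtered out in the stream, or cancelled via deletions in the turnstile model. The space per phase is $O(\alpha n^{1+1/p})$, and resp. $O(\alpha n^{1+O(1/\sqrt p)})$ in the turnstile model.

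The main obstacle is that greedy peeling of arbitrary branchings can fail: after removing $i-1$ out-branchings, $G_i$ need not be $(k-i+1)$-arc-strong. The fix is to maintain a stronger invariant in the spirit of Edmonds' constructive proof: choose $B_i^+$ so that the residual graph still has $k-i$ arc-disjoint out-branchings from $r$. Lovász's proof shows that an out-branching can be grown arc by arc, with the condition that every set $X\ni r$ keeps enough out-arcs in the residual graph. Implementing this requires knowledge of which cuts are tight. Here I would use the independence number. The $1$-certificate from \cref{thm:main-1conn} of the current residual-plus-stored graph has size $O(\alpha n)$, and all connectivity decisions can be made offline on the stored union of the $k$ certificates of size $O(k\alpha n)$ per phase. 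This accounts for the extra factor $k$, giving $O(k^2\alpha n^{1+1/p})$ space.

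Concretely, in phase $i$ I would stream $G$ minus the chosen arcs, compute a $1$-certificate $C_i$, and keep $\bigcup_{j\le i}C_j$. I would then argue that $\bigcup_{j\le k}C_j$ is itself $k$-arc-strong. The argument: any cut $(X,V\setminus X)$ of $G$ has at least $k$ arcs, and each $C_j$ crosses it whenever the residual $G_j$ still crosses it. So either all $k$ certificates contribute a distinct crossing arc, or at some phase the residual has no crossing arc, meaning all $G$-arcs of that cut were already taken into earlier certificates, which together contain at least $k$ of them. This counting argument is what I would verify first, since it would bypass Edmonds entirely and directly give $H=\bigcup_j C_j$ with arboricity $O(k\alpha)$. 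The remaining step, pruning $H$ to arboricity $k$, is then done offline with Edmonds' theorem applied to the stored graph $H$.
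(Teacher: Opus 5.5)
Your cut-counting argument in the last paragraph is correct. The gap is in the space analysis: you never account for the fact that deleting arcs from $G$ can increase the independence number. Both the space and the output arboricity of \OneConnCertAlgo{} are governed by the independence number of \emph{its input}: chain covers of the recursive pieces have size up to $\ind(G_i)$, and \cref{lem:bblock} only gives arboricity $\ind(G_i)+2$.

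In your first plan, $G_i$ is $G$ minus $2(i-1)$ branchings, so ``space per phase $O(\alpha n^{1+1/p})$'' is already unjustified. \cref{lem:smallincrease} only gives $\ind(G_i)=O(i\alpha)$, so phase $i$ costs $O(i\alpha n^{1+1/p})$. \emph{That} is where the factor $k^2$ comes from; storing $O(k\alpha n)$ certificate arcs does not by itself produce a $k^2\alpha n^{1+1/p}$ term.

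In your concrete plan the problem is fatal. You delete the entire certificates $C_1,\dots,C_{i-1}$, and the only guarantee for $C_j$ is arboricity at most $\ind(G_j)+2$. This is not $O(1)$ once $G_j$ stops being strongly connected, because $C_j$ then contains the pruned acyclic part of $G_j$. So the deleted graph has degeneracy of order $\sum_{j<i}\ind(G_j)$, and the best bound \cref{lem:smallincrease} gives for $\ind(G_i)$ grows exponentially in $i$. Neither the per-phase space bound nor ``$\bigcup_j C_j$ has arboricity $O(k\alpha)$'' follows.

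Your two plans pull in opposite directions. Deleting whole certificates makes the counting trivial but destroys control of $\ind$. Deleting only branchings keeps $\ind$ under control but breaks distinctness of the new crossing arcs: $C_j$'s crossing arc may already lie in an earlier $C_l$. Greedy extension of branchings can also fail, as you note. Lov\'asz-style tight-cut bookkeeping is not available either, since tightness refers to cut values of $G$, which you do not store.

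The paper resolves this as follows:
\begin{itemize}
\item Store $U_t=U_{t-1}\cup H_t$, where $H_t$ is a $1$-certificate of $R_{t-1}$.
\item Extract from $U_t$, offline and \emph{from scratch}, $t$ arc-disjoint out-branchings $\mathcal{B}_t$ rooted at $x$; they need not extend $\mathcal{B}_{t-1}$.
\item Set $R_t=G\setminus\bigcup_{B\in\mathcal{B}_t}B$, and do the same for in-branchings with a separate residual.
\end{itemize}
For a cut $(S,T)$ with $x\in S$, $\mathcal{B}_{t-1}\subseteq U_{t-1}$ uses at least $t-1$ arcs of $\delta^+_G(S)$. If it uses at least $t$, you are done. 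If it uses exactly $t-1$, then $R_{t-1}$ still has an $S$--$T$ arc because $|\delta^+_G(S)|\ge k\ge t$, so $H_t\subseteq R_{t-1}$ must contain one too. Hence $|\delta^+_{U_t}(S)|\ge t$ and Edmonds' theorem applies. The deleted graph is a union of $t$ forests, so $\ind(R_t)=O(t\alpha)$, and summing over rounds gives $O(k^2\alpha n^{1+1/p})$.

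Finally, do not try to sharpen the arboricity to $k$. The paper's own argument gives $2k$, and $k$ is impossible in general. In a $k$-arc-strong tournament, any $k$-arc-strong spanning subgraph has at least $kn$ arcs and no antiparallel pair, hence more than $k(n-1)$ underlying edges.
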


To complement our algorithmic results, we establish parameterized space lower bounds with respect to the independence number. In particular, any $p$-pass streaming algorithm for computing a $k$-node or $k$-arc strong connectivity certificate requires space usage with linear dependency in $\alpha$ and $n$ even when $k=1$.

To complement our algorithmic results, we establish parameterized space lower bounds with respect to the independence number. In particular, even for $k=1$, any streaming algorithm for computing a $k$-node or $k$-arc strong connectivity certificate must use space linear in both the independence number and the number of vertices.
\begin{theorem}\label{thm:ConnCert-intro}
    For the class of digraphs with $n$ nodes and independence number at most $\alpha$, 
    any $p$-pass streaming algorithm in the insertion-only model for computing a $1$-node or $1$-arc strong connectivity certificate requires $\Omega(\alpha n/p)$ space. 

    Consequently, in both insertion-only and turnstile models, any $p$-pass streaming algorithm for computing a $k$-node or $k$-arc strong connectivity certificate requires $\Omega(\alpha n/p)$ space.
\end{theorem}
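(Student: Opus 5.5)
We reduce from set disjointness via the standard communication-complexity simulation of multi-pass streaming. A $p$-pass streaming algorithm with space $S$ for two parties (Alice holds a prefix of the stream, Bob the suffix) yields a $(2p-1)$-message protocol with total communication $O(pS)$. Since randomized $\DISJ$ on $N$ bits needs $\Omega(N)$ communication, it suffices to build, for $N=\Theta(\alpha n)$, a digraph on $n$ nodes with independence number at most $\alpha$. Alice's arcs depend only on $x\in\{0,1\}^N$ and Bob's only on $y$. The instance must be such that any $1$-strong connectivity certificate reveals $\DISJ(x,y)$, for example by determining whether some designated pair is reachable. Then $pS=\Omega(\alpha n)$ gives $S=\Omega(\alpha n/p)$.

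For the construction (the gadget-embedding tournament idea), start with a transitive tournament on $n$ nodes, which is acyclic and has independence number $1$. Partition the nodes into $n/\alpha$ groups of size $\alpha$, placed in a topological order. Within the interface between consecutive groups $B_i,B_{i+1}$, or more cleanly between groups in two interleaved layers, there are $\alpha^2$ node pairs. For each such pair $(u,v)$ with $u$ earlier in the order, the "default" tournament arc $u\to v$ is present. One coordinate $j$ of the inputs is encoded per pair: the arc between $u$ and $v$ is removed unless the input says otherwise, and a backward arc is added when $x_j=y_j=1$. Since back arcs must come from one party, I would instead use the standard two-arc trick. Alice's bit $x_j$ controls a back arc $v\to w_j$ and Bob's bit $y_j$ controls a back arc $w_j\to u$ via a private middle node, so that a cycle (a change in strong connectivity) appears exactly when $x_j=y_j=1$. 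Then $\DISJ=0$ iff the graph is not acyclic. Equivalently, it is iff some designated pair, say the first and last nodes, is mutually reachable, after adding a forward spine so that one intersection makes everything strongly connected. Removing default arcs creates independent sets only inside groups of size $\alpha$, which keeps the independence number at most $O(\alpha)$. Summed over groups this gives $\Theta(\alpha n)$ coordinates: roughly $(n/\alpha)\cdot\alpha^2$.

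Correctness of the reduction then follows. The certificate $H$ is a subgraph of $G$ preserving reachability for all pairs, so acyclicity of $G$, or reachability of $t$ to $s$, can be read off $H$ by the final party with no further communication. The argument works for $1$-node and $1$-arc certificates alike, since these coincide for $k=1$. For larger $k$, any $k$-certificate is also a $1$-certificate, since $\min\{k,\kappa_{st}(G)\}\ge\min\{1,\kappa_{st}(G)\}$. Turnstile algorithms apply in particular to insertion-only streams, so the bound transfers to both models.

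The main obstacle is the gadget design. It must keep the independence number at most $\alpha$ after making $\Theta(\alpha^2)$ arcs per block input-dependent. It must also ensure that every intersecting coordinate, and only those, creates the backward connectivity, with no spurious cycles from the combination of many Alice and Bob arcs. I would first try fixing all intra-block arcs to be forward, so that each block is an independent set of size $\alpha$ only between input positions. Every backward path would then need both an Alice arc and a Bob arc on the same coordinate, which I would verify by a topological-order potential argument.
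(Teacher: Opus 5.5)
Your outer framework is correct and matches the paper: you use $\DISJ$ with the $O(pS)$ streaming-to-communication simulation, you note that a $1$-certificate determines reachability, that a $k$-certificate is a $1$-certificate, and that turnstile algorithms run on insertion-only streams. The gap is the gadget, and the design you sketch does not work. \emph{Private middle nodes.} A private middle node $w_j$ per coordinate needs $\Theta(\alpha n)$ extra nodes, which exceeds the budget of $n$ nodes once $\alpha=\omega(1)$. Also, $w_j$ may carry no fixed arc from a node that $u$ reaches, nor to a node that reaches $v$. Otherwise Bob's arc alone (resp.\ Alice's alone) closes a cycle. So these nodes are largely non-adjacent to the rest of the graph, which threatens the independence bound. \emph{Shared middle nodes.} If you share middle nodes instead, then $x_j=1$ and $y_{j'}=1$ with $j\neq j'$ give $v_j\arc w\arc u_{j'}$. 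The forward arcs typically supply $u_{j'}\reach v_j$, which is exactly the spurious cycle you were worried about. \emph{The root cause.} You are compressing $\Theta(\alpha n)$ coordinates into a single global bit (acyclicity, or mutual reachability of one designated pair), against a background where forward reachability is almost total. This cannot work uniformly in $\alpha$. At $\alpha=n$ it would give an $\Omega(n^2/p)$ bound for $s$--$t$ reachability at every $p$, which is inconsistent with known semi-streaming reachability algorithms using $n^{1/2+o(1)}$ passes. Consistently, the paper's own single-pair bound (Theorem~\ref{thm:reach-intro}) is only $\Omega(\alpha^{1/(2p)}n/p^{O(1)})$.

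The missing idea is that a $1$-certificate reveals reachability for \emph{all} pairs. So you may read off $\Theta(n)$ separate bits, i.e.\ reduce from $\PairReach$, as the paper does. Its plain gadget has nodes $a_i,b_j,c_i$ for $i,j\in[m]$. Alice contributes the arcs $a_i\arc b_j$ for $x_{ij}=1$, and Bob contributes $b_j\arc c_i$ for $y_{ij}=1$. Then $a_i\reach c_i$ iff row $i$ of $x$ and row $i$ of $y$ intersect. Sharing the middle layer is harmless here: every path in the gadget has length at most $2$, and the query $(a_i,c_i)$ sees only row $i$. A spurious path $a_i\arc b_j\arc c_{i'}$ is simply never queried. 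Place $\Theta(n/\alpha)$ such gadgets of size at most $\alpha$ (take $m=\lfloor\alpha/3\rfloor$) in a gadget-embedding tournament, with all inter-component arcs forward. This keeps the independence number at most $\alpha$. By Observation~\ref{obs:ordering}, no $a_i$--$c_i$ path can leave its component and return. Each gadget encodes $\Theta(\alpha^2)$ bits, so $\Theta(\alpha n)$ bits in total, and $\DISJ(X,Y)=1$ iff no queried pair is reachable. For $\alpha<3$, use $3$-node tournament gadgets in which $a\reach c$ iff $x=y=1$.
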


To achieve the lower bounds stated in \cref{thm:ConnCert-intro}, we introduce a framework called \emph{gadget-embedding tournament}. 
This construction yields hard graph instances with a prescribed independence number while simultaneously embedding hard instances from the unconstrained setting.

We begin with an informal overview of the gadget-embedding tournament construction and present the formal definitions in \cref{subsec:GET}.
Given a list of $n/d$ size-$d$ digraphs $\vec{\Gamma} = (\Gamma_1,\ldots,\Gamma_{n/d})$, referred as gadgets,
we define the \emph{$\vec{\Gamma}$-embedding tournament of parameters $(n,d)$} as follows. The construction begins with a tournament on $n$ nodes $[n/d]\times [d]$ ordered lexicographically. For each $i\in [n/d]$, we delete all the arcs among the nodes in $\set{i}\times [d]$, and replace them with the arcs of $\Gamma_i$ under a suitable embedding. An immediate yet crucial observation is that the independence number of the resultant graph is at most $d$, since any cross-component node set must contain two neighbouring nodes, owing to the presence of the original tournament arcs. 

The parameterized lower bound for computing node or arc strong connectivity certificates (\cref{thm:ConnCert-intro}) now follows from a straightforward reduction to the $n$-pair reachability problem in the gadget-embedding tournament framework:
\begin{theorem}\label{thm:pairreach/tri-into}
    For the class of digraphs with $n$ nodes and independence number at most $\alpha$, any $p$-pass streaming algorithm for determining $n$-pair reachability or detecting directed triangles requires $\Omega(\alpha n/p)$ space.
\end{theorem}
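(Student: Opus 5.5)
We prove both lower bounds by reduction from set disjointness inside the gadget-embedding tournament framework. Recall that $\DISJ_N$ requires $\Omega(N)$ bits of communication even with randomized multi-round protocols, so a $p$-pass streaming algorithm with space $S$ gives a protocol of cost $O(pS)$. It therefore suffices to embed an instance of $\DISJ_N$ with $N=\Omega(\alpha n)$ into an $n$-node digraph with independence number at most $\alpha$. We will split the arcs between Alice and Bob so that the answer is decided by $n$-pair reachability, or by the existence of a directed triangle.

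\textbf{Gadgets.} Take $d=\Theta(\alpha)$, with constants adjusted so that the independence number is at most $\alpha$, and use $n/d$ gadgets $\Gamma_1,\dots,\Gamma_{n/d}$, each on $d$ nodes. Inside each gadget we place a layered structure: split the $d$ nodes into two halves $L_i$ and $R_i$.
\begin{itemize}
\item For reachability, Alice inserts arcs $L_i\to R_i$ encoding her $\Theta(d^2)$ bits for that gadget, so over all gadgets she has $N=\Theta(d^2\cdot n/d)=\Theta(\alpha n)$ bits.
\item Bob's input is encoded by arcs in the reverse direction (e.g.\ from $R_i$ back to $L_i$), or by designating query pairs.
\end{itemize}
The cleanest choice for triangles is a bipartite-type gadget with three parts $X_i,Y_i,Z_i$. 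Alice owns the arcs $X_i\to Y_i$ given by her bits, and Bob owns $Y_i\to Z_i$. A fixed arc $Z_i\to X_i$ cannot encode $d^2$ bits on its own, so we instead take the standard $\Theta(d^2)$-bit triangle hardness: index coordinates by pairs $(x,y)$, put Alice's arc $x\to y$, and let Bob contribute the path $y\to z_{xy}\to x$. Fine-tuning is needed so that the gadget has $O(d)$ nodes. As a fallback we accept $N=\Theta(\alpha n)$ from a matching-like encoding: $d$ bits per gadget over $n/d$ gadgets gives only $n$, so the fallback must reach $d^2$ per gadget via a bipartite encoding.

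\textbf{Tournament arcs.} These arcs go between distinct gadgets and follow the lexicographic order, i.e.\ from lower-index to higher-index gadget. They are fixed and known to both players. The key check is that they create no spurious reachability or triangles. Since all inter-gadget arcs point forward, any cycle or triangle must stay inside a single gadget. Likewise, for pairs $(s,t)$ inside the same gadget, every path from $s$ to $t$ that leaves the gadget can never return. So both reachability and triangles are determined locally, gadget by gadget.

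\textbf{Pair reachability.} Inside gadget $i$ we use, for instance, a $2$-layer structure: Alice's arcs $u\to v$ for $u\in L_i$, $v\in R_i$. The query pairs are one per node, $n$ in total. Bob's bits decide which reverse arcs $v\to u'$ are added, so that a query $u\reach u'$ succeeds iff there is $v$ with both Alice's and Bob's bit set.

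\textbf{Conclusion.} The independence number is at most $d$ by the observation in \cref{subsec:GET}. A $p$-pass algorithm with space $S$ yields a protocol with communication $O(pS)$, hence $S=\Omega(\alpha n/p)$. \Cref{thm:ConnCert-intro} then follows because a certificate preserves reachability for every pair, so it answers $n$-pair reachability.

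The main obstacle is designing the gadget so that it carries $\Theta(d^2)$ independent DISJ coordinates, is answerable by only $O(d)$ query pairs or by the existence of a triangle, and does not itself create independent sets larger than $\alpha$. I would first try the standard bipartite-plus-matching construction from the unconstrained triangle and reachability lower bounds, scaled to $d$ nodes.
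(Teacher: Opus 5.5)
Your outer framework matches the paper's: reduce from $\DISJ$, embed $n/d$ gadgets of size $d$ in a gadget-embedding tournament, and use that all inter-gadget arcs point forward, so every cycle, and every path between two nodes of one gadget, stays inside that gadget. The gap is that the gadgets, which carry the whole argument, are either wrong or missing.

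\textbf{Reachability.} Your two-layer gadget (Alice's arcs $L_i\to R_i$, Bob's arcs back into $L_i$) fails because the back arcs let paths alternate between the layers. For example, $u\to v\to w\to v'\to u'$ uses Alice's bits at $(u,v)$ and $(w,v')$ and Bob's bits at two other coordinates, so a query can succeed even though no coordinate has both bits set. You need a third layer that makes the gadget a depth-$2$ DAG. Take nodes $a_1,\dots,a_m,b_1,\dots,b_m,c_1,\dots,c_m$, put $a_i\to b_j$ iff $x_{ij}=1$ and $b_j\to c_i$ iff $y_{ij}=1$, and query the pairs $(a_i,c_i)$. Every $a_i$--$c_i$ path is then of the form $a_i\to b_j\to c_i$.

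\textbf{Triangles.} Your $z_{xy}$ gadget has $\Theta(d^2)$ nodes, and the reduction to $O(d)$ nodes is never carried out. The missing idea is to add the fixed perfect matching $c_i\to a_i$ to the same three layers. The fixed side carries no information; it only closes the cycle. Since $a_i$ is adjacent to no $c_{i'}$ with $i'\ne i$, every directed triangle has the form $a_i\to b_j\to c_i\to a_i$, and it is present iff $x_{ij}=y_{ij}=1$. With $m=\lfloor d/3\rfloor$ this gives $\Theta(d^2)$ coordinates and $m$ queries per gadget.

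\textbf{Small $\alpha$.} Even with these gadgets, your only independence bound is the trivial $\ind\le d$, which forces $d\le\alpha$. That settles $\alpha\ge 3$. It is in fact the safe bound for the plain layered gadget: for all-zero inputs that gadget has no arcs, so its own independence number is $3m$, not the $2m$ claimed in the paper, though only constants are affected. But the trivial bound gives nothing for $\alpha\in\{1,2\}$, in particular for tournaments.

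There you need gadgets whose \emph{own} independence number is small, together with $\ind(G^{n,d}_{\vec\Gamma})=\max_i\ind(\Gamma_i)$, which holds because an independent set lies in a single component. The paper encodes each $0$-bit by the reversed arc ($b_j\to a_i$, resp.\ $c_i\to b_j$). Then consecutive layers are completely joined and each triple $\{a_i,b_i,c_i\}$ is a clique, so the gadget has independence number at most $m$. For $m=1$ the gadget is a $3$-node tournament in which both $a\reach c$ and the presence of a directed triangle are equivalent to $X_i=Y_i=1$. This yields $\Omega(n)$ coordinates at independence number $1$ for both problems.
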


The versatility of the gadget-embedding tournament framework allows many general streaming lower bounds to be lifted to the parameterized setting in an almost black-box manner. Moreover, its compatibility with embedding hard direct-sum problems makes it a powerful tool for proving strong parameterized bounds whenever suitable direct-sum theorems are available. Together, these properties yield new lower bounds parameterized by independence number for a range of well-known graph problems, extending many prior results that were specific to tournaments \cite{CGM+20,GK24}.
\begin{theorem}\label{thm:HamP/HamC-intro}
    For the class of digraphs with $n$ nodes and independence number at most $\alpha$, any $p$-pass streaming algorithm for finding a Hamiltonian path or cycle requires $\Omega\left(\frac{\alpha n}{p\log^2 \alpha}\right)$ space.
\end{theorem}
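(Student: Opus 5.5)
We reduce from a direct-sum version of the Hamiltonian path/cycle lower bound on general small digraphs, embedded into the gadget-embedding tournament framework. Let $d=\alpha$ and split the $n$ nodes into $n/d$ blocks $\{i\}\times[d]$. Each block carries a gadget $\Gamma_i$ on $d$ nodes, which encodes an instance of a hard two-party problem. We take the problem known to require $\Omega(d^2/(p\log^2 d))$ bits for $p$-pass protocols deciding whether a $d$-node digraph has a Hamiltonian path with prescribed endpoints; equivalently, the corresponding (set-disjointness-like) problem with an $\Omega(d^2/\log^2 d)$ information-complexity bound. Our target is the AND/OR of $n/d$ independent copies. A direct-sum theorem for information complexity then gives total communication $\Omega\big((n/d)\cdot d^2/\log^2 d\big)=\Omega(\alpha n/\log^2\alpha)$. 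The usual simulation argument (a $p$-pass, $s$-space streaming algorithm yields a protocol of cost $O(ps)$) turns this into the claimed $\Omega\big(\alpha n/(p\log^2\alpha)\big)$ space bound.

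The key steps are as follows.

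\textbf{Step 1: the base gadget.} I would fix the base gadget: a $d$-node digraph built from Alice's and Bob's inputs with designated entry node $a_i$ and exit node $b_i$. It has a Hamiltonian $a_i$--$b_i$ path iff the local predicate holds (e.g.\ iff the inputs intersect at some coordinate, or iff they are disjoint, whichever direction the base bound supplies). Its information complexity under a suitable hard distribution should be $\Omega(d^2/\log^2 d)$. The $\log^2$ loss presumably comes from encoding $d^2$-size set disjointness into Hamiltonicity with a polylogarithmic blowup in node count; I would take this as the known general lower bound.

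\textbf{Step 2: the global tournament.} I would orient the tournament arcs between blocks so that any Hamiltonian path of the whole graph must traverse the blocks in order $1,2,\dots,n/d$. Each block must be entered at $a_i$ and left at $b_i$, so the whole graph has a Hamiltonian path iff every gadget has its local Hamiltonian $a_i$--$b_i$ path. For the cycle version, a single back arc from block $n/d$ to block $1$ closes the path.

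To enforce this, I would orient all inter-block arcs backward (from block $j$ to block $i$ for $j>i$), except the forward arcs $b_i\to a_{i+1}$. A Hamiltonian path then cannot go forward except via these designated arcs. Backward arcs cannot be used without leaving some earlier block's nodes unvisited, or requiring re-entry, which is impossible since forward progress is only through $b_i\to a_{i+1}$. Hence a counting argument shows the path visits the blocks contiguously and in order. The independence number is at most $d=\alpha$, since every pair of nodes in distinct blocks is adjacent.

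\textbf{Step 3: lower bound and main obstacle.} The gadget-embedding ensures every instance is valid with the prescribed independence number. The lower bound then follows from the direct-sum of the AND of $n/d$ copies, with embedding the other copies done by public/private randomness as in standard direct-sum arguments.

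I expect the main obstacle to be Step 2's combinatorial claim: that backward inter-block arcs cannot be exploited to form a Hamiltonian path that does not decompose blockwise. I would argue it via the observation that the only arcs going forward are the $b_i\to a_{i+1}$. So the path's block index sequence can increase only through these arcs, and once the path leaves block $i$ it can never return. This forces each block to be visited as one contiguous segment entered at $a_i$ and exiting at $b_i$. Handling the cycle version needs one extra check that the closing arc is the unique way back to block $1$.
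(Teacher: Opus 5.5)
\textbf{Step 2 fails, and it is exactly the step you flagged.} With every inter-block arc pointing from a higher block to a lower one, nothing stops a path from starting in the last block and moving downward.

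Concretely, suppose each gadget $\Gamma_i$ has \emph{some} Hamiltonian path $Q_i$ from $x_i$ to $y_i$, with arbitrary endpoints. Then the concatenation $Q_{n/d}$, $y_{n/d}\to x_{n/d-1}$, $Q_{n/d-1}$, $\ldots$, $Q_1$ is a Hamiltonian path of the whole graph. Each joining arc $y_j\to x_{j-1}$ exists unless $(y_j,x_{j-1})=(a_j,b_{j-1})$, because those reversed pairs are the only missing backward arcs.

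Your claim that the path ``can never return'' to a block is also false. A path can cover part of block $j$, drop to block $j-1$, finish that block at $b_{j-1}$, and then re-enter block $j$ through $b_{j-1}\to a_j$. So a gadget that is merely covered by two disjoint paths, one starting at $a_j$, already suffices.

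Hence global Hamiltonicity is not the AND of your designated-endpoint predicates in either direction. For instance, if the gadgets have Hamiltonian paths with the ``wrong'' endpoints, your reduction answers YES regardless of the inputs. Your cycle fix (``a single back arc from block $n/d$ to block $1$'') is also vacuous here, since all such arcs are already back arcs.

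\textbf{The repair is to reverse your orientation, as the paper does: make \emph{every} inter-block arc go forward.} Then no path can ever decrease its block index. A Hamiltonian path must therefore visit all of block $1$, then all of block $2$, and so on, so it is a concatenation of Hamiltonian paths of the gadgets. Conversely, since all forward arcs are present, Hamiltonian paths of the gadgets with \emph{arbitrary} endpoints can always be chained. This yields $\HamPath(G)=\bigwedge_i \HamPath(\Gamma_i)$ without any designated entry or exit nodes. That matters, because the available black-box gadget (the Bacrach et al.\ reduction from $\DISJ_{\Theta(\alpha^2/\log^2\alpha)}$ to $\HamPath_\alpha$) does not supply them. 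For the cycle version, add a source $s_*$ with arcs to every node and a sink $t_*$ with arcs from every node, as in the paper, and then the arc $t_*\to s_*$.

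\textbf{Two further gaps in Step 3.} First, that gadget has a Hamiltonian path iff the inputs \emph{intersect}, so the composed function is $\TRIBES_{n/\alpha,\beta}$. You should invoke the Jayram--Kumar--Sivakumar bound $\Omega(r(s-1))$ rather than an unspecified information-complexity direct sum; the $\TRIBES$ bound is precisely the nontrivial direct-sum statement you need here. Second, for $\alpha=O(1)$ that gadget does not exist, so a separate small construction giving $\Omega(n/p)$ is required.
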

\begin{theorem}\label{thm:reach-intro}
    For the class of digraphs with $n$ nodes and independence number at most $\alpha$, any $p$-pass streaming algorithm for determining $s$--$t$ reachability requires 
    \begin{enumerate}[label=\textup{(\alph*)}]
        \item\label{thm:reach-intro-a} $\Omega(n/p)$ space when $\alpha=O(1)$;
        \item\label{thm:reach-intro-b} $\Omega(\alpha^{1/(2p)}n/p^{O(1)})$ space when $\alpha=\omega(1)$ and $p=O\left(\frac{\log \alpha}{\log\log \alpha}\right)$.
    \end{enumerate}
\end{theorem}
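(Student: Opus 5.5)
The plan is to treat the two regimes separately. Both rest on the gadget-embedding tournament with a chain-style composition that turns reachability in the whole graph into the $\AND$ of reachability inside each gadget.

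\textbf{Part (a).} For $\alpha=O(1)$ I would take the gadgets to have constant size $d$; it suffices to handle tournaments, i.e.\ $d=1$. I would reduce from $\DISJ$ on $\Theta(n)$ bits, for which $p$-pass streaming implies a communication cost of $\Omega(n)$ and hence space $\Omega(n/p)$. The encoding uses a transitive tournament in which every arc points "backwards", plus a short forward arc between consecutive positions. The orientation of each such forward arc (or of a constant-size detour through it) is determined jointly by Alice's bit $x_i$ and Bob's bit $y_i$. It is arranged so that $s$ reaches $t$ iff no index has $x_i=y_i=1$. This mirrors the known tournament lower bound of \cite{CGM+20,GK24}, and I would either cite it or reproduce it inside the framework.

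\textbf{Part (b), construction.} Let $d=\alpha$ and $m=n/\alpha$. Each gadget $\Gamma_i$ is an $\alpha$-node layered $s_i$--$t_i$ reachability instance of the Guruswami--Onak type \cite{GO16}. Such an instance needs $\alpha^{1+1/(2p)}/p^{O(1)}$ space for $p=O(\log\alpha/\log\log\alpha)$, which matches the stated pass range. In the embedding tournament I orient every inter-block arc from block $j$ to block $i$ whenever $j>i$. I then add the single forward arc $t_i\to s_{i+1}$ for each $i$; adding it changes only one pair of blocks and keeps the independence number at most $\alpha$. I set $s=s_1$ and $t=t_m$.

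\textbf{Part (b), correctness of the chain.} The only way to move forward from block $i$ to block $i+1$ is the arc $t_i\to s_{i+1}$. Moreover, any excursion from block $i$ back to earlier blocks can re-enter block $i$ only at $s_i$. So an $s$--$t$ path exists iff every gadget has an $s_i$--$t_i$ path. Thus the whole instance is the $\AND$ of the $m$ gadget instances, and the independence number is at most $\alpha$ by the observation in the framework.

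\textbf{Part (b), direct sum.} It remains to show that a $p$-pass algorithm for $\AND_m\circ\REACH_\alpha$ needs roughly $m$ times the single-instance cost, which gives $\Omega(\alpha^{1/(2p)}n/p^{O(1)})$. I expect this to be the main obstacle, since $\AND$ of a decision problem does not automatically direct-sum.

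The approach I would try first is the information-complexity route. Convert the streaming algorithm into a $(2p-1)$-round multi-party protocol whose cost is $O(pS)$ per player. Take a hard distribution $\mu$ for the gadget problem, supported mostly on yes-instances and under which the GO16-style round-limited lower bound holds as an information-cost bound (as in \cite{CKP+21}). Then embed the real instance at a uniformly random coordinate. The other coordinates are filled with yes-instances sampled using public and private randomness, so that each player can simulate them locally. The usual direct-sum chain rule then divides the information cost by $m$.

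If the GO16 bound is not available in information form, the fallback is to use the fact that the GO16 hard distribution is itself a disjointness-style composition. I would then flatten the $m$ gadgets into one larger hard instance of the same round-limited pointer/disjointness problem, losing only $p^{O(1)}$ factors.
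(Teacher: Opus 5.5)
Your proposal is essentially the paper's proof. For (a), the paper also reduces from $\DISJ$ by chaining constant-size gadgets so that reachability becomes an $\AND$; its triangle gadget $\Gamma^{x,y}_{\triangle,\alpha}$ (Alice orients the path through $u$, Bob orients the arc between $s$ and $t$) is exactly the ``constant-size detour'' you left unspecified, since no single arc can depend on both $x_i$ and $y_i$. For (b), it uses $G^{\#}_{\vec{\Gamma}}$, the mirror image of your chain (forward inter-component arcs, back-edges $(i+1,\alpha)\to(i,1)$, source in the last component, sink in the first), with the Guruswami--Onak reduction to $\Psi_{\alpha,p}$, their information-cost bound, and the Bar-Yossef et al.\ direct sum for $\AND_{n/\alpha}\circ\Psi_{\alpha,p}^{n/\alpha}$. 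Your requirement that the filler coordinates be reachable instances is the collapsing condition the paper's direct-sum lemma uses implicitly. The paper takes the information-form bound from \cite[Lemmas 3 and 10]{GO16}, not \cite{CKP+21}, whose indistinguishability bound it notes does not direct-sum, so your fallback is not needed.
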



\def\marker{\textcolor{blue}{$\dagger$}}
\def\Imodel{\textcolor{darkgreen}{\textsf{\textbf{I}}}}
\def\Tmodel{\textcolor{orange}{\textsf{\textbf{T}}}}
\def\shading{\cellcolor{gray!30}}
\begin{table}
    \renewcommand{\arraystretch}{1.65}
    \centering
    {
    \crefname{theorem}{Thm.}{Thms.}
    \crefname{corollary}{Cor.}{Cors.}
    \crefname{lemma}{Lem.}{Lems.}
    \crefname{proposition}{Prop.}{Props.}
    
    \begin{tabular}{|*5{c|}}
        \hline
        \multirow{2}{*}{Problem} & \multicolumn{2}{c|}{$p$-pass Upper Bound} & \multicolumn{2}{c|}{$p$-pass Lower Bound} \\
        \hhline{~----}
        & Space & Reference & Space & Reference \\
        \hhline{*5{=}}
        \multirow{2}{*}{$1\text{-}\ConnCert^\alpha_n$} &  \Imodel: \shading$O\left(\alpha n^{1+1/p}\right)$ & \multirow{2}{*}{\cref{thm:main-1conn}} & \multirow{6}{*}{$\Omega(\alpha n/p)$} & \multirow{2}{*}{\cref{thm:ConnCert-intro}} \\
        & \shading\Tmodel: $O\left(\alpha n^{1+O(1/\sqrt{p})}\right)$ & & & \\
        \hhline{---~-}
        \multirow{2}{*}{$k\text{-}\ConnCert^\alpha_n$} & \Imodel: $\tdO\left(k^{1-1/p} \alpha n^{1+1/p)}\right)$ & \multirow{2}{*}{\cref{thm:main-kconn}} &  & \multirow{2}{*}{\cref{thm:ConnCert-intro}} \\
        & \Tmodel: $\tdO\left(k^{1-O(1/\sqrt{p})} \alpha n^{1+O(1/\sqrt{p})}\right)$ & & & \\
        \hhline{---~-}
        \multirow{2}{*}{$\TranClos^\alpha_n$} & \shading\Imodel: $O\left(\alpha n^{1+1/p}\right)$ & \multirow{2}{*}{\cref{thm:algo-tranclos}} &  & \multirow{2}{*}{\cref{cor:LB-tranclos}} \\
        & \shading\Tmodel: $O\left(\alpha n^{1+O(1/\sqrt{p})}\right)$& & & \\
        \hhline{-----}
        \multirow{2}{*}{$k\text{-}\DisjBranch_{n}^{\alpha}$} & \Imodel: $\tdO\left(k^{1-1/p}\alpha n^{1+1/p}\right)$ & \multirow{2}{*}{\cref{thm:algo-branch}} & \multirow{8}{*}{$\Omega\left(\frac{\alpha^{1/(2p)}n}{p^{O(1)}}\right)$~\marker} & \multirow{2}{*}{\cref{cor:LB-branching}} \\
        & \Tmodel: $\tdO\left(k^{1-O(1/\sqrt{p})}\alpha n^{1+O(1/\sqrt{p})}\right)$ & & & \\
        \hhline{---~-}
        {$\StrBridge_{n}^{\alpha}$} & \Imodel: $\tdO\left(\alpha n^{1+1/p}\right)$ & {\cref{thm:algo-bridge}} &  & {\cref{cor:LB-bridge}}  \\
        \hhline{-~-~-}
        $2\text{-}\indBranch^\alpha_n$ & \Tmodel: $\tdO\left(\alpha n^{1+O(1/\sqrt{p})}\right)$ & \cref{thm:algo-nodebranch} &  & \cref{cor:LB-nodebranching}\\
        \hhline{---~-}
        $\OutBranch_{n}^{\alpha}$ & \shading  & \cref{thm:algo-branch1} &  & \cref{cor:LB-branching} \\
        \hhline{-~-~-}
        $\InBranch_{n}^{\alpha}$ &\shading & \cref{thm:algo-branch1} &  & \cref{cor:LB-branching} \\
        \hhline{-~-~-}
        $\SCC_{n}^{\alpha}$ & \shading & \cref{thm:algo-topo-scc} &  & \cref{cor:LB-SCC} \\
        \hhline{-~-~-}
        $\SAT^\alpha_n$ & \shading\Imodel:~$O\left(\alpha n^{1+1/p}\right)$ & \cref{thm:algo-topo-scc} &  & \cref{cor:LB-2SAT} \\
        \hhline{-~---}
        $\MCC_{n}^{\alpha}$ &\shading\Tmodel:~$O\left(\alpha n^{1+O(1/\sqrt{p})}\right)$  & \cref{thm:algo-topo-scc} &  &  \\
        \hhline{-~-~~}
        $\TopoSort_{n}^{\alpha}$ &\shading  & \cref{thm:algo-topo-scc} &  &  \\
        \hhline{-~-~~}
        $2\apx$ $\MSSS_{n}^{\alpha}$ &\shading  & \cref{thm:algo-2apxMSSS} &  &  \\
        \hhline{-~-~~}
        $d$-$\DistDom^\alpha_n$ &\shading & \cref{thm:algo-dominat} & & \\
        \hhline{-----}
        $\PairReach_{n}^{\alpha}$ &  &  & $\Omega(\alpha n/p)$ & \cref{thm:pairreach/tri-into} \\
        \hhline{-~~--}
        $\HamPath_{n}^{\alpha}$ &  &  & \multirow{3}{*}{$\Omega\left(\frac{\alpha n}{p\log^2 \alpha}\right)$} & \cref{thm:HamP/HamC-intro} \\
        \hhline{-~~~-}
        $\HamCycle_{n}^{\alpha}$ &  &  &  & \cref{thm:HamP/HamC-intro} \\
        \hhline{-~~~-}
        $\MSSS_{n}^{\alpha}$ &  &  &  & \cref{cor:LB-msss} \\
        \hline
    \end{tabular}
    }
    \caption{
    Selected parameterized bounds in this work. \Imodel~indicates an insert-only model upper bound, and \Tmodel~indicates a turnstile model upper bound. The shaded cells indicate deterministic bounds, all other cells indicate randomized bounds.
    See \cref{apx:prob-def} for the definitions of the problems. \\  
    \marker~represents the lower bound following from \cref{thm:reach-intro}, see the theorem statement for the precise parameter restrictions.}
    \label{tab:bounds}
\end{table}

\subsubsection{Complexity-characterizing Parameters for Streaming Problems}
Adopting our gadget-embedding framework, we observe that many of our upper and lower bounds share similar asymptotic forms with dependency in $\alpha$. For example when $p=\plog(n)$, the upper and lower bounds for problems like $\TranClos^\alpha_n$ and $\SCC^\alpha_n$ take the same form of $\tdO(n)\cdot \poly(\alpha)$.

These lower bound results indicate that the independence number is a correct parameter for characterizing the space complexity of many graph streaming problems, particularly graph problems related to connectivity. 
Indeed, in many previous works establishing general lower bounds for the problems considered in this work \cite{CKP+21,GO16,BJW22,CGM+20}, the hard instances are layered graphs, which have linear independence number. This suggests that the hardness of these problems fundamentally arises from the presence of a large set of nodes with no adjacency information, further validating the independence number as a \emph{complexity-characterizing parameter} for a wide range of connectivity-related streaming problems. 


A similar phenomenon is observed in classical and parameterized algorithms, with treewidth serving as the characterizing parameter.
Many fundamental problems, such as independent set, dominating set, and max-cut, admit algorithms whose running time depends exponentially on the treewidth, and recent work by Lokshtanov, Marx, and Saurabh~\cite{lokshtanov2018known} shows that these algorithms are essentially optimal under the strong exponential time hypothesis. 
This provides a concrete example where a single structural parameter tightly captures the computational hardness of a broad class of graph problems.  

It is also worth noting that the independence number has also appeared as a key parameter in studying the reachability problem in non-streaming settings  \cite{nickelsen2002reachability, fradkin2015edge}. 
These results suggest a deeper connection between independence number and reachability, hinting that independence number might govern the intrinsic difficulty of many fundamental problems related to reachability in digraphs across different computation models.

Ultimately, we hope that this notion of a complexity-characterizing parameter may serve as a template for analyzing other streaming graph problems, where a single parameter precisely governs the boundary between tractability and hardness.

\subsection{Applications of Parameterized Results in Streaming and Distributed Models}
\label{sec:application_intro}
In~\cref{sec:app}, we present a list of applications of strong connectivity certificates. \cref{tab:bounds} summarizes the parameterized upper bounds and lower bounds of selected problems.

With a $k$-node strong connectivity certificate algorithm, the algorithms for a number of fundamental connectivity tasks in directed graphs become immediate. Testing whether the graph is $k$-strongly connected reduces to checking the property on the certificate itself. Furthermore, by Menger's theorem (\cref{thm:menger}), the certificate guarantees that, for any two distinct nodes $s$ and $t$, the existence of $k$ internally node-disjoint $s$--$t$ paths can be verified directly using the certificate. The certificate also preserves all node cuts of size less than $k$, thereby enabling efficient identification of small cuts and small balanced separators (of sizes up to $k-1$ nodes). 

The above list of graph problems has a long history. Menger's theorem \cite{Menger27} established the fundamental equivalence between cuts and disjoint paths, while minimum cuts and $k$-connectivity have been studied extensively since 1950s \cite{ford1956maximal,even1975network}. 
For balanced separators, the planar separator theorem of Lipton and Tarjan \cite{lipton1979separator} and the flow-based approximation framework of Leighton and Rao \cite{leighton1988approximate} represent the earliest results of this kind. Despite this extensive work, little is known in the streaming setting for directed graphs, largely due to the inherent hardness of $s$--$t$ reachability. 

A further obstacle is the difficulty of adapting algorithms from other models to the streaming setting. For instance, computing $k$ arc-disjoint spanning out-branchings can be formulated as a matroid intersection problem via Edmonds' branching theorem~\cite{Edmonds73}.
However, existing streaming algorithms for matroid intersection~\cite{CrouchS14,GargJS23,Ter25} cannot solve this problem even for $k=1$, since they only produce approximate solutions in the form of out-trees rather than a spanning out-branching. Our 
$k$-strong connectivity certificate algorithms circumvent many of these adaptation issues.

Beyond the streaming setting, our technique for constructing $k$-strong certificates by aggregating 1-strong certificates from specific subgraphs can be naturally extended to the CONGEST model in the distributed setting. Furthermore, we demonstrate that the independence number $\alpha$ serves as a critical parameter in the distributed complexity of SCC decomposition and topological sorting. We obtain efficient algorithms for these fundamental tasks in graphs where $\alpha$ is small.

\subsubsection{Comparison to Previous Works}
It is shown in~\cite{CLT25} that a $1$-strong connectivity certificate (a subgraph with the same transitive closure, in terms of~\cite{CLT25}) can be used to compute both a topological ordering and a minimum chain cover\footnote{Although not stated explicitly, Corollary 6 of~\cite{CLT25} implies a $O(1/\varepsilon)$-pass $O(\alpha n^{1+\varepsilon})$-space algorithm for any $\varepsilon>0$ that computes a minimum chain cover of a DAG.} for a directed acyclic graph. In addition,~\cite{CLT25} shows that a $1$-strong connectivity certificate suffices to obtain the SCC decomposition of a directed graph, which in turn enables solving 2-SAT via the SCC decomposition of the implication graph. The algorithm of~\cite{CLT25} for computing an $1$-strong connectivity certificate is designed for the insertion-only model, whereas our algorithms in~\cref{thm:main-1conn} extend this approach to the turnstile model. Moreover, the lower bounds presented in \cite{CLT25} are non-parameterized and thus not directly comparable to their algorithmic results. We complement their work by establishing parameterized lower bounds.
SCC decomposition of tournament graphs has been studied in~\cite{BJW22} and achieve a $p$-pass $O(n^{1+1/p})$-space algorithm, which can be considered a special case where $\alpha=1$ of~\cref{thm:main-1conn}.

\cite{GK24} has also studied strongly connected component (SCC) decomposition of tournament graphs and achieved a $1$-pass $\Theta (n\log{n})$-space algorithm. However, the algorithm of~\cite{GK24} relies on a property (\cite[Lemma 8]{GK24}) specifically for tournament graphs, which does not generalize to arbitrary directed graphs. \cite{CGM+20,CLT25} have studied topological sorting for random DAGs. The independence number of a dense random DAG is small (see~\cite{CLT25}), which falls into the space-efficient regime of~\cref{thm:main-kconn}.


%% file: notation.tex
\section{Notation and Preliminaries}\label{sec:notation}
For a positive integer $k$, we denote $[k]\coloneqq \set{1,\ldots, k}$. Throughout this work, we adopt the standard big-O notation in computer science. We use tilde notations to hide polylogarithmic factors, e.g. $\tilde{O}(f(n)) = O(f(n)\cdot \polylog n)$.
We use the following standard Chernoff bound:
\begin{proposition}[Chernoff bound] \label{prop:Chernoff}
    Suppose $X_1,\ldots,X_k$ are independent $0/1$-random variables. Let $X = \sum_{i=1}^k X_i$ and $\mu = \Ex[X]$. For any $\delta>0$,
    \[
        \Pr[X\leq (1-\delta)\mu] \leq \exp\left(-\frac{\delta^2 \mu}{2}\right) 
        \qquad\text{and}\qquad
        \Pr[X\geq (1+\delta)\mu] \leq \exp\left(-\frac{\delta^2 \mu}{2+\delta}\right).
    \]
\end{proposition}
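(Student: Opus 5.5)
We will prove both tails with the standard moment-generating-function (exponential Markov) method. Write $p_i=\Pr[X_i=1]$, so that $\mu=\sum_i p_i$. For any $t>0$, independence together with $1+x\le e^x$ gives
\[
\Ex[e^{tX}]=\prod_{i=1}^k\bigl(1+p_i(e^t-1)\bigr)\le \exp\bigl(\mu(e^t-1)\bigr),
\qquad
\Ex[e^{-tX}]\le \exp\bigl(\mu(e^{-t}-1)\bigr).
\]
Applying Markov's inequality to $e^{tX}$ and to $e^{-tX}$ then yields
\[
\Pr[X\ge(1+\delta)\mu]\le \exp\bigl(\mu(e^t-1)-t(1+\delta)\mu\bigr),
\qquad
\Pr[X\le(1-\delta)\mu]\le \exp\bigl(\mu(e^{-t}-1)+t(1-\delta)\mu\bigr).
\]

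\textbf{Upper tail.} We choose $t=\ln(1+\delta)$, which gives the bound $\exp\bigl(-\mu[(1+\delta)\ln(1+\delta)-\delta]\bigr)$. It remains to show $(1+\delta)\ln(1+\delta)-\delta\ge \delta^2/(2+\delta)$. We first prove $\ln(1+\delta)\ge 2\delta/(2+\delta)$. Let $f(\delta)=\ln(1+\delta)-2\delta/(2+\delta)$. Then $f(0)=0$ and
\[
f'(\delta)=\frac{1}{1+\delta}-\frac{4}{(2+\delta)^2}=\frac{\delta^2}{(1+\delta)(2+\delta)^2}\ge 0 .
\]
Multiplying this inequality by $1+\delta$ and subtracting $\delta$ gives exactly $\delta(2+2\delta-2-\delta)/(2+\delta)=\delta^2/(2+\delta)$, as required.

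\textbf{Lower tail.} We split into three cases according to $\delta$.
\begin{itemize}
\item If $\delta>1$, then $(1-\delta)\mu<0\le X$ whenever $\mu>0$, so the probability is $0$. If $\mu=0$, the bound reads $\le 1$ and is trivial.
\item If $\delta=1$, we have $\Pr[X=0]=\prod_i(1-p_i)\le e^{-\mu}\le e^{-\mu/2}$.
\item If $0<\delta<1$, we choose $t=-\ln(1-\delta)>0$, which gives $\exp\bigl(-\mu[\delta+(1-\delta)\ln(1-\delta)]\bigr)$. We must show $g(\delta)=(1-\delta)\ln(1-\delta)+\delta-\delta^2/2\ge0$. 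Here $g(0)=0$ and $g'(\delta)=-\ln(1-\delta)-\delta\ge 0$, since $-\ln(1-\delta)=\sum_{j\ge1}\delta^j/j\ge\delta$.
\end{itemize}

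None of the steps is a real obstacle. The only points requiring care are the two elementary calculus inequalities, where the right choice of auxiliary function makes the derivative visibly nonnegative, and the boundary cases $\delta\ge1$ in the lower tail, where the optimizing choice of $t$ is undefined.
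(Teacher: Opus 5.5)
Your proof is correct. The paper quotes this Chernoff bound as a standard fact without proof, and your moment-generating-function argument is the textbook derivation. You verify both elementary inequalities, $\ln(1+\delta)\ge 2\delta/(2+\delta)$ and $(1-\delta)\ln(1-\delta)+\delta\ge\delta^2/2$, and you handle the lower tail for $\delta\ge 1$ separately, so the full range $\delta>0$ in the statement is covered.
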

We defer further notation and preliminaries to \cref{apx:notations}.

\subsection{Graph Theory}
We mention the frequently used or less conventional graph-theoretic notations here, and defer the rest to \cref{apx:graph}. For a digraph $G$, we use the notations $V(G)$ and $A(G)$ to denote its node set and arc set respectively. An arc from $u$ to $v$ is denoted by $(u,v)$, or $u\arc v$ in cases where the tuple notation becomes overly cumbersome. For two distinct nodes $u$ and $v$ in $G$, we say that $v$ is reachable from $u$ if there exists a $u$--$v$ path in $G$, which we sometimes use the shorthand $u\reach v$.

We use $\alpha(G)$ and $\ind(G)$ interchangeably to denote the independence number of $G$, and reserve the letter $\alpha$ for independence number.

We recall the notations for connectivity: for any two distinct nodes $s$ and $t$ of a digraph $G$,  $\kappa_{st}(G)$ and $\lambda_{st}(G)$ are the numbers of pairwise internal node-disjoint paths and pairwise arc-disjoint paths from $s$ to $t$, respectively.

Three classical graph-theoretic theorems central to our results on connectivity certificates are \emph{Edmonds’ branching theorem}, \emph{Menger’s theorem}, and \emph{Gallai--Milgram theorem}.
\begin{theorem}[Edmonds’ branching theorem~\cite{Edmonds73}] \label{thm:edmond}
    A digraph $G = (V, A)$ contains $k$ arc-disjoint spanning out-branchings rooted at a node $r \in V$ if and only if 
    \(
      |\delta^+_G(S)| \ge k
    \)
    for every cut $(S, V \setminus S)$ with $r \in S$.
\end{theorem}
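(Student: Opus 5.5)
We prove Edmonds' branching theorem (\cref{thm:edmond}) by induction on $k$, via Lovász's tree-growing argument. In the in-degree form, the condition reads: for every nonempty $X\subseteq V\setminus\set{r}$, the number $\rho_G(X)$ of arcs entering $X$ is at least $k$. This is equivalent to the stated condition with $S=V\setminus X$.

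\emph{Necessity} is immediate. Each of the $k$ arc-disjoint spanning out-branchings contains a path from $r$ to a node of $X$. That path uses an arc entering $X$, and these $k$ arcs are distinct because the branchings are arc-disjoint.

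\emph{Sufficiency.} The case $k=0$ is trivial. For $k\ge 1$ we build a single spanning out-branching $T$ rooted at $r$ such that $G'=G-A(T)$ still satisfies $\rho_{G'}(X)\ge k-1$ for all nonempty $X\subseteq V\setminus\set{r}$. Induction then supplies $k-1$ further branchings in $G'$, which are arc-disjoint from $T$.

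We grow $T$ one arc at a time, starting from $T=\set{r}$. Throughout we keep $T$ an out-tree rooted at $r$ and maintain the invariant
\[
\rho_{G-A(T)}(X)\ge k-1\quad\text{for every nonempty } X\subseteq V\setminus\set{r}.
\]
The invariant holds initially, since $\rho_G(X)\ge k$. Call $X$ \emph{tight} if $\rho_{G-A(T)}(X)=k-1$. The step we must carry out is: while $V(T)\ne V$, find an arc $uv$ with $u\in V(T)$ and $v\notin V(T)$ that enters no tight set. Adding such an arc to $T$ lowers in-degrees only of sets that $uv$ enters, and each by exactly one. Those sets had $\rho\ge k$, so the invariant is preserved.

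One observation is used repeatedly. Any $W$ disjoint from $V(T)$ has no tree arc entering it, so $\rho_{G-A(T)}(W)=\rho_G(W)\ge k$. In particular no such $W$ is tight.

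First suppose no tight set meets $V\setminus V(T)$. Then we may take any arc entering $V\setminus V(T)$; one exists because $\rho_G(V\setminus V(T))\ge k\ge1$. Its tail lies in $V(T)$, and every tight set it could enter would contain its head, which lies outside $V(T)$, so it enters no tight set.

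Otherwise, choose an inclusion-minimal tight set $X$ with $X\setminus V(T)\neq\emptyset$. Apply the observation to $W=X\setminus V(T)$: $\rho_{G-A(T)}(W)\ge k$. Every arc of $G-A(T)$ entering $W$ comes either from outside $X$, and is then counted in $\rho_{G-A(T)}(X)=k-1$, or from $X\cap V(T)$. Hence at least one arc $uv$ has $u\in X\cap V(T)$ and $v\in X\setminus V(T)$.

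\emph{The main obstacle} is showing that this $uv$ enters no tight set. This is where submodularity of $\rho_{G-A(T)}$ is used. Suppose $uv$ enters a tight set $Y$, so $v\in Y$ and $u\notin Y$. Submodularity gives
\[
\rho(X\cap Y)+\rho(X\cup Y)\le\rho(X)+\rho(Y)=2k-2.
\]
Both $X\cap Y$ and $X\cup Y$ are nonempty subsets of $V\setminus\set{r}$, so each has $\rho\ge k-1$ by the invariant. Hence $X\cap Y$ is tight. It meets $V\setminus V(T)$, since it contains $v$, and it is a proper subset of $X$, since $u\notin Y$. This contradicts the minimality of $X$.

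Thus the tree always extends until it spans $V$. At that point $G-A(T)$ satisfies the condition for $k-1$, and the induction completes the proof.
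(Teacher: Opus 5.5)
The paper gives no proof of this theorem: it quotes it as a classical result of \cite{Edmonds73} and uses it as a black box in \cref{clm:Hx}, \cref{thm:algo-branch}, and the extraction steps of \cref{algo:kstrong}. Your argument is Lov\'asz's tree-growing proof, and it is correct and complete.
\begin{itemize}
\item Necessity is fine.
\item The invariant survives each step because every set the new arc enters was non-tight.
\item In the second case, the counting correctly forces an arc from $X\cap V(T)$ into $W=X\setminus V(T)$: at most $k-1$ of the at least $k$ arcs entering $W$ can come from outside $X$.
\item The uncrossing step is sound. Both $X\cap Y\ni v$ and $X\cup Y$ are nonempty subsets of $V\setminus\set{r}$, so submodularity makes $X\cap Y$ tight. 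That set meets $V\setminus V(T)$ but misses $u$, which contradicts the minimality of $X$.
\end{itemize}

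Compared with the citation, your route is self-contained and constructive. Whether a candidate arc $uv$ enters a tight set is decided by one minimum $\set{r,u}$--$v$ cut computation in $G-A(T)$. So the proof itself yields the polynomial-time offline procedure the paper relies on when it extracts arc-disjoint out-branchings from a certificate.

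Two conventions you adopt are the right ones, although the paper is sloppy about both:
\begin{itemize}
\item You read $\delta^+_G(S)$ as the arcs \emph{leaving} $S$. The appendix literally defines it as the incoming arcs, and under that reading the displayed condition would make the statement false.
\item You take a spanning out-branching to be a spanning out-tree rooted at $r$. Your necessity argument needs this, and \cref{def:spanning-branching} (in-degree one at non-root nodes) does not by itself guarantee it.
\end{itemize}
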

\begin{theorem}[Menger's theorem \cite{Menger27}] \label{thm:menger}
    For a digraph $G$ and two distinct nodes $s,t$:
    \begin{itemize}
        \item $\kappa_{st}(G)$ equals the minimum size of an $s$--$t$ node cut, that is, a set of nodes $V'\subseteq V(G)\setminus\set{s,t}$ such that the graph $G\setminus V'$ contains no directed path from $s$ to $t$;
        \item $\lambda_{st}(G)$ equals the minimum size of an $s$--$t$ arc cut, that is, a set of arcs $A'\subseteq A(G)$ such that the graph $G \setminus A'$ contains no directed path from $s$ to $t$.
    \end{itemize}
\end{theorem}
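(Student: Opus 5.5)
We prove the arc version first and then derive the node version from it by a node-splitting reduction.

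\textbf{Arc version.} The easy direction is that $\lambda_{st}(G)$ is at most the minimum size of an $s$--$t$ arc cut. Every $s$--$t$ path must use at least one arc of any $s$--$t$ arc cut $A'$. Pairwise arc-disjoint paths use distinct arcs of $A'$, so there are at most $|A'|$ of them.

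For the reverse inequality we use the augmenting-path argument. Give every arc unit capacity. Take a maximum collection $\mathcal{P}$ of pairwise arc-disjoint $s$--$t$ paths, viewed as an integral $0/1$ flow $f$ of value $|\mathcal{P}|$. Form the residual digraph $R$:
\begin{itemize}
\item $R$ contains every arc of $G$ not used by $f$;
\item $R$ contains the reversal of every arc used by $f$.
\end{itemize}

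Suppose $R$ had an $s$--$t$ path. Augmenting along it gives an integral $0/1$ flow of value $|\mathcal{P}|+1$. Any integral unit-capacity flow decomposes into arc-disjoint $s$--$t$ paths plus cycles; after discarding the cycles we get $|\mathcal{P}|+1$ arc-disjoint paths, contradicting maximality.

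Hence $R$ has no $s$--$t$ path. Let $S$ be the set of nodes reachable from $s$ in $R$, so $s\in S$ and $t\notin S$. Then:
\begin{itemize}
\item every arc of $G$ leaving $S$ is saturated, since otherwise it would lie in $R$;
\item every arc of $G$ entering $S$ carries no flow, since otherwise its reversal would lie in $R$ and leave $S$.
\end{itemize}
By flow conservation, the number of arcs leaving $S$ equals the flow value $|\mathcal{P}|$. These arcs form an $s$--$t$ arc cut, which gives equality.

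\textbf{Node version.} Build the split digraph $G'$. Each node $v\notin\{s,t\}$ is replaced by $v^{-}$, $v^{+}$ and an internal arc $(v^-,v^+)$. Each original arc $(u,v)$ becomes $(u^+,v^-)$, where $s^+=s$ and $t^-=t$. We give each original arc multiplicity $n$, or equivalently treat it as having large capacity, so that only internal arcs matter in a minimum cut.

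The correspondence between the two graphs is as follows:
\begin{itemize}
\item internally node-disjoint $s$--$t$ paths in $G$ correspond exactly to $s$--$t$ paths in $G'$ that are disjoint on internal arcs;
\item a node cut $V'$ in $G$ corresponds to the internal-arc cut $\{(v^-,v^+):v\in V'\}$ in $G'$.
\end{itemize}
Applying the arc version to $G'$ gives $\kappa_{st}(G)$ as the minimum size of such a cut, as long as that minimum cut uses only internal arcs.

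\textbf{Main obstacle.} The delicate point is the case where $(s,t)\in A(G)$. Then no node cut exists, and $\kappa_{st}$ should be read with the direct arc counted as one path while the cut is taken as unbounded. We handle this case by the usual convention, stated explicitly: either assume $(s,t)\notin A(G)$, or treat this case separately, where both sides are infinite or handled by the convention.

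If there is no arc $(s,t)$, then every cut of $G'$ consisting only of internal arcs is finite, with size at most $n-2$. Any cut containing a high-multiplicity original arc has size at least $n$. Therefore a minimum cut uses only internal arcs, which completes the reduction.
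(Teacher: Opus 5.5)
The paper gives no proof of this statement: it quotes Menger's theorem and uses it as a black box. You instead supply the standard self-contained proof, namely Ford--Fulkerson augmenting paths with a residual-reachability cut for the arc version, then node splitting with heavy original arcs for the node version. What your route buys over the citation is that it makes the hidden hypothesis $(s,t)\notin A(G)$ in the node version explicit. The arc version is correct as written, and the node version is correct under that hypothesis.

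Two points need fixing.

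\textbf{The case $(s,t)\in A(G)$.} Your remark that here ``both sides are infinite'' is wrong. $\kappa_{st}(G)\le n-1$ is always finite, while no node cut exists. So the first bullet, read literally, is simply false for adjacent $s,t$. The correct repair is $\kappa_{st}(G)=1+\kappa_{st}(G\setminus\{(s,t)\})$, with Menger applied to $G\setminus\{(s,t)\}$. This is exactly how the paper uses the theorem: in the arguments around \cref{lem:kvc-main} it first deletes $(s,t)$, resp.\ $(a,b)$, and only then takes a node cut. So your explicit assumption $(s,t)\notin A(G)$ is the right one, but drop the ``infinite'' convention.

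\textbf{The minimum-cut claim in $G'$.} The sentence ``any cut containing a high-multiplicity original arc has size at least $n$'' needs one more step. A cut containing some but not all copies of an original arc remains a cut after those copies are removed, because any path through a removed copy can be rerouted through a surviving copy. Hence a minimum cut contains either all $n$ copies of each original arc or none. In the former case it has size at least $n>n-2$, so a minimum cut consists of internal arcs only. Note also that your arc-version argument never uses simplicity, so applying it to the multigraph $G'$ is legitimate.
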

\begin{theorem}[Gallai--Milgram theorem \cite{GM60}; see also \cite{Diestel12}]\label{thm:GallaiMilgram}
    Every digraph $G$ can be partitioned into $\alpha(G)$ node-independent directed paths.
\end{theorem}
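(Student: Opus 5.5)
The plan is to derive the statement from a stronger auxiliary claim about \emph{path partitions}, proved by induction on the number of nodes, in the spirit of the proof in \cite{Diestel12}.

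For a partition $\mathcal{P}$ of $V(G)$ into node-disjoint directed paths, let $\mathrm{ter}(\mathcal{P})$ denote the set of last nodes of its paths. Distinct paths have distinct last nodes, so $|\mathrm{ter}(\mathcal{P})|=|\mathcal{P}|$. The auxiliary claim is the following. \emph{If $|\mathcal{P}|>\alpha(G)$, then there is a path partition $\mathcal{Q}$ with $|\mathcal{Q}|=|\mathcal{P}|-1$ and $\mathrm{ter}(\mathcal{Q})\subseteq\mathrm{ter}(\mathcal{P})$.}

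Given the claim, the theorem follows directly. Start from the partition of $V(G)$ into singletons, and apply the claim repeatedly until at most $\alpha(G)$ paths remain. If strictly fewer than $\alpha(G)$ remain, split paths into shorter paths, which is possible since $n\ge\alpha(G)$. This yields exactly $\alpha(G)$ paths.

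To prove the claim, induct on $|V(G)|$. Since $|\mathrm{ter}(\mathcal{P})|>\alpha(G)$, the set $\mathrm{ter}(\mathcal{P})$ is not independent. Hence there is an arc $v\arc w$ with $v=\mathrm{ter}(P_v)$ and $w=\mathrm{ter}(P_w)$ for distinct paths $P_v,P_w\in\mathcal{P}$.

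If $P_w=(w)$ is a single node, append $w$ to the end of $P_v$. This gives one fewer path, and the new set of last nodes is $\mathrm{ter}(\mathcal{P})\setminus\{w\}$.

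Otherwise, let $w'$ be the predecessor of $w$ on $P_w$, and set $G'=G-w$. Let $\mathcal{P}'$ be $\mathcal{P}$ with $P_w$ shortened to end at $w'$. Then $|\mathcal{P}'|=|\mathcal{P}|>\alpha(G)\ge\alpha(G')$ and $\mathrm{ter}(\mathcal{P}')=(\mathrm{ter}(\mathcal{P})\setminus\{w\})\cup\{w'\}$. By induction, $G'$ has a path partition $\mathcal{Q}'$ with $|\mathcal{P}|-1$ paths and $\mathrm{ter}(\mathcal{Q}')\subseteq\mathrm{ter}(\mathcal{P}')$. We reinsert $w$ in one of three cases.
\begin{itemize}
\item If $v\in\mathrm{ter}(\mathcal{Q}')$, append $w$ to that path using the arc $v\arc w$.
\item Else, if $w'\in\mathrm{ter}(\mathcal{Q}')$, append $w$ using the arc $w'\arc w$.
\end{itemize}
In both cases the resulting partition $\mathcal{Q}$ of $G$ has $|\mathcal{P}|-1$ paths. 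Its last nodes are those of $\mathcal{Q}'$ with one element replaced by $w$, which stays inside $\mathrm{ter}(\mathcal{P})$ since $v,w\in\mathrm{ter}(\mathcal{P})$ and $w'\notin\mathrm{ter}(\mathcal{Q})$.

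The main obstacle is the remaining case, where neither $v$ nor $w'$ is a last node of $\mathcal{Q}'$. There I would derive a contradiction by counting. In that case $\mathrm{ter}(\mathcal{Q}')\subseteq\mathrm{ter}(\mathcal{P})\setminus\{v,w\}$, which has $|\mathcal{P}|-2$ elements. But $\mathcal{Q}'$ has $|\mathcal{P}|-1$ paths with pairwise distinct last nodes, which is impossible. So this case never occurs, and the claim, hence the theorem, follows.
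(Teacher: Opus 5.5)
Your plan of proving the terminal-shrinking claim by induction on $|V(G)|$ works, but the reinsertion step fails as written, because the order of your first two cases matters.

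In the first bullet you append $w$ after $v$ and then assert $w'\notin\mathrm{ter}(\mathcal{Q})$. Nothing guarantees this. The inductive hypothesis only gives $\mathrm{ter}(\mathcal{Q}')\subseteq(\mathrm{ter}(\mathcal{P})\setminus\set{w})\cup\set{w'}$. Exactly one element of $\mathrm{ter}(\mathcal{P}')$ is dropped, and it can be a third terminal, so both $v$ and $w'$ may be last nodes of $\mathcal{Q}'$. Then $w'$ stays a last node of $\mathcal{Q}$ although $w'\notin\mathrm{ter}(\mathcal{P})$. So $\mathcal{Q}$ violates the invariant that your induction needs in the final counting case.

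For a concrete instance, take nodes $v,w',w,x,y$ with arcs $v\arc w$, $w'\arc w$, $x\arc y$, and $\mathcal{P}=\set{(v),(w',w),(x),(y)}$; here $\alpha(G)=3<4$. If you pick the arc $v\arc w$, then $G-w$ has the unique $3$-path partition $\mathcal{Q}'=\set{(v),(w'),(x,y)}$. Your first bullet then returns $\set{(v,w),(w'),(x,y)}$, whose last node $w'$ lies outside $\mathrm{ter}(\mathcal{P})=\set{v,w,x,y}$.

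The repair is to reverse the priority.
\begin{itemize}
\item Whenever $w'\in\mathrm{ter}(\mathcal{Q}')$, extend through $w'\arc w$. This replaces the only possible last node outside $\mathrm{ter}(\mathcal{P})$ by $w$.
\item Only when $w'\notin\mathrm{ter}(\mathcal{Q}')$, use $v\arc w$. Then $\mathrm{ter}(\mathcal{Q}')\subseteq\mathrm{ter}(\mathcal{P})\setminus\set{w}$, so replacing $v$ by $w$ is safe.
\end{itemize}
With this change, your counting argument for the remaining case and your derivation of the theorem from the claim go through unchanged. The splitting remark is vacuous: you start from $n\ge\alpha(G)$ singletons and each application removes exactly one path, so you stop at exactly $\alpha(G)$. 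There is also a harmless slip in the singleton case: the new set of last nodes is $\mathrm{ter}(\mathcal{P})\setminus\set{v}$, not $\mathrm{ter}(\mathcal{P})\setminus\set{w}$.

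The paper does not prove this theorem; it only cites it. Once repaired, your argument is the standard terminal-shrinking proof. Diestel's version instead works with path covers whose terminal set is inclusion-minimal, and it obtains a stronger statement: an independent set containing one node from each path.
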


%% file: ConnCerts.tex
\section{Node and Arc Strong Connectivity Certificates} \label{sec:node-vs-arc}
In this section, we consider the relationship between the notions of node and arc strong connectivity certificates.  
For any $s,t\in G$, a collection of pairwise internally node-disjoint paths is also pairwise arc-disjoint, so $\kappa_{st}(G) \leq \lambda_{st}(G)$ for any $s,t$. In particular, a $k$-node-connected graph is also $k$-arc-connected. However, the definition of strong connectivity certificates imposes stricter requirements: the certificate must preserve the connectivity value for pairs of nodes whose connectivity is below the threshold $k$. Thus, the inequality between $\kappa$ and $\lambda$ does not trivially guarantee that a $k$-node strong connectivity certificate is also a $k$-arc strong connectivity certificate.

As explained in \cref{sec:main-result}, the main reason it suffices to consider node-strong connectivity certificates is that, in our setting, a node-strong connectivity certificate can be computed more efficiently than an arc-strong connectivity certificate.
To the best of our knowledge, this lemma has not been stated explicitly in the literature, likely because the phenomenon of computational efficiency highlighted here has not been observed in other settings.

\begin{proof}[Proof of \cref{thm:node-vs-arc}]
Let $\mathcal{F}_H$ be the collection of subgraphs $H''$ such that $H\subseteq H''\subseteq G$ and $H''$ is a $k$-arc certificate of $G$. Note that $G\in \mathcal{F}_H$, therefore there exists a graph $H_* \in \mathcal{F}_H$ with the minimum number of arcs. The proof is completed if we prove that $H_*=H$.

Suppose the contrary that there exists an arc $(a,b)\in A(H_*)\setminus A(H)$. 
Note that $\kappa_{ab}(H_*) \geq \kappa_{ab}(H)+1$.
By definition $H$ is a $k$-node strong connectivity certificate of $G$, we have 
\[
    \min\set{\kappa_{ab}(G), k} \leq \kappa_{ab}(H)\leq \kappa_{ab}(H_*) - 1\leq \kappa_{ab}(G) -1,
\]
where the last inequality holds because $H'_*\subseteq G$. This forces that $\kappa_{ab}(H)\geq k$. 

Consider the graph $H_*' \coloneqq H_*\setminus\set{(a,b)}$.
Since $H_*$ is a supergraph of $H$, $H_*'$ contains at least $k$ arc-disjoint paths from $a$ to $b$ in $H$, none of which is the single arc $(a,b)$. Therefore, $\lambda_{ab}(H_*')\geq k$.

Suppose there exist two nodes $x,y$ such that $\ell\coloneqq \lambda_{xy}(H_*')$ satisfies
\(
    \ell<\min\set{k,\lambda_{xy}(G)}.
\)
Since $\ell<k$, by Menger's theorem \cref{thm:menger}, there exists an $(x,y)$-cut $(X,Y)$ of $H_*'$ with $x\in X$ and $y\in Y$,  such that the number of arcs from $X$ to $Y$ in $H_*'$ is exactly $\ell$. As $\lambda_{ab}(H_*')\geq k$, it cannot be the case that $a\in X$ and $b\in Y$. This implies that the number of arcs from $X$ to $Y$ in $H_*=H_*'\cup\set{(a,b)}$ is also exactly $\ell$. 

Applying Menger's theorem again, the cut $(X,Y)$ witnesses that
\[
    \lambda_{xy}(H_*)\leq \ell = \lambda_{xy}(H_*') < \min\set{k,\lambda_{xy}(G)},
\]
contradicting the assumption that $H_*$ is a $k$-arc strong connectivity certificate of $G$. This implies that $A(H_*)\setminus A(H) = \emptyset$ and thus $H_* = H$.
\end{proof}

%% file: algo.tex
\section{Streaming Algorithms for Strong Connectivity Certificates}\label{sec:algo}
In this section, we present streaming algorithms for computing sparse strong connectivity certificates in both the insertion-only and turnstile models. In \cref{sec:1arc}, we provide a deterministic algorithm for 1-strong connectivity certificates, and in \cref{sec:knode}, we reduce the computation of $k$-strong connectivity certificates to multiple independent instances of 1-strong connectivity certificates, yielding our randomized algorithm.

\subsection{$1$-Node Strong Connectivity Certificates} \label{sec:1arc} 
We begin with showing that if an $n$-node digraph has a small independence number, then one can apply a procedure we call \emph{transitive-closure-preserving pruning} to obtain a subgraph that preserves the transitive closure and whose arc set can be partitioned into a small number of directed forests (i.e. small arboricity). Our upper bound on arboricity improves a result of~\cite{CLT25} and will be used in~\cref{sec:tradeoff}.

\begin{lemma}[Transitive-closure-preserving pruning]\label{lem:bblock}
Let $G'$ and $G$ be digraphs on the same set of $n$ nodes with the same transitive closure $G^*$, where $G'$ has independence number $\alpha$ and $\ind(G)\geq \ind(G')$. 
Then $G$ contains a subgraph $H$ with arboricity at most $\alpha+2$ whose transitive closure equals $G^*$. 
\end{lemma}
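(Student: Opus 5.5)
We work throughout with the transitive closure $G^*$ and its condensation. The plan is to cover the condensation by few chains (via Dilworth's theorem), keep few outgoing arcs per strongly connected component, and then count forests.

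\emph{Step 1: reduce to a poset of small width.} Since $G'\subseteq G^*$ as underlying graphs, every independent set of $G^*$ is independent in $G'$, so $\ind(G^*)\le \alpha$. The strongly connected components (SCCs) of $G$ are the same as those of $G^*$, because $G$ and $G^*$ have the same reachability. Let $\mathcal{P}$ be the set of SCCs, partially ordered by $X\preceq Y$ iff $X$ reaches $Y$. Pick one node from each SCC in an antichain of $\mathcal{P}$; these nodes are pairwise non-adjacent in $G^*$. Hence every antichain of $\mathcal{P}$ has size at most $\alpha$, and by Dilworth's theorem $\mathcal{P}$ is partitioned into chains $\mathcal{C}_1,\dots,\mathcal{C}_\alpha$. (The hypothesis $\ind(G)\ge\ind(G')$ will not be needed beyond guaranteeing this setup.)

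\emph{Step 2: build $H$.} Inside each SCC $X$, fix a root $r_X$ and take a spanning out-branching and a spanning in-branching of $G[X]$ rooted at $r_X$. The union of all out-branchings over all SCCs is one forest, and likewise for the in-branchings, which gives two forests; each $H[X]$ is then strongly connected. For inter-SCC arcs, fix $X$ and a chain $\mathcal{C}_j$, and let $Y_j$ be the $\preceq$-minimal SCC of $\mathcal{C}_j$ with $X\prec Y_j$, if one exists. Some $G$-path leads from $X$ to $Y_j$; its first arc leaving $X$ is an arc $(u,v)$ of $G$ with $u\in X$ and $v\in Z$, where $Z$ reaches $Y_j$. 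Put this arc into class $j$. Thus each SCC receives at most one outgoing arc per class.

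\emph{Step 3: correctness.} We prove by induction along a reverse topological order of $\mathcal{P}$ that in $H$ every $X$ reaches every $W$ with $X\prec W$. Let $W\in\mathcal{C}_j$ with $X\prec W$. Then $Y_j$ exists and $Y_j\preceq W$. In $H$, $X$ reaches $Z$ through the chosen arc together with the branchings. By induction $Z$ reaches $Y_j$ in $H$, and $Y_j$ reaches $W$ in $H$ (apply induction to $Y_j$, or use $Y_j=W$). Hence the transitive closure of $H$ equals $G^*$.

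\emph{Step 4: arboricity, the main point to verify carefully.} The claim is that the arcs of class $j$ form a forest in the underlying undirected graph. Contract every SCC to a single node. In the contracted multigraph each node has out-degree at most $1$ in class $j$, and the contracted graph is acyclic as a directed graph, so it is a forest. Any undirected cycle among class-$j$ arcs in $H$ would map to a closed walk in this forest that uses distinct arcs, since each class-$j$ arc joins two different SCCs and the map on arcs is injective. Such a walk would contain a cycle, which is impossible. Together with the two branching forests, this gives arboricity at most $\alpha+2$.
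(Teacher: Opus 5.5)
Your proof is correct, and it takes a genuinely different route from the paper. The paper applies Gallai--Milgram to $G'$ to get a node-level chain cover $M$ with at most $\alpha$ chains, and keeps an in- and out-branching inside each SCC. It then starts from the set $D$ of \emph{all} inter-SCC arcs and prunes greedily. Whenever some $x$ has more than $\alpha$ out-neighbours in $D$, two of them, $u$ and $v$, lie on a common chain with $u$ before $v$. The arc $x\arc v$ is deleted, since $x\arc u$ followed by a $u$--$v$ path (which must avoid $x$) still gives $x\reach v$. The invariant is that each deletion preserves the transitive closure, and the end state has out-degree at most $\alpha$ per node in the acyclic $D$.

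You instead pass to the condensation, note $\ind(G^*)\le\ind(G')$, and apply Dilworth to the SCC poset. You then \emph{construct} the inter-SCC part directly: one arc per (SCC, chain) pair, aimed at the lowest member of the chain above $X$. Correctness comes from your induction along a reverse topological order rather than from a deletion invariant.

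Your version is non-iterative and yields a slightly stronger structure: at most $\alpha$ inter-SCC arcs leave each SCC, not merely each node. Working with SCCs also sidesteps the paper's check that the rerouting path misses $x$. The paper's version is phrased as a local deletion rule on whatever graph is at hand, which is how it is invoked in the offline phases of \cref{algo:1conn}; your construction is equally computable offline there.

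The one fact you assert without proof is that a directed acyclic graph with all out-degrees at most $1$ is an undirected forest. It is standard: on an undirected cycle, counting forces every vertex to have out-degree exactly $1$ along it, hence a directed cycle. The paper relies on the same fact. Your contraction in Step 4 is not needed, since class-$j$ arcs already have out-degree at most $1$ at every node of $H$.
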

\begin{proof}
By Gallai--Milgram theorem (\cref{thm:GallaiMilgram}), the nodes of $G'$ can be partitioned into at most $\alpha$ node-disjoint directed paths. Every directed path is a chain in the reachability relation, so a minimum chain cover of $G'$ consists of at most $\alpha$ chains. Since $G$ and $G'$ have the same transitive closure, they induce the same reachability relation, hence a minimum chain cover of $G$ also consists of at most $\alpha$ chains. Let $M$ be a minimum chain cover of $G$.

For each strongly connected component $C$ of $G$ that contains at least two nodes, choose an arbitrary node $r\in C$ and compute an in-branching $S^C_{in}$ and an out-branching $S^C_{out}$ both rooted at $r$ and spanning all nodes in $C$. Let $S=\bigcup_{C} (S^C_{in}\cup S^C_{out})$, and let $D$ be the acyclic subgraph of $G$ formed by removing all arcs within strongly connected components.

By construction, the transitive closure of $D \cup S$ is $G^*$. We now iteratively delete arcs from $D$ while preserving the transitive closure of $D \cup S$, until $D \cup S$ has arboricity $O(\alpha)$. Suppose $D$ contains a node $x$ with more than $\alpha$ out-neighbours. By the pigeonhole principle, at least two of the out-neighbours $u$ and $v$ lie on the same chain in $M$ (of length at least $2$). Assume without loss of generality that $u$ precedes $v$ on this chain.

As $G^*$ is the common transitive closure of $G$ and $D\cup S$, there exists a directed path $P$ from $u$ to $v$ in $D\cup S$. Note that $x\notin P$, otherwise $x$ and $u$ would be in the same strongly connected component of $G$, contradicting $(x,u)\in D$. Now, the arc $x\arc v$ can be removed from $D$ while preserving reachability as $x$ can reach $v$ by $x\arc u$ followed by $P$. 

Repeating this step ensures that every node in $D$ has at most $\alpha$ out-neighbours.
As $D$ is always acyclic, the final $D$ has arboricity at most $\alpha$. Since $S$ has arboricity at most $2$, setting $H$ as $D \cup S$ completes the proof.
\end{proof}

Transitive-closure-preserving pruning is a structured sparsification procedure in the sense that when applied to a graph $G$, the arboricity bound of the output graph $H$ automatically implies the bound $\abs{A(H)}\leq (\alpha(G)+2)\abs{V(G)}$.

\subsubsection{Insertion-Only Model}\label{sec:insertion}
We now present the deterministic 1-arc strong connectivity certificate algorithm in the insertion-only model based on \cref{lem:bblock}. 
Our algorithm follows \cite{CLT25} and uses a divide-and-conquer strategy. While it is presented recursively in \cref{algo:1conn} for ease of analysis, the recursion can be readily unravelled into an iterative implementation. The top-down perspective highlights the high-level structure of the algorithm: it first partitions the graph into progressively smaller components until a size-$\alpha$ chain cover can be computed by brute force using sufficiently small space; it then merges the resulting chains with additional inter-chain arcs and prunes the merged graph to control the graph size. 

\def\OneConnCertAlgo{\textup{\textsf{Recursive-1-ConnCert}}}
\begin{figure}[hbtp!]
\begin{algorithm}[H]
    \setlength{\leftskip}{0.75cm}
    \setlength{\parindent}{-0.75cm}
    \caption{\OneConnCertAlgo}\label{algo:1conn}
    \KwIn{A digraph $G=(V,A)$, size parameter $b$, recursion depth $p$}
    \If{$p=1$}{
        Store the whole graph in memory using one pass.
        
        Perform transitive-closure-preserving pruning (\cref{lem:bblock}) on $G$ offline and output the resultant graph.
    }
    \Else{        
    \underline{Offline Phase 1:} Equipartition $V$ into $b$ pairwise disjoint sets $V_1,\ldots,V_b$.
    
    \underline{Recursive Phase:} For each $i\in [b]$, run $\OneConnCertAlgo(G[V_i],b,p-1)$ in parallel. Denote $H_i$ the resultant graph obtained from $G[V_i]$.
    
    \underline{Offline Phase 2:} For each $i \in [b]$, compute a chain cover $M_i=\set{C^i_1,\ldots,C^i_{m(i)}}$ of $H_i$.
    
    \underline{Extra Streaming Phase:} Compute the following arc set $W$. For each $i\in [b]$, $j\in [m(i)]$, and $x\in V\setminus V_i$, find the first node $u^{i,j}_x$ in $C^i_j$ such that $x\arc u^{i,j}_x$, if one exists.\newline Let $W$ be the collection of all arcs $x\arc u^{i,j}_x$ for all $i,j,x$.

    \underline{Offline Phase 3:} Perform transitive-closure-preserving pruning (\cref{lem:bblock}) on $W \cup \bigcup_{i \in [b]} H_i$ to obtain the subgraph $G'$.
    }
\end{algorithm}
\caption{Deterministic algorithm for 1-strong connectivity certificate}
\end{figure}


\begin{lemma}\label{lem:alg-insert}
Let $G$ be an $n$-node digraph with independence number $\alpha$. Let $p\in \dbN$ and $b\coloneqq n^{1/p}$. \cref{algo:1conn} is a deterministic $p$-pass $O(\alpha n^{1+1/p})$-space streaming algorithm in the insertion-only model that outputs a $1$-node strong connectivity certificate of $G$ with arboricity at most $\alpha+2$. 
\end{lemma}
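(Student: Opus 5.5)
We prove the lemma by induction on the recursion depth $p$, establishing three claims simultaneously: correctness (the output $G'$ has the same transitive closure as $G$, which for $k=1$ is exactly being a $1$-strong connectivity certificate, since $\kappa_{st}\ge 1$ iff $s\reach t$), the arboricity bound, and the pass/space bound. The base case $p=1$ is immediate: with $b=n$ the whole graph is stored in one pass, and \cref{lem:bblock} applied with $G'=G$ gives a subgraph with transitive closure $G^*$ and arboricity at most $\alpha+2$. Space is $O(|A|)$. This is the one point needing care: a digraph with independence number $\alpha$ may have $\Theta(n^2)$ arcs, and that is fine because $O(n^2)=O(\alpha n^{1+1/p})$ when $p=1$.

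For the inductive step, each $G[V_i]$ has $n/b$ nodes and independence number at most $\alpha$. By induction, $H_i$ is computed in $p-1$ passes with transitive closure $(G[V_i])^*$ and arboricity at most $\alpha+2$. The recursive calls at depth $p-1$ have size $n/b=n^{(p-1)/p}$ and the same $b$, so the parameter relation $b=(n/b)^{1/(p-1)}$ is preserved.

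By the Gallai--Milgram argument in the proof of \cref{lem:bblock}, each $H_i$ admits a chain cover $M_i$ with $m(i)\le\alpha$, since $H_i$ has the same reachability as $G[V_i]$. The extra pass therefore stores at most $n\cdot b\cdot\alpha$ arcs in $W$: for each $x$, $i$ and $j$, only a current-best candidate is kept, and it is updated as arcs arrive. This is $O(\alpha n^{1+1/p})$. Together with the recursive storage of $b\cdot O(\alpha (n/b)^{1+1/(p-1)})=O(\alpha n^{1+1/p})$ and the $O(\alpha n)$ size of the $H_i$, the total space is $O(\alpha n^{1+1/p})$ by a geometric or level-by-level sum. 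The pass count satisfies $T(p)=T(p-1)+1$, since the parallel recursive calls share passes.

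Correctness of the merge is the main obstacle. Let $K=W\cup\bigcup_i H_i$. We must show $K^*=G^*$; then \cref{lem:bblock}, applied with the roles $G':=G$ (independence number $\alpha$) and $G:=K$, has to be checked for its hypothesis $\ind(K)\ge\ind(G)$. That hypothesis holds because $K\subseteq G$, and it gives arboricity at most $\alpha+2$ with transitive closure preserved.

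To show $K^*\supseteq G^*$, it suffices that every arc $(x,y)\in A$ is realized as reachability in $K$.
\begin{itemize}
\item If $x,y\in V_i$, the arc is realized through $H_i$.
\item If $x\notin V_i\ni y$, let $y$ lie on chain $C^i_j$. Then $u^{i,j}_x$ exists and is at or before $y$ on the chain. The arc $x\arc u^{i,j}_x$ is in $W$, and $u^{i,j}_x\reach y$ holds in $G[V_i]$, hence in $H_i$, since chains are ordered by reachability.
\end{itemize}
The inclusion $K^*\subseteq G^*$ is trivial since $K\subseteq G$. Determinism is clear since every step is deterministic.
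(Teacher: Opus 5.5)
Your proposal is correct and takes essentially the same route as the paper. The Gallai--Milgram bound $m(i)\le\alpha$ controls the size of $W$; each arc of $G$ is rerouted through $W$ plus a chain path in some $H_i$, and your arc-by-arc version is a tidier form of the paper's block decomposition of paths. \cref{lem:bblock} is then applied to $K=W\cup\bigcup_i H_i\subseteq G$, and you correctly check its hypothesis $\ind(K)\ge\ind(G)$, which the paper glosses over.

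One point to tighten is the space accounting. The recursive workspace and the last-pass storage of $W\cup\bigcup_i H_i$ occupy different passes, so combine them by a maximum, as the paper's recurrence does. Every level costs $\Theta(\alpha n b)$, so a level-by-level sum gives $O(p\,\alpha n^{1+1/p})$ rather than $O(\alpha n^{1+1/p})$.
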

\begin{proof}    
    Each invocation of {\OneConnCertAlgo} use one additional pass. Including the single pass in the base case, the total number of passes is $p$ as claimed.

    For the space analysis, let $s(n,b,p)$ denote the space usage of a call of {\OneConnCertAlgo} with parameter $b$ and recursion depth $p$. For this analysis, we consider the more general setting in which the size parameter at the next recursion level $b'$ is not necessarily equal to $b$. Clearly $s(n,b,1) = O(n^2)$. 
    The function $s$ satisfies the recurrence
    \[
        s(n,b,p) = \max\setB{b\cdot s\left(\frac{n}{b}, b', p-1\right), \sum_{i=1}^b \abs{E(H_i)} + O(bn\cdot \max_i\set{m(i)})}.
    \]
    Gallai--Milgram theorem (\cref{thm:GallaiMilgram}) guarantees that $\max_i\set{m(i)} \leq \alpha$, and the arboricity upper bound proved in \cref{lem:bblock} implies that $\sum_{i=1}^b \abs{E(H_i)}\leq \alpha \sum_{i=1}^b \abs{E(V_i)} = \alpha n$. Therefore, the recurrence simplifies to 
    \begin{equation}
        s(n,b,p) = b \cdot \max\setB{s\left(\frac{n}{b}, b', p-1\right) , O(n\alpha)}. \label{eq:recurrence}
    \end{equation}
    It is easy to see from \cref{eq:recurrence} that it suffices to set $b'=b$ at all recursion levels to obtain the optimal choice of parameters, in which case the recurrence solves to
    \[
        s(n,b,p) = \max\setB{b^{p-1} \cdot s\left(\frac{n}{b^{p-1}}, b, 1\right) ,  O(bn\alpha)} = \max\setB{\frac{n^2}{b^{p-1}}, O(bn\alpha) }.
    \]
    Setting $b = n^{1/p}$ yields the claimed bound.

    Note that at each step of the algorithm, a subgraph of $G$ is maintained, therefore the independence number does not increase beyond $\ind(G) = \alpha$. By \cref{lem:bblock}, the arboricity of the output graph is at most $\alpha+2$. It remains to show that the output graph is indeed a 1-arc strong connectivity certificate of $G$ in order to conclude the correctness of the algorithm. 
    We claim that the transitive closures of $G$ and $H\coloneqq W \cup \bigcup_{i \in [b]} H_i$ are identical.

    Let $s,t\in V(G)$ and suppose there exists an $s$--$t$ path $P$ in $G$. We modify $P$ into a path in $H$ as follows. $P$ can be divided into contiguous blocks $B_1,\ldots,B_r$, where each block $B_k$ consists of the arcs from the same induced subgraph $G[V_{i(k)}]$ for some $i(k)\in [b]$. As each $H_{i}$ is recursively defined to be transitive-closure preserving with respect to $G[V_i]$, each subpath in the block $B_k$ can be replaced with a path in $H_{i(k)}$. 
    For $k=1,\ldots, r-1$, denote $x_k$ the last node in $B_k$ and $y_k$ the first node in $B_{k+1}$.
    Suppose $y_k$ lies in the chain $C^{i(k+1)}_\ell$ in $M_{i(k+1)}$. Then the arc $x_k\arc y_k$ in $P$ is replaced with the arc $x_k\arc u^{i(k+1),\ell}_{x_k}$ in $W$ and a path $u^{i(k+1),\ell}_{x_k}\reach y_k$ in $H_{i(k+1)}$. After this step, $P$ is replaced with a sequence of arcs in $H$. Removing all the self-loops gives a simple $s$--$t$ path in $H$. As $H$ is a subgraph of $G$, this justifies that $G$ and $H$ share the same transitive closure, and therefore $H$ is a 1-arc strong connectivity certificate of $G$.
\end{proof}

\subsubsection{Turnstile Model}\label{sec:turnstile}
By inspecting \cref{algo:1conn}, we see that to adapt our algorithm to the turnstile model, the only modification required is in the Extra Streaming Phase. Specifically, given a node $x$ and a chain, the turnstile algorithm requires a deletion-resilient method to find the out-neighbour of $x$ with the minimum rank according to the chain. 
This can be achieved by a multi-ary search method attributed to Munro and Paterson \cite{MP80}, who showed a tight $\Theta(n^{1/q})$ space bound for a $q$-pass deterministic turnstile algorithm for exact selection of the minimum element over an $n$-element universe.
In more detail, in each pass, the multi-ary search equipartitions the active search range into $n^{1/q}$ contiguous blocks, maintaining one counter per block suffices to identify the block containing a surviving minimum element, on which the search then recurses.

With the revised minimum element search subroutine, we modify \cref{algo:1conn} and update the corresponding pass and space usage analysis. Suppose the minimum element search is implemented with $q$ passes, and the recursion depth of the algorithm is $d$. The modified turnstile algorithm uses $dq+1$ passes, and the recurrence for the space usage function $s'(n,b,d)$ becomes
\[
    s'(n,b,d) = b \cdot \max\setB{s'\left(\frac{n}{b}, b, d-1\right) , O(n^{1+1/q}\alpha)}.
\]
with the same base case $s'(n,b,1) = O(n^2)$. Optimizing for the parameters $b,d,q$ yields a space upper bound of $O(\alpha n^{1+ O(1/\sqrt{p})})$\footnote{Note that for a prescribed number of passes $p$, the optimal choice of integral parameters $d$ and $q$ for $p=dq+1$ may not satisfy $d,q=\Theta(\sqrt{p})$. Nevertheless, one can choose a slightly smaller $p'$ to attain the exponent $1+O(1/\sqrt{p})$.} for a $p$-pass algorithm. Together with \cref{lem:alg-insert}, this completes the proof of \cref{thm:main-1conn}.

\subsection{$k$-Node Strong Connectivity Certificates}\label{sec:knode}
As alluded to in \cref{sec:intro}, the main idea is to reduce the task of computing a $k$-node strong connectivity certificate to multiple independent instances of computing $1$-node strong connectivity certificates, which can then be executed in parallel. Our proof adapts ideas developed for $k$-node connectivity certificates in undirected graphs~\cite{GuhaMT15,AssadiS23}. In contrast to the spanning trees in the undirected setting, we use $1$-strong connectivity certificates as the fundamental primitives for the directed case. Moreover, the key technical lemma \cref{lem:kvc-main} is needed in \cref{sec:congest} with a different sampling probability, so we present the argument in detail for completeness and clarity.

At a high level, the algorithm takes sufficiently many independent vertex samples and computes a $1$-strong connectivity certificate using \cref{algo:1conn} for each induced subgraph. With high probability, these certificates collectively form a cover with the following properties: for every arc $(u,v)$ with large $\kappa_{uv}(G)$, the cover preserves a directed path from $u$ to $v$ upon the removal of at most $k-1$ nodes, while for $(u,v)\in A(G)$ with small $\kappa_{uv}(G)$, at least one certificate retains the arc. Therefore, the individual $1$-strong connectivity certificates of the $r$ subgraphs are assembled into a $k$-strong connectivity certificate of $G$ with high probability.

We set up the notations in preparation for the proof.
For $\rho \in (0,1)$, let $V(\rho)$ be a random subset of $V(G)$ obtained by including each $v \in V(G)$ independently with probability $\rho$. For $r \in \dbN$, we let $V_1(\rho), V_2(\rho), \ldots, V_r(\rho)$ denote $r$ independent samples generated according to the distribution of $V(\rho)$. For each $i\in [r]$, let $Q_i$ be an arbitrary $1$-node strong connectivity certificate of the induced subgraph $G[V_i(\rho)]$.

In the following lemma, we prove that for $k\geq 2$, with suitable choices of the parameters $r$ and $\rho$, the union of 1-strong connectivity certificates of $r$ independent node samples forms a $k$-node strong connectivity certificate for $G$ with high probability. 
Specifically, for a fixed $k\geq 2$, we consider the range $\rho\in (0,1/k]$. For this range of $\rho$, $(1-\rho)^k\geq (1-1/k)^k\geq 1/4$. 
To simplify the calculations, we also assume that $n$ is sufficiently large so that constants and lower-order terms can be ignored in the bounds.
\begin{lemma}\label{lem:kvc-main}
Let $k\geq 2$, $\rho\in (0,1/k]$ and $n$ be sufficiently large. Write $r=\frac{192\lambda}{\rho^2}\log n$ for some $\lambda\geq 1$.
With probability $1 - n^{-\lambda}$, 
\(
Q \coloneqq \bigcup_{i \in [r]} Q_i
\)
forms a $k$-node strong connectivity certificate of $G$.
\end{lemma}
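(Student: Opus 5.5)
We reduce the global certification statement to a local, per-arc condition, and then establish that condition for every arc by a union bound over arcs and small node sets. The local condition we aim for is the following.
\begin{quote}
(\(\ast\)) For every arc $(u,v)\in A(G)$ and every node set $F\subseteq V(G)\setminus\set{u,v}$ with $\abs{F}\le k-1$, the graph $Q\setminus F$ contains a $u$--$v$ path.
\end{quote}

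First we show that (\(\ast\)) implies $Q$ is a $k$-node strong connectivity certificate. Fix $s,t$ and suppose $\kappa_{st}(Q)<\min\set{k,\kappa_{st}(G)}$. By Menger's theorem (\cref{thm:menger}) there is an $s$--$t$ node cut $F$ in $Q$ with $\abs{F}\le k-1$. If $(s,t)\in A(G)$, condition (\(\ast\)) applied to the arc $(s,t)$ and to $F$ already contradicts $F$ being a cut. Otherwise $\abs{F}<\kappa_{st}(G)$, so $G\setminus F$ contains an $s$--$t$ path. Each arc $(x,y)$ on this path avoids $F$, so by (\(\ast\)) it can be replaced by an $x$--$y$ path in $Q\setminus F$. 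This yields an $s$--$t$ walk in $Q\setminus F$, again a contradiction.

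Next we prove (\(\ast\)) for a fixed arc $(u,v)$ and a fixed $F$. We call sample $i$ \emph{good} if $u,v\in V_i(\rho)$ and $F\cap V_i(\rho)=\emptyset$. This happens with probability at least $\rho^2(1-\rho)^{k-1}\ge\rho^2/4$, using $\rho\le 1/k$. In a good sample, $(u,v)$ is an arc of $G[V_i(\rho)]$. Since $Q_i$ is a $1$-certificate, it contains a $u$--$v$ path lying inside $V_i(\rho)$, and this path therefore avoids $F$. Good events are independent across $i$, so the expected number of good samples is at least $r\rho^2/4=48\lambda\log n$. By the Chernoff bound (\cref{prop:Chernoff}), there is no good sample with probability at most $\exp(-24\lambda\log n)$; even the trivial bound $(1-\rho^2/4)^r\le n^{-48\lambda}$ suffices.

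Finally we take a union bound over all pairs $(u,v)$, of which there are at most $n^2$, and all sets $F$, of which there are at most $n^{k-1}$. This is the main obstacle: the number of bad events is $n^{k+1}$, which is not absorbed by $n^{-48\lambda}$ when $k$ is large, since $r$ does not scale with $k$. I would resolve it as follows.

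The idea is to avoid unioning over all $F$ by following the Guha--McGregor--Tench / Assadi--Shah argument. For a fixed arc $(u,v)$, let $I=\set{i: u,v\in V_i(\rho)}$. The number of indices in $I$ concentrates around $\rho^2 r=192\lambda\log n$. For the union over $F$, note that each node $w$ lies in a given sample independently with probability $\rho$. By Chernoff, simultaneously for all $w$, node $w$ appears in at most $2\rho\abs{I}$ of the samples in $I$, with failure probability $n^{-\Omega(\lambda)}$ after a union bound over only $n^3$ triples $(u,v,w)$.

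Conditioned on this, any $F$ with $\abs{F}\le k-1$ hits at most $2\rho(k-1)\abs{I}<2\abs{I}$ of the samples in $I$. That bound is too weak as it stands, since we need fewer than $\abs{I}$. I would therefore refine it by using the sharper bound $(1+\delta)\rho\abs{I}$ with $\delta$ small. Then $F$ hits at most $(1+\delta)\rho(k-1)\abs{I}<\abs{I}$ samples whenever $\rho(k-1)(1+\delta)<1$, which holds since $\rho\le 1/k$ once $\delta<1/(k-1)$. If the resulting constants do not work out for large $k$, the fallback is to count hits through the Chernoff bound on $\sum_{w\in F}\mathbf{1}[w\in V_i]$ directly. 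Some sample in $I$ then misses $F$, and we conclude as before. The overall failure probability is $n^{-\lambda}$ after adjusting constants.
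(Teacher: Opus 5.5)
\textbf{The probabilistic part has a genuine gap.} You certify each pair $((u,v),F)$ by a \emph{single} sample $i$ with $u,v\in V_i(\rho)$ and $F\cap V_i(\rho)=\emptyset$. This cannot work for large $k$, however the Chernoff bounds are tuned. The set $I=\{i: u,v\in V_i(\rho)\}$ has size concentrated around $r\rho^2=192\lambda\log n$, which does not grow with $k$. Take, for instance, fixed $\lambda$, $k=\log^2 n$ and $\rho=1/k$. Then with high probability $|I|\le k-1$. Choosing one node of $V_i(\rho)\setminus\{u,v\}$ for each $i\in I$ gives an admissible $F$ that meets every sample in $I$, so no good sample exists at all.

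Your concrete steps fail in the same way:
\begin{itemize}
\item With $\mu=\rho|I|\approx 192\lambda\log n/k$, the claim that every $w$ lies in at most $2\rho|I|$ samples of $I$ cannot hold with high probability once $\mu=o(\log n)$. The $n-2$ hit counts are i.i.d.\ binomials, so some node exceeds $2\mu$.
\item With $\delta<1/(k-1)$, the Chernoff exponent is only $O(\lambda\log n/k^3)$. That cannot pay for the union bound over $n^3$ triples unless $k^3=O(\lambda)$.
\end{itemize}
The lemma, by contrast, is claimed for every $k\ge 2$.

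\textbf{The missing idea is to use the structure of $G$ through a case split on $\kappa_{ab}(G)$,} which is how the paper proceeds.
\begin{itemize}
\item If $\kappa_{ab}(G)<2k$, there is a single cut $W$ of $G\setminus\{(a,b)\}$ with $|W|\le 2k-2$, and it depends only on $G$. Any sample containing $a,b$ and avoiding $W$ forces $(a,b)\in A(Q_i)$. This happens with probability at least $\rho^2/16$ per sample. Hence $(a,b)\in A(Q)$ with high probability, $(\ast)$ holds for all $F$ at once, and the union bound is over arcs only.
\item If $\kappa_{ab}(G)\ge 2k$, then for each $X$ with $|X|\le k-1$ there are $k$ internally disjoint $a$--$b$ paths of length at least $2$ in $G\setminus X$. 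Each arc of such a path is certified by its own sample, possibly a different one for each arc. These samples are drawn from $R_{\textup{start}}$, $R_{\textup{mid}}$, $R_{\textup{end}}$ for the first arc, the middle subpath and the last arc respectively. The concatenated path then avoids $X$ and the arc $(a,b)$. Since the $k$ paths have disjoint interiors, their survival events are independent given these families. This yields failure probability $n^{-\Omega(k\lambda)}$ per triple $(a,b,X)$, which is exactly what absorbs the $n^{k+1}$ union bound.
\end{itemize}
Your $(\ast)$ follows from these two properties, so it is a reasonable intermediate target. For high-connectivity arcs, however, it can only be established by chaining many samples along many disjoint routes, not by one sample.

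\textbf{A smaller omission in your reduction.} The reduction from $(\ast)$ skips the case $(s,t)\in A(Q)$, where no $s$--$t$ node cut exists. To repair it, work in $Q\setminus\{(s,t)\}$ with a cut $F$ of size at most $k-2$. Then apply $(\ast)$ with $F\cup\{s\}$ for arcs not incident to $s$, and with $F\cup\{t\}$ for the first arc. This way the replacement paths avoid the arc $(s,t)$.
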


\cref{lem:kvc-main} follows from the claim about the relation between the connectivity of $G$ and $Q$. 
\begin{claim}\label{clm:kvc}
    Let $k\geq 2$, $\rho\in (0,1/k]$ and $n$ be sufficiently large. Write $r=\frac{192\lambda}{\rho^2}\log n$ for some $\lambda\geq 1$. With probability $1 - n^{-\lambda}$, every arc $(a,b)\in A(G)$ satisfies the following properties:
    \begin{enumerate}[label=\textup{(\alph*)}]
        \item\label{prop.a-kvc} If $\kappa_{ab}(G) \ge 2k$, then $\kappa_{ab}(Q) \ge k$;
        \item\label{prop.b-kvc} If $\kappa_{ab}(G) < 2k$, then $(a, b) \in A(Q)$.
    \end{enumerate}
\end{claim}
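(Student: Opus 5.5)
We prove each property separately for a fixed arc, then take a union bound over the at most $n^2$ arcs (and, for \ref{prop.a-kvc}, over the relevant node sets). The basic probabilistic event is this: for a fixed arc $(a,b)$ and a fixed set $F\subseteq V\setminus\set{a,b}$ with $\abs{F}\le 2k$, the sample $V_i(\rho)$ \emph{isolates} $(a,b)$ from $F$ if $a,b\in V_i(\rho)$ and $F\cap V_i(\rho)=\emptyset$. This happens with probability $\rho^2(1-\rho)^{\abs{F}}\ge \rho^2(1-1/k)^{2k}\ge \rho^2/16$, using $\rho\le 1/k$. The number of isolating samples among the $r$ independent ones is a sum of independent indicators with mean $\mu\ge r\rho^2/16 = 12\lambda\log n$. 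By the Chernoff bound (\cref{prop:Chernoff}), it is positive except with probability $\exp(-\Omega(\lambda\log n))$.

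\textbf{Property \ref{prop.b-kvc}.} Suppose $\kappa_{ab}(G)<2k$. By Menger's theorem (\cref{thm:menger}) applied to $G$ with the arc $(a,b)$ removed, there is a set $F$ of at most $2k-2$ nodes such that every $a$--$b$ path in $G$ other than the arc itself passes through $F$. Take a sample $V_i(\rho)$ that isolates $(a,b)$ from $F$. In $G[V_i(\rho)]$ the only $a$--$b$ path is the arc $(a,b)$. Since $Q_i$ is a $1$-strong connectivity certificate, it has the same transitive closure as $G[V_i(\rho)]$, so $Q_i$ must contain $(a,b)$. The failure probability per arc is $n^{-\Omega(\lambda)}$, and a union bound over the $\le n^2$ arcs suffices once the constant $192$ is used.

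\textbf{Property \ref{prop.a-kvc}.} Suppose $\kappa_{ab}(G)\ge 2k$. By Menger it suffices to show that for every $F\subseteq V\setminus\set{a,b}$ with $\abs{F}\le k-1$, the graph $Q\setminus F$ contains an $a$--$b$ path. Fix such an $F$. There are at least $2k$ internally disjoint $a$--$b$ paths in $G$, and at most $k-1$ of them meet $F$, so at least $k+1$ avoid $F$. The idea is to find a sample $V_i(\rho)$ that contains $a,b$, misses $F$, and fully contains one of those surviving paths; then $Q_i\setminus F=Q_i$ restricted to $V_i(\rho)$ contains an $a$--$b$ path, by transitive closure preservation.

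The obstacle is that the surviving paths may be long, so it is unlikely that a sample contains one entirely. I would instead argue by the arc structure. First, if $(a,b)\in A(Q)$ we are done. Otherwise, use an induction or potential argument on the arcs of a surviving path: each arc $(u,v)$ of it either has $\kappa_{uv}(G)<2k$, in which case it lies in $Q$ by \ref{prop.b-kvc}, or has $\kappa_{uv}(G)\ge 2k$. In the latter case we need $u\reach v$ in $Q\setminus F$, which is the same statement for a different pair. To avoid circularity, I would instead use samples that isolate $\set{u,v}$ from $F$: any such sample gives $u\reach v$ inside $Q_i$ avoiding $F$, provided $u,v$ remain connected in $G[V_i(\rho)]$.

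So the cleanest version is this. Union bound over all pairs $(u,v)$ and all $F$ of size at most $k-1$. That is at most $n^{k+1}$ events, each failing with probability $\exp(-\mu/2)$, and this only works if $\mu=\Omega(k\lambda\log n)$. Reconciling this with $r=\Theta(\lambda\rho^{-2}\log n)$ is the delicate step. I would resolve it by requiring each path from $u$ to $v$ used to be of length one, i.e.\ pushing everything down to arcs, and showing via a charging argument (as in \cite{AssadiS23}) that only $O(n^2)$ events indexed by arcs, each with a $2k$-node blocking witness, actually need to be controlled. This is the main step I expect to need care.
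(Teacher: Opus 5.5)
Your argument for Property (b) is correct and matches the paper's: a Menger cut of size at most $2k-2$ in $G$ minus the arc, a sample containing $a,b$ and missing the cut with probability at least $\rho^2/16$, and failure probability $n^{-12\lambda}$ per arc. Property (a) has a genuine gap.

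First, a smaller error. Since $a$ and $b$ are adjacent, no node set ever destroys the arc $(a,b)$. So the condition ``$Q\setminus F$ has an $a$--$b$ path for every $\abs{F}\le k-1$'' does not give $\kappa_{ab}(Q)\ge k$, and ``if $(a,b)\in A(Q)$ we are done'' is false. You need an $a$--$b$ path in $Q\setminus F$ \emph{avoiding the arc $(a,b)$}. This is why the paper's notion of survival requires the sample used for an arc $(u,v)$ to exclude whichever of $a,b$ is not an endpoint of $(u,v)$. Then the replacement path in $Q_i$ cannot use $(a,b)$.

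The main gap is the union bound over $F$. You flag it but do not resolve it, and your proposed fix cannot work. For a single arc $(u,v)$ and a single $F$, the failure probability is $n^{-\Theta(\lambda)}$ regardless of $k$, and no arc-by-arc certification can do better. W.h.p.\ only $O(\lambda\log n)$ samples contain both $u$ and $v$. So for $k\gg\lambda\log n$, picking one node other than $u,v$ from each such sample gives a set $F$ with $\abs{F}<k$ that none of them avoids. The $O(n^2)$ arc-indexed events with $2k$-node witnesses are just Property (b), which says nothing about arcs of high connectivity.

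The missing idea is amplification over many disjoint paths:
\begin{enumerate}[label=\textup{(\roman*)}]
\item Fix $(a,b,F)$ and take $k$ internally disjoint $a$--$b$ paths $a\arc\pi_j\arc b$ in $G$ minus the arc $(a,b)$ and minus $F$. These exist since $\kappa_{ab}(G)\ge 2k$.
\item Condition on three index sets: $R_{\textup{start}}$ (samples containing $a$ and missing $F\cup\set{b}$), $R_{\textup{mid}}$ (samples missing $F\cup\set{a,b}$), and $R_{\textup{end}}$ (symmetric to $R_{\textup{start}}$).
\item Their expected sizes are $\Omega(r\rho)$ and $\Omega(r)$. Hence they are large except with probability $n^{-\Omega(\lambda/\rho)}\le n^{-\Omega(k\lambda)}$. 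This factor $1/\rho\ge k$ is exactly what reconciles $r=\Theta(\lambda\rho^{-2}\log n)$ with the $n^{k+1}$ union bound.
\item Given these sets, the survival of the first arc $(a,s_j)$, the interior $\pi_j$, and the last arc $(t_j,b)$ depends only on the disjoint node sets $\set{s_j}$, $V(\pi_j)$, $\set{t_j}$. So these events are independent across $j$, each failing with probability $q=n^{-\Omega(\lambda)}$.
\item Hence any one group has at least $k/3$ failures with probability at most $2^k q^{k/3}=n^{-\Omega(k\lambda)}$. Otherwise some $P_j$ survives in all three groups and yields the required path in $Q\setminus F$.
\item The bound $n^{-\Omega(k\lambda)}$ survives the union over at most $n^{k+1}$ triples $(a,b,F)$.
\end{enumerate}
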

\begin{proof}[Proof of \cref{lem:kvc-main} assuming \cref{clm:kvc}]
    We show that $Q$ is a $k$-node connectivity certificate if every arc $(a,b)\in A(G)$ satisfies Properties \ref{prop.a-kvc} and \ref{prop.b-kvc}. Suppose for a contradiction that there exists a pair of nodes $s,t\in V(G)$ such that $\kappa_{st}(Q)<\min\set{k,\kappa_{st}(G)}$. By the contrapositive of Property \ref{prop.a-kvc}, $\kappa_{st}(G)<2k$ and thus $(s,t)\in A(Q)$ by Property \ref{prop.b-kvc}. As $Q$ is a subgraph of $G$, $(s,t)\in A(G)$.

    Let $Q' = Q \setminus \set{(s,t)}$ and $G' = G \setminus \set{(s,t)}$. Then $\kappa_{st}(Q') = \kappa_{st}(Q) - 1$ and $\kappa_{st}(G') = \kappa_{st}(G) - 1$, implying $\kappa_{st}(Q') < \min\set{k, \kappa_{st}(G')}$. By Menger’s theorem, there exists a set $X \subseteq V(Q') \setminus \set{s,t}$ of size $\kappa_{st}(Q')$ that forms an $(s,t)$-node cut in $Q'$. $V(Q')$ can be partitioned into $S \cup X \cup T$ with $s \in S$ and $t \in T$ such that there is no directed path from $S$ to $T$.  
    
    On the other hand, since $\kappa_{st}(Q') < \kappa_{st}(G')$, the size-$\kappa_{st}(Q')$ node set $X$ does not form an $(S,T)$-cut in $G'$. Hence, there exists an arc $(u,v) \in A(G') \setminus A(Q')$ with $u \in S$ and $v \in T$. Since $(u,v)\notin A(Q)$, by the contrapositive of Property~\ref{prop.b-kvc}, we have $\kappa_{uv}(G) \ge \theta$, and thus $\kappa_{uv}(Q) \ge k$ by Property~\ref{prop.a-kvc}. This contradicts the fact that $X$ is a node cut in $Q'$ separating $S$ and $T$.
\end{proof}

Now, we analyze Properties \ref{prop.a-kvc} and \ref{prop.b-kvc} separately. 
For two distinct nodes $a,b\in V(G)$, a subset of nodes $X\subseteq V(G)\setminus\set{a,b}$, and a subset of indices $R\subseteq [r]$, a directed path $P$ of $G$ without containing the arc $(a,b)$ is said to \emph{survive with respect to $a,b,X,R$} if for every arc $(u,v)$ in $P$, there exists an index $i\in R$ with $u,v\in V_i(\rho)$ and $V_i(\rho)\cap X=\emptyset$, in addition $a\notin V_i(\rho)$ if $a\notin\set{u,v}$, and $b\notin V_i(\rho)$ if $b\notin\set{u,v}$.
\begin{claim}\label{clm:survival}
    Let $P$ be a path from $s$ to $t$.
    If $P$ survives with respect to $a,b,X,R$, then $Q\setminus X$ contains an $s$--$t$ path without the arc $(a,b)$.
\end{claim}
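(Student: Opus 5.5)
The plan is to replace $P$ arc by arc with paths in the individual certificates $Q_i$ and then concatenate them. Write $P = (s=w_0, w_1, \ldots, w_m = t)$. Fix an arc $(w_{j-1}, w_j)$ of $P$. Since $P$ survives with respect to $a,b,X,R$, there is an index $i_j\in R$ with the following properties:
\begin{itemize}
\item $w_{j-1}, w_j \in V_{i_j}(\rho)$;
\item $V_{i_j}(\rho)\cap X=\emptyset$;
\item $a\notin V_{i_j}(\rho)$ unless $a\in\set{w_{j-1},w_j}$, and likewise for $b$.
\end{itemize}
The arc $(w_{j-1},w_j)$ lies in $G[V_{i_j}(\rho)]$, so $w_j$ is reachable from $w_{j-1}$ in that induced subgraph. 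Because $Q_{i_j}$ is a $1$-node strong connectivity certificate of $G[V_{i_j}(\rho)]$, we have $\kappa_{w_{j-1}w_j}(Q_{i_j})\ge 1$. Hence there is a directed path $P_j$ from $w_{j-1}$ to $w_j$ inside $Q_{i_j}\subseteq Q$.

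Next I check that each $P_j$ avoids both $X$ and the arc $(a,b)$. All nodes of $P_j$ lie in $V_{i_j}(\rho)$, which is disjoint from $X$, so $P_j$ is a path in $Q\setminus X$. Suppose $P_j$ used the arc $(a,b)$. Then both $a,b\in V_{i_j}(\rho)$, so by the survival condition $a,b\in\set{w_{j-1},w_j}$. As $a\neq b$, this means $\set{a,b}=\set{w_{j-1},w_j}$. The case $(w_{j-1},w_j)=(a,b)$ is excluded because $P$ does not contain the arc $(a,b)$. So $(w_{j-1},w_j)=(b,a)$, and $P_j$ is a $b$--$a$ path. A directed path from $b$ to $a$ cannot traverse the arc $(a,b)$, since its first node is $b$ and it would have to revisit $b$ after that arc. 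Hence $P_j$ avoids $(a,b)$ in every case.

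Finally, concatenate $P_1,\ldots,P_m$ to get an $s$--$t$ walk in $Q\setminus X$ that uses no copy of the arc $(a,b)$. Shortcutting repeated nodes turns this walk into a simple $s$--$t$ path, and shortcutting only deletes arcs, so the resulting path still avoids $(a,b)$. This is the path the claim asks for.

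The only delicate point is the arc-avoidance argument in the second paragraph. It needs care in two places: the case where the current arc of $P$ touches $a$ or $b$, and the use of simplicity of $P_j$ to rule out the arc $(a,b)$ inside a $b$--$a$ path. The rest is direct bookkeeping from the definition of survival.
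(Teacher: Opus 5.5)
Your proof is correct and follows the paper's approach: replace each arc of $P$ by a path in the certificate $Q_i$ of a witnessing sample $V_i(\rho)$, then concatenate and shortcut. Your argument that the replacement paths avoid $(a,b)$ is more careful than the paper's. The paper simply asserts that at most one of $a,b$ lies on each replacement path, which fails when $P$ uses the arc $(b,a)$; your simplicity argument covers that case.
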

\begin{proof}
    For any arc $(u,v)$ in the path $P$, there exists an index $i\in R$ such that $u,v\in V_i(\rho)$, in particular $(u,v)\in G[V_i(\rho)]$. Since $Q_i$ is a $1$-strong connectivity certificate of $G[V_i(\rho)]$, $Q_i$ contains a $u$--$v$ path $P_{uv}$. Moreover, at most one of $a$ and $b$ is contained in $V(P_{uv})$, so $(a,b)\notin A(P_{uv})$. Concatenating all paths $P_{uv}$'s and removing loops yields an $s$--$t$ path without the arc $(a,b)$.
\end{proof}


Fix an arc $(a,b)\in A(G)$ with $\kappa_{ab}(G)\geq 2k$, and a subset of at most $k-1$ nodes $X\subseteq V(G)\setminus \set{a,b}$. Excluding the length-1 $a-b$ path (i.e. $(a,b)$), by the definition of $\kappa_{ab}(G)$, $G\setminus X$ contains at least $k$ pairwise internally node-disjoint paths $P_1,\ldots,P_k$ from $a$ to $b$, each of length at least 2. We decompose each $P_j$ into the form $a\arc \pi_j\arc b$, where $\pi_j$ is a subpath with starting node $s_j$ and ending node $t_j$. Define 
\[
    \Gamma_{\textup{start}} \coloneqq \set{(a,s_j):j\in [k]} ;
    \qquad
    \Gamma_{\textup{mid}} \coloneqq \set{\pi_j:j\in [k]} ;
    \qquad
    \Gamma_{\textup{end}} \coloneqq \set{(t_j,b):j\in [k]} .
\]
For a collection $\Gamma$ of $k$ paths in $G$ and $R'\subseteq [r]$, we say $\Gamma$ is \emph{nice with respect to $a,b,X,R'$} if at least $2k/3+1$ paths in $\Gamma$ survive with respect to $a,b,X,R'$, and \emph{poor} otherwise.

Let $R_{\textup{start}}, R_{\textup{mid}}, R_{\textup{end}}\subseteq [r]$ be defined as follows:
\begin{align*}
    R_{\textup{start}} &\coloneqq \set{i\in [r]: a\in V_i(p) \text{ and } V_i(p)\cap (X\cup\set{b})=\emptyset}\\
    R_{\textup{mid}} &\coloneqq \set{i\in [r]: V_i(p)\cap (X\cup\set{a,b})=\emptyset}\\
    R_{\textup{start}} &\coloneqq \set{i\in [r]: b\in V_i(p) \text{ and } V_i(p)\cap (X\cup\set{a})=\emptyset}
\end{align*}

We first upper bound the probability that $\Gamma_{\textup{start}}$ is poor. By a symmetric argument, $\Gamma_{\textup{end}}$ is poor with respect to $a,b,X,R_{\textup{end}}$ with the same probability bound.
\begin{claim}\label{clm:R-start}
    For fixed choices of $a,b,X$, with probability $2n^{-6k\lambda}$, $\Gamma_{\textup{start}}$ is poor with respect to $a,b,X,R_{\textup{start}}$.
\end{claim}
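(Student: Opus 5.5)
The plan is to split the randomness into two stages. First we control the size of $R_{\textup{start}}$, which depends only on the memberships of $a$, $b$ and the nodes of $X$. Then, conditioned on $R_{\textup{start}}$, we control the memberships of the $k$ distinct nodes $s_1,\ldots,s_k$, which lie outside $X\cup\set{a,b}$. The key observation is that the arc $(a,s_j)$ survives with respect to $a,b,X,R_{\textup{start}}$ exactly when some $i\in R_{\textup{start}}$ has $s_j\in V_i(\rho)$. Indeed, every $i\in R_{\textup{start}}$ already has $a\in V_i(\rho)$, $b\notin V_i(\rho)$ and $V_i(\rho)\cap X=\emptyset$, and these are precisely the remaining conditions in the definition of survival. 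The two failure events we bound will each have probability at most $n^{-6k\lambda}$, which gives the factor $2$.

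\textbf{Step 1: $R_{\textup{start}}$ is large.} Each index $i\in[r]$ lies in $R_{\textup{start}}$ independently with probability $\rho(1-\rho)^{\abs{X}+1}\ge \rho(1-\rho)^k\ge \rho/4$, using $\abs{X}\le k-1$ and $\rho\le 1/k$. Hence $\Ex\abs{R_{\textup{start}}}\ge r\rho/4 = 48\lambda\log n/\rho$. The lower-tail Chernoff bound (\cref{prop:Chernoff}) with $\delta=1/2$ gives $\abs{R_{\textup{start}}}\ge r\rho/8$, except with probability at most
\[
\exp(-r\rho/32)=n^{-6\lambda/\rho}\le n^{-6k\lambda},
\]
where the last inequality uses $1/\rho\ge k$.

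\textbf{Step 2: few arcs of $\Gamma_{\textup{start}}$ fail.} Condition on the memberships of $a$, $b$ and $X$ in all samples, and assume $\abs{R_{\textup{start}}}\ge r\rho/8$. The events $\set{s_j\in V_i(\rho)}$ for distinct $j$ and $i$ are then still independent, each of probability $\rho$. So each arc $(a,s_j)$ fails to survive independently, with probability
\[
(1-\rho)^{\abs{R_{\textup{start}}}}\le \exp(-\rho^2 r/8)=n^{-24\lambda}.
\]
If $\Gamma_{\textup{start}}$ is poor, then some set $F$ of failing indices has size $m\coloneqq k-\lceil 2k/3\rceil$ or more; in particular $m\ge 1$, and $m\ge k/4$ once $k$ is moderately large. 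A union bound over the at most $2^k$ choices of $F$ bounds this event by $2^k n^{-24\lambda m}$.

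The main obstacle is checking that $2^k n^{-24\lambda m}\le n^{-6k\lambda}$ uniformly in $k\ge 2$, given the rounding in the ``$2k/3+1$'' threshold. For large $k$ we have $24\lambda m\ge 6k\lambda$ with slack, so the $2^k$ factor is absorbed because $n$ is sufficiently large. For the finitely many small values of $k$, we would verify directly that $\lfloor\cdot\rfloor$ and $\lceil\cdot\rceil$ still give $24m\ge 6k$, or note that at least one failure already suffices when $k\le 4$. If the rounding turns out to be too tight at some small $k$, the fix is to shrink the constant in the Step 1 Chernoff threshold, which strengthens the per-arc failure exponent. Combining Steps 1 and 2 gives total poor probability at most $2n^{-6k\lambda}$.
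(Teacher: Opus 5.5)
Your two-stage plan matches the paper's proof step for step. Step 1 is the same Chernoff bound giving $\abs{R_{\textup{start}}}\ge r\rho/8$ except with probability $n^{-6\lambda/\rho}\le n^{-6k\lambda}$. Step 2 uses the same conditional independence over the distinct $s_j$'s, with per-arc failure probability $q=n^{-24\lambda}$ and a $2^k$ union bound.

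Your threshold computation is also right. Read literally, ``poor'' (fewer than $2k/3+1$ survivors) forces only $m=k-\lceil 2k/3\rceil=\lfloor k/3\rfloor$ failures. But your handling of this ``main obstacle'' does not work, and several values of $k$ break:
\begin{itemize}
\item For $k=2$ you get $m=0$, not $m\ge 1$. Niceness would need $3$ surviving arcs out of $2$, so $\Gamma_{\textup{start}}$ is poor with probability $1$, and the claim is false as literally stated.
\item For $k=5$ you have $m=1$ and $24<30$.
\item For $k=4$ and $k=8$ you have $24m=6k$ exactly, so the factor $\binom{k}{m}>1$ (let alone $2^k$) is not absorbed.
\end{itemize}
Your fallback fix also goes the wrong way. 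The per-arc bound is $(1-\rho)^{\abs{R_{\textup{start}}}}$, so lowering the Step~1 threshold weakens it rather than strengthening it. Raising the threshold above $r\rho/8$ instead makes the Step~1 Chernoff estimate (from $\Ex\abs{R_{\textup{start}}}\ge r\rho/4$) larger than $n^{-6k\lambda}$ when $\rho=1/k$.

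The paper's own proof has the same defect in hidden form. It sums from $t=k/3$, as if poor meant at least $k/3$ failures, which is false whenever $3\nmid k$. The clean repair is to change the definition rather than check small $k$ case by case: call $\Gamma$ nice if \emph{more than} $2k/3$ of its paths survive. This is all the proof of \cref{lem:kvc-main} uses, since three collections with fewer than $k/3$ failures each still leave some index $j$ surviving in all three. With this definition, poor means at least $\lceil k/3\rceil$ failures. Your union bound then gives $2^k q^{\lceil k/3\rceil}\le 2^k n^{-8k\lambda}\le n^{-6k\lambda}$ for every $k\ge 2$ and $n\ge 2$, so the total is at most $2n^{-6k\lambda}$. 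That is exactly the paper's $q^{k/3}2^k$ computation, now made rigorous.
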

\begin{proof}
    Note that 
    \(
        \Ex[ \abs{R_{\textup{start}}} ] = r\rho(1-\rho)^{\abs{X}+1}\geq r\rho/4.
    \) 
    By Chernoff Bound (\cref{prop:Chernoff}),
    \[
        \Pr\left[\abs{R_{\textup{start}}} \leq \frac{r\rho}{8} \right] 
        \leq \Pr\left[\abs{R_{\textup{start}}} \leq \frac{1}{2}\Ex[\abs{R_{\textup{start}}}]\right]
        \leq \exp(-r\rho/32) = n^{-6\lambda/\rho} \leq n^{-6k\lambda}.
    \]
    Now consider when $R_{\textup{start}}$ contains at least $r\rho/8$ elements. For each $j\in [k]$, denote $E_j$ the event that $s_j\notin V_i(\rho) \text{ for all } i\in R_{\textup{start}}$. Then
    \[
        \Pr[E_j] \leq (1-\rho)^{r\rho/8} \leq \exp(-r\rho^2/8) = n^{-24\lambda}.
    \]
    Let $q\coloneqq n^{-24\lambda}$.
    Note that all the $s_j$'s are distinct, so the events $\set{E_j}_{j=1}^k$ are mutually independent, by Chernoff bound we obtain
    \begin{align*}
        \Pr\left[\Gamma_{\textup{start}}\text{ is poor}~\middle|~\abs{R_{\textup{start}}}\geq  \frac{r\rho}{8}\right] 
        &\leq \sum_{t=k/3}^{k} \binom{k}{t} q^t (1-q)^{k-t}
        \leq q^{k/3}\cdot 2^k\leq n^{-8k\lambda} \cdot 2^k.
    \end{align*}
    Therefore, the probability that $\Gamma_{\textup{start}}$ is poor is at most 
    \(
        n^{-6k\lambda} + n^{-8k\lambda} \cdot 2^k \leq  2n^{-6k\lambda}. 
    \)
\end{proof}

Next, we turn to bound the probability that $R_{\textup{middle}}$ is poor.

\begin{claim}\label{clm:R-middle}
    For fixed choices of $a,b,X$, with probability at most $n^{-2k\lambda}$, $\Gamma_{\textup{middle}}$ is poor with respect to $a,b,X,R_{\textup{middle}}$.
\end{claim}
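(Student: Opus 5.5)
Our plan mirrors the proof of \cref{clm:R-start}: first show that the index set $R_{\textup{mid}}$ is large with overwhelming probability, then show that, conditioned on this, few of the middle paths $\pi_j$ fail to survive.

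\textbf{Step 1: $R_{\textup{mid}}$ is large.} We have $\Ex[\abs{R_{\textup{mid}}}] = r(1-\rho)^{\abs{X}+2}\geq r(1-\rho)^{k+1}$. Since $\rho\le 1/k$, this is at least $r/8$ (using $(1-\rho)^k\ge 1/4$ and $1-\rho\ge 1/2$). The Chernoff bound (\cref{prop:Chernoff}) then gives
\[
\Pr\left[\abs{R_{\textup{mid}}}\le r/16\right]\le \exp(-r/64)=n^{-3\lambda/\rho^2}\le n^{-3k^2\lambda},
\]
which is negligible compared with $n^{-2k\lambda}$.

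\textbf{Step 2: a single middle path survives.} Condition on $\abs{R_{\textup{mid}}}\ge r/16$. Fix $j$ and an arc $(u,v)$ of $\pi_j$. Both endpoints lie outside $X\cup\set{a,b}$, since the $P_j$ are internally disjoint paths in $G\setminus X$. Given the membership pattern of $X\cup\set{a,b}$ that defines $R_{\textup{mid}}$, the events $u,v\in V_i(\rho)$ for $i\in R_{\textup{mid}}$ are independent, each with probability $\rho^2$. So the arc fails with probability at most
\[
(1-\rho^2)^{r/16}\le \exp(-r\rho^2/16)=n^{-12\lambda}.
\]
Surviving for a path in $\Gamma_{\textup{mid}}$ needs only that each arc be covered by some $i\in R_{\textup{mid}}$, because the conditions on $a,b$ hold automatically by the definition of $R_{\textup{mid}}$. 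A union bound over the at most $n$ arcs of $\pi_j$ gives
\[
\Pr[\pi_j \text{ does not survive}]\le n^{1-12\lambda}\le n^{-11\lambda} \eqqcolon q.
\]

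\textbf{Step 3: independence across paths (the main obstacle).} The difficulty is that non-survival events for different $j$ are not obviously independent, since they all depend on the same samples $V_i$. However, the $\pi_j$ are node-disjoint. Conditioned on the set $R_{\textup{mid}}$, the memberships of nodes outside $X\cup\set{a,b}$ are independent across nodes, and the event "$\pi_j$ fails" is determined by the memberships of $V(\pi_j)$ alone. Hence these events are mutually independent. Poorness means at least $k/3-1$ of them fail, so
\[
\Pr[\text{poor}\mid \abs{R_{\textup{mid}}}\ge r/16]\le 2^k q^{\lceil k/3\rceil - 1}.
\]
This bound degrades for small $k$, where $k/3-1$ may be $\le 0$. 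To handle that case, first check the precise threshold: poorness requires fewer than $2k/3+1$ survivors, i.e. at least $\lceil k/3\rceil$ failures when $k\ge 2$ and the threshold is read as $\lceil 2k/3\rceil+1$ minus slack. If the count works out to at least $k/3$, we get $2^k n^{-11k\lambda/3}\le n^{-3k\lambda}$ for large $n$. If instead only one failure is guaranteed, I would use the slack $n^{-11\lambda}\le n^{-2k\lambda}$ together with a small-$k$ case split, adjusting constants as needed.

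\textbf{Conclusion.} Adding the bound from Step 1 to the bound from Step 3 gives a total of at most $n^{-2k\lambda}$.
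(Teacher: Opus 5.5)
Your proposal is essentially the paper's proof: it uses the same Chernoff bound giving $\abs{R_{\textup{mid}}}\geq r/16$ except with probability $n^{-3k^2\lambda}$, the same per-arc failure bound $n^{-12\lambda}$ followed by a union bound over the arcs of $\pi_j$ (the paper settles for $n^{-9\lambda}$), and the same independence argument from the node-disjointness of the $\pi_j$, yielding $2^k q^{k/3}$. Your small-$k$ hedge in Step 3 is unnecessary: the paper treats poorness as ``at least $k/3$ of the $\pi_j$ fail'' (its binomial sum starts at $t=k/3$), which is exactly what the pigeonhole step in the proof of \cref{lem:kvc-main} needs, so your main-case bound already finishes the argument---and no case split could rescue the literal ``at least $2k/3+1$ survivors'' reading anyway, since for $k=2$ it makes every collection poor.
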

\begin{proof}
    We proceed as in \cref{clm:R-start}.
    Note that $\Ex[\abs{R_{\textup{middle}}}] = r(1-\rho)^{\abs{X}+2} \geq r(1-\rho)^{k+1} \geq r/8$. By Chernoff bound, we obtain
    \[
        \Pr\left[ \abs{ R_{\textup{middle}} } \leq \frac{r}{16} \right] 
        \leq \Pr\left[ \abs{ R_{\textup{middle}} } \leq \frac{1}{2}\Ex[\abs{ R_{\textup{middle}} }] \right]
        \leq \exp(-r/64) \leq n^{-{3\lambda}/\rho^2} \leq n^{-3k^2\lambda}.
    \]
    Now consider when $R_{\textup{middle}}$ contains at least $r/16$ elements. For each arc $(u,v)\in A(G)$ with $u,v\notin \set{a,b}$, and each index $i\in R_{\textup{middle}}$, the probability that $\set{u,v}\subseteq V_i(\rho)$ is $\rho^2$, and therefore
    \[
        \Pr[\set{u,v} \not\subseteq V_i(\rho) \text{ for all $i\in R_{\textup{middle}}$}] = (1-\rho^2)^{\abs{ R_{\textup{middle}} }} \leq \exp\left(-\frac{r\rho^2}{16}\right)\leq n^{-12\lambda}.
    \]
    For each $j\in [k]$, denote $E'_j$ the event that the path $\pi_j$ does not survive with respect to $a,b,X,R_{\textup{middle}}$.   
    By a union bound over all arcs in $\pi_j$,
    \[
        \Pr[E'_j]\leq \abs{\pi_j}\cdot n^{-12\lambda}\leq n\cdot n^{-12\lambda}\leq n^{-9\lambda}.
    \]

    Let $q' \coloneqq n^{-9\lambda}$.
    Recall that the paths $P_1,\ldots,P_k$ are pairwise internal node-disjoint, the node sets $V(\pi_1),\ldots,V(\pi_k)$ are pairwise disjoint and thus the events $\set{E'_j}_{j=1}^k$ are mutually independent. By Chernoff bound, we obtain
    \begin{align*}
        \Pr\left[\Gamma_{\textup{middle}}\text{ is poor}~\middle|~\abs{R_{\textup{middle}}}\geq  \frac{r}{16}\right] 
        &\leq \sum_{t=k/3}^{k} \binom{k}{t} (q')^t (1-q')^{k-t}
        \leq (q')^{k/3}\cdot 2^k = n^{-3k\lambda}\cdot 2^k.
    \end{align*}
    Therefore, the probability that $\Gamma_{\textup{middle}}$ is poor is at most $n^{-3k^2\lambda} + n^{-3k\lambda}\cdot 2^k \leq n^{-5k\lambda/2}$.
\end{proof}

\begin{proof}[Proof of \cref{lem:kvc-main}]
    We first prove that Property \ref{prop.a-kvc} is violated with probability at most $n^{-k\lambda}$. For each arc $(a,b)\in A(G)$ with $\kappa_{ab}(G)\geq 2k$, and a subset of at most $k-1$ nodes $X\subseteq V(G)\setminus \set{a,b}$, if all of $\Gamma_{\textup{start}}, \Gamma_{\textup{middle}}, \Gamma_{\textup{end}}$ are nice, at least one path $P_j$ survives with respect to $a,b,X,[r]$, by \cref{clm:survival} such a set $X$ certifies that $\kappa_{ab}(Q)\geq k$. By \cref{clm:R-start,clm:R-middle} and a union bound on $X,a,b$, Property \ref{prop.a-kvc} is violated with probability at most
    \begin{equation}        
        n^{k-1} \cdot n^2 \cdot (2\cdot 2n^{-6k\lambda} + n^{-5k\lambda/2}) \leq n^{-k\lambda}. \label{eq:prop.a-violation}
    \end{equation}

    Next, we prove that with high probability, $(a,b)\in Q$ whenever $\kappa_{ab}(G)<2k$. For such an arc $(a,b)$, by Menger's theorem, $G\setminus\set{(a,b)}$ contains an $a$--$b$ cut $W\subseteq V(G)\setminus \set{a,b}$ with $\abs{W}\leq 2k-2$. Here, we subtract the contribution of the arc $(a,b)$ to $\kappa_{ab}(G)$. $V(G)$ can be partitioned into $S\cup W\cup T$ with $a\in S$ and $b\in T$ such that there is no directed path from $S$ to $T$.

    For each $i\in[r]$, denote $E''_i$ the event that $V_i(\rho)$ contains both $a,b$ but no nodes from $X$. Then as $(1-\rho)^k\geq 1/4$ for $\rho\in(0,1/k]$,
    \[
        \Pr[E''_i] = \rho^2 (1-\rho)^{\abs{X}} \geq \rho^2 (1-\rho)^{2k} \geq \frac{\rho^2}{16}.
    \]
    Whenever $E_i$ occurs, the arc $(a,b)$ is the unique arc crossing from $S$ to $W\cup T$ in $G[V_i(\rho)]$, hence the certificate $Q_i$ must include $(a,b)$. Therefore
    \[
        \Pr[(a,b)\notin Q] \leq \left(1 - \frac{\rho^2}{16} \right)^r \leq \exp\left(- \frac{r\rho^2}{16} \right) =n^{-12\lambda},
    \]
    hence Property \ref{prop.b-kvc} is violated with probability at most $n^2 \cdot n^{-12\lambda} \leq n^{-10\lambda}$. Together with \cref{eq:prop.a-violation} concludes that both properties are satisfied with probability at least $1-n^{-\lambda}$.
\end{proof}

Up to this point, we have shown that for a sampling probability parameter $\rho\in(0,1/k]$, it suffices to take $O(\rho^{-2}\log n)$ independent node samples so that the $1$-strong connectivity certificates of the induced subgraphs combined form a $k$-strong connectivity certificate of $G$. For concreteness, our $k$-strong connectivity algorithm is described in \cref{algo:kconn}.

\begin{figure}[hbtp!]
\begin{algorithm}[H]
    \setlength{\leftskip}{0.75cm}
    \setlength{\parindent}{-0.75cm}
    \caption{$k$-strong Connectivity Certificate Algorithm}\label{algo:kconn}
    \KwIn{A digraph $G=(V,A)$, sampling probability $\rho$, depth parameter $p$}
    Set $r=O(\rho^{-2}\log n)$\;
    Sample $V_1(\rho),\ldots,V_r(\rho)$ independently\;
    \For{$i=1,\ldots,r$}{
        Apply {\OneConnCertAlgo} (\cref{algo:1conn}) of depth $p$ on $G[V_i(\rho)]$ to obtain a $1$-strong connectivity certificate $Q_i$\; 
    }
    \KwRet{$Q = \bigcup_{i=1}^r Q_i$}\;
\end{algorithm}
\caption{Randomized algorithm for $k$-node strong connectivity certificate}
\end{figure}

By Chernoff bound, we ensure that every node set $V_i(\rho)$ has $O(\rho n)$ nodes with high probability.
In such a case, as proven in \cref{thm:main-1conn}, {\OneConnCertAlgo} outputs an $O(\alpha\rho n)$-arc 1-strong connectivity certificate of $G[V_i(\rho)]$ in $p$ passes and $O(\alpha(\rho n)^{1+1/p})$ space for each $i\in [r]$. 
Observe that the $r$ computations of {\OneConnCertAlgo} on $G[V_i(\rho)]$'s can be executed in parallel. Therefore, the algorithm uses $p$ passes in total, outputs a $k$-strong connectivity certificate of size $r\cdot O(\alpha \rho n) = O(\alpha \rho^{-1} n\log n)$, and the overall space usage is 
\[
    r\cdot O(\alpha (\rho n)^{1+1/p}) = O(\alpha \rho^{-1+1/p} n^{1+1/p}\log n).
\]

The space is minimized by setting $\rho=1/k$, which yields a $p$-pass $O(k^{1-1/p}\alpha n^{1+1/p}\log n)$-space algorithm in the insertion-only model. By the same augmenting procedure used to adapt {\OneConnCertAlgo} to the turnstile model, we can modify \cref{algo:kconn} accordingly to obtain a $p$-pass $O(k^{1-O(1/\sqrt{p}) }\alpha n^{1+ O(1/\sqrt{p}) }\log n)$-space algorithm in the turnstile model.

\subsection{Deterministic Algorithm for $k$-Arc-Strong Digraphs}\label{sec:tradeoff}
For the case where a digraph $G$ is \emph{$k$-arc-strong}, we obtain a deterministic algorithm for computing a $k$-arc strong connectivity certificate of $G$, described below in \cref{algo:kstrong}. 

\begin{figure}[hbtp!]
\begin{algorithm}[H]
    \setlength{\leftskip}{0.75cm}
    \setlength{\parindent}{-0.75cm}
    \caption{$k$-arc Strong Connectivity Certificate Algorithm for $k$-arc Strong Digraphs }\label{algo:kstrong}
    \KwIn{A $k$-arc strong digraph $G=(V,A)$, depth parameter $p$}
    Set $R_0,R'_0\coloneqq G$ and $U_0,U'_0\coloneqq \emptyset$\;
    Choose an arbitrary $x\in V(G)$\;
    \For{$t=1,\ldots,k$}{
        Apply {\OneConnCertAlgo} (\cref{algo:1conn}) of depth $p$ on $R_{t-1}$ to obtain a $1$-strong connectivity certificate $H_t$\;
        Apply {\OneConnCertAlgo} (\cref{algo:1conn}) of depth $p$ on $R'_{t-1}$ to obtain a $1$-strong connectivity certificate $H'_t$\;
        Set $U_t\coloneqq U_{t-1} \cup H_t$ and $U'_t\coloneqq U'_{t-1} \cup H'_t$, which are $t$-arc strong since $G$ is $k$-arc-strong\;
        Extract $t$ pairwise arc-disjoint spanning out-branchings $\mathcal{B}_t$ rooted at $x$ from $U_t$\;
        Extract $t$ pairwise arc-disjoint spanning in-branchings $\mathcal{B}'_t$ rooted at $x$ from $U'_t$\;
        Update $R_t \coloneqq G\setminus \bigcup_{B\in \mathcal{B}_t} B$, $R'_t \coloneqq G\setminus \bigcup_{B'\in \mathcal{B}'_t} B'$\;
    }
    \KwRet{$H\coloneqq (\bigcup_{B\in \mathcal{B}_k} B) \cup (\bigcup_{B'\in \mathcal{B}'_k} B')$}\;
\end{algorithm}
\caption{Deterministic algorithm for $k$-arc strong connectivity certificate for $k$-arc strong digraphs}
\end{figure}

By Edmonds’ branching theorem (\cref{thm:edmond}), if a digraph $G$ is $k$-arc-strong, then for every node $x\in V(G)$, there exist $k$ pairwise arc-disjoint spanning out-branchings rooted $\set{B_i}^k_{i=1}$ at $x$ and $k$ pairwise arc-disjoint spanning in-branchings $\set{B'_i}^k_{i=1}$ rooted at $x$. Note that the out-branchings need not be arc-disjoint from the in-branchings; nevertheless, the union of these $2k$ spanning branchings forms a $k$-arc-strong subgraph and thus a $k$-arc strong connectivity for $G$.
\begin{claim} \label{clm:Hx}
    For any $x\in V(G)$, $H_x \coloneqq (\bigcup_{i=1}^k B_i) \cup (\bigcup_{i=1}^k B'_i)$ is $k$-arc-strong.
\end{claim}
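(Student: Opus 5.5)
We prove that $H_x$ is $k$-arc-strong by showing that every nonempty proper subset $S\subsetneq V(G)$ has at least $k$ arcs of $H_x$ leaving it. This is sufficient: by Menger's theorem (\cref{thm:menger}), for any distinct $s,t$, $\lambda_{st}(H_x)$ equals the minimum number of arcs of $H_x$ leaving a set $S$ with $s\in S$ and $t\notin S$. So if every such cut has out-degree at least $k$, then $\lambda_{st}(H_x)\ge k$ for all distinct pairs.

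Fix a nonempty proper subset $S$ and split into two cases according to whether it contains the root $x$.

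\textbf{Case $x\in S$.} Pick any $v\in V\setminus S$. Each out-branching $B_i$ spans $V$ and is rooted at $x$, so it contains a directed path from $x$ to $v$. This path starts in $S$ and ends outside $S$, so it uses at least one arc of $B_i$ leaving $S$. The $B_i$ are pairwise arc-disjoint, so these arcs are distinct for distinct $i$. Since each $B_i\subseteq H_x$, we get $|\delta^+_{H_x}(S)|\ge k$.

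\textbf{Case $x\notin S$.} Pick any $u\in S$. Each in-branching $B'_i$ spans $V$ and is rooted at $x$, so it contains a directed path from $u$ to $x$. This path leaves $S$ somewhere, so it uses an arc of $B'_i$ in $\delta^+(S)$. Pairwise arc-disjointness of the $B'_i$ again gives $k$ distinct such arcs, all lying in $H_x$.

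In both cases $S$ has out-degree at least $k$ in $H_x$, which proves the claim.

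The only step needing care is the role of arc-disjointness. The out-branchings may share arcs with the in-branchings, so the argument must never combine arcs from both families. It does not: each case uses only one family, and within that family the branchings are pairwise arc-disjoint. The existence of the $2k$ branchings comes from Edmonds' theorem (\cref{thm:edmond}) applied to $G$ and to its reverse. This is where the hypothesis that $G$ is $k$-arc-strong is used.

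Since $H_x$ is $k$-arc-strong and $k$ is the threshold, $H_x$ is then immediately a $k$-arc strong connectivity certificate of $G$.
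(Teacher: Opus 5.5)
Your proof is correct and follows the paper's argument. The paper fixes $x\in S$ and then counts arcs leaving $S$ via the pairwise arc-disjoint out-branchings and arcs entering $S$ via the pairwise arc-disjoint in-branchings, which is exactly your two cases with the roles of $S$ and its complement swapped in the second case.
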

\begin{proof}
    Let $(S,T)$ be any non-trivial cut of $V(G)$, where we assume $x\in S$ without loss of generality. For any $v\in T$, each $B_i$ contains a unique $x$--$v$ path that crosses from $S$ to $T$. Since $B_1,\ldots,B_k$ are pairwise arc-disjoint, the number of arcs crossing from $S$ to $T$ is at least $k$. Likewise by considering the in-branchings, the number of arcs crossing from $T$ to $S$ is at least $k$. This shows that every directed cut of $H_x$ has size at least $k$ in each direction, implying that $H_x$ is $k$-arc-strong.
\end{proof}

The correctness of \cref{algo:kstrong} now follows from \cref{clm:Hx} and the fact that $H_t\subseteq R_{t-1}$ is disjoint from $U_{t-1}$ (and similarly for $H_t',U_{t-1}'$) at each timestamp $t\in [k]$. Also, as $H$ is a union of $2k$ spanning branchings, the arboricity of $H$ is at most $2k$.

For the space usage, we remind the readers that the efficiency of {\OneConnCertAlgo} is parameterized by the independence number of the input graph. The following lemma shows that removing a low-degeneracy subgraph from a graph does not substantially increase the independence number of the remaining graph.

\begin{lemma}\label{lem:smallincrease}
Let $G$ be a digraph with independence number $\alpha$. Let $H$ be a subgraph of $G$ with degeneracy $\ell$. Then $\ind(G \setminus H)\leq (\ell+1)\alpha$.
\end{lemma}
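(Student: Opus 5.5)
Let $I$ be an independent set of $G\setminus H$; the plan is to show $|I|\le(\ell+1)\alpha$ by properly colouring the subgraph of $H$ induced on $I$. Since $G\setminus H$ keeps all nodes of $G$ and deletes only the arcs of $H$, any two nodes of $I$ that are adjacent in $G$ must be adjacent only through arcs of $H$. Consider $H[I]$, viewed as an undirected graph by ignoring arc directions (and merging antiparallel arcs). Degeneracy is inherited by subgraphs, so $H[I]$ has degeneracy at most $\ell$.

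By the standard greedy argument, a graph of degeneracy $\ell$ is $(\ell+1)$-colourable. Repeatedly remove a node of degree at most $\ell$ to obtain an elimination ordering, then colour the nodes in reverse order; each node then sees at most $\ell$ already-coloured neighbours. Apply this to $H[I]$ to partition $I$ into $\ell+1$ colour classes $I_1,\ldots,I_{\ell+1}$, each independent in $H$.

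Each class $I_j$ is also independent in $G$. Two nodes of $I_j$ have no arc of $G\setminus H$ between them, because $I$ is independent there, and no arc of $H$ between them, because $I_j$ is a colour class. Hence $|I_j|\le\alpha$, so $|I|\le(\ell+1)\alpha$, which gives $\ind(G\setminus H)\le(\ell+1)\alpha$.

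The only point that needs care is the convention for degeneracy of a digraph. I take it as the degeneracy of the underlying undirected simple graph. Collapsing antiparallel pairs $u\arc v$, $v\arc u$ into one edge does not increase degeneracy, so this reading is safe. With that convention the argument is routine, and there is no real obstacle.
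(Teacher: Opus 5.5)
Your proposal is correct and follows essentially the same argument as the paper: greedily $(\ell+1)$-colour $H$ restricted to an independent set of $G\setminus H$, note that each colour class is independent in $G$, and conclude $|I|\le(\ell+1)\alpha$. Your remark about reading degeneracy on the underlying undirected graph matches the convention the paper fixes in its preliminaries.
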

\begin{proof}
Let $S$ be a maximum independent set in $G\setminus H$. Since $H$ is $\ell$-degenerate, every induced subgraph of $H$ is $\ell$-degenerate. In particular, $H[S]$ is $\ell$-degenerate and thus $(\ell+1)$-colourable by a greedy colouring. Each colour class $C_1,\ldots,C_{\ell+1}\subseteq S$ forms an independent set in $H[S]$, hence also an independent set in $G$. Since $G$ has independence number $\alpha$, it follows that
\[
    \alpha \ge \max_i\set{\abs{C_i}}\geq  |S|/(\ell+1),
\]
which completes the proof.
\end{proof}

Observe that for each $t\in [k]$, a union of $t$ branchings has degeneracy $O(t)$. By \cref{lem:smallincrease}, the residual graphs $R_t$ and $R'_t$ have independence number $O(t\alpha)$. Therefore, this iterative peeling scheme uses
\[
\sum_{t=1}^{k} O\left(t\alpha n^{1+1/p}\right)
= O\left(k^{2}\alpha n^{1+1/p}\right)
\]
space in the insertion-only model. Notice that the iterative process of \cref{algo:kstrong} has to be implemented sequentially, so the pass usage is $kp$.

The discussion for the turnstile model follows from \cref{sec:turnstile}, and it is easy to see that the space usage of \cref{algo:kstrong} in the turnstile model is $O\left(k^{2}\alpha n^{1+O(1/\sqrt{p})}\right)$. This completes the proof of \cref{thm:main-kstrong}.

%% file: hardness.tex
\section{Parameterized Lower Bounds}\label{sec:hardness}
Our lower bound for digraph streaming problems follows the standard reduction-based paradigm from communication complexity. As noted in \cref{sec:intro}, our hard instance construction is based on the framework of gadget-embedding tournament.

\subsection{Gadget-embedding Tournament}\label{subsec:GET}
For $n\in \dbN$ and $d\in [n]$, and $n/d$ size-$d$ gadgets $\vec{\Gamma} = (\Gamma_1,\ldots,\Gamma_{n/d})$, a \emph{$\vec{\Gamma}$-embedding tournament of parameters $(n,d)$} is a digraph $G^{n,d}_{\vec{\Gamma}}$ constructed as follows:
\begin{itemize}
    \item The node set $V$ is $[n/d]\times [d]$, which we denote $V^{(i)} \coloneqq \set{(i,j):j\in [d]}$ the $i$-th component;
    \item For each $i\in [n/d]$, embed $\Gamma_i$ into the node part $V^{(i)}$, typically under the canonical embedding $\mathcal{E}_i:V(\Gamma_i)\to V^{(i)}$ defined by $\mathcal{E}_i(v_j) = (i,j)$;
    \item For each $1\leq i<i'\leq n/d$, create the arc $(i,j)\arc (i',j')$ for each $j,j'\in [d]$.
\end{itemize}
We remind the readers that the resultant graph is generally not a tournament anymore.
Due to the presence of inter-component arcs, the following important observation is immediate.
\begin{observation}
    For any list of size-$d$ gadgets $\vec{\Gamma}$, the independence number of $G^{n,d}_{\vec{\Gamma}}$ is at most $d$.
\end{observation}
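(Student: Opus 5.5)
The plan is a pigeonhole argument on the component structure. Fix an arbitrary independent set $S$ of $G^{n,d}_{\vec{\Gamma}}$ and show $\abs{S}\le d$. The node set is partitioned into the components $V^{(1)},\ldots,V^{(n/d)}$, each of size exactly $d$. The key step is to show that $S$ meets at most one component, from which $\abs{S}\le \abs{V^{(i)}} = d$ follows at once.

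To see that $S$ lies inside a single component, suppose it contains two nodes $(i,j)$ and $(i',j')$ with $i\neq i'$. Without loss of generality $i<i'$. The third bullet of the construction then places the arc $(i,j)\arc(i',j')$ in $G^{n,d}_{\vec{\Gamma}}$. This arc is never removed, because embedding the gadgets only alters arcs inside a single $V^{(i)}$. So the two nodes are adjacent in the underlying undirected sense, and $S$ is not independent, a contradiction. Hence $S\subseteq V^{(i)}$ for some $i$.

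No step here is a real obstacle. The only point needing care is that independence in a digraph means no arc in either direction between any two chosen nodes, so a single inter-component arc suffices. One should also note that the bound holds whatever the gadgets $\Gamma_i$ are, since arcs inside a component are never used. Taking the maximum over all independent sets $S$ gives $\alpha(G^{n,d}_{\vec{\Gamma}})\le d$.
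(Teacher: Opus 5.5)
Your proposal is correct and matches the paper's reasoning: any two nodes in distinct components are joined by an inter-component arc, so an independent set lies inside a single component $V^{(i)}$, which has $d$ nodes. The same argument in fact gives $\alpha(G^{n,d}_{\vec{\Gamma}}) = \max_i \alpha(\Gamma_i)$, which is the form the paper uses later.
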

Another crucial property of the gadget-embedding tournament is the following observation:
\begin{observation} \label{obs:ordering}
    In a gadget-embedding tournament, if $u$ and $v$ belong to the $i$-th component and the $i'$-th component respectively for some $i>i'$, then $u\not\arc v$.
\end{observation}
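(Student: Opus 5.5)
The claim is that in a gadget-embedding tournament $G^{n,d}_{\vec{\Gamma}}$, no arc goes from a higher-indexed component to a lower-indexed one. I would prove it directly from the construction, by listing every arc of $G^{n,d}_{\vec{\Gamma}}$ by where it comes from. There are two possible sources.

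\emph{Intra-component arcs.} These are the images of arcs of some $\Gamma_{i''}$ under the embedding $\mathcal{E}_{i''}$. Both endpoints of such an arc lie in $V^{(i'')}$. Now take $u$ in component $i$ and $v$ in component $i'$ with $i>i'$. Since $i\neq i'$, the pair $u,v$ does not lie in a single component, so no gadget arc joins them. This uses only that each $\mathcal{E}_{i''}$ maps into $V^{(i'')}$, which is true for the canonical embedding and for any embedding the framework allows.

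\emph{Inter-component arcs.} The third bullet of the construction creates exactly the arcs $(i_1,j)\arc(i_2,j')$ with $i_1<i_2$. So every inter-component arc has a tail whose component index is strictly smaller than that of its head. An arc $u\arc v$ would need $i<i'$, which contradicts $i>i'$.

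Since the node set is $[n/d]\times[d]$ and these two bullets are the only places arcs are added, the construction contributes no other arcs, and the two cases are exhaustive. Hence $u\not\arc v$.

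There is no real obstacle here. The one point to state carefully is that each embedding places its gadget entirely inside its own component.
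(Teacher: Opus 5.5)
Your proof is correct and follows the paper's reasoning, which states this observation without proof as immediate from the construction. Your split into gadget arcs, whose endpoints both lie in a single $V^{(i'')}$, and inter-component arcs, which always go from the lower-indexed component to the higher-indexed one, is exactly the argument the paper leaves implicit.
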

This fact reflects that each inter-component edge carries little information, and each gadget component remains minimally interactive. This naturally leads us to consider decomposable functions (\cref{def:decomposable}) for the communication complexity lower bound of the new instance.
These two observations allow us to design a hard, structured instance with constrained independence number based on existing hard instance constructions.

\subsection{Parameterized Lower Bounds through Disjointness}
As a starting point, we present several parameterized lower bounds through reductions from the well-known hard communication problem of disjointness.

Recall that the two-party $m$-bit boolean disjointness problem $\DISJ_m$ is defined by $\DISJ_m(x,y)=0$ iff there exists $i\in [m]$ such that $x_i=y_i=1$. Following the randomized communication complexity lower bound, we have the following streaming lower bound.
\begin{fact}[\cite{KN96,Rou16}] \label{fact:DISJ}
    $\Rcc(\DISJ_m) =\Omega(m)$.
    Consequently, for any streaming problem reduced from $\DISJ_m$, any $p$-pass streaming algorithm for the problem requires $\Omega(m/p)$ space.
\end{fact}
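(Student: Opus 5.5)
The first part of the statement, $\Rcc(\DISJ_m)=\Omega(m)$, is the classical randomized lower bound for set disjointness. I would not reprove it; I would cite it as in \cite{KN96,Rou16}. It rests on the Kalyanasundaram--Schnitger / Razborov argument: a distributional bound under a product-like hard distribution, via rectangle corruption or information complexity. So the real content to verify is the ``consequently'' clause, which transfers the communication bound to multi-pass streaming space. I would do this with the standard simulation argument.

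Fix a streaming problem $\mathcal{P}$ and a reduction from $\DISJ_m$. Such a reduction maps Alice's input $x$ to a stream segment $\sigma_A(x)$ and Bob's input $y$ to a segment $\sigma_B(y)$. The concatenated stream $\sigma_A(x)\circ\sigma_B(y)$ must be a valid instance of $\mathcal{P}$, and $\DISJ_m(x,y)$ must be computable from any correct output of $\mathcal{P}$ on it. In our later applications, the instance lies in a gadget-embedding tournament, and the output is, for example, a certificate from which the relevant reachability bit is read off. Now let $\mathcal{A}$ be a $p$-pass randomized streaming algorithm for $\mathcal{P}$ using $s$ bits of space.

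The protocol works as follows. Alice and Bob share public randomness for $\mathcal{A}$'s coins. Alice runs pass $1$ of $\mathcal{A}$ on $\sigma_A(x)$ and sends the memory state ($s$ bits) to Bob. Bob continues the pass on $\sigma_B(y)$ and sends the state back to Alice, who begins pass $2$. This repeats until Bob finishes pass $p$, computes the output, and derives $\DISJ_m(x,y)$. The protocol uses at most $2p-1$ messages of $s$ bits each, so it has cost at most $2ps$. Its error probability equals that of $\mathcal{A}$, which is a constant bounded below $1/2$ (or zero, for deterministic algorithms).

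Public randomness is harmless: by Newman's theorem it costs only an additive $O(\log m)$, or one can simply note that the cited $\Omega(m)$ bound holds in the public-coin model. Hence $2ps\ge \Rcc(\DISJ_m)=\Omega(m)$, which gives $s=\Omega(m/p)$. Turnstile algorithms are in particular insertion-only algorithms, so the bound transfers.

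The only delicate point is not the simulation itself but the reduction hypothesis. One must check that the stream is split so that Alice's arcs all come from $x$ and Bob's all come from $y$, with the input-independent arcs (such as the tournament arcs of $G^{n,d}_{\vec\Gamma}$) assigned to either player. One must also check that the output format lets Bob decode the answer locally. I expect this decoding step, rather than the counting, to be where later applications need care.
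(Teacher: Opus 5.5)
Your proposal is correct and is essentially the paper's approach. The paper gives no proof of its own: it cites the $\Omega(m)$ randomized lower bound for $\DISJ_m$ from \cite{KN96,Rou16}, and its ``consequently'' clause is the standard simulation you describe, in which Alice and Bob pass the $s$-bit memory state back and forth over $2p-1$ messages, so that $(2p-1)s \geq \Rcc(\DISJ_m)$ and hence $s=\Omega(m/p)$.
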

With slight modifications to the now-standard disjointness reduction for undirected triangle counting in \cite{BKS02,BOV13}, we build ``disjointness gadgets'' to prove hardness for various graph streaming problems. For a parameter $m\in \dbN$, each disjointness gadget constructs a digraph on $3m$ nodes from two strings $x,y\in\set{0,1}^{m\times m}$. The set of nodes is $V=\set{a_i:i\in [m]}\cup \set{b_i:i\in [m]}\cup \set{c_i:i\in [m]}$, and the arcs are constructed as follows:
\def\gplain{\Gamma_{\textup{\textsf{plain}}}}
\def\gtri{\Gamma_{\triangle}}
\begin{itemize}
    \item For the plain gadget $\gplain^{x,y}$, for each $i,j\in [m]$, add $a_i\arc b_j$ iff $x_{ij}=1$, and $b_j\arc c_i$ iff $y_{ij}=1$;
    \item For the triangle gadget $\gtri^{x,y}$, for each $i,j\in [m]$, add $a_i\arc b_j$ if $x_{ij}=1$ and $b_j\arc a_i$ otherwise, and likewise add $b_j\arc c_i$ if $y_{ij}=1$ and $c_i\arc b_j$ otherwise; lastly, add $c_i\arc a_i$ for each $i\in [m]$.
\end{itemize}

\def\lwidth{3pt}
\begin{figure}
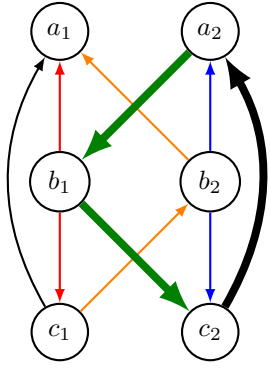
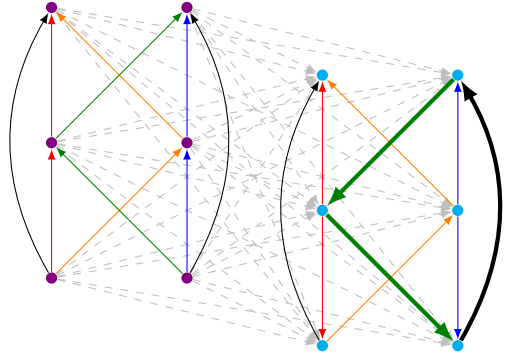

    \centering
    \begin{subfigure}[t]{0.45\textwidth}

    \centering
        \tikz[scale=2, every path/.style={thick}, every node/.style={circle,draw=black,fill=white,minimum width=7.5mm,font=\small}]{
            \foreach \ii in {1,2}{
                \node (a\ii) at (\ii,0) {$a_{\ii}$};
                \node (b\ii) at (\ii,-1) {$b_{\ii}$};
                \node (c\ii) at (\ii,-2) {$c_{\ii}$};
            }        

            \draw[-latex,red] (b1)--(a1); 
            \draw[-latex,orange] (b2)--(a1); 
            \draw[-latex,darkgreen,line width=\lwidth] (a2)--(b1); 
            \draw[-latex,blue] (b2)--(a2); 
            
            \draw[-latex,red] (b1)--(c1); 
            \draw[-latex,orange] (c1)--(b2); 
            \draw[-latex,darkgreen,line width=\lwidth] (b1)--(c2); 
            \draw[-latex,blue] (b2)--(c2); 

            \draw[-latex,black] (c1) to[bend left] (a1);
            \draw[-latex,black,line width=\lwidth] (c2) to[bend right] (a2);
            
        }
        \caption{$\gtri^{x,y}$ for $m=2$, $x=\begin{bmatrix}{\textcolor{red}0}&{\textcolor{orange}0}\\{\textcolor{darkgreen}1}&{\textcolor{blue}0}\end{bmatrix}$, 
        $y=\begin{bmatrix}{\textcolor{red}1}&{\textcolor{orange}0}\\{\textcolor{darkgreen}1}&{\textcolor{blue}1}\end{bmatrix}$. In this case, $\DISJ_4(x,y) = 0$ and a directed triangle is indicated with bold lines.}
    \end{subfigure}
    \hfill
    \begin{subfigure}[t]{0.45\textwidth}

    \centering
        \tikz[scale=1.8,every node/.style={circle,inner sep=1.5pt}]{
            \foreach \tt/\xx/\yy/\cc in {1/0/0/violet,2/2/-0.5/cyan}{
                \begin{scope}[shift={(\xx,\yy)}]
                \foreach \ii in {1,2}{
                    \node[fill=\cc] (a\ii-\tt) at (\ii,0) {};
                    \node[fill=\cc] (b\ii-\tt) at (\ii,-1) {};
                    \node[fill=\cc] (c\ii-\tt) at (\ii,-2) {};
                }
                \end{scope}   
            }

            \foreach \aa/\bb in {1/2}
                \foreach \xch in {a,b,c}
                    \foreach \iii in {1,2}
                        \foreach \ych in {a,b,c}
                            \foreach \jjj in {1,2}
                                \draw[lightgray,dashed,-latex] (\xch\iii-\aa) -- (\ych\jjj-\bb);

            \foreach \tt in {2}{
                \draw[-latex,red] (b1-\tt)--(a1-\tt); 
                \draw[-latex,orange] (b2-\tt)--(a1-\tt); 
                \draw[-latex,darkgreen,ultra thick] (a2-\tt)--(b1-\tt); 
                \draw[-latex,blue] (b2-\tt)--(a2-\tt); 
                
                \draw[-latex,red] (b1-\tt)--(c1-\tt); 
                \draw[-latex,orange] (c1-\tt)--(b2-\tt); 
                \draw[-latex,darkgreen,ultra thick] (b1-\tt)--(c2-\tt); 
                \draw[-latex,blue] (b2-\tt)--(c2-\tt); 
                
                \draw[-latex,black] (c1-\tt) to[bend left] (a1-\tt);
                \draw[-latex,black,ultra thick] (c2-\tt) to[bend right] (a2-\tt);
            }
            
            \foreach \tt in {1}{
                \draw[-latex,red] (b1-\tt)--(a1-\tt);
                \draw[-latex,orange] (b2-\tt)--(a1-\tt);
                \draw[-latex,darkgreen] (b1-\tt)--(a2-\tt);
                \draw[-latex,blue] (b2-\tt)--(a2-\tt);
                
                \draw[-latex,red] (c1-\tt)--(b1-\tt);
                \draw[-latex,orange] (c1-\tt)--(b2-\tt);
                \draw[-latex,darkgreen] (c2-\tt)--(b1-\tt);
                \draw[-latex,blue] (c2-\tt)--(b2-\tt);
                
                \draw[-latex,black] (c1-\tt) to[bend left] (a1-\tt);
                \draw[-latex,black] (c2-\tt) to[bend right] (a2-\tt);
            }
        } 
        \caption{$G^{12,6}_{\vec{\Gamma}}$ built from triangle gadgets on $\vec{x}=(x_1,x_2)$ and $\vec{y}=(y_1,y_2)$, where  $\textcolor{violet}{x_1}=\textcolor{violet}{y_1}=\begin{bmatrix}{\textcolor{red}0}&{\textcolor{orange}0}\\{\textcolor{darkgreen}0}&{\textcolor{blue}0}\end{bmatrix}$, 
        $\textcolor{cyan}{x_2}=\begin{bmatrix}{\textcolor{red}0}&{\textcolor{orange}0}\\{\textcolor{darkgreen}1}&{\textcolor{blue}0}\end{bmatrix}$, 
        $\textcolor{cyan}{y_2}=\begin{bmatrix}{\textcolor{red}1}&{\textcolor{orange}0}\\{\textcolor{darkgreen}1}&{\textcolor{blue}1}\end{bmatrix}$. The construction corresponds to solving $\DISJ_8(\vec{x},\vec{y})$.}
    \end{subfigure}
    \caption{$\gtri$ and a gadget-embedding tournament with triangle gadgets}
    \label{fig:Gamma-triangle}
\end{figure}
See \cref{fig:Gamma-triangle} for an illustration.
Each disjointness gadget is designed in such a way to encode a $\DISJ_{m^2}$ instance.
We state the following simple results without proof, which illustrate how certain natural graph problems detect the disjointness-embedding structural properties.
\begin{proposition}
    In $\gplain^{x,y}$, for each $i$, $c_i$ is reachable from $a_i$ iff $x_{ij}=y_{ij}=1$ for some $j$. Consequently, any $p$-pass streaming algorithm for $\PairReach_m$ requires $\Omega(m^2/p)$ space.
\end{proposition}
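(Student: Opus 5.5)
The proposition has two parts: a structural claim about the plain gadget, and a streaming lower bound obtained from it by reduction from $\DISJ_{m^2}$ via \cref{fact:DISJ}.

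\textbf{Structural claim.} In $\gplain^{x,y}$ every arc goes from some $a_i$ to some $b_j$ or from some $b_j$ to some $c_{i'}$. Hence every directed path has length at most $2$.
\begin{itemize}
    \item There is no arc $a_i \arc c_i$, and no node of the gadget has an arc leaving a $c$-node.
    \item Therefore every $a_i$--$c_i$ path has the form $a_i \arc b_j \arc c_i$.
    \item Such a path exists iff $x_{ij}=1$ (for the arc $a_i\arc b_j$) and $y_{ij}=1$ (for the arc $b_j\arc c_i$).
\end{itemize}
This gives the claimed equivalence. It is a direct unfolding of the definition.

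\textbf{Reduction.} I take $\PairReach_m$ to be the problem of deciding, for $m$ given pairs $(s_\ell,t_\ell)$, whether each $t_\ell$ is reachable from $s_\ell$. The precise formulation is deferred to the appendix; I assume it outputs the reachability bit for every pair, or at least their disjunction.

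Given an instance $(x,y)$ of $\DISJ_{m^2}$, Alice and Bob jointly form the stream of $\gplain^{x,y}$ as follows:
\begin{itemize}
    \item Alice inserts the arcs $a_i\arc b_j$ with $x_{ij}=1$; these depend only on her input.
    \item Bob then inserts the arcs $b_j\arc c_i$ with $y_{ij}=1$; these depend only on his input.
    \item The queried pairs are $(a_i,c_i)$ for $i\in[m]$.
\end{itemize}
By the structural claim, $\DISJ_{m^2}(x,y)=0$ iff some pair $(a_i,c_i)$ is reachable. So the answer to $\PairReach_m$ determines the disjointness value.

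\textbf{Simulation and parameters.} A $p$-pass streaming algorithm with space $s$ yields a protocol of $2p-1$ messages, each of $s$ bits. Alice runs her half of the stream and sends the memory state to Bob, who continues, and so on across passes. This gives communication $O(ps)$. Since $\Rcc(\DISJ_{m^2})=\Omega(m^2)$ by \cref{fact:DISJ}, we get $s=\Omega(m^2/p)$.

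The graph has $3m$ nodes, so this matches the stated bound with $m$ measured as in the statement. The only point needing care is matching the problem definition of $\PairReach$. If it instead queries all $m$ pairs individually, the same reduction works, since the disjunction of those answers is computed for free.
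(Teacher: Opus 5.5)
Your proof is correct and matches the argument the paper intends. The paper states this proposition without proof; the reasoning it relies on, which it reuses in the proof of \cref{thm:pair-reach}, is your observation that every $a_i$--$c_i$ path has the form $a_i\arc b_j\arc c_i$, followed by the reduction from $\DISJ_{m^2}$ via \cref{fact:DISJ}. Rescaling by a constant to handle the gap between your $3m$-node graph and the $m$ in $\PairReach_m$ is fine, and since the appendix defines $\PairReach$ to output every reachability bit, your reduction applies directly.
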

\begin{proposition} \label{prop:3-cycle}
    Any directed triangle in $\gtri^{x,y}$ takes the form of $a_i\arc b_j\arc c_j\arc a_i$ for some $i,j$, which occurs only when $x_{ij}=y_{ij}=1$. Consequently, any $p$-pass streaming algorithm for $3\text{-}\CYCLE_m$ requires $\Omega(m^2/p)$ space.
\end{proposition}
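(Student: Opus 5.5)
The plan is to first pin down the adjacency structure of $\gtri^{x,y}$ and then read off the triangles. From the construction, there are no arcs inside any of the three layers $\set{a_i}$, $\set{b_j}$, $\set{c_i}$. Every pair $(a_i,b_j)$ is joined by exactly one arc, whose direction is determined by $x_{ij}$. Every pair $(b_j,c_i)$ is joined by exactly one arc, whose direction is determined by $y_{ij}$. Between the $a$-layer and the $c$-layer the only arcs are $c_i\arc a_i$ for $i\in[m]$.

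A directed triangle needs three pairwise adjacent nodes. Since each layer is independent, a triangle uses exactly one node from each layer, say $a_i,b_j,c_k$. Adjacency of $a_i$ and $c_k$ forces $k=i$, and the arc between them is $c_i\arc a_i$. This fixes the cyclic orientation of the triangle: it must be $a_i\arc b_j\arc c_i\arc a_i$. (This is the form in the statement, where the index of $c$ should read $i$ rather than $j$.) By construction, $a_i\arc b_j$ is present iff $x_{ij}=1$, and $b_j\arc c_i$ is present iff $y_{ij}=1$. Hence such a triangle exists iff $x_{ij}=y_{ij}=1$. Conversely, whenever $x_{ij}=y_{ij}=1$, all three arcs are present and form a triangle. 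Therefore $\gtri^{x,y}$ contains a directed triangle iff $\DISJ_{m^2}(x,y)=0$, where we view $x,y\in\set{0,1}^{m\times m}$ as $m^2$-bit strings.

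For the streaming consequence, I would use the standard simulation. Alice, who holds $x$, generates all $a$--$b$ arcs together with the fixed arcs $c_i\arc a_i$. Bob, who holds $y$, generates all $b$--$c$ arcs. Each arc depends on only one party's input, so the stream splits as Alice's part followed by Bob's part. A $p$-pass, $s$-space algorithm for $3\text{-}\CYCLE$ on this $3m$-node graph then yields a protocol for $\DISJ_{m^2}$: the players pass the memory state back and forth, using $O(ps)$ bits of communication. By \cref{fact:DISJ}, $ps=\Omega(m^2)$, so $s=\Omega(m^2/p)$.

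I expect no real obstacle here. The only point needing care is the case analysis showing that no triangle can use two nodes from the same layer or indices $k\neq i$. That follows immediately from the arc list above.
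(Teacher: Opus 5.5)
Your proof is correct and is the intended argument (the paper states this proposition without proof): there are no arcs inside a layer and the only $a$--$c$ arcs are $c_i\arc a_i$, so every triangle has the form $a_i\arc b_j\arc c_i\arc a_i$ and exists iff $x_{ij}=y_{ij}=1$; splitting the stream between Alice and Bob and applying \cref{fact:DISJ} then gives the $\Omega(m^2/p)$ bound. You are also right that the $c_j$ in the statement is a typo for $c_i$, as the bold triangle $a_2\arc b_1\arc c_2\arc a_2$ in \cref{fig:Gamma-triangle} confirms.
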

The two disjointness gadgets admit independence number upper bounds more precise than the trivial bound given by the number of nodes: for the parameter $m$, we have $\alpha(\gplain^{x,y})\leq 2m$ and $\alpha(\gtri^{x,y})\leq m$. Indeed, in $\gplain^{x,y}$, any independent set contains at most two of $\set{a_i,b_i,c_i}$ for each $i\in [m]$, while in $\gtri^{x,y}$ it contains at most one for each $i\in [m]$. This allows us to obtain parameterized lower bounds for $\PairReach^\alpha_n$ and $3\text{-}\CYCLE^\alpha_n$ for the full range of $\alpha$.

\begin{theorem}[Restatement of \cref{thm:pairreach/tri-into}]\label{thm:pair-reach}
    Any $p$-pass streaming algorithm for $\PairReach^\alpha_n$ or $3\text{-}\CYCLE^\alpha_n$ requires $\Omega(\alpha n/p)$ space.
\end{theorem}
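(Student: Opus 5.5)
We reduce from $\DISJ_{N}$ with $N=\Theta(\alpha n)$ via the gadget-embedding tournament. Fix $m=\Theta(\alpha)$ so that each gadget has independence number at most $\alpha$: take $m=\lfloor\alpha/2\rfloor$ for the plain gadget and $m=\alpha$ for the triangle gadget. Each gadget has $d=3m$ nodes. We embed $n/d$ independent copies $\Gamma_1,\ldots,\Gamma_{n/d}$, where copy $\ell$ is built from strings $x^{(\ell)},y^{(\ell)}\in\set{0,1}^{m\times m}$. The full inputs $\vec x,\vec y$ then have total length $(n/d)\cdot m^2=\Theta(nm)=\Theta(\alpha n)$ bits. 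If $\alpha$ is so small that $m<1$, we use $m=1$; the bound $\Omega(n/p)$ is then still $\Omega(\alpha n/p)$ for constant $\alpha$. One caveat: the Observation bounds the independence number only by $d$, which is too weak here. We instead bound it by $\max_\ell \alpha(\Gamma_\ell)$, using the same argument. Any two nodes in different components are joined by an inter-component arc, so an independent set lies within a single component. Its size is therefore at most $2m\le\alpha$ for plain gadgets and at most $m\le\alpha$ for triangle gadgets, by the bounds stated just before the theorem.

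For $\PairReach$, Alice streams the arcs $a_i\arc b_j$ (determined by $x$) of every gadget together with all inter-component arcs, and Bob streams the arcs $b_j\arc c_i$ (determined by $y$). The queried pairs are $(a^{(\ell)}_i,c^{(\ell)}_i)$ over all $\ell,i$, which is $n/3$ pairs and fits the $n$-pair budget. The key step is that the inter-component arcs create no spurious reachability. By \cref{obs:ordering}, every inter-component arc goes from a lower-indexed component to a higher one, so a path from $a^{(\ell)}_i$ to $c^{(\ell)}_i$ cannot leave component $\ell$ and return. Hence the path stays inside $\Gamma_\ell$, and the proposition on $\gplain$ applies. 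This lets the players compute $\DISJ$ on the concatenated input: it is $0$ iff some queried pair is reachable. A $p$-pass streaming algorithm using $s$ space yields a protocol with $O(ps)$ communication, so \cref{fact:DISJ} gives $s=\Omega(\alpha n/p)$.

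For $3\text{-}\CYCLE$, the orientation of each triangle-gadget arc depends on only one player's bit, so the arcs still split between Alice and Bob. A directed triangle cannot use vertices from two or three components. In any such triple, the vertex in the lowest-indexed component has only out-arcs to the others, so it has no in-arc within the triple and cannot lie on a directed cycle. Every directed triangle is therefore inside a single gadget, and by \cref{prop:3-cycle} a triangle exists iff $\DISJ(\vec x,\vec y)=0$. The same communication argument then gives $\Omega(\alpha n/p)$. I expect the main obstacle to be exactly these two locality arguments, together with the independence-number bound above, which replaces the weaker $d$ from the Observation. The rest is routine parameter bookkeeping, such as divisibility of $n$ by $d$; any leftover vertices can be padded as isolated-free transitive nodes.
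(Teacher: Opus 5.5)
The overall route matches the paper's: disjointness gadgets embedded in a gadget-embedding tournament, \cref{obs:ordering} to confine paths and triangles to one component, and a reduction from $\DISJ$. Your $3\text{-}\CYCLE$ part with $m=\alpha$ is correct. One small fix there: if two vertices of a cross-component triple share the lowest component, argue instead with the unique vertex in the highest component, which has only in-arcs within the triple.

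The gap is in $\PairReach$ for small $\alpha$. For $\alpha=1$ your fallback to the plain gadget with $m=1$ fails. In that gadget $a_1$ and $c_1$ are never adjacent, so every instance has independence number at least $2$. A lower bound over this larger class says nothing about tournaments. The missing idea is a gadget that is itself a tournament and still encodes $x\wedge y$ by reachability. The paper uses the triangle gadget with $m=1$: $a\to b$ if $x=1$ and $b\to a$ otherwise; $b\to c$ if $y=1$ and $c\to b$ otherwise; and always $c\to a$. Within its gadget, $a$ is a sink if $x=0$, and $b$ is a sink if $x=1,y=0$. Hence $c$ is reachable from $a$ iff $x=y=1$, and $n/3$ such gadgets form a tournament encoding $\DISJ_{n/3}$. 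This works only for $m=1$: for $m\ge 2$ the back-arcs $b_j\to a_{i'}$ create spurious $a_i\reach c_i$ paths.

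Second, the bound $\alpha(\Gamma^{x,y}_{\textup{\textsf{plain}}})\le 2m$ that you cite from the text before the theorem is false. The paper's own proof also relies on it for even $\alpha$. In the plain gadget, $b_i$ is adjacent to $a_i$ only if $x_{ii}=1$ and to $c_i$ only if $y_{ii}=1$, while $a_i,c_i$ are never adjacent. So for $x_{ii}=y_{ii}=0$ the triple $\{a_i,b_i,c_i\}$ is independent, and on $x=y=0$ the gadget is edgeless with independence number $3m$. Hence $m=\lfloor\alpha/2\rfloor$ can violate the constraint, already for $\alpha=2$.

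The repair is to use only the node-count bound of the Observation, which is the correct one for the plain gadget. Take $m=\lfloor\alpha/3\rfloor$ for $\alpha\ge 3$, which costs only a constant factor. Cover $\alpha\le 2$, and indeed every constant $\alpha$, by the tournament construction above, which yields $\Omega(n/p)$.
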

\begin{proof}
    We first present the proof for $3\text{-}\CYCLE^\alpha$ for all $\alpha$. Consider two strings $(X,Y)\in \set{0,1}^{\alpha n}$. We divide $X$ into $n/\alpha$ blocks of length-$\alpha^2$ substrings $X_1,\ldots,X_{n/\alpha}$, and likewise for $Y$. Next, we create $G^{3n,3\alpha}_{\vec{\Gamma}}$, where $\vec{\Gamma}=(\Gamma_1,\ldots,\Gamma_{n/\alpha})$ is defined by
    \[
        \Gamma_k = \gtri^{X_k,Y_k} \qquad\text{ for each } k\in [n/\alpha].
    \]
    Note that each $\Gamma_k$ is a graph of size $3\alpha$, and $\ind(G^{3n,3\alpha}_{\vec{\Gamma}}) = \max_k \ind(\Gamma_k)\leq \alpha$.
    
    From \cref{obs:ordering}, we note that any possible directed triangle must situate in some induced subgraph $G^{3n,3\alpha}_{\vec{\Gamma}}[V^{(k)}] \cong \Gamma_k=\gtri^{X_k,Y_k}$ for some $k\in [n/\alpha]$. Therefore
    \begin{align}
        3\text{-}\CYCLE^\alpha \left(G^{3n,3\alpha}_{\vec{\Gamma}}\right) 
        &=\neg \bigwedge^{n/\alpha}_{i=1} \DISJ_{\alpha^2}(X_i,Y_i) = \neg\DISJ_{\alpha n}(X,Y). \label{eq:and-disj}
    \end{align}
    The desired streaming lower bound of $\Omega(\alpha n/p)$ follows from \cref{fact:DISJ}.

    An analogous argument applies to $\PairReach^\alpha$ with each triangle gadget replaced by a plain gadget. Because of the bound for $\ind(\gplain^{x,y})$, the analogous argument is applicable for even $\alpha$, which extends to $\alpha\geq 2$. It remains to handle the remaining case $\alpha=1$. 
    
    For $\alpha=1$, we show a reduction from $\DISJ_n$. Consider $G^{3n,3}_{\vec{\Gamma}}$, where each $\Gamma_i$ is a triangle gadget with node set $\set{a_i,b_i,c_i}$ encoding the bit pair $(X_i,Y_i)$ for each $i\in [n]$. In this case, $c_i$ is reachable from $a_i$ iff $X_i=Y_i=1$ for each $i$ (this reachability property is not true for a triangle gadget of parameters $m\geq 2$). It is easy check that $\ind(G^{3n,3}_{\vec{\Gamma}})=1$, so a $\PairReach^1$ streaming algorithm on the input $G^{3n,3}_{\vec{\Gamma}}$ and $\set{(a_i,c_i)}_{i=1}^n$ would solve $\DISJ_n(X,Y)$. \cref{fact:DISJ} concludes the proof.
\end{proof}

An immediate reduction to $\PairReach$ allows us to lower bound the streaming complexity of computing a strong connectivity certificate, complementing our algorithmic results in \cref{thm:main-1conn,thm:main-kconn}.
\begin{theorem}[Restatement of \cref{thm:ConnCert-intro}]\label{thm:cert-LB}
    For every $k$, any $p$-pass algorithm for $k\text{-}\AConnCert^\alpha$ or $k\text{-}\NConnCert^\alpha$ requires $\Omega(\alpha n/p)$ space.
\end{theorem}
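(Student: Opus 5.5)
We reduce from $\PairReach^\alpha_n$, whose $\Omega(\alpha n/p)$ lower bound is \cref{thm:pair-reach}, and first handle $k=1$. A $1$-node certificate and a $1$-arc certificate are the same object: a subgraph $H\subseteq G$ with the same transitive closure as $G$. Indeed, $\kappa_{st}(H)\ge\min\{1,\kappa_{st}(G)\}$ says exactly that $t$ is reachable from $s$ in $H$ whenever it is in $G$, and the reverse implication holds because $H\subseteq G$. So consider any $p$-pass streaming algorithm $\mathcal{A}$ that outputs a $1$-strong connectivity certificate. Run it on the stream of $G$, then answer each pair-reachability query $(a_i,c_i)$ offline by searching $H$. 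This solves $\PairReach^\alpha_n$ in $p$ passes.

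The point to check is the space accounting, since the output $H$ itself may be large. In the hard instances from the proof of \cref{thm:pair-reach} the answer is the single bit $\neg\DISJ$ (or the vector of pair answers). In the communication simulation, the last player can finish computing the output of $\mathcal{A}$ and evaluate reachability locally, so no extra communication is needed and the output size does not matter. Equivalently, the lower bound on $\mathcal{A}$'s memory comes from the $p$-pass protocol transcript, which consists only of memory states; the output is produced by the last party. This gives $\Omega(\alpha n/p)$ for $1\text{-}\NConnCert^\alpha$ and $1\text{-}\AConnCert^\alpha$ in the insertion-only model.

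Next, lift to larger $k$. Any $k$-certificate is also a $1$-certificate, because $\min\{k,\kappa_{st}(G)\}\ge\min\{1,\kappa_{st}(G)\}$, and likewise for $\lambda$. Hence an algorithm for $k\text{-}\NConnCert^\alpha$ or $k\text{-}\AConnCert^\alpha$ solves the $k=1$ problem on the same instance, and the bound transfers unchanged. Finally, insertion-only streams are a special case of turnstile streams, so the bound also holds in the turnstile model.

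The main obstacle is only the bookkeeping for $\alpha$. Some ranges, such as odd $\alpha$ for plain gadgets and $\alpha=1$, needed separate gadgets in \cref{thm:pair-reach}. Since we invoke that theorem as a black box over the class of digraphs with independence number at most $\alpha$, those cases are already covered, and no new construction is required.
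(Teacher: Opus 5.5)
Your proposal is correct and takes the paper's route: reduce to $k=1$, note that a $1$-certificate preserves all-pair reachability and therefore solves $\PairReach^\alpha_n$, and then invoke \cref{thm:pair-reach}. Your explicit justification of the $k\to 1$ step, via $\min\{k,\kappa_{st}(G)\}\ge\min\{1,\kappa_{st}(G)\}$, and of the space accounting fills in details the paper leaves implicit. The space-accounting argument assumes, as the paper also does implicitly, that the certificate is computed from the final memory state rather than emitted mid-stream, since otherwise outputting $G$ itself would be a trivial certificate.
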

\begin{proof}
    It suffices to prove the case $k=1$, where a $1$-arc connectivity certificate is equivalent to a $1$-node strong connectivity certificate. Notice that for any two nodes $u$ and $v$ in $G$, a $1$-arc strong connectivity certificate of $G$ contains a $u$--$v$ path iff $v$ is reachable from $u$ in $G$. Therefore, a $1$-arc strong connectivity certificate maintains all-pair reachability, which certainly solves $\PairReach^\alpha$. The lower bound follows from~\cref{thm:pair-reach}.
\end{proof}
We remark that \cref{thm:cert-LB} holds independent of the certificate size. Note that $\PairReach$ maps an $\Theta(n^2)$-bit input digraph to an $\Theta(n)$-bit output, and a $1$-arc connectivity certificate may contain as few as $n$ arcs (when it is a Hamiltonian cycle of $G$), hence the space lower bound of $\Omega(\alpha n/p)$ is indeed nontrivial. 
It is an interesting open question whether one can prove a stronger workspace lower bound under the restriction that the output certificate be nontrivial, for example, by forbidding the original graph as the certificate output (when $G$ is dense enough, specifically, containing $\Omega(\alpha n)$ arcs).

\subsection{Parameterized Lower Bounds through $\TRIBES$}
The parameterized hardness results for $\PairReach$ and $3\text{-}\CYCLE$ crucially rely on the fact that the randomized communication complexity for $\AND_r\circ^r \DISJ_s = \DISJ_{rs}$ is $\Omega(rs)$. A related communication problem called \emph{tribes} is also known to achieve the same complexity lower bound.
\begin{theorem}[\cite{jayram2003two}] \label{thm:TRIBES}
    Define $\TRIBES_{r,s} = \AND_r\circ^r (\neg \DISJ_s)$.\footnote{Notice that the definition of disjointness in this paper is the negation of the version adopted in \cite{jayram2003two}.} Then $\Rcc(\TRIBES_{r,s}) = \Omega(r(s-1))$\footnote{When $s=1$, $\TRIBES_{r,1}$ reduces to determining whether each of the two parties holds an all-1 strings, which only requires $O(1)$ bits of communication.}, and consequently any $p$-pass algorithm for a streaming problem reduced from $\TRIBES_{r,s}$ requires $\Omega(r(s-1)/p)$ space.
\end{theorem}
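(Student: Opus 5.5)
The second sentence follows from the first by the standard simulation, so the plan is mainly about the communication lower bound. Given a $p$-pass streaming algorithm with space $S$ for a problem reduced from $\TRIBES_{r,s}$, Alice streams her arcs and sends the memory state to Bob. Bob streams his arcs and sends the state back, and this repeats for every pass. The result is a protocol of cost $O(pS)$, so $\Rcc(\TRIBES_{r,s})=\Omega(r(s-1))$ forces $S=\Omega(r(s-1)/p)$. For $\Rcc(\TRIBES_{r,s})=\Omega(r(s-1))$ itself I would follow the information-complexity route of Bar-Yossef et al.\ and of \cite{jayram2003two}. The plan is to lower bound the conditional information cost under a suitable distribution, which lower bounds communication.

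First, the outer direct sum over the $r$ tribes. I would fix a distribution $\nu$ on inputs to $\neg\DISJ_s$ supported on intersecting instances, so that each tribe evaluates to $1$. Placing independent copies of $\nu$ on all tribes except the $\ell$-th makes $\AND_r$ collapse to the value of the $\ell$-th tribe. By the embedding argument, with Alice and Bob jointly sampling the other $r-1$ blocks using public and private randomness as dictated by the conditioning variables, we get $\IC_{\nu^r}(\TRIBES_{r,s}) \ge r\cdot \IC_{\nu}(\neg\DISJ_s)$. Information is measured with respect to a distribution that always outputs $1$, but the protocol must still be correct on all inputs. I will therefore work with conditional information cost and a mixture of product distributions.

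Second, the inner bound $\IC_\nu(\neg\DISJ_s)=\Omega(s-1)$. I would take $\nu$ as follows. Choose a uniformly random special coordinate $j^\ast$ and set $x_{j^\ast}=y_{j^\ast}=1$. On each of the other $s-1$ coordinates, independently draw a disjoint pair from $\{(0,0),(0,1),(1,0)\}$ using the standard auxiliary variable $D$ that decides whose bit is $0$. A second embedding argument then reduces the inner bound to $s-1$ single-coordinate problems, each costing $\Omega(1)$. The tribe is "on the edge": flipping any non-special coordinate to $(1,1)$ does not change the output, so the useful switch is instead whether the special coordinate is $(1,1)$ or not. Concretely, I would compare transcripts on the neighbouring inputs where coordinate $j$ is $(0,1)$, $(1,0)$, $(1,1)$, $(0,0)$ while the remainder are held fixed.

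The main obstacle is this last step, because the one intersecting coordinate prevents a plain product distribution. I would handle it as \cite{jayram2003two} does. Lower bound mutual information by squared Hellinger distance between transcript distributions. Then apply the cut-and-paste (rectangle) property, $h^2(\Pi_{ab},\Pi_{a'b'})=h^2(\Pi_{ab'},\Pi_{a'b})$, to move the intersection between coordinate $j^\ast$ and coordinate $j$. Correctness of the protocol on an input with zero intersections, where the tribe, and hence $\TRIBES$ after fixing the other tribes, evaluates to $0$, forces the transcript distributions to be $\Omega(1)$ apart in Hellinger distance. Summing over $j\neq j^\ast$ yields $\Omega(s-1)$. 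Combining this with the outer direct sum gives $\Rcc(\TRIBES_{r,s})\ge \IC_{\nu^r}(\TRIBES_{r,s}) = \Omega(r(s-1))$.
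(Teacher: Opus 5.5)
Your outline follows the route the paper relies on. The paper gives no proof of its own beyond citing \cite{jayram2003two} and the standard pass-by-pass simulation, which you reproduce correctly. Your ingredients (the random special coordinate with the auxiliary variable $D$, the outer direct sum over tribes under an all-intersecting distribution, Hellinger distance and cut-and-paste) are those of the cited proof.

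However, the inner decomposition fails as you state it. The inner bound does not split into $s-1$ single-coordinate problems ``each costing $\Omega(1)$'' summed over $j\neq j^\ast$ for a given $j^\ast$. Conditioned on $(J^\ast,D)$ the distribution is a product across players and coordinates, so superadditivity does give $I(X,Y;\Pi\mid J^\ast,D)\ge\sum_j I(X_j,Y_j;\Pi\mid J^\ast,D)$. The trouble is that once $J^\ast=m$ is fixed, the embedded single-coordinate function is constant, and $I_{m,j}:=I(X_j,Y_j;\Pi\mid J^\ast=m,D)$ can be $0$. For example, the protocol that scans coordinates $1,\ldots,s$ and halts at the first intersection reveals nothing when $J^\ast=1$, and it never reveals anything about coordinate $s$. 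An $\Omega(1)$ bound holds only for the pair sum $I_{m,j}+I_{j,m}$, which combines two different conditionings. The uniformity of $J^\ast$ is what lets you collect such pairs. Your step ``move the intersection between $j^\ast$ and $j$'' silently uses the $J^\ast=j$ term, so it cannot be charged to a fixed $j^\ast$.

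To repair the step, fix $D$ outside $\{m,j\}$ and embed into a protocol for $\mathsf{OR}(x_m\wedge y_m,\,x_j\wedge y_j)$; the privately sampled remaining coordinates never intersect. Write an input as (Alice's $(x_m,x_j)$, Bob's $(y_m,y_j)$).
\begin{itemize}
\item $I_{m,j}$ is at least a constant times $h^2(P,Q)+h^2(P,Q')$, where $P=(10,10)$, $Q=(10,11)$, $Q'=(11,10)$.
\item $I_{j,m}$ is at least a constant times $h^2(R,S)+h^2(R,S')$, where $R=(01,01)$, $S=(11,01)$, $S'=(01,11)$.
\end{itemize}
Let $W=(11,11)$ and let $Z=(10,01)$ be the zero-intersection input. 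Cut-and-paste gives $h(P,W)=h(Q,Q')$, $h(R,W)=h(S,S')$ and $h(Z,W)=h(Q,S)$. The triangle inequality along $Q\to P\to W\to R\to S$ then yields $h(Z,W)\le 2h(P,Q)+h(P,Q')+2h(R,S)+h(R,S')$. Correctness forces $h(Z,W)=\Omega(1)$, hence $I_{m,j}+I_{j,m}=\Omega(1)$. Averaging over $J^\ast$ gives $\frac{1}{s}\binom{s}{2}\,\Omega(1)=\Omega(s-1)$ for the inner information cost, which also explains why the bound vanishes at $s=1$. With this pairing in place, the rest of your plan goes through.
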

In the rest of this section, we apply the gadget-embedding framework using $\TRIBES$ to the Hamiltonian path and cycle problems. For a list of $n/\alpha$ size-$\alpha$ gadgets $\vec{\Gamma}$ to be specified later, we build a gadget-embedding tournament $G^{n,\alpha}_{\vec{\Gamma}}$. Next, we add a source $s_*$ and a sink $t_*$, we create the arcs $s_*\arc v$ and $v\arc t_*$ for every $v\in V(G^{n,\alpha}_{\vec{\Gamma}})$. Denote the resultant graph $G^{\star}_{\vec{\Gamma}}$. 

Notice that the new graph $G^{\star}_{\vec{\Gamma}}$ preserves the same ``global ordering'' structure: every arc is directed from $V^{(i)}$ to $V^{(i')}$ for some $i\leq i'$, which for convenience we set $V^{(0)}=\set{s_*}$ and $V^{(n/d+1)}=\set{t_*}$. It is also obvious that $\ind(G^\star_{\vec{\Gamma}}) \leq \alpha$.

The key to our construction is that $G^{\star}_{\vec{\Gamma}}$ contains a Hamiltonian path iff each $\Gamma_i$ in $\vec{\Gamma}$ contains a Hamiltonian path. 
\begin{lemma}\label{lem:ham-tribes}
    $\HamPath_{n+2}^\alpha\left(G^{\star}_{\vec{\Gamma}}\right) = 
    \bigwedge^{n/\alpha}_{i=1} \HamPath_\alpha(\Gamma_i)$.
\end{lemma}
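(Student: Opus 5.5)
We prove the two directions separately, relying on the global ordering structure of $G^{\star}_{\vec{\Gamma}}$: every arc goes from $V^{(i)}$ to $V^{(i')}$ with $i\le i'$, where $V^{(0)}=\set{s_*}$ and $V^{(n/\alpha+1)}=\set{t_*}$. This extends \cref{obs:ordering} by the way $s_*$ and $t_*$ are attached.

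\emph{($\Leftarrow$)} Suppose each $\Gamma_i$ has a Hamiltonian path $Q_i$. Under the canonical embedding, $Q_i$ becomes a Hamiltonian path of $V^{(i)}$ inside $G^{\star}_{\vec{\Gamma}}$, say from $u_i$ to $w_i$. We then concatenate
\[
s_*\arc Q_1 \arc Q_2\arc\cdots\arc Q_{n/\alpha}\arc t_*.
\]
The connecting arcs exist for three reasons. The arc $s_*\arc u_1$ and the arc $w_{n/\alpha}\arc t_*$ were added explicitly. Each arc $w_i\arc u_{i+1}$ is an inter-component tournament arc, because $i<i+1$. The concatenation is a Hamiltonian path on all $n+2$ nodes.

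\emph{($\Rightarrow$)} Let $P$ be a Hamiltonian path of $G^{\star}_{\vec{\Gamma}}$. Since arcs never decrease the component index, the sequence of component indices along $P$ is non-decreasing. Node $s_*$ has no in-arcs, so $P$ must start at $s_*$; node $t_*$ has no out-arcs, so $P$ must end at $t_*$.

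Because the indices are non-decreasing and $P$ visits every node, the nodes of each $V^{(i)}$ appear as one contiguous block of $P$. A component cannot be left and then re-entered, since re-entering would require an index decrease. Each block is therefore a path that visits every node of $V^{(i)}$ and uses only arcs inside $G^{\star}_{\vec{\Gamma}}[V^{(i)}]$. That induced subgraph is exactly the embedded copy of $\Gamma_i$, because the construction deletes all original intra-component tournament arcs and inserts only the gadget arcs. Pulling the block back through $\mathcal{E}_i^{-1}$ gives a Hamiltonian path of $\Gamma_i$.

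The only step needing care is the contiguity argument. It requires checking that $G^{\star}_{\vec{\Gamma}}$ has no backward arc, including any arc incident to $s_*$ or $t_*$, and that no intra-component arc other than those of $\Gamma_i$ is present. Both facts follow directly from the definition of the construction.
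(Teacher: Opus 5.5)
Your proof is correct and follows the same route as the paper. For one direction you concatenate the embedded Hamiltonian paths of the gadgets using inter-component arcs and the arcs at $s_*$ and $t_*$; for the other you use the fact that arcs never decrease the component index (the global ordering behind \cref{obs:ordering}) to force $P$ to start at $s_*$, end at $t_*$, and traverse each $V^{(i)}$ as a contiguous block that is a Hamiltonian path of $\Gamma_i$. Your explicit contiguity argument is, if anything, slightly more careful than the paper's, which simply asserts that nodes of earlier components precede those of later ones.
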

\begin{proof}
    Suppose each $\Gamma_i$ contains a Hamiltonian path $P_i$ starting at $s_i$ and ending at $t_i$. Using the inter-component arcs $t_i\arc s_{i+1}$ for each $i\in [n/\alpha-1]$, together with $s_*\arc s_1$ and $t_{n/\alpha}\arc t_*$, we join all $P_i$'s with $s_*$ and $t_*$ to form a Hamiltonian path for $G^{\star}_{\vec{\Gamma}}$. This proves one direction.

    For the other direction, \cref{obs:ordering} implies that any Hamiltonian path $P$ in $G^{\star}_{\vec{\Gamma}}$ must start at $s_*$ and end at $t_*$. Moreover, every node from the $i$-th component must precede all nodes from the $i'$-th component whenever $i<i'$. Thus, after removing the first and last arcs together with the inter-component arcs, $P$ is decomposed into $n/\alpha$ paths $P'_1,\ldots,P'_{n/\alpha}$, where each $P'_i$ lies entirely within $\Gamma_i$ and visits all its nodes. Therefore, each $P'_i$ is a Hamiltonian path in $\Gamma_i$.
\end{proof}

Using a reduction from disjointness to $\HamPath$ in \cite{bacrach2019hardness} in a blackbox way, we obtain a reduction from $\TRIBES$ to our $\alpha$-bounded instance for Hamiltonian Path problem.
\begin{lemma}[{\cite[Theorem 2.2]{bacrach2019hardness}}] \label{lem:disj-ham}
    There is a reduction from $\DISJ_{\Theta(m^2/\log^{2}{m})}$ to $\HamPath_m$ which outputs an $m$-node graph $G_{x,y}$ on the input $(x,y)$ of $\DISJ$, such that $\DISJ(x,y)=0$ iff $G_{x,y}$ contains a Hamiltonian path.
\end{lemma}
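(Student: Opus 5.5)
Since this is \cite[Theorem 2.2]{bacrach2019hardness}, invoked as a black box, I would only reconstruct the reduction to the level needed to trust the parameters $N=\Theta(m^2/\log^2 m)$ and $m$. The plan has three parts. First, turn a $\DISJ_N$ instance into an arc-ownership instance on a small ``core'' digraph. Second, wrap the core in a fixed scaffolding whose Hamiltonian paths are forced to use exactly one ``selected'' coordinate. Third, check that selection succeeds iff $x_e=y_e=1$ for the selected $e$. Throughout, the graph $G_{x,y}$ is the union of a part depending only on $x$ (Alice's arcs), a part depending only on $y$ (Bob's arcs), and input-independent arcs, so it is a valid streaming reduction.

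\textbf{Core.} Take $m'=\Theta(m/\log m)$ core nodes and identify the $N=\Theta(m'^2)$ coordinates with ordered pairs $(u,v)$ of core nodes. Alice contributes a first-half arc for pair $(u,v)$ iff $x_{uv}=1$, and Bob contributes a second-half arc for $(u,v)$ iff $y_{uv}=1$. A path can then traverse ``from $u$ to $v$ through coordinate $(u,v)$'' only when both halves are present.

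\textbf{Scaffolding.} Each core node is blown up into a gadget of $O(\log m)$ nodes, which gives $m$ nodes in total. The gadgets do two jobs.
\begin{itemize}
    \item The input-independent arcs admit Hamiltonian sweeps through every gadget, except that covering the whole graph forces the path to make exactly one ``jump'' through a coordinate arc pair.
    \item A binary-encoded selector, of $O(\log m)$ bits per gadget, makes the Alice-arc and the Bob-arc used in that jump refer to the same index $(u,v)$.
\end{itemize}
Given this, if $x_e=y_e=1$ for some $e$, we exhibit the Hamiltonian path explicitly: sweep the gadgets, take the jump at $e$, and finish the sweep.

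\textbf{Main obstacle: soundness.} We must rule out any Hamiltonian path that combines Alice's arc for one coordinate with Bob's arc for a different coordinate, or that uses several partial jumps. My first attempt would be a counting/degree argument on the selector gadgets. Each selector bit-node has in- and out-degree pattern such that a Hamiltonian path entering it through an Alice-arc labelled by a bit string must leave through the Bob-arc carrying the same bit string. The $\log m$ blow-up is precisely what pays for this consistency check, and it accounts for the $\log^2 m$ loss in $N$. Composing the three steps yields $\DISJ(x,y)=0$ iff $G_{x,y}$ has a Hamiltonian path, as stated.
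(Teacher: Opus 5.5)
The paper gives no argument for this lemma. It imports Theorem~2.2 of \cite{bacrach2019hardness} as a black box, so citing that theorem is all that is needed, together with noting that every arc of $G_{x,y}$ depends on only one party's input. The problem is that your reconstruction does not establish the statement on its own, even though you conclude it does.

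Its core step cannot be taken literally. A ``first-half'' Alice arc and a ``second-half'' Bob arc for the pair $(u,v)$ cannot meet at a pair-specific node, since those nodes alone would cost $\Theta(N)=\Theta(m^2/\log^2 m)$. So the claim that a traversal through coordinate $(u,v)$ needs both $x_{uv}=1$ and $y_{uv}=1$ rests entirely on the selector. The selector, the scaffolding and the mechanism forcing \emph{exactly one} jump are never defined. Completeness (``sweep, jump at $e$, finish the sweep'') cannot be checked against a graph that has not been constructed.

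Soundness is, as you say, the whole difficulty, and the mechanism you propose for it cannot work as written. A node for a single bit position sees only its own incident arcs. No in/out-degree pattern at that node can force an entering arc `labelled by a bit string' and a leaving arc to carry the same $\Theta(\log m)$-bit string; at most it ties together the $h$-th bits.

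Equality of the Alice-selected and Bob-selected indices has to come from a global covering argument. One way is a pair of bit nodes per bit position and per index coordinate. The selected Alice gadget is forced to absorb the node matching its $h$-th bit and the selected Bob gadget the complementary one, so covering every bit node exactly once forces equal indices. You would also need proofs that unselected gadgets cannot detour through bit nodes, and that the path cannot make zero jumps or several partial ones. None of these exclusion arguments is given, so ``composing the three steps'' is an assertion, not a derivation.

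Either rely on the citation alone, as the paper does, or build the gadgets explicitly and prove these properties.
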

\begin{theorem}[Restatement of \cref{thm:HamP/HamC-intro}]\label{thm:HamP/HamC-LB}
    Any $p$-pass streaming algorithm for $\HamPath^\alpha_n$ or $\HamCycle^\alpha_n$ requires $\Omega\left(\frac{\alpha n}{p\log^2 \alpha}\right)$ space.
\end{theorem}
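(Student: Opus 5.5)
We reduce from $\TRIBES_{r,s}$ with $r = n/\alpha$ and $s = \Theta(\alpha^2/\log^2\alpha)$, combining \cref{lem:ham-tribes} with the gadget reduction of \cref{lem:disj-ham}. Alice and Bob hold $\vec{x}=(x_1,\ldots,x_{n/\alpha})$ and $\vec{y}=(y_1,\ldots,y_{n/\alpha})$, each block an instance of $\DISJ_s$ with $s=\Theta(\alpha^2/\log^2\alpha)$ chosen to match \cref{lem:disj-ham} with $m=\alpha$.

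For each $i$, let $\Gamma_i \coloneqq G_{x_i,y_i}$ be the $\alpha$-node graph produced by \cref{lem:disj-ham}. We use the arc partition it induces: the reduction of \cite{bacrach2019hardness} is a two-party reduction, so the arcs of $G_{x_i,y_i}$ split into arcs depending only on $x_i$, arcs depending only on $y_i$, and fixed arcs. Hence each player can stream their own part of every gadget, and either player can stream the fixed inter-component arcs of $G^{\star}_{\vec{\Gamma}}$ together with the $s_*$, $t_*$ arcs. The resulting graph $G^{\star}_{\vec{\Gamma}}$ has $n+2$ nodes and independence number at most $\alpha$, by the observation on gadget-embedding tournaments.

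By \cref{lem:ham-tribes} and \cref{lem:disj-ham},
\[
\HamPath\left(G^{\star}_{\vec{\Gamma}}\right)=\bigwedge_i \HamPath(\Gamma_i)=\bigwedge_i \neg\DISJ_s(x_i,y_i)=\TRIBES_{n/\alpha,s}(\vec{x},\vec{y}).
\]
A $p$-pass streaming algorithm for $\HamPath^\alpha_{n+2}$ therefore yields a protocol for $\TRIBES$ in the usual way: Alice runs the algorithm on her arcs, sends the memory state to Bob, he continues on his arcs, and they repeat for $p$ passes. By \cref{thm:TRIBES}, this forces space $\Omega\left(\frac{n}{\alpha}\cdot\frac{\alpha^2}{\log^2\alpha}/p\right)=\Omega\left(\frac{\alpha n}{p\log^2\alpha}\right)$. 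A rescaling of $n$ absorbs the two extra nodes and divisibility issues, and small $\alpha$, where $s\ge 2$ may fail, is handled by adjusting constants.

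For $\HamCycle$, we add the arc $t_*\arc s_*$. By \cref{obs:ordering}, every arc other than this new one goes forward in the component order, with $s_*$ first and $t_*$ last. So any Hamiltonian cycle must use $t_*\arc s_*$, and removing it leaves a Hamiltonian path from $s_*$ to $t_*$. Conversely, any Hamiltonian path of $G^{\star}_{\vec{\Gamma}}$ runs from $s_*$ to $t_*$ and closes into a cycle. Adding this arc does not increase the independence number, so the same bound follows.

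The main obstacle is the black-box use of \cref{lem:disj-ham}. We must confirm that the reduction really partitions the arcs between the players, rather than requiring joint knowledge, and that its gadget has exactly (or at most, padding if needed) $\alpha$ nodes. Padding is the delicate part: extra nodes must not break the Hamiltonian-path equivalence. If padding is needed, we would first try appending a directed path of dummy nodes attached to a designated endpoint of the gadget, rather than rederiving the reduction.
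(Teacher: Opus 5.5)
For $\alpha$ above the threshold of \cref{lem:disj-ham}, your argument is the paper's: gadgets $\Gamma_i=G_{x_i,y_i}$ with $m=\alpha$, then \cref{lem:ham-tribes} to obtain $\TRIBES_{n/\alpha,\beta}$ with $\beta=\Theta(\alpha^2/\log^2\alpha)$, then \cref{thm:TRIBES}. Your closing arc $t_*\arc s_*$ for $\HamCycle$ is correct, since it is the only arc going backward in the component order. It also supplies a step the paper leaves implicit.

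The gap is the constant-$\alpha$ regime. The statement covers it, and there it asks for $\Omega(n/p)$. It is not ``handled by adjusting constants.''
\begin{itemize}
\item Below the threshold where \cref{lem:disj-ham} yields an $\alpha$-node gadget with $\beta\geq 2$, you cannot substitute a larger Bacrach et al.\ gadget. The independence number of $G^\star_{\vec{\Gamma}}$ is governed by $\max_i\ind(\Gamma_i)$, and for those graphs the only available bound is their number of nodes.
\item Nor can you take $\beta=1$. $\TRIBES_{r,1}(\vec x,\vec y)=\bigwedge_i(x_i\wedge y_i)$ is computable with $O(1)$ bits, so \cref{thm:TRIBES} gives nothing.
\end{itemize}
A different idea is needed: a constant-size gadget with independence number at most $\alpha$ whose Hamiltonicity equals $\neg(x\wedge y)$. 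Then \cref{lem:ham-tribes} turns the instance into $\DISJ_{\Theta(n)}$ and \cref{fact:DISJ} applies.

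The paper attempts exactly this with $\Gamma^{x,y}_{\triangle,\alpha}$, but do not copy it. That gadget encodes $s$--$t$ reachability, not Hamiltonicity, and it has a Hamiltonian path for every $(x,y)$. For instance, when $x=y=1$ it contains $t\arc p_1\arc\cdots\arc p_{2\alpha-3}\arc u\arc s$.

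For $\alpha\geq 2$, one choice that works uses three nodes $a,b,c$:
\begin{itemize}
\item a fixed arc $b\arc c$;
\item Alice's arc $a\arc b$, present iff $x=0$;
\item Bob's arc $c\arc a$, present iff $y=0$.
\end{itemize}
This gadget has a Hamiltonian path iff $x=0$ or $y=0$, and its independence number is at most $2$.

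For $\alpha=1$ the decision version is trivial, since every semicomplete digraph has a Hamiltonian path by R\'edei's theorem. So only the search output could carry information there. The stated bound is degenerate at $\alpha=1$ anyway, because $\log\alpha=0$.
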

\begin{proof}
    We first prove the lower bound for $\HamPath^\alpha_n$ for $\alpha$ exceeding a suitable constant, which is a necessary condition for the construction in \cref{lem:disj-ham} to exist. Write $\beta = \Theta(\alpha^2/\log^2\alpha)$. For two collections of $n/\alpha$ length-$\beta$ boolean strings $\set{x_1,\ldots,x_{n/\alpha}}$ and $\set{y_1,\ldots,y_{n/\alpha}}$, set $\Gamma_i = G_{x_i,y_i}$ to be the size-$\alpha$ digraph as given in \cref{lem:disj-ham}.    
    \cref{lem:disj-ham,lem:ham-tribes} together imply that 
    \[
        \HamPath^\alpha_{n+2}\left(G^\star_{\vec{\Gamma}}\right) 
        = \bigwedge_{i=1}^{n/\alpha} (\neg \DISJ_{\beta}(\Gamma_i))
        = \TRIBES_{n/\alpha,\beta}(\vec{x},\vec{y}),
    \]
    \cref{thm:TRIBES} yields the space lower bound of $\Omega\left(\frac{\beta n}{\alpha p}\right) = \Omega\left(\frac{\alpha n}{p\log^2 \alpha}\right)$.

    The remaining range of $\alpha$ satisfies $\alpha=O(1)$, and we give a simpler construction for this case. For the inputs $x,y\in\set{0,1}$ and a given $\alpha$, the gadget $\Gamma^{x,y}_{\triangle,\alpha}$ is constructed as follows. Denoting $V_\triangle \coloneqq\set{s,u,t}$, and $V_P\coloneqq \set{p_1,\ldots,p_{2\alpha-3}}$ for $\alpha\geq 2$, the node set of $\Gamma^{x,y}_{\triangle,\alpha}$ is $V_\triangle$ for $\alpha=1$ and $V_\triangle\cup V_P$ for $\alpha\geq 2$. The input-dependent arcs are: $\set{s\arc u,u\arc t}$ if $x=0$ and $\set{u\arc s,t\arc u}$ if $x=1$, and $\set{s\arc t}$ if $x=0$ and $\set{t\arc s}$ if $y=1$. Next, construct the arcs $t\arc p_1$, $p_{2\alpha-3}\arc u$ (when $\alpha=2$) and $p_i\arc p_{i+1}$ for $i\in [2\alpha-4]$ (when $\alpha\geq 3$). See \cref{fig:Gamma-triangle-alpha} for an illustration.

    The gadget consists of $O(\alpha)$ nodes.
    Clearly, $t$ is reachable from $s$ iff $x\wedge y=0$, so $\HamPath(\Gamma^{x,y}_{\triangle,\alpha}) = \DISJ(x,y)$. Moreover, $\ind(\Gamma^{x,y}_{\triangle,\alpha}) \leq \ind(\Gamma^{x,y}_{\triangle,\alpha}[V_\triangle]) + \ind(\Gamma^{x,y}_{\triangle,\alpha}[V_P]) = 1+(\alpha-1) =\alpha$.

    For $X,Y\in\set{0,1}^{n/\Omega(\alpha)}$, set $\vec{\Gamma}=(\Gamma_1,\ldots,\Gamma_{\Omega(\alpha)})$ by $\Gamma_i = (\Gamma^{X_i,Y_i}_{\triangle,\alpha})$.
    By \cref{lem:ham-tribes}, we obtain a reduction from $\bigwedge^{n/\Omega(\alpha)}_{i=1} \DISJ(X_i,Y_i) = \DISJ_{\Omega(n/\alpha)}$ to $\HamPath^\alpha$. Applying \cref{fact:DISJ} yields the claimed $\Omega(n/p)$ space lower bound.
\end{proof}

\begin{figure}[t]
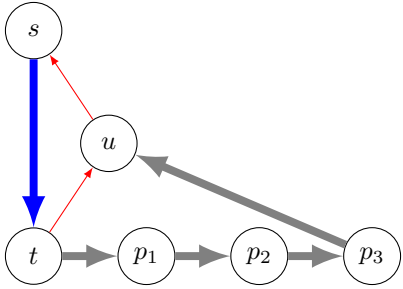

\centering
\tikz[every node/.style={circle,draw=black,fill=white,minimum width=7.5mm,font=\small}]{
    \node (s) at (0,1) {$s$};
    \node (u) at (1,-0.5) {$u$};
    \node (t) at (0,-2) {$t$};
    \foreach \ii in {1,...,3}
        \node (p\ii) at (\ii*1.5,-2) {$p_{\ii}$};
            
    \begin{scope}[-latex]
        \draw[blue,line width=\lwidth] (s)--(t);
        \draw[red] (t)--(u);
        \draw[red] (u)--(s);
        \draw[gray,line width=\lwidth] (t)--(p1);
        \draw[gray,line width=\lwidth] (p1)--(p2);   
        \draw[gray,line width=\lwidth] (p2)--(p3);   
        \draw[gray,line width=\lwidth] (p3)--(u);   
    \end{scope}
}
\caption{$\Gamma^{x,y}_{\triangle,\alpha}$ for $\alpha=3$, $x=0$ and $y=1$. A Hamiltonian path is indicated with bold lines.}
\label{fig:Gamma-triangle-alpha}
\end{figure}

\subsection{Parameterized Lower Bounds through Reachability}
In this section, we illustrate the applicability of the gadget-embedding framework using reachability.

Suppose $\vec{\Gamma}$ consists of $n/\alpha$ reachability instances $\Gamma_i$ with source $s_i$ and sink $t_i$.
We instantiate a $\REACH^\alpha_n$ instance by embedding $\vec{\Gamma}$ with the following modifications:
\begin{itemize}
    \item The embedding for $G^{n,\alpha}_{\vec{\Gamma}}$ ensures that $\mathcal{E}_i(s_i) = (i,1)$ and $\mathcal{E}_i(t_i) = (i,\alpha)$ for each $i\in [n/\alpha]$;
    \item Reverse the direction of the arc $(i,1)\arc(i+1,\alpha)$ for each $i\in [n/\alpha-1]$, which we call the reversed arcs $(i+1,\alpha)\arc (i,1)$ \emph{back-edges};
    \item Denote the resultant digraph by $G^{\#}_{\vec{\Gamma}}$, and set $s_*\coloneqq(n/\alpha,1)$ and $t_*\coloneqq (1,\alpha)$ to be the source and sink of the $\REACH^\alpha_n$ instance respectively.
\end{itemize}

The following lemma shows that the reachability of the constructed gadget-embedding tournament is determined by the conjunction of reachability instances of the individual gadgets independent of the exact choice of gadgets.
\begin{lemma} \label{lem:reach-tour}
    $\REACH\left(G^{\#}_{\vec{\Gamma}}; s_*,t_*\right) = 1$ iff $\REACH\left(G^{n,\alpha}_{\vec{\Gamma}}[V^{(i)}]; (i,1), (i,\alpha)\right) = 1$ for each $i\in [n/\alpha]$.
\end{lemma}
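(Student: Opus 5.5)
The plan is to analyse how an $s_*$--$t_*$ path can move between components, using the ordering property (\cref{obs:ordering}) as modified by the back-edges. In $G^{\#}_{\vec{\Gamma}}$, every arc between distinct components $V^{(i)}$ and $V^{(i')}$ with $i<i'$ still goes from $V^{(i)}$ to $V^{(i')}$, except the $n/\alpha-1$ back-edges $(i+1,\alpha)\arc(i,1)$. Each back-edge goes from the sink-position node of component $i+1$ to the source-position node of component $i$. Define the component index $c(v)=i$ for $v\in V^{(i)}$. Along any path, $c$ is nondecreasing except at back-edges, and each back-edge decreases $c$ by exactly one.

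\textbf{($\Leftarrow$)} Suppose each gadget has an internal path $(i,1)\reach(i,\alpha)$ inside $V^{(i)}$. Start at $s_*=(n/\alpha,1)$ and follow the internal path of component $n/\alpha$ to $(n/\alpha,\alpha)$. Take the back-edge to $(n/\alpha-1,1)$, then the internal path of component $n/\alpha-1$, and so on, ending at $(1,\alpha)=t_*$. The components are node-disjoint, so this is a valid path.

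\textbf{($\Rightarrow$)} Let $P$ be a simple $s_*$--$t_*$ path. Since $c(s_*)=n/\alpha$ and $c(t_*)=1$, and only back-edges decrease $c$ (each by one), $P$ must use, for every $i\in[n/\alpha-1]$, the back-edge $(i+1,\alpha)\arc(i,1)$. Fix $i$. We want a path from $(i,1)$ to $(i,\alpha)$ inside $V^{(i)}$; for $i=1$ the target is $t_*$ itself, and for $i>1$ it is the tail of the back-edge into component $i-1$.

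Consider the subpath $Q$ of $P$ from $(i,1)$, right after entering via the back-edge from $i+1$, to the next visit of $(i,\alpha)$; such a visit exists because the back-edge out of $(i,\alpha)$ (or the endpoint $t_*$) comes later on $P$. The main obstacle is showing that $Q$ stays inside $V^{(i)}$.

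Suppose $Q$ leaves $V^{(i)}$. It can do so only by a forward arc to some component $i'>i$, since a back-edge out of component $i$ starts at $(i,\alpha)$, which is the endpoint of $Q$. To return to $V^{(i)}$, $Q$ must then decrease $c$ back to $i$ using back-edges, and in particular the back-edge $(i+1,\alpha)\arc(i,1)$, which is the only arc entering $V^{(i)}$ from a higher component. That re-enters at $(i,1)$, which $P$ has already visited, contradicting simplicity.

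Hence $Q$ lies in $G^{n,\alpha}_{\vec{\Gamma}}[V^{(i)}]$. This induced subgraph equals $G^{\#}_{\vec{\Gamma}}[V^{(i)}]$ because back-edges join different components, so $Q$ witnesses the required internal reachability for every $i$. For $i=n/\alpha$ the same argument applies starting from $s_*=(n/\alpha,1)$.
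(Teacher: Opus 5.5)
Your proof is correct and takes essentially the same approach as the paper: the only order-defying arcs are the $n/\alpha-1$ back-edges, so a simple $s_*$--$t_*$ path must use each of them exactly once in descending order, and the segment between consecutive back-edges witnesses $(i,1)\reach(i,\alpha)$. You are more explicit than the paper on the containment step, using simplicity to show this segment never leaves $V^{(i)}$: re-entry from a higher component is possible only via $(i+1,\alpha)\arc(i,1)$, which lands again at the already-visited node $(i,1)$.
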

\begin{proof}
    By the construction of the back-edges, the backwards direction is obviously true. For the forward direction, notice that $s_*\in V^{(n/\alpha)}$ and $t_*\in V^{(1)}$, the only order-defying arcs of $G^{\#}_{\vec{\Gamma}}$ are the $\left(\frac{n}{\alpha}-1\right)$ back-edges. Hence, any $s_*$--$t_*$ path must include each back-edge exactly once in the order of $\left(\frac{n}{\alpha},\alpha\right) , \left(\frac{n}{\alpha}-1,1\right), 
    \left(\frac{n}{\alpha}-1,\alpha\right) , \left(\frac{n}{\alpha}-2,1\right), \ldots,
    (2,\alpha), (1,1)$. For such a path to exist, the end-point of a back-edge must reach the starting point of the next back-edge in the above ordering, $\left(\frac{n}{\alpha},\alpha\right)$ is reachable from $s_*$, and $t_*$ is reachable from $(1,1)$. Since none of the back-edges can be used more than once, each of the reachability conditions is enforced on a separate induced subgraph $G^{n,\alpha}_{\vec{\Gamma}}[V^{(i)}]$, which completes the proof.
\end{proof}

\cref{lem:reach-tour} suggests that a direct-sum-type theorem may apply to proving parameterized lower bounds for reachability. To set up for the proof, we first present the necessary background from information theory.
We refer readers to \cite{CT05,Wei15} for the standard information-theoretic notations and for a more comprehensive introduction to information theory and its connection to streaming algorithms. Recall that the notation $I(X,Y;Z)$ denotes the mutual information between $X$ and $Y$ conditioned on $Z$. For a communication protocol $\Pi$, we denote $\Pi(X,Y)$ the transcript of $\Pi$ on input $(X,Y)$.
For a distribution $\mu$ on the inputs and $\delta\in (0,1)$, the \emph{information complexity} of a function $F$ with respect to $\mu$ is defined by
\[
    \IC_{\mu,\delta}(F) = \min_\Pi I(X,Y;\Pi(X,Y)),
\]
where $(X,Y)\sim \mu$ and $\Pi$ ranges over all protocols with error $\delta$ with respect to $\mu$.

The following lemmas are sufficient for our purpose of establishing a parameterized lower bound for reachability.
\begin{lemma}[{\cite[Proposition 4.3]{BJKS02}}] \label{lem:Rcc-IC}
    For any $\delta\in (0,1)$, any communication problem $F$ and any distribution $\eta$ on inputs, $\Rcc_\delta(F) \geq \IC_{\eta,\delta}(F)$.
\end{lemma}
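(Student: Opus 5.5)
The statement to prove is \cref{lem:Rcc-IC}, the standard fact that randomized communication complexity dominates information complexity under any input distribution. The plan is the usual two-step chain: information revealed by a transcript is at most its entropy, and its entropy is at most its length.

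\textbf{Setup.} Fix $\delta\in(0,1)$, a function $F$ and a distribution $\eta$. Let $\Pi$ be a randomized protocol computing $F$ with error at most $\delta$ on every input and with worst-case communication $\Rcc_\delta(F)$. Since the error guarantee holds pointwise, $\Pi$ also has error at most $\delta$ with respect to $\eta$. Hence $\Pi$ is a feasible protocol in the minimization defining $\IC_{\eta,\delta}(F)$.

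\textbf{Main chain.} For $(X,Y)\sim\eta$, let $R$ be the public randomness and $\Pi(X,Y)$ the transcript. Then
\[
    I(X,Y;\Pi(X,Y)) \le H(\Pi(X,Y)) \le \log\abs{\mathrm{supp}(\Pi(X,Y))}.
\]
\textbf{Bounding the support.} The transcript is produced by a protocol tree, so the set of possible transcripts is prefix-free, with at most $2^{\Rcc_\delta(F)}$ leaves. Alternatively, one can pad every transcript to exactly $\Rcc_\delta(F)$ bits. Either way the entropy is at most $\Rcc_\delta(F)$. Minimizing the left side over all feasible protocols gives $\IC_{\eta,\delta}(F)\le \Rcc_\delta(F)$.

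\textbf{Expected obstacle: the public randomness $R$.} Here I would take the transcript to include $R$. Because $R$ is independent of $(X,Y)$, the chain rule gives
\[
    I(X,Y;\Pi,R)=I(X,Y;R)+I(X,Y;\Pi\mid R)=I(X,Y;\Pi\mid R).
\]
Conditioned on each fixed value of $R$, the protocol becomes deterministic with the same depth, so the support bound above applies. Averaging over $R$ then gives $I(X,Y;\Pi\mid R)\le\Rcc_\delta(F)$.

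Private randomness needs no separate treatment: it only makes the transcript a randomized function of $(X,Y)$, and the entropy-versus-length bound still holds. With these points settled, the rest is routine.
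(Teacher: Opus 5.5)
Your proposal is correct and is the standard argument behind the cited BJKS02 Proposition 4.3; the paper gives no proof of its own, only the citation. A worst-case $\delta$-error protocol of cost $\Rcc_\delta(F)$ is feasible for the $\IC_{\eta,\delta}$ minimization, and $I(X,Y;\Pi)\le H(\Pi)\le \Rcc_\delta(F)$. Your extra handling of public randomness by conditioning on $R$ is also sound, though unnecessary under the paper's private-coin convention.
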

\begin{lemma}[{\cite[Theorem 2]{GO16}}]\label{lem:reach-comm-reduction}
    A $p$-pass $s$-space streaming algorithm for $\REACH_m$ yields an $O(p^2 s)$-cost two-party communication protocol for a problem $\Psi_{m,p}$.
\end{lemma}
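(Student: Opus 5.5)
We take $\Psi_{m,p}$ to be the two-party problem of Guruswami and Onak~\cite{GO16}, and the plan is the standard streaming-to-communication simulation, with the stream order chosen to respect how $\Psi_{m,p}$ splits its input between the players. We recall its structure from~\cite{GO16}. An instance consists of $O(p)$ consecutive bipartite ``layer blocks'' between groups of nodes of a layered digraph on $m$ nodes. The arcs of the even-indexed blocks are determined by Alice's input and those of the odd-indexed blocks by Bob's input. The output is $1$ iff a designated source $s$ reaches a designated sink $t$ in the union of the blocks.

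\textbf{Step 1: the reduction is local.} Each player can write down, with no communication, the arc set encoded by their own input. So the combined instance is literally a $\REACH_m$ instance $(G;s,t)$ whose arc set is partitioned into $O(p)$ pieces $A_1,\dots,A_{L}$, with $L=O(p)$. Alice owns $A_1,A_3,\dots$ and Bob owns $A_2,A_4,\dots$.

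\textbf{Step 2: fix the stream order.} We present the stream as $A_1,A_2,\dots,A_L$ in this order within every pass. Since the algorithm must be correct on every arc ordering, it is in particular correct on this one.

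\textbf{Step 3: simulate the algorithm.} Within one pass, the player owning $A_j$ runs the algorithm on $A_j$ starting from the current memory state. They then send the resulting state, which has $s$ bits plus any shared public randomness, to the owner of $A_{j+1}$. At the end of a pass, the last owner sends the state back to the owner of $A_1$ to begin the next pass. Each pass therefore costs at most $L=O(p)$ messages of $s$ bits each. Over $p$ passes the total communication is $O(p\cdot L\cdot s)=O(p^2 s)$. The final holder outputs the algorithm's answer, which equals $\Psi_{m,p}$ by Step 1, with the same error probability.

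The only nontrivial point, and the one I expect to need the most care, is Step 1. One must check that the definition of $\Psi_{m,p}$ in~\cite{GO16} really assigns each arc to a single player and uses only $O(p)$ alternations between owners along a fixed order, rather than, say, $\Theta(p^2)$ alternations or arcs that depend jointly on both inputs. If the alternation count were larger, the cost bound would degrade proportionally. I would therefore first restate the layered construction of~\cite{GO16} explicitly and count its blocks, and then the simulation argument above goes through verbatim.
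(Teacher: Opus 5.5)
Your pass-by-pass simulation is the right mechanism, and $p$ passes times $O(p)$ hand-offs per pass times $s$ bits is exactly where the $O(p^2 s)$ of \cite{GO16} comes from. The gap is that the content of the reduction lies in Step~2, which you treat as free. As stated here the lemma only asks for some $O(p^2 s)$ protocol, and your argument gives one. But the lemma is only useful together with \cref{lem:I-LB}, and that bound cannot hold for unrestricted protocols. By your own Step~1 each party writes down its arcs locally, and directed reachability with arcs split between two parties can be decided with a number of bits near-linear in the number of nodes (the parties alternately announce newly reached nodes). That is far below $m^{1+1/(2p)}/p^{O(1)}$. So the bound of \cite{GO16} holds only for protocols of a restricted shape: a bounded number of rounds in which the input holders are forced to speak in a prescribed ``difficult'' order. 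What their reduction really provides is a simulated protocol of that shape. This is arranged by streaming the layers against the direction in which reachability propagates. Each pass then becomes one round of the prescribed form, and reachability information advances by at most one layer per pass. Your justification, that the algorithm is correct on every order, only buys correctness. With $A_1,\dots,A_L$ listed from $s$ towards $t$, the simulated protocol speaks in the easy order. Already within the first pass the holders could forward the set of reachable nodes layer by layer for $O(pm)$ bits.

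The same observation shows that the structure you ascribe to $\Psi_{m,p}$ in Step~1 cannot be the right one. Suppose it were an ordinary two-party game with Alice holding the odd blocks and Bob the even ones. Then you could regroup the stream as all of Alice's blocks followed by all of Bob's, and obtain $O(ps)$ rather than $O(p^2 s)$. Moreover, for $p\ge 2$ your block-by-block simulation (forwards or backwards) produces at least $L+1$ alternating messages. That is enough for the layer-by-layer forwarding above, so no superlinear bound could apply to the resulting protocols. The factor $p^2$ is the signature of the set-chasing-type game of \cite{GO16}. There, each of the $\Theta(p)$ layers is in effect held by its own party, whose messages may depend only on that layer. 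Every pass must therefore carry the memory state through all $\Theta(p)$ holders in the prescribed order. Your Step~3 goes through verbatim in that setting, once ``owner of $A_j$'' means the holder of layer $j$ and the layers are streamed in the prescribed order. The order, together with the block count, is what you need to take from \cite{GO16}.
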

\begin{lemma}[{\cite[Lemmas 3 and 10]{GO16}}] \label{lem:I-LB}
    Let $\Pi$ be a randomized protocol that computes $\Psi_{m,p}$ with error $\Omega(m^{-1})$. For sufficiently large $m$, there exists a product distribution $\mu$ over the domain of $\Psi_{m,p}$ such that for $(X,Y)\sim \mu$, it holds that
    \[
        I(X,Y;\Pi(X,Y)) =\Omega\left(m^{1+1/(2p)}/p^{O(1)}\right).
    \]
\end{lemma}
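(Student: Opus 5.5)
The statement is imported from \cite{GO16}, so the plan is to reconstruct its two ingredients rather than derive it from scratch. Their Lemma~3 gives the structure of $\Psi_{m,p}$ and the hard product distribution $\mu$. Their Lemma~10 converts a communication lower bound into an information lower bound. I only sketch how I would rebuild each part.

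First I would fix the hard distribution. $\Psi_{m,p}$ arises from the layered reachability instances used in the reduction of \cref{lem:reach-comm-reduction}. There are $p^{O(1)}$ layers of about $m/p^{O(1)}$ nodes each. Alice holds the arcs between odd-indexed consecutive layers and Bob those between even-indexed ones, so each player's part is a collection of sparse random bipartite graphs. I would choose $\mu$ so that each player's arc sets are independent of everything else; this makes $\mu$ a product distribution by construction. Each such bipartite graph is a uniformly random set of roughly $m^{1/(2p)}$ out-arcs per node, so that the total input entropy is about $m^{1+1/(2p)}$.

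Second, I would prove the information bound by an inductive round-elimination argument on the layers. Start from a $\delta$-error protocol $\Pi$ with small $I(X,Y;\Pi(X,Y))$. By the chain rule, and using that $\mu$ is a product distribution, the information revealed splits as a sum over the independent layer-blocks. So some block, and in fact most nodes within it, receive only an $o(1)$ fraction of their entropy in information. Conditioning on the first message, the distribution of the arcs leaving the current frontier remains close in statistical distance to $\mu$, by Pinsker's inequality. This lets me embed a $\Psi_{m',p-1}$ instance with one fewer round and slightly degraded error. After $p$ eliminations no communication remains, yet reachability must still be decided from a near-uniform layered graph, which is impossible with error $O(1/m)$.

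The main obstacle is the error and parameter bookkeeping across the induction. Each elimination multiplies the size of the frontier of reachable nodes by about $m^{1/(2p)}$ and adds an additive error term. This is what forces the layer count and the loss to be $p^{O(1)}$, and the exponent to be $1/(2p)$ rather than $1/p$. I would track the reachable frontier as a set of size $m^{j/(2p)}$ after $j$ steps and require the accumulated statistical distance to stay below a constant. That requirement dictates the threshold $\Omega(m^{1+1/(2p)}/p^{O(1)})$ on the information, and it is the step where I expect the constants to need care.
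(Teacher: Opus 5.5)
\textbf{The argument fails at the Pinsker step, and the bookkeeping in your last paragraph cannot produce the claimed bound.} Suppose $I(X,Y;\Pi(X,Y))=o(m^{1+1/(2p)}/p^{O(1)})$. Superadditivity over the independent out-arc sets then only tells you that a typical node leaks $o(m^{1/(2p)})$ bits, an $o(1)$ \emph{fraction} of its entropy. Pinsker, however, needs the \emph{absolute} divergence to be $o(1)$.

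Here is a concrete obstruction. A message consisting of one bit per node (say, a parity of each node's out-arc set) has length $m\ll m^{1+1/(2p)}$. It leaks a vanishing fraction of every node's entropy, yet leaves every out-arc set at statistical distance $1/2$ from its prior. So closeness of the arcs leaving the frontier does not follow, and any argument that needs node-level closeness can certify at most $I=\Omega(m)$.

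Tracking a growing frontier makes this worse, not better. Joint closeness of the out-arcs of a frontier $F$ needs roughly $|F|\cdot I/m=o(1)$. With $|F|\approx m^{j/(2p)}$, your accounting therefore yields a threshold of about $m^{1-j/(2p)}$, which decreases with $j$. Neither the frontier growth nor the error accounting is the source of the factor $m^{1/(2p)}$. That degree is a feature of how $\Psi_{m,p}$ is built, chosen so that the chased sets interact with constant probability, not something your induction forces.

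\textbf{The missing idea is to gain the extra factor through a direct sum at the granularity of arcs or chains rather than nodes.} This is the route of \cite{GO16}, in two steps:
\begin{enumerate}
\item They prove an information-cost lower bound for a \emph{decision} version of pointer chasing. This uses a single-chain elimination argument, where the pointer being followed genuinely carries $o(1)$ information, so Pinsker does apply.
\item They prove a direct-sum type theorem for the \emph{disjunction} of many nearly independent such instances embedded in $\Psi_{m,p}$, in the spirit of the OR direct sum of \cite{BJKS02} (the embedding distribution sits essentially on no-instances).
\end{enumerate}
The number of instances times the per-instance bound is what gives $m^{1+1/(2p)}/p^{O(1)}$.

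Two further corrections.
\begin{itemize}
\item An ingredient that ``converts a communication lower bound into an information lower bound'' does not exist in general. Information only lower-bounds communication (\cref{lem:Rcc-IC}), so the information bound must be proved directly.
\item Each elimination step must remove an entire pass of the $O(p^2)$-message protocol from \cref{lem:reach-comm-reduction}, not a single round. The hardness rests on the forced speaking order within each pass, not on the number of messages.
\end{itemize}
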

\begin{lemma}[Follows from {\cite[Lemmas 5.1 and 5.6]{BJKS02}}] \label{lem:I-direct-sum}
    Suppose $f:(\calX\times \calY)^t\to\set{0,1}$ takes the form
    \[
        f(\vec{x},\vec{y}) = g(h(x_1,y_1),\ldots,h(x_t,y_t)),
    \]
    for some $g:\set{0,1}^t\to \set{0,1}$ and $h:\calX\times \calY\to \set{0,1}$, which we also write $f = g\circ h^t$. Let $\mu$ be a product distribution on $\calX\times \calY$, and $(\vec{X},\vec{Y})\sim \mu^t$. Then 
    \[
        \IC_{\mu^t,\delta}(f) \geq t\cdot \IC_{\mu,\delta}(h).
    \]
\end{lemma}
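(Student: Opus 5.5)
The plan follows the two-step direct-sum argument of \cite{BJKS02}: first \emph{information superadditivity} under the product distribution $\mu^t$, then an \emph{embedding} (simulation) step that turns a protocol for $f$ into, for each coordinate $i\in[t]$, a protocol for $h$ revealing exactly the information the original protocol reveals about coordinate $i$. Summing over $i$ then gives the factor $t$.

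\textbf{Step 1 (superadditivity).} Fix a $\delta$-error protocol $\Pi$ for $f$ that attains $\IC_{\mu^t,\delta}(f)$, and let $(\vec X,\vec Y)\sim\mu^t$. By the chain rule,
\[
I(\vec X,\vec Y;\Pi)=\sum_{i=1}^t I\bigl(X_i,Y_i;\Pi \bigm| X_{<i},Y_{<i}\bigr).
\]
Since the pairs $(X_i,Y_i)$ are mutually independent under $\mu^t$, the pair $(X_i,Y_i)$ is independent of $(X_{<i},Y_{<i})$. Conditioning on a variable independent of $(X_i,Y_i)$ cannot decrease the information $\Pi$ carries about $(X_i,Y_i)$, so
\[
I\bigl(X_i,Y_i;\Pi\bigm| X_{<i},Y_{<i}\bigr)\ge I(X_i,Y_i;\Pi).
\]
It follows that $I(\vec X,\vec Y;\Pi)\ge\sum_i I(X_i,Y_i;\Pi)$.

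\textbf{Step 2 (embedding).} For each $i$, define a protocol $\Pi_i$ on input $(x,y)\in\calX\times\calY$ as follows.
\begin{itemize}
\item The players plant $(x,y)$ in coordinate $i$.
\item For every $j\ne i$, Alice samples $X_j$ privately from the $\calX$-marginal of $\mu$, and Bob samples $Y_j$ privately from the $\calY$-marginal. This is legitimate only because $\mu$ is a product distribution, which is exactly where that hypothesis is used.
\item They run $\Pi$ on the resulting vector.
\end{itemize}
When $(x,y)\sim\mu$, the joint distribution of the planted input and the transcript coincides with that of $(X_i,Y_i,\Pi)$ under $\mu^t$. Hence $I(x,y;\Pi_i)=I(X_i,Y_i;\Pi)$, and therefore $I(X_i,Y_i;\Pi)\ge\IC_{\mu,\delta}(h)$ provided $\Pi_i$ is a $\delta$-error protocol for $h$ under $\mu$. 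Combining this with Step 1 yields $\IC_{\mu^t,\delta}(f)\ge t\cdot\IC_{\mu,\delta}(h)$.

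\textbf{Main obstacle: correctness of $\Pi_i$ for $h$.} The output of $\Pi$ is $g(h(X_1,Y_1),\dots,h(X_t,Y_t))$, and this need not determine $h(x,y)$. For a degenerate $g$, such as a constant function, the literal inequality even fails. I would therefore establish it under the reading in \cite[Lemmas 5.1, 5.6]{BJKS02}. There, the sampled off-coordinates make $g$ collapse to the identity (or its negation) on coordinate $i$. For $g=\AND$, this means every off-coordinate satisfies $h(X_j,Y_j)=1$ almost surely under $\mu$. Then $\Pi$'s answer equals $h(x,y)$, and its error under the embedded distribution is at most $\delta$, with the error measured over $\mu$ exactly as in the definition of $\IC_{\mu,\delta}$.

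Concretely, the first thing I would check is that in the paper's reachability application the relevant $g$ is $\AND$ (by \cref{lem:reach-tour}) and that the sampled off-coordinates can be taken to be yes-instances. Where the product distribution cannot be supported purely on yes-instances while keeping the information bound, I would fall back on BJKS's conditional version: condition on an auxiliary variable $D$ that makes $\mu$ a mixture of products, and run the same two steps with $I(\cdot;\Pi\mid D)$ throughout.
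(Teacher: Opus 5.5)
Your Steps 1 and 2 are the information-cost decomposition and the embedding lemma of \cite{BJKS02}. The paper invokes exactly these, with no argument beyond the citation. Both steps are carried out correctly: superadditivity uses only independence across coordinates, and private sampling of the off-coordinates uses productness of $\mu$. You are also right that the inequality as printed cannot hold for arbitrary $g$. It fails for constant $g$. Even for $g=\AND$ it fails in general when $\mu$ puts constant mass on $h^{-1}(0)$, since a protocol can stop at the first coordinate it finds to be a no-instance. So a collapsing hypothesis has to be added.

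The gap is in how you add it. You keep the error distributional (``measured over $\mu$''). For $g=\AND$ and $t\ge 2$, however, the collapsing condition forces $\mu$ to be supported on $h^{-1}(1)$. Under distributional error, the zero-communication protocol that always outputs $1$ is then admissible, so $\IC_{\mu,\delta}(h)=0$ and the inequality you prove is true but empty. In particular it cannot be chained with \cref{lem:I-LB}, whose information bound concerns protocols that err with bounded probability on \emph{every} input.

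The missing idea is that the direct sum must be run with worst-case error:
\begin{itemize}
\item Take $\IC_{\mu,\delta}$ to be the minimum information, still measured under $\mu$, over protocols erring with probability at most $\delta$ on every input.
\item Require collapsing in the strong form $f(\vec x[i\leftarrow u],\vec y[i\leftarrow v])=h(u,v)$ for all $(\vec x,\vec y)\in\mathrm{supp}(\mu^t)$, all $i$, and all $(u,v)\in\calX\times\calY$. It is not enough to require this only for $(u,v)$ in the support of $\mu$.
\end{itemize}
Correctness of $\Pi_i$ is then argued pointwise. Fix an input $(u,v)$. Every realization of the sampled off-coordinates gives an input on which $f$ equals $h(u,v)$, and $\Pi$ errs there with probability at most $\delta$. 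Hence $\Pi_i$ errs with probability at most $\delta$ on $(u,v)$, and your Steps 1 and 2 go through unchanged.

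The mixture-of-products version you fall back on relaxes only productness, not this support requirement. So for the paper's application one still has to check that the product distribution of \cref{lem:I-LB} is supported on yes-instances of $\Psi_{\alpha,p}$. The paper does not verify this either.
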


We are now in a position to prove a parameterized space lower bound for reachability through the general lower bound from \cite{GO16}.
\begin{theorem}[Restatement of \cref{thm:reach-intro} (b)]\label{thm:s-t-LB}
    For $\alpha=\omega(1)$ and $p=O\left(\frac{\log \alpha}{\log\log \alpha}\right)$, any $p$-pass streaming algorithm for $\REACH_n^\alpha$ requires $\Omega(\alpha^{1/(2p)}n/p^{O(1)})$ space.
\end{theorem}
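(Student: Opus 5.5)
The reduction goes through \cref{lem:reach-tour}, combined with an information-complexity direct sum. Set $m \coloneqq \alpha$ and $t \coloneqq n/\alpha$.

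\textbf{Embedding the hard instances.} Take a $p$-pass, $s$-space streaming algorithm $\mathcal{A}$ for $\REACH_n^\alpha$. For each gadget index $i\in[t]$, let $\Gamma_i$ be the $m$-node hard reachability instance of \cite{GO16} underlying $\Psi_{m,p}$, with its source and sink placed at $(i,1)$ and $(i,\alpha)$. Form $G^{\#}_{\vec{\Gamma}}$ from these gadgets.

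\textbf{Properties of the combined instance.} Every gadget has $\alpha$ nodes, and the back-edges are intercomponent arcs. Hence the observation on gadget-embedding tournaments still bounds the independence number by $\alpha$, and reversing arcs does not change this. By \cref{lem:reach-tour}, $\REACH(G^{\#}_{\vec{\Gamma}};s_*,t_*)=\bigwedge_i \REACH(\Gamma_i)$.

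\textbf{Simulating the protocol.} The arcs of $G^{\#}_{\vec{\Gamma}}$ that depend on no input (intercomponent arcs and back-edges) can be streamed by either party. The gadget arcs are split among the players exactly as in the \cite{GO16} simulation. I would invoke \cref{lem:reach-comm-reduction} in a gadget-wise manner: the $O(p^2)$-round player structure of $\Psi_{m,p}$ is run in parallel across all $t$ copies, which is possible because all gadgets share the same stream-segment layout. This yields an $O(p^2 s)$-cost protocol for $f=\AND_t\circ\Psi_{m,p}^t$. I expect this step to be the main obstacle. One must check that the multi-party/multi-round structure of $\Psi_{m,p}$ from \cite{GO16} composes with concatenation across gadgets, so that a single streaming pass over the combined stream corresponds to one pass for every copy simultaneously. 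The natural way to arrange this is to interleave the stream segments by layer: every gadget's $j$-th segment is emitted before any gadget's $(j+1)$-th segment. The fixed arcs, which carry no information, can go anywhere.

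\textbf{Information lower bound.} Let $\mu$ be the product distribution of \cref{lem:I-LB}, and reduce the error to $O(m^{-1})$. This costs $O(\log m)$ repetitions, which can be absorbed into the $p^{O(1)}$ factor by losing a logarithmic factor. The $\log$ loss is compatible with the target bound because $p=O(\log\alpha/\log\log\alpha)$, so $\alpha^{1/(2p)}\ge\polylog\alpha$ absorbs it. By \cref{lem:Rcc-IC} and \cref{lem:I-direct-sum} with $g=\AND_t$,
\[
    O(p^2 s\log m)\ \ge\ \IC_{\mu^t,\delta}(f)\ \ge\ t\cdot \IC_{\mu,\delta}(\Psi_{m,p})\ =\ \frac{n}{\alpha}\cdot\Omega\!\left(\frac{\alpha^{1+1/(2p)}}{p^{O(1)}}\right).
\]
Here \cref{lem:I-LB} is applied with $\alpha=\omega(1)$, which is the required "sufficiently large $m$". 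Rearranging gives $s=\Omega(\alpha^{1/(2p)}n/p^{O(1)})$.

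\textbf{Remaining technical points.} The parameter condition on $p$ is exactly what \cref{lem:I-LB} needs: in \cite{GO16} the bound is meaningful only when $p\lesssim\log m/\log\log m$. Two further checks are needed. First, the direct-sum lemma requires $\mu$ to be a product distribution, which \cref{lem:I-LB} guarantees. Second, the error parameter must not grow under the direct sum. I would handle this by noting that \cref{lem:I-direct-sum}, in the form from \cite{BJKS02}, embeds one copy and fixes the others using private randomness drawn from $\mu$. One must confirm that, for $g=\AND$, the embedding works when the other copies evaluate to $1$. For this I would condition $\mu$ on yes-instances, or use a collapsing-distribution argument, which is the standard fix.
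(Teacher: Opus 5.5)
Your route is the paper's. You place $n/\alpha$ copies of the \cite{GO16} instance in $G^{\#}_{\vec{\Gamma}}$, use \cref{lem:reach-tour} to obtain $\AND_{n/\alpha}\circ\Psi_{\alpha,p}^{n/\alpha}$, and use \cref{lem:reach-comm-reduction} for the simulation; your layer-by-layer interleaving only spells out what the paper asserts in one line. You then finish with \cref{lem:Rcc-IC,lem:I-direct-sum,lem:I-LB} and error amplification. The paper applies \cref{lem:I-direct-sum} exactly as stated, with no hypothesis on $\mu$ beyond being a product distribution. Relative to the paper's stated lemmas, your argument therefore goes through just as the paper's does.

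The issue you raise at the end, however, is genuine, the paper does not address it, and your proposed fix does not resolve it.

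\textbf{What the direct sum needs.} The reduction step of \cite{BJKS02} requires $\mu$ to be collapsing for $f$. For $g=\AND$ this means $\mu$ is supported on $\Psi_{\alpha,p}^{-1}(1)$, i.e., every gadget is reachable. The single-copy quantity must also be information cost for protocols with worst-case error $\delta$. Under the paper's distributional-error definition of $\IC$, a yes-supported $\mu$ gives $\IC_{\mu,\delta}(\Psi_{\alpha,p})=0$ via the constant-$1$ protocol.

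\textbf{Without collapsing, the factor $n/\alpha$ is lost.} Suppose $\Pr_\mu[\Psi_{\alpha,p}=0]$ is bounded below. Then the protocol that evaluates gadgets one by one and halts at the first unreachable one reveals information about only $O(1)$ gadgets in expectation, up to logarithmic factors.

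\textbf{Conditioning $\mu$ on $\Psi_{\alpha,p}=1$ does not help.} That event depends jointly on $X$ and $Y$, so the conditioned distribution is no longer the product distribution that \cref{lem:I-direct-sum} requires. Moreover, \cref{lem:I-LB} gives no lower bound on information cost under the conditioned distribution. If the hard distribution of \cite{GO16} is the collapsing distribution for the disjunction inside $\Psi_{m,p}$, it is supported on non-reachable instances, and the conditioning is vacuous.

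\textbf{What is missing.} The argument needs a one-sided information lower bound for $\Psi_{\alpha,p}$ under a (mixture-of-)product distribution on reachable instances. This is the analogue of what \cite{jayram2003two} prove for $\neg\DISJ$ to obtain the $\TRIBES$ bound. Neither your proposal nor the paper supplies it.

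A smaller point concerns the $\log m$ lost in amplification. It cannot be absorbed into $p^{O(1)}$ for small $p$, and absorbing it into $\alpha^{1/(2p)}$ weakens the exponent. The paper's own proof likewise only reaches $\tilde{\Omega}$.
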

\begin{proof}
    From \cref{lem:reach-tour,lem:reach-comm-reduction}, any $p$-pass streaming algorithm for $\REACH_n^\alpha$ solves a conjunction of $n/\alpha$ instances of $\REACH_\alpha$, thus $\REACH_n^\alpha$ is reduced from the communication problem $\Phi \coloneqq \AND_{n/\alpha}\circ (\Psi_{\alpha,p})^{n/\alpha}$. 

    Let $\mu$ be the product distribution given by \cref{lem:I-LB}. By \cref{lem:I-direct-sum,lem:I-LB},
    \[
        \IC_{\mu^{n/\alpha},\Omega(\alpha^{-1})}(\Phi) \geq \frac{n}{\alpha}\cdot \IC_{\mu,\Omega(\alpha^{-1})}(\Psi_{\alpha,p}) = \Omega\left(\alpha^{1/(2p)}n /p^{O(1)}\right).
    \]
    This yields a lower bound for $\Rcc_{\Omega(\alpha^{-1})}(\Phi)$ (\cref{lem:Rcc-IC}). By standard error amplification argument and the fact that, we obtain $\Rcc_{1/3}(\Phi)=\tilde{\Omega}(\alpha^{1/(2p)}n/p^{O(1)})$, and the desired space lower bound follows by considering $\REACH(G^{\#}_{\vec{\Gamma}}; s_*,t_*)$ with $\vec{\Gamma}$ being generated by $\Psi_{\alpha,p}$.
\end{proof}

For the constant $\alpha$ regime, we present a reduction of reachability from the $\TRIBES$ function that supplies a lower bound with an asymptotic form matching the general case.
\begin{theorem}[Restatement of \cref{thm:reach-intro} (a)]\label{thm:s-t-LB-const-alpha}
    For $\alpha=O(1)$, any $p$-pass streaming algorithm for $\REACH_n^\alpha$ requires $\Omega(n/p)$ space.
\end{theorem}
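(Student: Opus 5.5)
The plan is to reduce from $\TRIBES_{r,s}$ with $r=\Theta(n/\alpha)$ and $s$ a constant at least $2$. Since $\Rcc(\TRIBES_{r,s})=\Omega(r(s-1))$ by \cref{thm:TRIBES}, this yields an $\Omega(n/(\alpha p))=\Omega(n/p)$ lower bound, because $\alpha=O(1)$. (The bound is $\Omega(n)$ rather than $\Omega(\alpha n)$, so constant $\alpha$ only needs to be absorbed, not exploited.) Everything goes through the back-edge construction $G^{\#}_{\vec{\Gamma}}$ and \cref{lem:reach-tour}. That lemma says $s_*$ reaches $t_*$ iff, in every component $i$, the node $(i,1)$ reaches $(i,\alpha)$ inside $G^{n,\alpha}_{\vec{\Gamma}}[V^{(i)}]$. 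So $\REACH$ on $G^{\#}_{\vec{\Gamma}}$ computes $\AND$ of the per-gadget reachability bits, and it remains to build, for each $i$, a constant-size gadget $\Gamma_i$ with independence number at most $\alpha$ in which $s_i$ reaches $t_i$ iff $\neg\DISJ_s(x_i,y_i)=1$, i.e. iff $x_{ij}=y_{ij}=1$ for some $j$.

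For the gadget, I will use a plain-gadget style construction with $s=2$ (or any constant $s$). Take a source $s_i$, a sink $t_i$, and middle nodes $w_1,\dots,w_s$. Alice, who holds $x_i$, places $s_i\arc w_j$ when $x_{ij}=1$. Bob, who holds $y_i$, places $w_j\arc t_i$ when $y_{ij}=1$. Then $s_i\reach t_i$ iff some $j$ has $x_{ij}=y_{ij}=1$. The gadget has $s+2$ nodes, so choosing $d=s+2$ gives independence number at most $s+2=O(1)$ trivially. If I want to cover every $\alpha\ge 1$, including $\alpha=1$, I will instead add the complementary arcs, as in the triangle gadget: $w_j\arc s_i$ when $x_{ij}=0$, $t_i\arc w_j$ when $y_{ij}=0$, and fixed arcs among the $w_j$ and between $s_i$ and $t_i$ oriented so that no new $s_i$--$t_i$ path appears. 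With $s=2$, the candidates are $w_1\arc w_2$ and $t_i\arc s_i$. The point to check is that these completing arcs never enter $t_i$ along a path from $s_i$ unless some $w_j$ has both input arcs set to $1$. Both $w_j\arc s_i$ and $t_i\arc w_j$ point away from $t_i$, and $w_1\arc w_2$ preserves this, so the only arcs into $t_i$ are Bob's $w_j\arc t_i$ arcs. Also, $s_i$ leaves only via Alice's $s_i\arc w_j$ arcs. Any $s_i$--$t_i$ path therefore starts $s_i\arc w_j$ with $x_{ij}=1$ and ends $w_{j'}\arc t_i$ with $y_{ij'}=1$. With the arc $w_1\arc w_2$, a path $s_i\arc w_1\arc w_2\arc t_i$ would break the correspondence, so instead of this arc I will pad the gadget using the $\alpha$ slack, or simply accept $d=s+2$ and take $\alpha\ge 4$ to be absorbed. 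Which option to use is the main fiddly point.

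With the gadget fixed, the reduction runs as follows. Embed the gadgets with $s_i\mapsto(i,1)$ and $t_i\mapsto(i,d)$. Alice streams her arcs, Bob streams his, and the fixed tournament arcs and back-edges can be streamed by either party. By \cref{lem:reach-tour}, the answer equals $\bigwedge_i\neg\DISJ_s(x_i,y_i)=\TRIBES_{n/d,s}(\vec x,\vec y)$. A $p$-pass streaming algorithm with space $S$ gives a protocol of cost $O(pS)$, so $S=\Omega(n(s-1)/(dp))=\Omega(n/p)$.

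The main obstacle is the case of $\alpha$ smaller than the gadget size, especially $\alpha=1$, where each gadget must itself be a tournament while still encoding an $\AND\circ\DISJ$ bit without spurious paths. As in the $\alpha=1$ case of \cref{thm:pair-reach}, I would first try $s=1$ per gadget, with a $3$-node tournament gadget $\{a,b,c\}$: $a\arc b$ iff $x=1$ (else $b\arc a$), $b\arc c$ iff $y=1$ (else $c\arc b$), and $c\arc a$. Here $a\reach c$ iff $x=y=1$. But $\TRIBES_{r,1}$ is easy, so for $\alpha=1$ I would instead fall back to a reduction from $\DISJ_n$ via complements. To do that, I would aim to make the per-gadget bit equal to $\DISJ$ of a single bit pair, i.e. $s_i\reach t_i$ iff $\neg(x_i\wedge y_i)$, so that the $\AND$ over gadgets is exactly $\DISJ_{n/3}$. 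For this I would reuse the $3$-node construction of $\Gamma_{\triangle,1}$ from the proof of \cref{thm:HamP/HamC-LB}, where $s\reach t$ iff $x\wedge y=0$, and then apply \cref{fact:DISJ}.
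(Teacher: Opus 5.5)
Your proposal is correct, and its decisive part, the fallback at the end, is exactly the paper's proof. The paper reuses the gadget $\Gamma^{x,y}_{\triangle,\alpha}$ from the constant-$\alpha$ case of \cref{thm:HamP/HamC-LB}: the $3$-node tournament, padded by a path for $\alpha\geq 2$, which does not affect reachability. In this gadget $s\reach t$ iff $x\wedge y=0$. By \cref{lem:reach-tour}, reachability in $G^{\#}_{\vec{\Gamma}}$ then equals $\bigwedge_i\DISJ(X_i,Y_i)=\DISJ_{\Theta(n/\alpha)}(X,Y)$, and \cref{fact:DISJ} gives $\Omega(n/(\alpha p))=\Omega(n/p)$.

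Where you differ is in making a $\TRIBES_{n/4,2}$ reduction with plain gadgets your main route. That argument is valid for $\alpha\geq 4$, but it buys nothing here. It relies on the heavier bound of \cref{thm:TRIBES} rather than plain disjointness. Moreover, as you discovered, your plain gadget cannot be completed to a tournament, since any arc between $w_1$ and $w_2$ creates a spurious $s_i$--$t_i$ path, and $\TRIBES_{r,1}$ is trivial. Encoding the complement $\neg(x\wedge y)$ sidesteps all of this.

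Those $\alpha=1$ instances have independence number $1\leq\alpha$, and the target bound $\Omega(n/p)$ does not depend on $\alpha$. So the $\alpha=1$ reduction by itself proves the theorem for every $\alpha$. You should say this explicitly: as written, you handle $\alpha\geq 4$ and $\alpha=1$ but never address $\alpha\in\{2,3\}$. The unresolved discussion of completing arcs can then simply be dropped.
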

\begin{proof}
    The reduction is identical to that used in the constant-$\alpha$ regime in the proof of \cref{thm:HamP/HamC-LB}. As noted in the proof, for $x,y\in\set{0,1}$, $\REACH(\Gamma^{x,y}_{\triangle,\alpha};s,t) = \DISJ(x,y)$. The remaining steps follow verbatim from the previous proof.
\end{proof}
We remark that, in the unconstrained setting, \cite{CKP+21} established a nearly optimal space lower bound of $\Omega(n^{2-o(1)})$ for $o(\sqrt{\log n})$-pass algorithms, which is substantially stronger than the $\Omega(n^{1+1/(2p)}/p^{O(1)})$ bound implied by \cref{lem:reach-comm-reduction,lem:I-LB} proven by Guruswami and Onak \cite{GO16}. 
However, the lower bound result of \cite{CKP+21} is based on an indistinguishability result for low-pass sub-quadratic-space algorithms over certain hard distributions. 
There appears to be no black-box direct-sum-type argument that extends indistinguishability to the setting where the space usage is scaled by the number of embedded reachability instances.
For this reason, we resort to the weaker bound in \cite{GO16}, where the information complexity lower bound permits a direct-sum approach.
It remains an interesting direction for future work to determine whether the stronger lower bound results of \cite{CKP+21} can be applied within our direct-sum paradigm for parameterized lower bounds.

%% file: apps.tex
\section{Applications of Strong Connectivity Certificates}\label{sec:app}

In this section, we present some applications of strong connectivity certificates, complemented with lower bounds derived from the results in \cref{sec:hardness}. Remind that in~\cref{sec:node-vs-arc} we show that a $k$-node strong connectivity certificate is also a $k$-arc strong connectivity certificate.

\subsection{Arc-Disjoint Spanning Out-Branchings}\label{sec:branch}
An application of the $k$-arc strong connectivity certificate algorithm is the computation of $k$ arc-disjoint spanning out-branchings. See \cref{def:spanning-branching} for the definition of a spanning out-branching. A valid output of this problem consists of $k$ spanning out-branchings originating from a mutual designated root, and the arc sets of the out-branchings are pairwise disjoint. The existence of $k$ arc-disjoint spanning out-branchings is characterized by Edmond's branching theorem.

Let $G$ be a digraph and $H$ be a $k$-arc strong connectivity certificate of $G$. For any given root $r$, if $G$ admits $k$ arc-disjoint spanning out-branchings rooted at $r$, then $H$ does as well. Therefore, a $k$-arc strong connectivity certificate suffices to determine $k$ arc-disjoint spanning out-branchings.

\begin{theorem}\label{thm:algo-branch}
    Let $G$ be an $n$-node digraph with independence number $\alpha$. For $p\in \dbN$ and $k \ge 2$, $k$ arc-disjoint spanning out-branchings of $G$ can be computed with probability $1 - 1/n^{\Omega(1)}$ by a $p$-pass randomized algorithm using
\begin{enumerate}[label=\textup{(\alph*)}]
    \item $O(k^{1-1/p}\alpha n^{1+1/p}\log n)$ space in the insertion-only model; and
    \item $O(k^{1- O(1/\sqrt{p})}\alpha n^{1+ O(1/\sqrt{p}) }\log n)$ space in the turnstile model.
\end{enumerate}
\end{theorem}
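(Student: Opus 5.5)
The plan is to run the $k$-strong connectivity certificate algorithm of \cref{thm:main-kconn} (that is, \cref{algo:kconn} with $\rho=1/k$) in the stream. Then I will extract the $k$ arc-disjoint spanning out-branchings offline from the stored certificate. All passes and all space are used by the certificate computation. The offline step uses no passes, and its space is bounded by the certificate size $O(k\alpha n\log n)$, which is dominated by the streaming space bound in both models. So the pass and space bounds in (a) and (b) are inherited verbatim from \cref{thm:main-kconn}, with the same success probability $1-1/n^{\Omega(1)}$.

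For correctness, let $Q$ be the output of \cref{algo:kconn}; with high probability it is a $k$-node strong connectivity certificate. By \cref{thm:node-vs-arc}, $Q$ is also a $k$-arc strong connectivity certificate. Fix the designated root $r$ and assume $G$ admits $k$ arc-disjoint spanning out-branchings rooted at $r$. By Edmonds' branching theorem (\cref{thm:edmond}), every cut $(S,V\setminus S)$ with $r\in S$ has $|\delta^+_G(S)|\ge k$. Menger's theorem (\cref{thm:menger}) then gives $\lambda_{rv}(G)\ge k$ for every $v\ne r$. The certificate property yields $\lambda_{rv}(Q)\ge\min\{k,\lambda_{rv}(G)\}=k$ for all $v\neq r$. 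Now take any $S\ni r$ with $S\ne V$ and pick $v\notin S$. Menger's theorem in $Q$ gives $|\delta^+_Q(S)|\ge\lambda_{rv}(Q)\ge k$. Applying \cref{thm:edmond} to $Q$ shows that $Q$ itself contains $k$ arc-disjoint spanning out-branchings rooted at $r$, and these are also out-branchings of $G$.

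The remaining step is to compute such branchings from $Q$ offline. This can be done with any polynomial-time algorithm for Edmonds' branching problem, for example weighted matroid intersection or the constructive proof of \cref{thm:edmond}. Since $Q$ is held in memory, the space needed is linear in $|A(Q)|$. If $G$ does not admit $k$ such branchings, the same equivalence shows $Q$ does not either, so the algorithm correctly reports infeasibility.

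The one point needing care is the direction of the reduction. A $k$-arc certificate only preserves $\min\{k,\lambda_{st}\}$ pairwise, while Edmonds' condition is a statement about all cuts. The Menger argument above bridges the two, because every cut separating $r$ from some $v$ is bounded below by $\lambda_{rv}$. I expect no other obstacles.
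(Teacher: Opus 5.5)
Your proposal is correct and follows essentially the same route as the paper: compute a $k$-node certificate via \cref{thm:main-kconn}, upgrade it to a $k$-arc certificate with \cref{thm:node-vs-arc}, verify Edmonds' cut condition in the certificate, and extract the branchings offline with the bounds inherited from \cref{thm:main-kconn}. Your explicit Menger step (every cut $S\ni r$ missing some $v$ has $|\delta^+_Q(S)|\ge\lambda_{rv}(Q)\ge k$) spells out the passage from the pairwise certificate guarantee to the cut condition, which the paper states in one line.
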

\begin{proof}
    Let $H$ be a $k$-strong connectivity certificate of $G$. Let $(S, V \setminus S)$ be any cut with $r \in S$. If $G$ admits $k$ arc-disjoint spanning out-branchings rooted at $r$, then $|\delta^+_G(S)| \ge k$ since each spanning out-branching must cross the cut at least once. As $H$ is a $k$-arc strong connectivity certificate (\cref{thm:node-vs-arc}), $|\delta^+_H(S)| \ge k$. Therefore by \cref{thm:edmond}, $H$ admits $k$ arc-disjoint spanning out-branchings rooted at $r$. We apply the algorithm in \cref{thm:main-kconn} to obtain a $k$-arc strong connectivity certificate of $G$, and then apply an offline algorithm on $H$ to obtain $k$ arc-disjoint out-branchings. The pass and space usage follows from \cref{thm:main-kconn}.
\end{proof}

For $\InBranch$ and $\OutBranch$, we can improve with the deterministic 1-strong connectivity certificate algorithm (\cref{thm:main-1conn}).
\begin{theorem}\label{thm:algo-branch1}
    For $p\in \dbN$, $\InBranch_n^\alpha$ and $\OutBranch_n^\alpha$ can be computed by a $p$-pass deterministic algorithm using
\begin{enumerate}[label=\textup{(\alph*)}]
    \item $O(\alpha n^{1+1/p})$ space in the insertion-only model; and
    \item $O(\alpha n^{1+ O(1/\sqrt{p}) })$ space in the turnstile model.
\end{enumerate}
\end{theorem}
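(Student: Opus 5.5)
The plan is to run the deterministic $1$-strong connectivity certificate algorithm of \cref{thm:main-1conn} once, and then finish with an offline computation on the small certificate it returns. Let $H$ be the $1$-strong connectivity certificate produced by \cref{algo:1conn}; this takes $p$ passes and $O(\alpha n^{1+1/p})$ space in the insertion-only model, and $O(\alpha n^{1+O(1/\sqrt{p})})$ space in the turnstile model via the Munro--Paterson modification of \cref{sec:turnstile}. The output has arboricity at most $\alpha+2$, so it has $O(\alpha n)$ arcs and can be stored within the stated space. No extra passes are needed, since everything after this point is done offline on $H$.

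The key correctness step is that a spanning out-branching (resp.\ in-branching) rooted at $r$ exists in $G$ iff one exists in $H$. The existence of a spanning out-branching rooted at $r$ is equivalent to every node being reachable from $r$. This is the case $k=1$ of \cref{thm:edmond}: every cut $(S,V\setminus S)$ with $r\in S$ must have an outgoing arc. Since $H$ is a $1$-strong connectivity certificate of $G$, it has the same transitive closure as $G$, so $r\reach v$ in $H$ iff $r\reach v$ in $G$. The symmetric statement for in-branchings uses reachability to $r$. Hence, if $G$ admits the required branching, a BFS or DFS from $r$ in $H$ (or in the reverse of $H$) returns one, and it uses only arcs of $H\subseteq G$. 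If $G$ admits none, the same search detects this.

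If the problem definition in \cref{apx:prob-def} asks for a branching without a prescribed root, or asks to report nonexistence, the same argument applies. To find a valid root offline, pick any node in a source strongly connected component of $H$; its condensation coincides with that of $G$ because the transitive closures agree. A spanning out-branching exists iff this source component is unique, and this can be checked on $H$.

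The only delicate point is making sure the offline search stays within the space bound and that the turnstile version really outputs a subgraph of the final graph. Both follow directly from \cref{thm:main-1conn}: its output is a subgraph of $G$ of size $O(\alpha n)$, and the search needs only $O(n)$ additional words. I therefore expect no real obstacle beyond citing these facts carefully.
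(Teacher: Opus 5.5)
Your proposal is correct and follows the paper's intended argument. You run the deterministic $1$-strong connectivity certificate algorithm of \cref{thm:main-1conn} to get an $O(\alpha n)$-arc subgraph with the same transitive closure as $G$, then extract the in- or out-branching at the designated root offline; the paper states this theorem without a separate proof and relies on exactly this reduction, as in the proof of \cref{thm:algo-2apxMSSS} (your extra paragraph on an unspecified root is unnecessary since the problem fixes $r$, but it is also correct).
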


Another approach to computing $k$ arc-disjoint spanning out-branchings based on \cref{thm:edmond} is matroid intersection~\cite{Edmonds73}. The goal of a matroid intersection problem is to find a maximal common independent set of two matroids. The problem of computing $k$ arc-disjoint spanning out-branchings of a graph $G=(V,E)$ can be expressed as a matroid intersection problem of two matroids $M_1$ and $M_2$, where $M_1$ is the union of $k$ graphic matroids and $M_2$ is the union of $k$ partition matroids, each take the form $\{e\subseteq E: \delta^{+}_{e}(v)\leq 1 \text{ for all }v\in V\}$. However, existing streaming algorithms for matroid intersection~\cite{CrouchS14,GargJS23,Ter25} are unable to solve this problem even for the case $k=1$, the reason being that these algorithms only return approximate solutions in the form of out-trees rather than a spanning out-branching.

For the lower bound, the parameterized bound of the spanning branching problems follows from a direct reduction from reachability.
\begin{lemma}\label{lem:reach-spanningbranching}
    $\REACH_{n}^{\alpha}$ reduces to $\OutBranch_{n}^{\alpha}$ and $\InBranch_{n}^{\alpha}$.
\end{lemma}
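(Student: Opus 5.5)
Given a $\REACH_n^\alpha$ instance $(G,s,t)$ with $\ind(G)\le\alpha$, I will build an $\OutBranch$ instance $G'$ on the same (or nearly the same) node set with independence number still at most $\alpha$. The root of the desired spanning out-branching will be $s$. A spanning out-branching rooted at $s$ exists iff $s$ reaches every node, so the reduction has to make every node other than $t$ reachable from $s$ for free. It also has to preserve reachability to $t$ exactly.

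\textbf{Step 1: the construction.} Let $G'$ be obtained from $G$ by deleting every arc entering $s$ (these are irrelevant to reachability from $s$) and adding the arc $s\arc v$ for every $v\in V\setminus\{s,t\}$. Since $s$ can only gain neighbours, adding arcs cannot increase the independence number. If some arc $v\arc s$ is removed without being replaced, I restore adjacency by orienting that pair as $s\arc v$ instead. Hence every pair adjacent in $G$ stays adjacent in $G'$, and $\ind(G')\le\ind(G)\le\alpha$. The arc $s\arc t$ is added only if it already exists in $G$. For streaming, the new arcs incident to $s$ are generated locally, and each original arc is forwarded or redirected, so the reduction is implementable by one party in a communication protocol.

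\textbf{Step 2: correctness.} If $s\reach t$ in $G$, that path still lies in $G'$, since a simple path from $s$ never re-enters $s$. The star from $s$ together with the last arc of this path into $t$ then forms a spanning out-branching. Conversely, a spanning out-branching rooted at $s$ in $G'$ contains an $s$--$t$ path. That path uses only arcs of $G$ plus added arcs $s\arc v$ with $v\neq t$. Such a path begins with an arc out of $s$ and afterwards uses only original arcs. An added first arc $s\arc v$ is not in $G$, however, so this direction needs care.

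\textbf{Main obstacle.} The added star destroys the converse direction. To fix this, I route the free reachability through $t$ instead: keep $G$ unchanged and add the arc $t\arc v$ for every $v\neq s,t$. The independence number again does not increase, since only arcs are added; a pair that already had an arc in the opposite direction can be left alone, because reachability from $t$ to $v$ is unaffected. Now an out-branching rooted at $s$ exists iff $s\reach t$ in $G'$. Any $s$--$t$ path in $G'$ cannot use the new arcs before reaching $t$, because they all leave $t$, so such a path exists iff $s\reach t$ in $G$. Conversely, if $s\reach t$, then $t$ reaches every remaining node via the star. This gives the reduction for $\OutBranch$. For $\InBranch$ I apply the same argument to the reversed graph, rooting at $t$ and adding the arcs $v\arc s$; this is symmetric, and reversal preserves the independence number.
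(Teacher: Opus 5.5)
Your final construction is essentially the paper's reduction: keep $G$, add $t\arc v$ for all $v\neq s,t$, and for \InBranch{} reverse everything, adding $v\arc s$ and rooting at $t$. Your correctness argument is also right (new arcs all leave $t$, so they cannot help an $s$--$t$ path, while once $t$ is reached every node is). One caveat: if ``left alone'' means omitting $t\arc v$ when $v\arc t$ is already present, that is wrong, since with only the arcs $s\arc t$ and $v\arc t$ the node $v$ would then be unreachable from $s$. So add all these arcs (antiparallel pairs are allowed, and adding arcs never increases $\ind$), and discard the abandoned first construction.
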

\begin{proof}
    It suffices to show the reduction to $\OutBranch_{n}^{\alpha}$, and the in-branching reduction are the same except the arc directions are reversed. For an instance $(G;s,t)$ to $\REACH^\alpha_n$, construct a new digraph $G'$ by adding the arcs $\set{(t,v):v\in V(G)\setminus \set{t}}$. Obviously, $\ind(G')\leq \ind(G)\leq \alpha$. It is clear that $G'$ has a spanning out-branching from $s$ whenever $t$ is reachable from $s$; conversely, if $s$ cannot reach $t$, any out-branching originating from $s$ in $G'$ does not contain $t$. This shows that $\OutBranch^\alpha_n(G';s) = \REACH^\alpha_n(G;s,t)$.
\end{proof}
\begin{corollary}\label{cor:LB-branching}
    Any $p$-pass algorithm for $\OutBranch^\alpha_n$ or $\InBranch^\alpha_n$ requires $\Omega(\alpha^{1/(2p)} n/p^{O(1)})$ space if $\alpha=O(1)$ or $\alpha=\omega(1)$, $p=o(\sqrt{\log n})$.
\end{corollary}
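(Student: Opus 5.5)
The plan is to derive \cref{cor:LB-branching} directly from the reachability lower bounds via \cref{lem:reach-spanningbranching}. The reduction only adds arcs that the streaming algorithm can generate locally, so a streaming algorithm for $\OutBranch^\alpha_n$ (or $\InBranch^\alpha_n$) yields one for $\REACH^\alpha_n$ with the same number of passes and the same space, up to $O(\log n)$ bits of bookkeeping.

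Concretely, given a stream for an instance $(G;s,t)$ of $\REACH^\alpha_n$, we simulate the $\OutBranch$ algorithm on $G'$. In each pass we first feed it the arcs $\set{(t,v): v\in V(G)\setminus\set{t}}$ and then the original stream. These extra arcs depend only on $t$ and $n$, so they are produced on the fly without storing anything. In the turnstile model they are simply insertions, and deletions of original arcs pass through unchanged. The proof of \cref{lem:reach-spanningbranching} gives $\ind(G')\le\alpha$. After the final pass, the algorithm's output either exhibits a spanning out-branching rooted at $s$ or reports that none exists. We answer $\REACH(G;s,t)=1$ exactly in the former case, which is correct by the equivalence $\OutBranch^\alpha_n(G';s)=\REACH^\alpha_n(G;s,t)$. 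The in-branching case is symmetric: reverse all arcs and add $(v,s)$ for all $v\neq s$, rooting at $t$.

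We then plug in the two reachability bounds. For $\alpha=O(1)$, \cref{thm:s-t-LB-const-alpha} gives $\Omega(n/p)$, and this coincides with $\Omega(\alpha^{1/(2p)}n/p^{O(1)})$ because $\alpha^{1/(2p)}=O(1)$. For $\alpha=\omega(1)$, \cref{thm:s-t-LB} gives $\Omega(\alpha^{1/(2p)}n/p^{O(1)})$ when $p=O(\log\alpha/\log\log\alpha)$.

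The main obstacle is matching this pass regime to the stated hypothesis $p=o(\sqrt{\log n})$, since in general $\sqrt{\log n}$ need not be below $\log\alpha/\log\log\alpha$. I would handle the complementary range using monotonicity in $p$ together with the $\alpha$-free bound $\Omega(n/p)$. That bound still holds for any $\alpha\ge1$, because instances with independence number $1$ lie in the class, and when $\alpha^{1/(2p)}=O(p^{O(1)})$ it is already at least the target bound. If this reconciliation does not close cleanly, I would restate the corollary's range to match \cref{thm:s-t-LB} exactly. The substantive content is the pass- and space-preserving reduction.
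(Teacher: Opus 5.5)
Your route (simulate the $\OutBranch$/$\InBranch$ algorithm on the instance of \cref{lem:reach-spanningbranching} with the extra arcs generated on the fly, then invoke \cref{thm:s-t-LB-const-alpha,thm:s-t-LB}) is exactly how the paper gets this corollary, and your reconciliation via the $\alpha$-free $\Omega(n/p)$ bound is correct: once $p\gtrsim\log\alpha/\log\log\alpha$ we have $\alpha^{1/(2p)}=p^{O(1)}$, so that bound alone dominates the target and bridges the stated range $p=o(\sqrt{\log n})$ and the range of \cref{thm:s-t-LB}, a mismatch the paper leaves unaddressed. One small slip: reversing the arcs of $G$ \emph{and} adding $(v,s)$ with root $t$ decides $t\reach s$ instead of $s\reach t$; use $G\cup\set{(v,s):v\neq s}$ rooted at $t$, or the reversal of $G\cup\set{(t,v):v\neq t}$ rooted at $s$.
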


Our lower bound results extend to weighted digraph settings as well. A specific problem is to find a $\rho$-approximate minimum-weight spanning out-branching rooted at a designated node, i.e., a spanning out-branching whose total weight is within a multiplicative factor $\rho$ of the optimum. We show that this problem and related variants are as hard as the unweighted reachability problem.
\begin{lemma}\label{thm:LBst-w-outbr}
    For $M>1$ and $\delta\in (0,1)$, $\REACH_{n}^{\alpha}$ reduces to $(1-\delta)M\apx~\MaxW\OutBranch_{\Theta(n/\delta)}^{\alpha,[M]}$. A similar reduction to the $\MaxW\InBranch$, $\MinW\OutBranch$ and $\MinW\InBranch$ variants (with the same parameters) holds.
\end{lemma}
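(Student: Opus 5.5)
The plan is to reduce from $\REACH_n^\alpha$ by a padding construction, in the same spirit as the unweighted reduction in \cref{lem:reach-spanningbranching}. The goal is a weighted digraph on $\Theta(n/\delta)$ nodes with weights in $[M]$ and independence number at most $\alpha$. In this graph, the optimal out-branching weight from a designated root should differ by more than a factor $(1-\delta)M$ between YES and NO instances of reachability.

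\textbf{Construction.} Start from an instance $(G;s,t)$ with $\ind(G)\le\alpha$. Give every arc of $G$ weight $1$, and add the arcs $(t,v)$ for all $v\in V(G)\setminus\set{t}$ with weight $1$, as in \cref{lem:reach-spanningbranching}. Then attach a long directed path $t=w_0\arc w_1\arc\cdots\arc w_L$ of fresh nodes, with $L=\Theta(n/\delta)$, where every arc on this path has weight $M$. A bare path has independence number about $L/2$, which is far too large. To fix this, add all forward arcs $w_i\arc w_j$ for $i<j$ with weight $1$, so that $\set{w_1,\ldots,w_L}$ induces a transitive tournament; also add arcs $v\arc w_j$ from every $v\in V(G)$. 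Any independent set then contains at most one $w$-node, and in fact none if it meets $V(G)$. Hence the independence number stays at most $\alpha$ (for $\alpha\ge 1$).

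\textbf{Problem with the first attempt.} The root of the branching is $s$. As stated, the $w$-nodes are reachable from $s$ regardless of whether $s\reach t$, via the arcs $v\arc w_j$. So the branching always exists, and what must separate the cases is the weight. The fix is to route everything through $t$: add arcs $w_i\arc w_j$ only inside the padding, and connect $V(G)$ to the padding only through $t$. To keep $\ind\le\alpha$, also add arcs $w_j\arc v$ for $v\in V(G)\setminus\set{t}$, and give them weight $1$.

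\textbf{Analysis.} Consider the maximization variant. If $s\reach t$, the branching can take the whole $M$-weight path $w_0\arc\cdots\arc w_L$, which has weight at least $LM$; this works because the path $w_0\arc\cdots\arc w_L$ is reached from $t$, and the rest of $V(G)$ is reached from $t$ or from the $w_j$. If $s\not\reach t$, then no branching rooted at $s$ spans the graph at all. For an approximation problem, I would convert this feasibility gap into a weight gap by adding a fallback arc $s\arc w_j$ of weight $1$ for each $j$ (these arcs also help independence). A spanning branching then always exists. Without reaching $t$ it can use at most $n$ arcs of weight $1$ plus the padding arcs entering each $w_j$. Here lies the catch: the $M$-weight path arcs are available from $w_1$ onward in both cases.

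\textbf{Main obstacle.} The heavy arcs must be usable only when $t$ is reached. I would first try making each $w_j$ enterable with high weight only from $t$: put the weight-$M$ arcs $t\arc w_j$ for all $j$, and give every other arc weight $1$. Then the optimum is about $LM+n$ if $s\reach t$, and at most $L+n$ otherwise. With $L=\Theta(n/\delta)$, the ratio is at least $(1-\delta)M$. The minimization variant follows by swapping the weights ($1\leftrightarrow M$), and the in-branching variants follow by reversing all arcs. The space bound is then inherited from \cref{thm:reach-intro}, and the delicate step is verifying $\ind\le\alpha$ after all these additions.
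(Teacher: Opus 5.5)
\textbf{Gap: your final construction gives no weight gap.} It keeps the weight-$1$ arcs $w_j\arc v$ for all $v\in V(G)\setminus\set{t}$, which you added for independence, together with the fallback arcs $s\arc w_j$.
\begin{itemize}
\item In a NO instance the root can go $s\arc w_1\arc u\arc t$, where $u$ is any in-neighbour of $t$ in $G$. Such a $u$ typically exists even though $s\not\reach t$.
\item Every $w_j$ with $j\ge 2$ can then be entered by its heavy arc $t\arc w_j$. So the NO optimum is at least $(L-1)M$, not at most $L+n$, and the ratio between the two cases tends to $1$.
\item The fallback arcs cause the same failure whenever the padding has any exit back into $V(G)$, because the root can jump straight to the exit.
\item Deleting all padding-to-$V(G)$ arcs does not help either. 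A NO instance then has no spanning branching at all. That is the pure feasibility gap you were trying to get past, and with it $M$, $\delta$ and the padding play no role.
\end{itemize}

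\textbf{Missing idea.} In a NO instance, $t$ must stay reachable, but only along a route on which every head of a heavy arc is an ancestor of $t$. The paper forces such a route:
\begin{itemize}
\item there is a single entry arc $s\arc p_1$, plus the path arcs $p_i\arc p_{i+1}$;
\item all remaining padding arcs point backwards ($p_i\arc p_j$ for $i-j\ge 2$);
\item a single exit node $a$ is entered only from $p_{cn}$ and sends arcs to all nodes except $s$ and $p_{cn}$;
\item only the arcs $t\arc p_i$ have weight $M$.
\end{itemize}
When $s\not\reach t$ in $G$, every $s$--$t$ path starts with $s\arc p_1\arc\cdots\arc p_{cn}\arc a$. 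Hence every $p_i$ is an ancestor of $t$ in any branching rooted at $s$. Every branching then has weight exactly $(1+c)n$, compared with $(cM+1)n$ when $s\reach t$.

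Your forward transitive tournament on the $w_j$ would defeat this, since forward chords let a route skip padding nodes. The paper pays for the construction with $\ind(G')\le\alpha+1$ rather than $\alpha$; the padding together with $a$ is a tournament, so it adds at most $1$.

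\textbf{Minimization variants.} Naively swapping $1\leftrightarrow M$ gives ratio $\frac{(c+1)M}{c+M}$, which forces $c=\Theta(M/\delta)$. To keep $\Theta(n/\delta)$ nodes, give weight $M$ to $s\arc p_1$ and to the arcs $p_i\arc p_{i+1}$, and weight $1$ to everything else.
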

\begin{proof}
    We present the proof for maximum-weight spanning out-branching. The other variants follow a similar proof. Towards a reduction, for a $\REACH^\alpha_n$ instance with source $s$ and sink $t$, we build a weighted digraph $G'=(V',A')$ in $\mathcal{G}_{n,\alpha,M}$ as follows. 
    The node set is $V'=V\cup\set{a}\cup\set{p_i:i\in [cn]}$, which has size $(1+c)n+1$. Each arc in $A'$ is assigned a weight of 1 or $M$. The weight-1 arcs are all the original arcs $A$ and the new arcs
    \[
        \set{(p_i,p_{i+1}):i\in [cn-1]} 
        \cup \set{(p_i,p_j):i-j\geq 2}
        \cup \set{(a,v):v\in V'\setminus \set{s,a,p_{cn}}}
        \cup \set{(s,p_1),(p_{cn},a)}.
    \]
    The arcs $\set{(t,p_i):i\in [cn]}$ are each assigned a weight of $M$.     
    In other words, $G'$ consists of the original graph $G$ and a new clique with a planted path $p_1\arc p_2\arc\ldots\arc p_{cn}\arc a$, so that $s$ is connected to $p_1$, $a$ is connected to all other nodes in $G'$ except $s$ and $p_{cn}$, and $t$ is connected to all path nodes except $a$ with heavier arcs. See \cref{fig:LBst-wob} for an illustration. 
    Notice that $\ind(G') \leq \ind(G'[V]) +\ind(G'[V'\setminus V])\leq \alpha+1$.

    We set $s$ to be the designated root of the out-branching, and we show that the maximum-weight out-branching from $s$ in $G'$ depends on the $s$--$t$ reachability in the original graph $G$. Notice that in $G'$, $a$ is always reachable from $s$ via the path, so there is always an out-branching of weight $(1+c)n$. On the other hand, an out-branching rooted at $s$ can contain weight-$M$ arcs only if $t$ is reachable from $s$. In such a case, the maximum-weight out-branching contains all $cn$ weight-$M$ arcs and has a total weight of $(Mc+1)n$. The ratio of the total weights in these two cases is $\frac{Mc+1}{c+1} = M - \frac{M-1}{c+1}$. Setting $c=\Theta(1/\delta)$, the above ratio exceeds $(1-\delta)M$ and hence a $(1-\delta)M$-approximation algorithm can tell the two cases of reachability apart.
\end{proof}
\begin{corollary}
    For $\delta\in (0,1)$, Any $p$-pass algorithm for $(1-\delta)M\apx~\MaxW\OutBranch_{n}^{\alpha,[M]}$ requires $\Omega(\alpha^{1/(2p)} \delta n/p^{O(1)})$ space if $2\leq \alpha=O(1)$ or $\alpha=\omega(1)$, $p=o(\sqrt{\log n})$. The same bounds hold for the $\MaxW\InBranch$, $\MinW\OutBranch$ and $\MinW\InBranch$ variants.
\end{corollary}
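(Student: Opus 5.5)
The plan is to combine the reduction of \cref{thm:LBst-w-outbr} with the reachability lower bounds of \cref{thm:reach-intro}, in the same way that \cref{cor:LB-branching} follows from \cref{lem:reach-spanningbranching}. The only additional work is tracking how the reduction changes three parameters: the independence number, the number of nodes, and the way the stream is formed.

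\textbf{Parameter bookkeeping.} Fix a target instance size $n'$ and independence bound $\alpha \ge 2$.
\begin{itemize}
\item The gadget in the proof of \cref{thm:LBst-w-outbr} satisfies $\ind(G')\le \ind(G)+1$. So I start from a $\REACH$ instance on the class with independence number at most $\alpha-1\ge 1$. This is exactly why the statement requires $\alpha\ge 2$ in the constant regime.
\item The construction produces $(1+c)n+1$ nodes with $c=\Theta(1/\delta)$. I choose $n=\Theta(\delta n')$ so that $G'$ has at most $n'$ nodes, padding with isolated-free dummy structure if needed. For example, I can lengthen the planted path, which keeps the independence bound.
\item With this choice, a lower bound of the form $\Omega((\alpha-1)^{1/(2p)} n/p^{O(1)})$ for $\REACH_n^{\alpha-1}$ becomes $\Omega(\alpha^{1/(2p)}\delta n'/p^{O(1)})$. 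This uses $(\alpha-1)^{1/(2p)}=\Theta(\alpha^{1/(2p)})$ for $\alpha\ge 2$.
\end{itemize}

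\textbf{Streaming simulation.} Given a $p$-pass $(1-\delta)M$-approximation algorithm $\mathcal{A}$ for $\MaxW\OutBranch$, the reachability algorithm runs $\mathcal{A}$ on the stream defined as follows.
\begin{itemize}
\item The arcs of $G$ are forwarded with weight $1$.
\item All other arcs are appended in every pass: the clique on the $p_i$'s, the arcs out of $a$, $(s,p_1)$, $(p_{cn},a)$, and the weight-$M$ arcs $(t,p_i)$. These depend only on the known names $s,t$, not on the input, so they can be generated locally with no extra space. In the turnstile case they are only ever inserted.
\end{itemize}
The reachability algorithm then computes the weight of the returned out-branching rooted at $s$. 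It answers ``reachable'' iff this weight exceeds $(c+1)n+1$. By the ratio computation in \cref{thm:LBst-w-outbr}, this threshold correctly separates the two cases. This uses the same $p$ passes and $O(s)$ space, so any space lower bound for $\REACH$ transfers.

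\textbf{Invoking the lower bound.} I split according to the two regimes named in the statement.
\begin{itemize}
\item For $2\le\alpha=O(1)$, I use \cref{thm:reach-intro}~\ref{thm:reach-intro-a} with independence $\alpha-1=O(1)$. This gives $\Omega(n/p)=\Omega(\delta n'/p)$, which matches the claim because $\alpha^{1/(2p)}=O(1)$ there.
\item For $\alpha=\omega(1)$ with $p=o(\sqrt{\log n})$, I invoke the reachability lower bound exactly as \cref{cor:LB-branching} does under the same pass hypothesis. That is, I use the $\REACH_n^{\alpha}$ bound of \cref{thm:s-t-LB}, applied to independence $\alpha-1=\omega(1)$.
\end{itemize}

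I expect the main obstacle to be this last step. The pass condition in the statement ($p=o(\sqrt{\log n})$) is phrased differently from the hypothesis of \cref{thm:s-t-LB}. So I need to check that the Guruswami--Onak direct-sum bound (\cref{lem:I-LB,lem:I-direct-sum}) is invoked in the same regime as in \cref{cor:LB-branching}. I would simply inherit that regime, since the weighted reduction preserves $p$ and changes $n$ only by a factor of $\Theta(\delta)$.

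\textbf{Other variants.} The $\MaxW\InBranch$ variant follows by reversing all arcs. The $\MinW$ variants follow by swapping the roles of weights $1$ and $M$ in the gadget; the threshold test and the parameter accounting are unchanged.
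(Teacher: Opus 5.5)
\textbf{The MinW step fails as written.} Your route matches the paper's: compose the reduction of \cref{thm:LBst-w-outbr} with the $\REACH$ lower bounds. You start from independence $\alpha-1$ to absorb the $+1$ contributed by the planted clique (this is what $\alpha\ge 2$ is for) and rescale $n=\Theta(\delta n')$. The MaxW part is fine.

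The problem is the MinW variants. If you literally swap the weights (every former weight-$1$ arc gets weight $M$, and the arcs $(t,p_i)$ get weight $1$), the unreachable case costs $(1+c)nM$ and the reachable case costs $cn+nM$. The ratio is then $\frac{(1+c)M}{c+M}$. This exceeds $(1-\delta)M$ only when $c>((1-\delta)M-1)/\delta$, so $c$ must grow linearly with $M$. You would lose a factor of order $M$ in the bound, whereas the claimed bound is uniform in $M$.

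Instead, give weight $M$ exactly to the arcs entering a path node $p_j$ from somewhere other than $t$: these are $(s,p_1)$, $(p_{j-1},p_j)$, the backward arcs $(p_i,p_j)$, and the arcs $(a,p_j)$. Give weight $1$ to all other arcs, including $(t,p_j)$, $(p_{cn},a)$, the arcs of $G$, and the arcs $(a,v)$ with $v\in V$. If $t$ is not reachable from $s$ in $G$, no out-branching rooted at $s$ uses an arc $(t,p_j)$; this is the same fact that drives the MaxW analysis. Hence all $cn$ path nodes are entered by heavy arcs and the cost is $cnM+n$. If $t$ is reachable, cost $(c+1)n$ is attainable. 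This restores the ratio $M-\frac{M-1}{c+1}$ with $c=\Theta(1/\delta)$. The in-branching versions then follow by reversing all arcs, as you say.

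\textbf{The pass regime.} Your worry about the obstacle you flag is justified. \cref{thm:s-t-LB} only covers $p=O(\log\alpha/\log\log\alpha)$. Pointing to \cref{cor:LB-branching} does not help, since it is stated under the same hypothesis $p=o(\sqrt{\log n})$ with no further argument. You can close the gap yourself:
\begin{enumerate}
\item For $p\le\log\alpha/\log\log\alpha$, use \cref{thm:s-t-LB} as you planned.
\item For $p\ge\log\alpha/\log\log\alpha$, we have $\alpha^{1/(2p)}\le(\log\alpha)^{1/2}\le p$ for all large $\alpha$, hence $\alpha^{1/(2p)}\delta n'/p^{2}\le\delta n'/p$.
\item An $\Omega(\delta n'/p)$ bound follows from your reduction applied to the independence-$1$, $\DISJ$-based $\REACH$ instances behind \cref{thm:s-t-LB-const-alpha}. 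This is admissible because $1\le\alpha-1$.
\end{enumerate}
So for $\alpha=\omega(1)$ the bound holds for every $p$, provided the exponent hidden in $p^{O(1)}$ is at least $2$.

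\textbf{A small slip in your decision rule.} Every spanning out-branching of $G'$ has exactly $(c+1)n$ arcs, so the unreachable case has weight exactly $(c+1)n$. The right test is ``weight $>(c+1)n$'', equivalently ``some weight-$M$ arc is used''. Your threshold $(c+1)n+1$ is only safe if $c$ has slack, e.g.\ $c+1\ge 2/\delta$.
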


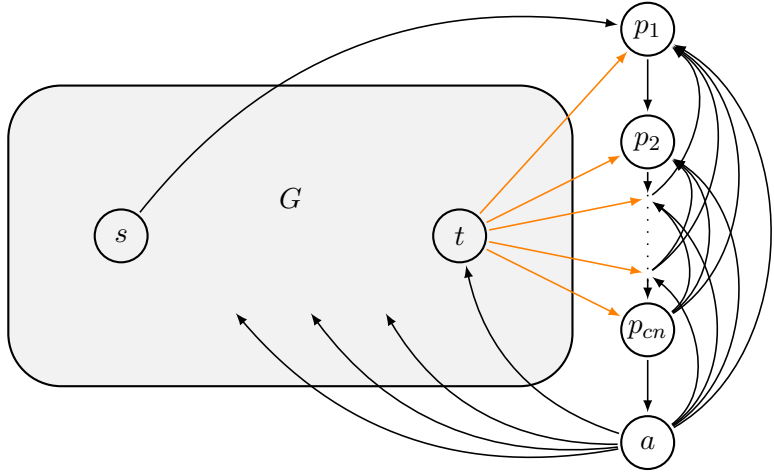
\begin{figure}[h]
    \centering
    \begin{tikzpicture}[thick,
          vertex/.style={draw, circle, thick, inner sep=1pt, minimum size=20pt},
          edge/.style={black, line width=0.6pt,>=stealth,shorten >=1pt,shorten <=1pt}
        ]
        
    \draw[rounded corners=20pt,fill=gray!10] (-1.5,0) rectangle (6,4);
    
    \node[vertex] (S) at (0,2){$s$};
    \node[vertex] (T) at (4.5,2){$t$};
    \node[vertex] (p1) at (7,4.75){$p_1$};
    \node[vertex] (p2) at (7,3.25){$p_2$};
    \node[vertex] (pn) at (7,0.75){$p_{cn}$};
    \node[vertex] (A) at (7,-0.75){$a$};
    \node[] at (2.25,2.5) {$G$};
    \node[inner sep=0pt] (p4) at (7,1.5){};
    \node[inner sep=0pt] (p3) at (7,2.5){};
    
    \draw[edge,-latex,orange] (T) to (p1);
    \draw[edge,-latex,orange] (T) to (p2);
    \draw[edge,-latex,orange] (T) to (pn);
    \draw[edge,-latex,orange] (T) to (p3);
    \draw[edge,-latex,orange] (T) to (p4);
    \draw[edge,-latex] (A) to[bend left] (T);
    \draw[edge,-latex] (S) to[bend left] (p1);
    \draw[edge,-latex] (p1) to (p2);
    \draw[edge,-latex] (p2) to (p3);
    \draw[edge,-latex] (p4) to (pn);
    \draw[edge,-latex] (pn) to (A);
    \draw[edge,-latex] (A) to[bend left] (3.5,1);
    \draw[edge,-latex] (A) to[bend left] (2.5,1);
    \draw[edge,-latex] (A) to[bend left] (1.5,1);
    \draw[edge,loosely dotted,shorten >=5pt,shorten <=5pt] (p2) to (pn);
    \draw[edge,-latex,out=37,in=-37] (A) to (p4);
    \draw[edge,-latex,out=37,in=-37] (pn) to (p3);
    \draw[edge,-latex,out=37,in=-37] (p4) to (p2);
    \draw[edge,-latex,out=37,in=-37] (p3) to (p1);
    \draw[edge,-latex,out=35,in=-35] (A) to (p3);
    \draw[edge,-latex,out=35,in=-35] (pn) to (p2);
    \draw[edge,-latex,out=35,in=-35] (p4) to (p1);
    \draw[edge,-latex,out=33,in=-33] (A) to (p2);
    \draw[edge,-latex,out=33,in=-33] (pn) to (p1);
    \draw[edge,-latex,out=30,in=-30] (A) to (p1);
    
\end{tikzpicture}
\caption{Construction of $G'$ in the proof of~\cref{thm:LBst-w-outbr}. The orange arcs are each assigned a weight of $M$, and each of the other arcs has weight 1.}
\label{fig:LBst-wob}
\end{figure}

\subsection{Independent Spanning Out-Branchings}
Let $T_1, T_2$ be two spanning out-branchings of a digraph $G=(V,A)$, where $T_1, T_2$ have a mutual root $r$. $T_1$ and $T_2$ are \textit{$r$-independent} if for every $v\in V\setminus \{r\}$, the path from $r$ to $v$ in $T_1$ and the path from $r$ to $v$ in $T_2$ are internally node-disjoint, i.e. they share only the endpoints $r$ and $v$. It is known that if a digraph $G$ is $k$-node-strong for $k\in \{1,2\}$, then $G$ contains $k$ $r$-independent spanning out-branchings for any $r\in V$~\cite{whitty1987}. Notably, this result does not extend to $k \geq 3$ \cite{huck1995}.

Let $G$ be a digraph and $H$ be a $2$-node strong connectivity certificate of $G$. For any given root $r$, if $G$ contains a pair of $r$-independent spanning out-branchings, then $H$ does as well. Therefore, a $2$-node strong connectivity certificate suffices to determine two $r$-independent spanning out-branchings if exists.
\begin{theorem}\label{thm:algo-nodebranch}
    For $p\in \dbN$, $2$-$\indBranch^\alpha_n$ can be solved with probability $1 - 1/n^{\Omega(1)}$ by a $p$-pass randomized algorithm using
\begin{enumerate}[label=\textup{(\alph*)}]
    \item $O(\alpha n^{1+1/p}\log n)$ space in the insertion-only model; and
    \item $O(\alpha n^{1+ O(1/\sqrt{p}) }\log n)$ space in the turnstile model.
\end{enumerate}
\end{theorem}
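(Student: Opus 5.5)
The plan is to reduce the problem to computing a $2$-node strong connectivity certificate, via \cref{thm:main-kconn} with $k=2$, and then solve the problem offline on that certificate. Two things need to be checked: that the certificate preserves the \emph{existence} of two $r$-independent spanning out-branchings, and that the resource bounds specialize correctly.

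\textbf{Step 1 (certificate preserves the property).} Fix the root $r$ and suppose $G$ contains two $r$-independent spanning out-branchings $T_1,T_2$. Then for every $v\neq r$ the tree paths from $r$ to $v$ in $T_1$ and $T_2$ are two internally node-disjoint $r$--$v$ paths. Hence $\kappa_{rv}(G)\ge 2$ for all $v\ne r$. If $H$ is a $2$-node strong connectivity certificate, the definition gives $\kappa_{rv}(H)\ge\min\set{2,\kappa_{rv}(G)}=2$ for all $v\neq r$.

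We must still show that $H$ contains two $r$-independent spanning out-branchings. The result of \cite{whitty1987} is stated for $2$-node-strong digraphs, but we only know that $r$ reaches every node by two internally disjoint paths; the rest of $H$ need not be $2$-node-strong. This is the main obstacle. The fix I would try first is to rely on the rooted version of Whitty's theorem: two $r$-independent spanning out-branchings exist iff $\kappa_{rv}\ge 2$ for all $v\ne r$, which is the standard form via dominator trees and was proven in that generality.

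If only the strong-connectivity version is citable, a fallback is to add an auxiliary node or arcs into $r$ so that the graph becomes $2$-node-strong without changing the out-branchings from $r$. For example, add arcs from every node to $r$. Arcs entering $r$ are never used by out-branchings rooted at $r$, and they provide the needed $v$--$r$ connectivity. One must check carefully that $v$--$w$ pairs with $v,w\neq r$ then also have $\kappa\ge 2$: route one path $v\to r\to w$ and use $\kappa_{rw}\ge 2$ to get a second path avoiding $r$. That second route needs care, so the rooted statement is preferable.

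Conversely, if $G$ has no such pair, then neither does the subgraph $H$. So the answer computed from $H$ is correct in both cases.

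\textbf{Step 2 (algorithm and complexity).} Run \cref{algo:kconn} with $k=2$, which by \cref{thm:main-kconn} succeeds with probability $1-1/n^{\Omega(1)}$. Setting $k=2$ makes the factors $k^{1-1/p}$ and $k^{1-O(1/\sqrt p)}$ constant. This gives $O(\alpha n^{1+1/p}\log n)$ space in the insertion-only model and $O(\alpha n^{1+O(1/\sqrt p)}\log n)$ space in the turnstile model, with $p$ passes. The certificate has size $O(\alpha n\log n)$, which fits within the space bound, so we store it. We then compute the two independent branchings offline, or report that none exist, using the constructive proof of Whitty's theorem (e.g.\ via a dominator-tree or ear-decomposition based construction). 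This step costs no passes and no space beyond storing $H$.
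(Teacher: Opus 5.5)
Your proposal is correct and follows the paper's route: compute a $2$-node strong connectivity certificate via \cref{thm:main-kconn} with $k=2$ and solve offline, relying on the fact that the certificate preserves the existence of two $r$-independent out-branchings. The paper only asserts this preservation, and its cited global ($2$-node-strong) form of Whitty's theorem does not apply to $H$ directly, so your argument via $\kappa_{rv}(H)\ge 2$ for all $v\neq r$ plus the rooted form of the theorem is exactly the justification needed. Drop the fallback, however: adding arcs into $r$ cannot make $H$ $2$-node-strong. A node whose only out-neighbour after augmentation is $r$ (for example, a sink of $H$) has out-degree $1$, so $r$ separates it from every other node.
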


Since computing a pair of $r$-independent spanning out-branchings includes computing one spanning out-branching, it inherits the lower bound of~\cref{cor:LB-branching}.
\begin{corollary}\label{cor:LB-nodebranching}
    Any $p$-pass algorithm that computes a pair of $r$-independent spanning out-branchings for any given root $r$ requires $\Omega(\alpha^{1/(2p)} n/p^{O(1)})$ space if $\alpha=O(1)$ or $\alpha=\omega(1)$, $p=o(\sqrt{\log n})$.
\end{corollary}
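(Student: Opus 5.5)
Reading the statement as the lower bound of \cref{cor:LB-nodebranching}, the plan is a reduction argument rather than a new hard-instance construction. Any algorithm that outputs a pair of $r$-independent spanning out-branchings (or correctly reports that none exists) also decides $\OutBranch^\alpha_n$. Indeed, if the pair exists, then either branching is a spanning out-branching rooted at $r$. Hence every space lower bound for $\OutBranch^\alpha_n$ carries over. The one care point is that on instances with no independent pair, the algorithm's output must still answer the single-branching question.

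To avoid that issue I would not reduce abstractly. Instead I would reuse the reduction of \cref{lem:reach-spanningbranching} from $\REACH^\alpha_n$ and strengthen the added gadget so that reachability of $t$ from $s$ forces the existence of two $s$-independent spanning out-branchings. In \cref{lem:reach-spanningbranching}, given $(G;s,t)$, we add arcs $t\arc v$ for all $v$. In the new construction I would add two fresh nodes $t_1,t_2$ together with:
\begin{itemize}
\item the arcs $t\arc t_1$ and $t\arc t_2$;
\item arcs $t_1\arc v$ and $t_2\arc v$ for all original $v\ne s,t$;
\item an arc $s\arc t$ only in a controlled way.
\end{itemize}
Arcs of the last kind would break the reduction, so instead I would make the two required $s$-to-$v$ paths go through $t_1$ and $t_2$ respectively. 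That, however, needs two internally disjoint $s$--$t$ routes.

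The simpler route is to note that the algorithm's output, even when no independent pair exists, must distinguish ``exists'' from ``not exists.'' So I would take a reachability instance and duplicate the source side. Let $s'$ be a new root with arcs $s'\arc s$ and $s'\arc t$ replaced by parallel copies of $t$. Concretely, make two copies $G_1,G_2$ of $G$ on disjoint node sets and a root $r$ with arcs $r\arc s_1$ and $r\arc s_2$. Then add arcs from $t_1$ to every node of $G_2$ and from $t_2$ to every node of $G_1$. In $G_1\cup G_2$ the independence number at most doubles, which is absorbed into constants since $\ind$ is additive over the parts.

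Correctness then goes as follows. If $s\reach t$, each node of $G_1$ is reached once via $s_1$ inside $G_1$, or via $s_2\reach t_2$ followed by a cross arc. This gives two internally disjoint routes, so by 2-node-strength of the relevant part (using the result of \cite{whitty1987} locally) we obtain two $r$-independent branchings. If $s\not\reach t$, then no out-branching from $r$ spans at all, as in \cref{lem:reach-spanningbranching}. Thus the independent-branching problem decides $\REACH^\alpha_n$, and the bounds follow from \cref{thm:s-t-LB,thm:s-t-LB-const-alpha}.

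The main obstacle is the positive direction: guaranteeing internally disjoint $r$--$v$ paths for every $v$. Nodes of $G_1$ not reachable from $s_1$ within $G_1$ only get paths through $t_2$. For these I would instead make $t_2$ adjacent to all of $G_1$ and argue independence via the paths $r\arc s_1\reach v$ versus $r\arc s_2\reach t_2\arc v$. This argument must also handle the nodes $v$ that are reached only one way, and I expect fixing this to require adding an extra direct arc $r\arc v$ for such $v$.
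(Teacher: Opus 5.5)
Your first paragraph is exactly the paper's whole proof. The paper only remarks that computing a pair of $r$-independent branchings includes computing one, and invokes \cref{cor:LB-branching}. The caveat you raise is real, and the paper does not address it. The algorithm may output $\perp$ whenever no independent pair exists. On the instances of \cref{lem:reach-spanningbranching}, $s\reach t$ does not force a pair: if $s$ has a single out-neighbour $x$, every path from $s$ to a node $v\notin\set{s,x}$ passes through $x$. So a complete proof needs the positive direction you are after, but your construction does not supply it.

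In your final graph (root $r$ with $r\arc s_1$ and $r\arc s_2$, arcs from $t_1$ to all of $G_2$ and from $t_2$ to all of $G_1$), the only arcs entering $V(G_1)$ are $r\arc s_1$ and the arcs out of $t_2$. Inside $G_1$, $s_1$ reaches only copies of nodes reachable from $s$ in $G$. Hence if some $v$ is not reachable from $s$ in $G$ while $s\reach t$, then $t_2$ is an internal node of every path from $r$ to $v_1$, and no $r$-independent pair exists. Your patch of adding $r\arc v$ for exactly such $v$ is not a reduction: that set is defined by reachability in $G$, which neither the stream transformation nor the two players can compute. Adding $r\arc v$ for all $v$ instead destroys the negative direction. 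The star of these arcs, together with the branching formed by $r\arc t_1$, all arcs out of $t_1$, and the arcs from $t_2$ to $V(G_1)\setminus\set{t_1}$, is then an independent pair for every input.

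The missing ingredient is that the first path to $v_1$ must exist inside its own copy. This is exactly what the arcs $t\arc v$ of \cref{lem:reach-spanningbranching} provide. So build your graph from two copies $G'_1,G'_2$ of $G'$ rather than of $G$. That is, keep $r\arc s_1$, $r\arc s_2$, $t_1\arc v_2$ and $t_2\arc v_1$ for all $v$, and additionally keep the arcs $t_i\arc v_i$ ($v\neq t$) inside each copy.

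If $s\reach t$, fix a spanning out-branching $B$ of $G'$ rooted at $s$. Let $T_1=\set{r\arc s_1}\cup B_1\cup\set{t_1\arc v_2 : v\in V(G)}$, and define $T_2$ symmetrically. Both are spanning out-branchings rooted at $r$. For each $v_1$, the $T_1$-path $r\arc s_1\reach v_1$ has its internal nodes in $V(G'_1)$, while the $T_2$-path $r\arc s_2\reach t_2\arc v_1$ has its internal nodes in $V(G'_2)$; the case of $v_2$ is symmetric. So $T_1$ and $T_2$ are $r$-independent. No appeal to \cite{whitty1987} is needed, which is just as well: as quoted, it requires $2$-node-strength, which is impossible with an in-degree-$0$ root.

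If $s\not\reach t$, neither $t_1$ nor $t_2$ is reachable from $r$, so not even one spanning out-branching exists. The transformation is arc-local, has $2n+1$ nodes and independence number at most $2\alpha+1$. So the bounds of \cref{thm:s-t-LB,thm:s-t-LB-const-alpha} transfer with only constant-factor losses.
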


\subsection{Minimum Spanning Strong Subgraph}
A \emph{minimum spanning strong subgraph} of a digraph $G$ is a spanning subgraph $H$ with the minimum number of arcs such that $H$ is strongly connected. 
Evidently, $G$ must be strongly connected to admit a minimum spanning strong subgraph. 
The problem of computing a minimum spanning strong subgraph ($\MSSS$) is known to be NP-hard~\cite{garey2002computers}, and approximation algorithms for $\MSSS$ have been studied in the RAM model \cite{FJ81,vetta2001approximating}.
The next theorem shows that our $1$-arc strong connectivity certificate algorithm can be used to obtain a 2-approximate MSSS.
\begin{theorem}\label{thm:algo-2apxMSSS}
For $p\in \dbN$, a $2$-approximate solution to $\MSSS^{\alpha}_{n}$ can be computed by a $p$-pass deterministic algorithm using:
\begin{enumerate}[label=\textup{(\alph*)}]
    \item $O(\alpha n^{1+1/p})$ space in the insertion-only model; and
    \item $O(\alpha n^{1+ O(1/\sqrt{p}) })$ space in the turnstile model.
\end{enumerate}

\end{theorem}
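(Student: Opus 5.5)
We first run the deterministic $p$-pass algorithm of \cref{thm:main-1conn} on $G$ to obtain a $1$-strong connectivity certificate $H$. This uses $O(\alpha n^{1+1/p})$ space in the insertion-only model and $O(\alpha n^{1+O(1/\sqrt{p})})$ space in the turnstile model. If $G$ is not strongly connected, we detect this offline from $H$: $H$ has the same transitive closure as $G$, so $H$ is strongly connected iff $G$ is. In that case we report that no spanning strong subgraph exists. Otherwise $H$ is strongly connected, and we perform all remaining work offline on $H$, adding no passes and no space beyond storing $H$. By the arboricity bound, $H$ has $O(\alpha n)$ arcs.

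The offline step is the classical Frank--Jordan / folklore construction. Pick any node $r$, compute a spanning out-branching $B_{out}$ and a spanning in-branching $B_{in}$ of $H$ rooted at $r$ (by BFS from $r$ in $H$ and in its reverse), and output $B_{out}\cup B_{in}$. This subgraph of $G$ is strongly connected: every node $v$ is reachable from $r$ through $B_{out}$ and reaches $r$ through $B_{in}$. It has at most $2(n-1)$ arcs.

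For the approximation ratio, note that any strongly connected spanning subgraph of $G$ has every node with in-degree at least $1$, so it has at least $n$ arcs. Hence $\mathrm{OPT}\ge n > 2(n-1)/2$, and $|B_{out}\cup B_{in}|\le 2(n-1)< 2\,\mathrm{OPT}$. This gives the $2$-approximation. Determinism is inherited from \cref{thm:main-1conn}, since the offline part is deterministic.

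The only delicate point is to justify that the branchings can be extracted from the certificate rather than from $G$ itself. This follows because $H$ preserves all reachability relations, so $H$ is strongly connected exactly when $G$ is. Everything else is routine, and the stated pass and space bounds are exactly those of \cref{thm:main-1conn}.
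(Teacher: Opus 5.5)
Your proposal is correct and matches the paper's proof essentially step for step. You compute a $1$-strong connectivity certificate $H$ via \cref{thm:main-1conn} and use the fact that it preserves reachability to detect when no spanning strong subgraph exists. Otherwise you extract a spanning in-branching and out-branching at a common root offline; their union has at most $2n-2$ arcs, while any spanning strong subgraph needs at least $n$, so the pass and space bounds are inherited directly.
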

\begin{proof}
    Suppose $G$ is the input to the $\MSSS^{\alpha}_{n}$ instance with the designated root $r$. Let $H$ be a $1$-strong connectivity certificate of $G$. Notice that when $G$ is not strongly connected, then by definition $H$ is not strongly connected, thus the algorithm correctly concludes that no MSSS exists. 
    
    If $G$ is strongly connected, then $H$ is guaranteed to contain a $2$-approximate solution to $\MSSS$. Indeed, by definition of a $1$-arc strong connectivity subgraph, $H$ admits both a spanning in-branching $T^{+}$ and a spanning out-branching $T^{-}$ rooted at any node. Clearly, $T^{+}\cup T^{-}$ is a spanning strong subgraph with at most $2n-2$ arcs. Since any spanning strong subgraph must contain at least $n$ arcs, $T^{+}\cup T^{-}$ is a $2$-approximate minimum spanning strong subgraph. By extracting $T^+$ and $T^-$ offline from a $1$-arc strong connectivity certificate, the pass and space usage follows from \cref{thm:main-1conn}.
\end{proof}

As for the lower bound for $\MSSS$, we give a reduction to the Hamiltonian cycle problem.
\begin{lemma}\label{lem:red-ham-c}
    $\MSSS^{\alpha}_{n}$ reduces to $\HamCycle^\alpha_n$.
\end{lemma}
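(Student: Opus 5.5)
The plan is to read the statement in the direction needed for a lower bound: any algorithm for $\MSSS^{\alpha}_{n}$ can be used, with no overhead in passes or space, to decide $\HamCycle^\alpha_n$. Lower bounds for $\HamCycle^\alpha_n$ (\cref{thm:HamP/HamC-LB}) then transfer to $\MSSS^{\alpha}_{n}$, which is what the later corollary \cref{cor:LB-msss} needs. The reduction uses the identity map on instances. Given an input digraph $G$ for $\HamCycle^\alpha_n$, we feed the same stream to the $\MSSS$ algorithm. The node set and the graph are unchanged, so $\ind(G)\le\alpha$ and the instance stays in the class.

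The key step is the following graph-theoretic equivalence: $G$ has a Hamiltonian cycle iff $G$ is strongly connected and a minimum spanning strong subgraph of $G$ has exactly $n$ arcs.
\begin{itemize}
\item Forward direction. A Hamiltonian cycle is a spanning strong subgraph with $n$ arcs. It is optimal because every spanning strong subgraph on $n\ge 2$ nodes has in-degree and out-degree at least $1$ at every node, hence at least $n$ arcs.
\item Converse. Let $H$ be spanning strong with exactly $n$ arcs. Each out-degree is at least $1$ and the out-degrees sum to $n$, so every out-degree equals $1$. The same argument gives every in-degree equal to $1$. Thus $H$ is a disjoint union of directed cycles covering $V$. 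Since $H$ is strongly connected, there is only one cycle, and it is a Hamiltonian cycle of $G$.
\end{itemize}

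The algorithm therefore decides $\HamCycle$ from the $\MSSS$ output as follows. If the $\MSSS$ algorithm reports that no spanning strong subgraph exists, answer ``no''. Otherwise answer ``yes'' iff the returned subgraph has exactly $n$ arcs; in that case the returned subgraph is itself the Hamiltonian cycle, so a search version of $\HamCycle$ is solved as well. This post-processing is offline and uses no extra passes. The space needed is $O(n)$ words to count or store the output, which is dominated by the $\Omega(\alpha n/(p\log^2\alpha))$ lower bound.

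The main subtlety I expect is the exactness of the solution. The reduction needs an exact $\MSSS$ solution, and the argument breaks down for approximate solutions such as the $2$-approximation of \cref{thm:algo-2apxMSSS}. I would state explicitly that the lower bound applies to exact $\MSSS$. A minor edge case is trivial values of $n$ (e.g.\ $n=1$), which I would exclude by assuming $n\ge 2$.
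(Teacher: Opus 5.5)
Your proof is correct and follows the paper's argument. You read the lemma, as the paper's proof does, as saying that an $\MSSS^\alpha_n$ solver decides $\HamCycle^\alpha_n$ under the identity map, and you use the same fact: every spanning strong subgraph has all in- and out-degrees at least $1$, so it has at least $n$ arcs, with equality exactly for a Hamiltonian cycle. The only difference is that you spell out the degree-sum and cycle-decomposition step the paper leaves implicit.

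One side remark needs correcting. Deciding $\HamCycle$ from the output needs only an $O(\log n)$-bit arc counter, and the output itself can go to write-only output. So do not argue that $O(n)$ words are dominated by the lower bound; that claim is false, for example, for constant $\alpha$, where the bound is only $\Omega(n/p)$.
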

\begin{proof}
    Notice that in any strongly connected subgraph $H$ of a digraph $G$, each node has in-degree and out-degree at least one. Equality is attained only when $H$ is a Hamiltonian cycle of $G$. Hence, on a $\HamCycle^\alpha_n$ instance $G$, a correct $\MSSS^\alpha_n$ solver on $G$ would return a Hamiltonian cycle of $G$ if one exists, and otherwise output a non-Hamiltonian cycle graph or declare that no MSSS otherwise. This completes the proof.
\end{proof}
\begin{corollary}\label{cor:LB-msss}
    Any $p$-pass algorithm for $\MSSS^{\alpha}_{n}$ requires $\Omega(\alpha n/(p\log^2{\alpha}))$ space.
\end{corollary}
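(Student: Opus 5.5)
The bound comes straight from \cref{thm:HamP/HamC-LB} through the reduction of \cref{lem:red-ham-c}. Any $p$-pass algorithm for $\MSSS^{\alpha}_{n}$ turns into a $p$-pass algorithm for $\HamCycle^\alpha_n$ with the same space, up to a constant. We run the $\MSSS$ algorithm on the input $G$, which needs no change to the stream, and then examine its output offline.

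The decision rule has three cases.
\begin{itemize}
  \item If the algorithm reports that no strong spanning subgraph exists, we declare that $G$ is not Hamiltonian.
  \item If it outputs a subgraph $H$ with exactly $n$ arcs, then $H$ is strongly connected and every node has in- and out-degree at least one, so $H$ is a Hamiltonian cycle and we return it.
  \item If $H$ has more than $n$ arcs, minimality means no strong spanning subgraph with $n$ arcs exists. Such a subgraph would be exactly a Hamiltonian cycle, so we declare that none exists.
\end{itemize}
Checking the arc count, or equivalently the degree sequence, happens offline after the final pass. So both the pass count $p$ and the space are preserved.

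The reduction stays inside the same class: the input graph is unchanged, so it keeps $n$ nodes and independence number at most $\alpha$. The only thing to confirm is that the hard instances behind \cref{thm:HamP/HamC-LB} for $\HamCycle$ lie in this class. That theorem is stated directly for $\HamCycle^\alpha_n$, so this holds.

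Combining the two, any $p$-pass $\MSSS^{\alpha}_{n}$ algorithm using space $s$ gives a $p$-pass $\HamCycle^\alpha_n$ algorithm using space $O(s)$. Therefore $s=\Omega(\alpha n/(p\log^2\alpha))$.

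The only delicate point is the decision version versus the search version. \cref{thm:HamP/HamC-LB} is a lower bound for deciding, or finding, a Hamiltonian cycle, and the decision version comes from $\TRIBES$. Our simulation already solves the decision version: it outputs ``yes'' exactly when the returned $\MSSS$ has $n$ arcs. So the lower bound transfers.
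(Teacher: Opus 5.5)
Your proof is correct and follows the paper's route exactly: the paper obtains \cref{cor:LB-msss} by combining \cref{lem:red-ham-c} with the $\HamCycle^\alpha_n$ lower bound of \cref{thm:HamP/HamC-LB}. The lemma's content is that a correct $\MSSS$ solver outputs an $n$-arc subgraph, necessarily a Hamiltonian cycle, precisely when $G$ is Hamiltonian. Your three-case decision rule and the observation that the unchanged stream preserves $n$ and $\alpha$ make explicit what the paper's short reduction leaves implicit, and you state the direction of the reduction correctly (an $\MSSS$ solver yields a $\HamCycle$ solver) even though the wording of \cref{lem:red-ham-c} reverses it.
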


\subsection{Strong Bridges}
A \emph{strong bridge} of a digraph $G$ is an arc $e$ whose removal increases the number of strongly connected components of $G$.  Italiano, Laura, and Santaroni~\cite{italiano2012finding} devised a linear-time algorithm for computing \emph{all} strong bridges of a digraph in the RAM model.
By employing a characterization of strong bridges given in their work, we show that all strong bridges can be retrieved from a $2$-arc strong connected certificate.

\begin{lemma}[{\cite[Lemma 2.2]{italiano2012finding}}] \label{lem:strong-bridge}
    For a strongly connected digraph $G=(V,A)$, an arc $e\in A$ is a strong bridge iff there exist two nodes $x,y\in V$ such that every path from $x$ to $y$ contains $e$. 
\end{lemma}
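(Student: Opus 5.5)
This is the characterization of strong bridges from \cite[Lemma 2.2]{italiano2012finding}. We prove the two directions separately, using only the definition of strongly connected components and the fact that $G$ is strongly connected, so $G$ has exactly one SCC.

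\emph{($\Leftarrow$).} Suppose there are nodes $x,y$ such that every $x$--$y$ path in $G$ uses the arc $e=(u,v)$. Then $G\setminus\set{e}$ contains no $x$--$y$ path. In particular, $x$ and $y$ are distinct and lie in different SCCs of $G\setminus\set{e}$. Hence $G\setminus\set{e}$ has at least two SCCs, compared with one in $G$, and $e$ is a strong bridge.

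\emph{($\Rightarrow$).} Suppose $e=(u,v)$ is a strong bridge, so $G\setminus\set{e}$ is not strongly connected. We take $(x,y)=(u,v)$ and claim that $v$ is unreachable from $u$ in $G\setminus\set{e}$. Suppose instead that $G\setminus\set{e}$ contains a $u$--$v$ path $P$. Take any pair $a,b$ and a walk $W$ from $a$ to $b$ in $G$, which exists by strong connectivity. Replace every occurrence of $e$ on $W$ by $P$. The result is a walk from $a$ to $b$ avoiding $e$. So $G\setminus\set{e}$ is strongly connected, contradicting that $e$ is a strong bridge. Therefore every $u$--$v$ path in $G$ contains $e$, which gives the required nodes $x=u$, $y=v$.

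The only step needing care is this last replacement argument. It shows that removing one arc can only destroy strong connectivity by cutting off its own head from its tail. Since that argument is elementary, I expect no real obstacle beyond stating it precisely.
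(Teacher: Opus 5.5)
Your proof is correct in both directions. The paper gives no argument of its own here (the lemma is imported from \cite{italiano2012finding}), so the relevant comparison is with the standard short proof. Your ($\Leftarrow$) direction is that proof.

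For ($\Rightarrow$) the standard route is shorter. Since $G\setminus\set{e}$ is not strongly connected, some pair $x,y$ has no $x$--$y$ path in $G\setminus\set{e}$. $G$ has such a path by strong connectivity, so every $x$--$y$ path of $G$ uses $e$, and no substitution step is needed. Your walk-substitution argument costs a few extra lines but proves something stronger: the witnessing pair can always be taken to be the endpoints of $e=(u,v)$ itself. Equivalently, for strongly connected $G$, $e$ is a strong bridge iff $e$ is the only $u$--$v$ path, i.e.\ iff $\lambda_{uv}(G)=1$.

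That sharper form is also more convenient for the paper's later use in \cref{lem:2-arc-bridge}. With $(x,y)=(u,v)$, the condition $\lambda_{uv}=1$ transfers directly between $G$ and a $2$-arc certificate $H$, and it pins the separating arc down as $e$. With an unspecified pair $(x,y)$, knowing $\lambda_{xy}(G)=1$ does not by itself show that $e$, rather than some other arc, cuts $x$ from $y$ in $G$; that needs an extra argument.
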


\begin{lemma}\label{lem:2-arc-bridge}
    Let $G$ be a digraph and $H$ be a $2$-arc strong connectivity certificate of $G$. An arc $e$ is a strong bridge of $G$ iff $e$ is a strong bridge of $H$.
\end{lemma}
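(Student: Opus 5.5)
Since $H$ is a $1$-arc certificate of $G$ (as $\min\{1,\lambda\}\le\min\{2,\lambda\}$), $H$ and $G$ have the same reachability relation and therefore the same strongly connected components. Strong bridges can thus be analysed one SCC at a time. An arc whose endpoints lie in different SCCs of $G$ is never a strong bridge of $G$, and for the same reason it is never a strong bridge of $H$. So it suffices to fix an arc $e=(u,v)$ with $u,v$ in a common SCC $C$ and show that $e$ is a strong bridge of $G$ iff it is one of $H$. Paths between nodes of $C$ stay inside $C$, so by \cref{lem:strong-bridge} applied to the strongly connected digraphs $G[C]$ and $H[C]$, this becomes: there exist $x,y\in C$ with every $x$--$y$ path using $e$ in $G$, iff the same holds in $H$. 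By Menger's theorem (\cref{thm:menger}), ``every $x$--$y$ path contains $e$'' (with $x\reach y$) means precisely that $\{e\}$ is an $x$--$y$ arc cut, i.e.\ $\lambda_{xy}=1$ and $e$ lies in a minimum cut.

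For the forward direction, suppose $e\in A(G)$ and every $x$--$y$ path in $G$ uses $e$. First, $e\in A(H)$: $H$ contains an $x$--$y$ path because it is a $1$-certificate, and that path is also a path of $G$, so it uses $e$. Second, every $x$--$y$ path of $H$ is a path of $G$ and hence uses $e$. Applying \cref{lem:strong-bridge} in $H[C]$ shows that $e$ is a strong bridge of $H$.

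For the converse, suppose $e\in A(H)$ and every $x$--$y$ path in $H$ uses $e$. Then $\lambda_{xy}(H)=1$, and since $x\reach y$, $\lambda_{xy}(G)\ge 1$. If $\lambda_{xy}(G)\ge 2$, the certificate property would give $\lambda_{xy}(H)\ge\min\{2,\lambda_{xy}(G)\}=2$, a contradiction. Hence $\lambda_{xy}(G)=1$, and some single arc $f$ separates $x$ from $y$ in $G$. The main obstacle is showing $f=e$. Since $H\subseteq G$, $f$ also separates $x$ from $y$ in $H$, so every $x$--$y$ path in $H$ uses both $e$ and $f$. If $f\ne e$, I would apply the certificate condition to a different pair, namely $(x,u)$ or $(v,y)$ for $e=(u,v)$, and trace a fixed $x$--$y$ path $P$ of $H$ through $e$ and $f$. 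Say $f$ precedes $e$ on $P$, and write $f=(w,z)$. Consider the pair $(z,y)$. In $G$ every $x$--$y$ path passes through $f$, but after $f$ there may be a route to $y$ in $G$ that avoids $e$. That would make $\lambda_{zy}(G)\ge 2$ if one counts the path through $e$ as well, which forces $\lambda_{zy}(H)\ge 2$ and yields an $H$-path from $z$ to $y$ avoiding $e$. Prepending the $H$-segment of $P$ from $x$ to $z$ would then give an $x$--$y$ path in $H$ avoiding $e$, a contradiction. Arcs are handled by Menger's theorem for arc-disjoint paths, which avoid a fixed arc, so such a path does exist. Consequently every $z$--$y$ path in $G$ uses $e$, and so does every $x$--$y$ path in $G$ (each passes $f$, then $z$). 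The case where $f$ follows $e$ is symmetric, using the pair $(x,w)$. Therefore $e$ is a strong bridge of $G$ by \cref{lem:strong-bridge}.
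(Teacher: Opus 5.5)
Your reduction to a single SCC and your forward direction are correct. You also correctly identify the crux of the converse: from $\lambda_{xy}(G)=1$ you only get \emph{some} arc $f$ separating $x$ from $y$ in $G$, not that $e$ itself does. (The paper's own write-up treats this step only briefly.)

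Your resolution, however, breaks at the sentence asserting $\lambda_{zy}(G)\ge 2$. A $z$--$y$ path $Q$ of $G$ avoiding $e$ and the suffix $P[z,y]$ through $e$ need not be arc-disjoint; they may, for instance, share their final arc into $y$. Together they only show that $\{e\}$ is not a $z$--$y$ arc cut of $G$. Another arc $g\neq e$ may still cut $z$ from $y$ in $G$, and it would then be one more arc lying on every $x$--$y$ path of $G$. In that case $\lambda_{zy}(G)=1$, the certificate condition at the pair $(z,y)$ yields only $\lambda_{zy}(H)\ge 1$, and no $e$-avoiding $H$-path is produced. So the contradiction is never reached, and the symmetric case with the pair $(x,w)$ has the same defect.

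The simplest repair is to work with the endpoints of $e=(u,v)$ itself. Suppose $u,v$ lie in a common SCC of a digraph $D$.
\begin{itemize}
\item A simple $v$--$u$ path never uses $e$, so removing $e$ changes the SCCs of $D$ iff $u$ cannot reach $v$ in $D\setminus\{e\}$.
\item That holds iff $\lambda_{uv}(D)=1$, because the one-arc path $e$ and any $u$--$v$ path avoiding $e$ are arc-disjoint.
\item If $\lambda_{uv}(G)=1$, the $u$--$v$ path that $H$ must contain uses $e$. Hence $e\in A(H)$ and $1\le\lambda_{uv}(H)\le\lambda_{uv}(G)=1$.
\item Conversely, if $e\in A(H)$ and $\lambda_{uv}(H)=1$, then $\min\{2,\lambda_{uv}(G)\}\le 1$ forces $\lambda_{uv}(G)=1$.
\end{itemize}

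If you prefer to keep arbitrary $x,y$ from \cref{lem:strong-bridge}, you must choose the pair more carefully, assuming for contradiction that some $x$--$y$ path of $G$ avoids $e$.
\begin{itemize}
\item Let $z$ be the head of the last arc before $e$ on $P$ that lies on every $x$--$y$ path of $G$, or $z=x$ if there is none.
\item Let $z'$ be the tail of the first such arc after $e$, or $z'=y$ if there is none.
\item Use the fact that arcs common to all $x$--$y$ paths of $G$ appear in the same order on every such path. Then any single arc cutting $z$ from $z'$ in $G$ would be such an arc, different from $e$, lying on $P$ strictly between these two. That contradicts their choice.
\item So $\lambda_{zz'}(G)\ge 2$ now genuinely holds, hence $\lambda_{zz'}(H)\ge 2$.
\item This gives an $e$-avoiding $z$--$z'$ path in $H$. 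Spliced with $P[x,z]$ and $P[z',y]$, it contradicts the assumption that every $x$--$y$ path in $H$ uses $e$.
\end{itemize}
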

\begin{proof}
    Suppose $S_1,S_2,\ldots, S_t$ is a strongly connected component decomposition for $G$, which also forms a strongly connected component decomposition for $H$. Any potential bridge of $G$ or $H$ must be an arc within a strongly connected component.

    First, we prove the backward direction. Let $e$ be a strong bridge of $H[S_i]$ for some $i\in [t]$. By \cref{lem:strong-bridge}, there exist $x,y\in S_i$ such that every path from $x$ to $y$ in $H[S_i]$, thus in $H$, contains $e$. In particular, $\lambda_{xy}(H)=1$. Since $H$ is a subgraph of $G$, $\lambda_{xy}(G)\geq 1$, and as $H$ is a 2-arc strongly connectivity certificate, this forces $\lambda_{xy}(G) = 1$. Moreover, this $x$--$y$ path of $G$ is the same path in $H$, so removing $e$ separates $x$ and $y$ into different strongly connected components. Therefore, $e$ is a strong bridge of $G$.

    For the forward direction, suppose $e$ is a strong bridge of $G[S_i]$ for some $i\in [t]$. By \cref{lem:strong-bridge}, there exist two nodes $x,y\in S_i$ such that every path from $x$ to $y$ in $G[S_i]$, thus in $G$, contains $e$. In particular, $\lambda_{xy}(G) = 1$. As $H$ is a 2-arc strongly connectivity certificate, it follows that $\lambda_{xy}(H)=1$. Therefore, $e$ is contained in $H$ and hence a strong bridge of $H$.
\end{proof}
\cref{lem:2-arc-bridge} immediately suggests an algorithm for computing all strong bridges of a digraph.
\begin{theorem}\label{thm:algo-bridge}
    Let $G$ be an $n$-node digraph with independence number $\alpha$. For $p\in \dbN$, all strong bridges of $G$ can be found with probability $1 - 1/n^{\Omega(1)}$ by a $p$-pass randomized algorithm using:
\begin{enumerate}[label=\textup{(\alph*)}]
    \item $O( \alpha n^{1+1/p} \log n)$ space in the insertion-only model; and
    \item $O( \alpha n^{1+O(1/\sqrt{p})} \log n)$ space in the turnstile model.
\end{enumerate}

\end{theorem}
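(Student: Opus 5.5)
The plan is to run the $k$-strong connectivity certificate algorithm of \cref{thm:main-kconn} with $k=2$. Then we use \cref{thm:node-vs-arc} to conclude that the resulting $2$-node certificate $H$ is also a $2$-arc strong connectivity certificate of $G$, and finally apply \cref{lem:2-arc-bridge}. That lemma says the strong bridges of $G$ are exactly the strong bridges of $H$, so it suffices to compute all strong bridges of $H$ offline, e.g.\ by the linear-time algorithm of Italiano, Laura and Santaroni. No extra passes are needed for this last step.

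The key steps, in order, are as follows.

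First, invoke \cref{algo:kconn} with $k=2$ and $\rho=1/2$, so that $r=O(\log n)$ samples are taken. By \cref{lem:kvc-main}, with probability $1-n^{-\Omega(1)}$ the union $Q$ is a $2$-node strong connectivity certificate.

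Second, note that the space bound of \cref{thm:main-kconn} specializes at $k=2$ to $O(2^{1-1/p}\alpha n^{1+1/p}\log n)=O(\alpha n^{1+1/p}\log n)$ in the insertion-only model. In the turnstile model it specializes to $O(\alpha n^{1+O(1/\sqrt p)}\log n)$, since $k^{1-O(1/\sqrt p)}=O(1)$. The pass count $p$ is inherited unchanged.

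Third, apply \cref{thm:node-vs-arc} to upgrade $Q$ to a $2$-arc certificate.

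Fourth, compute the strong bridges of $Q$ offline. Note that $Q$ has $O(\alpha n\log n)$ arcs, which fits within the stated space. By \cref{lem:2-arc-bridge}, these are exactly the strong bridges of $G$.

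The only delicate point is ensuring that \cref{lem:2-arc-bridge} applies in the form needed. It compares bridges componentwise, using that $G$ and $H$ have the same SCC decomposition. This holds because $H$ is in particular a $1$-arc certificate, so it preserves reachability. Since that lemma is already established, the argument reduces to parameter bookkeeping. The only obstacle I anticipate is confirming that the constant $k=2$ absorbs the $k$-dependent factors and that the failure probability $n^{-\Omega(1)}$ carries over verbatim from \cref{thm:main-kconn}.
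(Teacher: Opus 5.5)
Your proposal is correct and matches the paper's proof. The paper also runs the algorithm of \cref{thm:main-kconn} with $k=2$, applies \cref{lem:2-arc-bridge}, and computes the strong bridges of the certificate offline; the upgrade to a $2$-arc certificate via \cref{thm:node-vs-arc}, which you state explicitly, is left implicit there, and your bookkeeping at $k=2$ gives exactly the stated pass and space bounds.
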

\begin{proof}
    By \cref{lem:2-arc-bridge}, it suffices to retrieve a $2$-arc strong connectivity certificate $H$ of $G$ and apply an offline algorithm to determine all strong bridges of $H$. The pass and space usage follows from \cref{thm:main-kconn}. 
\end{proof}

We remark that one cannot replace the $2$-arc strong connectivity certificate with a $1$-arc certificate in \cref{lem:2-arc-bridge}, so the computation of strong bridges truly requires a $2$-arc strong connectivity certificate. A counterexample is given in \cref{fig:strong-bridge}: every arc of the $1$-arc strong connectivity certificate $H$ in \cref{fig:strong-bridge} is a strong bridge of $H$, but none of them is a strong bridge of the original graph $G$. Indeed, by replacing with a $1$-arc strong connectivity certificate $H$, it can be the case that $\lambda_{xy}(H)=1$ but $\lambda_{xy}(G)>1$ for some $x,y\in G$.
\begin{figure}
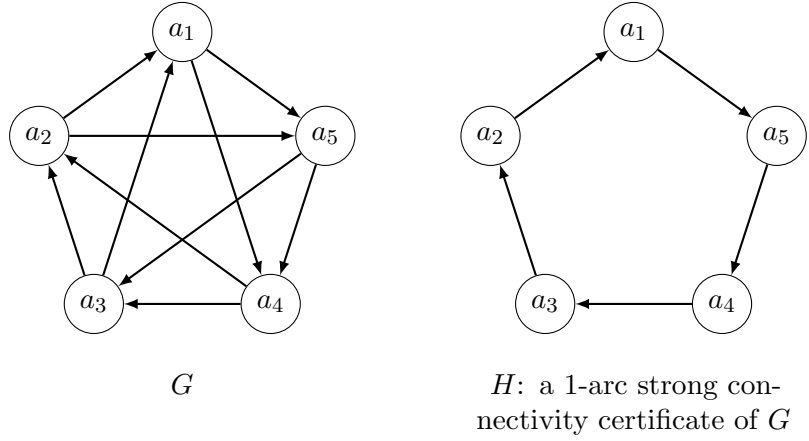

    \centering
    \tikz{
        \begin{scope}[local bounding box = Box1]  
        \foreach \ii in {1,...,5}{
            \node[minimum width=7.5mm,draw,circle] (a\ii) at (18+\ii*72:2) {$a_{\ii}$};
        }
        \begin{scope}[every path/.style={-latex,thick}]
            \draw (a1)--(a5);
            \draw (a2)--(a1);
            \draw (a3)--(a2);
            \draw (a4)--(a3);
            \draw (a5)--(a4);

            \draw (a1)--(a4);
            \draw (a2)--(a5);
            \draw (a3)--(a1);
            \draw (a4)--(a2);
            \draw (a5)--(a3);
        \end{scope}
        \node[below,yshift=-4mm] at (Box1.south) {$G$};
        \end{scope}

        \begin{scope}[local bounding box = Box2,shift={(6,0)}]  
        \foreach \ii in {1,...,5}{
            \node[minimum width=7.5mm,draw,circle] (a\ii) at (18+\ii*72:2) {$a_{\ii}$};
        }
        \begin{scope}[every path/.style={-latex,thick}]
            \draw (a1)--(a5);
            \draw (a2)--(a1);
            \draw (a3)--(a2);
            \draw (a4)--(a3);
            \draw (a5)--(a4);
        \end{scope}
        \node[below,yshift=-4mm,text width=6cm,text centered] at (Box2.south) {$H$: a 1-arc strong connectivity certificate of $G$};
        \end{scope}
    }
    \caption{Counterexample of replacing a $2$-arc strong certificate of \cref{lem:2-arc-bridge} and \cref{thm:algo-bridge} with a $1$-arc strong certificate.}
    \label{fig:strong-bridge}
\end{figure}

\begin{lemma}\label{lem:red-st-bridge}
    $\REACH_{n}^{\alpha}$ reduces to $\StrBridge_{n}^{\alpha}$.
\end{lemma}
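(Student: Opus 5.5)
I will reduce a $\REACH^\alpha_n$ instance $(G;s,t)$ to a strong-bridge query on a modified digraph $G'$ with $O(n)$ nodes and independence number at most $\alpha+O(1)$. The modification should make $G'$ strongly connected and contain one designated arc $e$ that is a strong bridge of $G'$ if and only if $t$ is \emph{not} reachable from $s$ in $G$. Then, by \cref{lem:strong-bridge}, the reachability answer can be read off from the strong-bridge output.

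\textbf{Construction.} Add a new node $z$ with the arc $e=(t,z)$. Add the arcs $(z,v)$ for every $v\in V(G)$. Add the arcs $(v,t)$ for every $v\in V(G)\setminus\set{s,t}$. Finally, add a second route into $z$ that depends on reachability. Concretely, add a node $w$ with arcs $(s,w)$ and $(w,z)$, and replace the direct arc $(s,t)$ (if present) by routing $s$ through $G$.

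\textbf{Step 1: $G'$ is strongly connected.}
\begin{itemize}
\item Every node reaches $t$: each $v\neq s$ has an arc to $t$, and $s$ reaches $t$ via $s\arc w\arc z\arc t$.
\item $t$ reaches every node via $t\arc z\arc v$.
\end{itemize}
Hence $G'$ is strongly connected regardless of $G$.

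\textbf{Step 2: $e=(t,z)$ is a strong bridge iff $s\not\reach t$ in $G$.} By \cref{lem:strong-bridge}, $e$ is a strong bridge iff some pair $x,y$ has all $x$--$y$ paths through $e$.
\begin{itemize}
\item If $s\reach t$ in $G$, I would like to show that $G'-e$ is still strongly connected. Reaching $z$ is fine via $w$. The difficulty is whether $z$, and hence everything, can still be reached from $t$ when $e$ is gone.
\item I see now that this construction has the wrong polarity: $e$ should be the arc whose alternative route exists exactly when $s\reach t$. So I will redesign. Add a node $z$ with arcs $(t,z)$ and $(z,v)$ for all $v\in V(G)$, add $(v,s)$ for all $v\neq s$, and take the designated arc $e=(z,s)$ among these.
\end{itemize}

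\textbf{Step 2, corrected.} Every node reaches $s$, and from $z$ all of $V(G)$ is reachable. The question is whether $z$ is reachable from the rest, which requires reaching $t$. Every node reaches $t$ through $s\reach t$ in $G$ exactly when $s$ can reach $t$. So $G'$ is strongly connected iff $s\reach t$ in $G$, and the algorithm answers $\REACH$ by checking strong connectivity. To keep this a genuine $\StrBridge$ query, I will instead add a parallel-free detour, namely a node $z'$ with arcs $(z',t)$ and $(s,z')$. This guarantees strong connectivity. The arc $(s,z')$, or equivalently $(z',t)$, is then a strong bridge iff there is no other $s$--$t$ route, that is, iff $s\not\reach t$ in $G$.

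\textbf{Step 3: independence number and cost.} Only $O(1)$ nodes are added, so $\ind(G')\leq\alpha+O(1)$. Each added node has large degree or is adjacent to $s$, which I would exploit to get $\alpha+1$ if needed. The reduction is local, so stream arcs can be generated by the players.

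\textbf{Main obstacle.} The main difficulty is verifying the ``no other path'' condition carefully. In particular, the added arcs $(v,s)$ and $(z,v)$ must not create an $s\reach t$ route bypassing $G$. The plan is to check that any walk from $s$ to $t$ avoiding $z'$ must stay inside $G$: arcs into $z$ only come from $t$, and arcs added at $s$ are incoming only. This is the step I would verify most carefully.
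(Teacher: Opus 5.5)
\textbf{The reduction fails: the designated arc is a strong bridge in every case.} In your final graph, the detour node $z'$ has exactly one incoming arc, $(s,z')$, and exactly one outgoing arc, $(z',t)$. None of the other added arcs $(t,z)$, $(z,v)$, $(v,s)$ touches $z'$. Deleting $(s,z')$ makes $z'$ unreachable from every other node, and deleting $(z',t)$ leaves $z'$ unable to reach anything. In terms of \cref{lem:strong-bridge}, the pair $(s,z')$ (resp.\ $(z',t)$) is always a witness, whether or not $t$ is reachable from $s$ in $G$.

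You only examined the pair $(s,t)$, but the characterization quantifies over all pairs, and pairs involving $z'$ are always separated. So the claimed equivalence ``strong bridge iff no other $s$--$t$ route'' is false, and the strong-bridge output tells you nothing about reachability. The point you flagged as the main obstacle, that no $s$--$t$ walk bypasses $G$, is actually fine. The defect is the degree of the gadget node itself. For the same reason, $(t,z)$ is trivially a bridge in your final graph, and $w$ had this problem in your first attempt.

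\textbf{How to repair it, and how the paper does it.} Use a direct arc instead of a detour. Take your intermediate graph $H$: node $z$, arcs $(t,z)$, $(z,v)$ for all $v\in V(G)$, and $(v,s)$ for all $v\neq s$. You correctly argued that $H$ is strongly connected iff $s\reach t$ in $G$. Add the single arc $e=(s,t)$; if $(s,t)\in A(G)$, the answer is trivially yes, which one extra bit detects. Then $H+e$ is always strongly connected, and $e$ is a strong bridge iff $H$ is not strongly connected iff $t$ is not reachable from $s$ in $G$. Since $z$ is adjacent to all of $V(G)$, the independence number stays at most $\alpha$ on $n+1$ nodes.

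The paper uses the opposite polarity and an even smaller gadget. Assuming $s$ has no incoming arcs, it adds only $(t,s)$, which is then the unique arc entering $s$. That arc is a strong bridge of $G\cup\set{(t,s)}$ exactly when $s$ and $t$ share an SCC there, i.e.\ exactly when $s\reach t$ in $G$; otherwise it joins two different SCCs and cannot be a bridge. The repaired version of your approach costs one extra node but avoids the in-degree assumption on $s$. That assumption is typically enforced by deleting the arcs into $s$, which is harmless for reachability but can raise the independence number by one.
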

\begin{proof}
Let $(G; s, t)$ be an instance of $\REACH^\alpha_n$, and assume without loss of generality that $s$ has no incoming edges. Construct $G'$ by adding the arc $(t, s)$ to $G$. Note that the independence number of $G'$ is at most that of $G$, and hence at most $\alpha$. If there is a path from $s$ to $t$ in $G$, then $s$ and $t$ lie in the same strongly connected component of $G'$, making $(t, s)$ a strong bridge. Otherwise, $s$ and $t$ are in different components, so $(t, s)$ cannot be a strong bridge. Thus, $\REACH^\alpha_n(G; s, t)$ can be decided by checking whether $(t, s)$ is a strong bridge in $G'$.
\end{proof}

\begin{corollary}\label{cor:LB-bridge}
    Any $p$-pass algorithm for $\StrBridge^{\alpha}_{n}$ requires $\Omega(\alpha^{1/(2p)} n/p^{O(1)})$ space if $\alpha=O(1)$ or $\alpha=\omega(1)$, $p=o(\sqrt{\log n})$.
\end{corollary}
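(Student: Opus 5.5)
This corollary follows by composing \cref{lem:red-st-bridge} with the reachability lower bounds of \cref{thm:reach-intro}, i.e.\ \cref{thm:s-t-LB-const-alpha} for $\alpha=O(1)$ and \cref{thm:s-t-LB} for $\alpha=\omega(1)$. First I will check that the reduction of \cref{lem:red-st-bridge} is streaming-friendly. Given a stream for $(G;s,t)$, the algorithm for $\StrBridge^\alpha_n$ is run on $G'=G\cup\set{(t,s)}$. The extra arc is known in advance from $s$ and $t$, so it can be inserted at the start of every pass at no cost in passes or space. The node count is unchanged. Adding an arc never increases the independence number, so $\ind(G')\le\alpha$. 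Hence any $p$-pass, $S$-space algorithm for $\StrBridge^\alpha_n$ gives a $p$-pass, $S+O(1)$-space algorithm for $\REACH^\alpha_n$.

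The one assumption needing care is that $s$ has no incoming arcs, which \cref{lem:red-st-bridge} takes without loss of generality. I would confirm it holds in the hard instances, or else enforce it cheaply. In $G^{\#}_{\vec{\Gamma}}$ the source $s_*=(n/\alpha,1)$ lies in the last component, so it does receive arcs from earlier components. There are two fixes. One is to add a fresh node $s'$ with the single arc $s'\arc s_*$ and query $s'$; then $s'$ has in-degree zero in $G$. The other is to filter out, during the stream, every arc entering $s$. Both preserve reachability from $s$, and both change $\alpha$ by at most $1$ and $n$ by at most $1$, which is harmless asymptotically. I will use the filtering option, since it leaves the stream length and node set untouched and keeps the independence number at most $\alpha$.

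Under this assumption, $s$ and $t$ are strongly connected in $G'$ exactly when $s\reach t$ in $G$. In that case every cycle through $s$ must use $(t,s)$, since $(t,s)$ is the only arc entering $s$. So removing $(t,s)$ isolates $s$ into its own component, and $(t,s)$ is a strong bridge. If $s\not\reach t$ in $G$, then $(t,s)$ lies on no cycle and is not a bridge. The answer to $\REACH$ is therefore read off from whether $(t,s)$ appears among the reported strong bridges.

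Finally I plug in the reachability bounds. For $\alpha=O(1)$, \cref{thm:s-t-LB-const-alpha} gives $\Omega(n/p)$, which matches $\Omega(\alpha^{1/(2p)}n/p^{O(1)})$ because $\alpha^{1/(2p)}=\Theta(1)$. For $\alpha=\omega(1)$ with the stated restriction on $p$, \cref{thm:s-t-LB} gives the claimed bound directly. There is one bookkeeping point. The corollary's hypothesis $p=o(\sqrt{\log n})$ is phrased differently from the hypothesis $p=O(\log\alpha/\log\log\alpha)$ in \cref{thm:s-t-LB}. I would state the corollary as inheriting exactly the parameter range of \cref{thm:reach-intro}. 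I expect this range alignment to be the main point needing care; no genuinely new idea is required.
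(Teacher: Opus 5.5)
Your route is the paper's: the corollary is \cref{lem:red-st-bridge} composed with \cref{thm:s-t-LB-const-alpha,thm:s-t-LB}, and the paper states it as an immediate consequence without further argument. Your extra care, however, contains a false step. Filtering out the arcs entering $s$ does \emph{not} keep the independence number at most $\alpha$: deleting $u\arc s$ when $s\arc u$ is absent makes $u$ and $s$ non-adjacent. For example, in $G^{\#}_{\vec\Gamma}$ with $\alpha\ge 2$, every out-neighbour of $s_*=(n/\alpha,1)$ lies in $V^{(n/\alpha)}$, so after filtering $s_*$ can be added to any independent set of an earlier gadget. Both of your fixes therefore cost up to $+1$ in $\alpha$.

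For $\alpha\ge 2$ this is harmless, since you may reduce from $\REACH^{\alpha-1}_n$, whose bound has the same form. At $\alpha=1$, though, it breaks the argument, and no repair of this reduction exists. If $\ind(G')=1$ and $(t,s)$ is the only arc entering $s$, then $s$ has an out-arc to every $v\notin\set{s,t}$. Hence $s\reach t$ in $G$ iff $t$ has positive in-degree in $G$, which one pass with $O(\log n)$ space decides. So your argument, like the paper's ``without loss of generality'' in \cref{lem:red-st-bridge}, proves the constant regime only for $2\le\alpha=O(1)$ (compare \cref{cor:LB-2SAT}). The case $\alpha=1$ needs a different reduction or should be excluded from the statement.

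On the $p$-range, you do not need to weaken the statement. An algorithm for $\StrBridge^\alpha_n$ also handles inputs of independence number at most $2$, so the $\Omega(n/p)$ bound of the constant regime holds for every $\alpha\ge 2$. Now suppose $p\ge c\log\alpha/\log\log\alpha$. Then $\frac{\log\alpha}{2p}\le\frac{\log\log\alpha}{2c}$, and $\log p\ge\frac12\log\log\alpha$ for large $\alpha$, hence $\alpha^{1/(2p)}\le p^{1/c}$. So $\Omega(n/p)$ already implies $\Omega(\alpha^{1/(2p)}n/p^{O(1)})$. For smaller $p$, \cref{thm:s-t-LB} applies, with $\alpha-1$ in place of $\alpha$. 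Together these cover the stated range, and in fact every $p$, for all $\alpha\ge 2$.
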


\subsection{Topological Sort, SCC decomposition, 2-SAT, Minimum Chain Cover}
In~\cite{CLT25}, Chen, Lin, and Tsai studied parameterized streaming algorithms for $\SCC$ and $\SAT$. By contrast, the lower bounds presented there are non-parameterized and thus not directly comparable to the algorithmic results. We complement their work by establishing parameterized lower bounds derived from the hardness results in \cref{sec:hardness}.

\begin{lemma}
    $\REACH_{n}^{\alpha}$ reduces to $\SCC_{n}^{\alpha}$.
\end{lemma}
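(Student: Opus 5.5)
The reduction mirrors the one in \cref{lem:red-st-bridge}. Given an instance $(G;s,t)$ of $\REACH^\alpha_n$, I would build $G'$ from $G$ by adding the single arc $(t,s)$, unless it is already present. Then I would show that $\REACH^\alpha_n(G;s,t)=1$ iff $s$ and $t$ lie in the same strongly connected component of $G'$. The answer can therefore be read off an $\SCC^\alpha_n$ solution on $G'$ by checking whether $s$ and $t$ receive the same component label.

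The independence number is easy. Adding arcs can only destroy independent sets, never create new ones, so $\ind(G')\le\ind(G)\le\alpha$. The node set is unchanged, so $G'$ is a valid $\SCC^\alpha_n$ instance. A streaming algorithm for $\SCC$ run on $G$ simulates the stream of $G'$ by injecting the arc $(t,s)$ at the start of each pass. This costs no extra passes and $O(1)$ extra space, and it works in both the insertion-only and turnstile models.

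For correctness, suppose first that there is an $s$--$t$ path in $G$. That path together with the arc $(t,s)$ forms a closed walk in $G'$, so $s$ and $t$ are in the same SCC.

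Conversely, suppose $s$ and $t$ are in the same SCC of $G'$. Then $G'$ contains an $s$--$t$ path $P$, and I may take $P$ simple. A simple path from $s$ never re-enters $s$, so $P$ cannot use the arc $(t,s)$, whose head is $s$. Hence $P\subseteq G$ and $t$ is reachable from $s$ in $G$.

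This direction is the only point needing care, and the simple-path observation handles it without assuming, as \cref{lem:red-st-bridge} did, that $s$ has no incoming arcs. If $(t,s)$ was already in $G$, nothing is added and the same argument applies verbatim. With this, the parameterized lower bounds for $\REACH^\alpha_n$ from \cref{thm:s-t-LB,thm:s-t-LB-const-alpha} transfer directly to $\SCC^\alpha_n$.
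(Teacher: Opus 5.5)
Your proposal is correct and is essentially the paper's proof: add the arc $(t,s)$, note that adding arcs cannot increase the independence number, and decide reachability by checking whether $s$ and $t$ share an SCC of $G'$. Your simple-path argument for the converse (a simple path out of $s$ cannot use an arc whose head is $s$) spells out the step the paper leaves implicit, and it correctly avoids the no-incoming-arcs assumption used in \cref{lem:red-st-bridge}.
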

\begin{proof}  
Given an instance $(G; s, t)$ of $\REACH_{n}^{\alpha}$, construct $G' = G \cup \{(t, s)\}$ as an instance of $\SCC_{n}^{\alpha}$. This construction is valid since the independence number of $G'$ is at most that of $G$, and hence at most $\alpha$. Observe that $s$ and $t$ belong to the same strongly connected component in $G'$ if and only if there exists an $s$--$t$ path in $G$. Therefore, by examining the strongly connected components in the solution to $\SCC_{n}^{\alpha}$ on the instance $G'$, one can decide $\REACH_{n}^{\alpha}(G; s, t)$.
\end{proof}

\begin{lemma}
    $\REACH_{n}^{\alpha}$ reduces to $\SAT_n^{2\alpha}$.
\end{lemma}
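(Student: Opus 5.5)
Given an instance $(G;s,t)$ of $\REACH^\alpha_n$, I will build a 2-SAT formula whose implication graph is, up to duplication, two copies of $G$. Take one Boolean variable $x_v$ for each node $v\in V(G)$. For each arc $(u,v)\in A(G)$ add the clause $(\neg x_u\vee x_v)$, which is equivalent to the implication $x_u\Rightarrow x_v$. Its implication graph contains the arcs $x_u\arc x_v$ and $\neg x_v\arc\neg x_u$. Add two further unit clauses $(x_s)$ and $(\neg x_t)$, written in 2-CNF as $(x_s\vee x_s)$ and $(\neg x_t\vee\neg x_t)$. These contribute the arcs $\neg x_s\arc x_s$ and $x_t\arc\neg x_t$.

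The formula is unsatisfiable iff $s\reach t$ in $G$. If there is an $s$--$t$ path, then $x_s$ forces $x_t$ along the path, contradicting $\neg x_t$. Conversely, if $t$ is not reachable from $s$, set $x_v=1$ exactly for the nodes $v$ reachable from $s$. Every implication $x_u\Rightarrow x_v$ holds, because reachable nodes are closed under out-arcs. Also $x_s=1$ and $x_t=0$, so this assignment satisfies the formula. Hence a streaming algorithm deciding satisfiability decides reachability.

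The streaming reduction is arc-local. Each stream arc $(u,v)$ is translated on the fly into the two implication arcs $x_u\arc x_v$ and $\neg x_v\arc\neg x_u$. The constant number of extra arcs is inserted at the start. Pass and space usage are therefore preserved, and the implication graph has $2n$ nodes.

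The main point to check is the independence-number parameter: the implication graph must have independence number at most $2\alpha$, which is why the target is $\SAT^{2\alpha}_n$. The positive literals induce a copy of $G$ with the extra arc $x_t\arc\neg x_t$ leaving it. The negative literals induce a copy of the reverse of $G$, and reversal does not change the independence number. Any independent set $I$ splits into $I\cap\set{x_v}$ and $I\cap\set{\neg x_v}$. Each part is independent in a graph isomorphic to $G$ or its reverse, so each part has size at most $\alpha$, and $|I|\le 2\alpha$. Adding the unit-clause arcs can only decrease independence.

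I expect the only subtlety to be the convention for $\SAT^{\alpha}_n$ in \cref{apx:prob-def}. If the parameter refers to the implication graph on $2n$ literals, the bound $2\alpha$ is exactly what is needed, and the reduction carries the $\REACH^\alpha_n$ lower bounds of \cref{thm:reach-intro} over to $\SAT^{2\alpha}_n$.
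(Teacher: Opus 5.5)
Your proposal is correct and is essentially the paper's argument: one variable per node, one implication clause per arc, a constant number of clauses pinning $x_s$ and $x_t$, and the bound $\ind\le 2\alpha$ from splitting the literals into positive and negative halves, each inducing a copy of $G$ or of its reverse. The only differences are cosmetic: the paper orients the arc clauses as $(u\vee\neg v)$ and pins the endpoints with $(s\vee t)\wedge(\neg s\vee\neg t)\wedge(\neg s\vee t)$. Those clauses force $s=0$ and $t=1$, which is the mirror image of your unit clauses $(x_s)$ and $(\neg x_t)$.
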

\begin{proof}
We adopt the reduction from reachability to $\SAT$ presented in~\cite{jones1976new}. 
    Given an instance $(G; s, t)$ of $\REACH_{n}^{\alpha}$, construct an $\SAT$ instance $F$ as
    \[
        F = \left(\bigwedge_{(u,v)\in A}(u \lor \neg v)\right) \land (s \lor t) \land (\neg s \lor \neg t) \land (\neg s \lor t).
    \]
    Here, by abuse of notation, we identify each node $v \in V$ with a Boolean variable of $F$. 
    The last three clauses enforce $s = 1$ and $t = 0$ for $F$ to be satisfiable. 
    Thus, as shown in~\cite{jones1976new}, $G$ contains an $s$--$t$ path if and only if $F$ is not satisfiable. 

    It remains to show that the independence number of $G'$ is at most $2\alpha$. 
    Let $S^+$ and $S^-$ denote the sets of nodes corresponding to positive literals and negative literals, respectively.
    Notice that each arc $(u,v)$ in $G$ introduces two arcs $(v,u)$ and $(\neg u,\neg v)$ in $G'$. 
    Excluding the arcs from the last three clauses of $F$, we have that $G'[S^-]$ is isomorphic to $G$, while $G'[S^+]$ is identical to $G$ with all arcs reversed. 
    Therefore, $\ind(G')\leq \ind(G'[S])+\ind(G'[V'\setminus S])\leq 2\ind(G)\leq 2\alpha$.
\end{proof}

The parameterized lower bounds of these problems follow from the above reductions.
\begin{corollary}\label{cor:LB-SCC}
    Any $p$-pass algorithm for $\SCC^{\alpha}_{n}$ requires $\Omega(\alpha^{1/(2p)} n/p^{O(1)})$ space if $\alpha=O(1)$ or $\alpha=\omega(1)$, $p=o(\sqrt{\log n})$.
\end{corollary}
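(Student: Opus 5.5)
The plan is to combine the reduction from $\REACH^\alpha_n$ to $\SCC^\alpha_n$ in the lemma just proved with the parameterized reachability lower bounds of \cref{thm:s-t-LB-const-alpha} (constant $\alpha$) and \cref{thm:s-t-LB} ($\alpha=\omega(1)$). Given a $p$-pass $s$-space streaming algorithm $\mathcal{A}$ for $\SCC^\alpha_n$, we build a $p$-pass algorithm for $\REACH^\alpha_n$ on an instance $(G;s,t)$. Since $s,t$ are known in advance, the extra arc $(t,s)$ can be inserted at the start of every pass, so the stream for $G'=G\cup\set{(t,s)}$ is simulated with no extra passes and $O(1)$ extra space. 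By the lemma, $\ind(G')\le\alpha$. At the end we check whether $s$ and $t$ lie in the same component of the output of $\mathcal{A}$, which decides $\REACH(G;s,t)$. If the output is a list of component labels, this check is offline and costs no extra space. Hence $s$ is at least the reachability lower bound.

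For the case split: if $\alpha=O(1)$, \cref{thm:s-t-LB-const-alpha} gives $\Omega(n/p)$, which equals $\Omega(\alpha^{1/(2p)}n/p^{O(1)})$ since $\alpha^{1/(2p)}=O(1)$ and $p\ge 1$. If $\alpha=\omega(1)$, \cref{thm:s-t-LB} gives $\Omega(\alpha^{1/(2p)}n/p^{O(1)})$, provided $p=O(\log\alpha/\log\log\alpha)$.

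The main obstacle is this last range condition. The statement is phrased with $p=o(\sqrt{\log n})$ instead, so I must check that the two hypotheses line up, or restrict to the range where \cref{thm:s-t-LB} applies. First I would note that for $p$ beyond $\log\alpha/\log\log\alpha$ we have $\alpha^{1/(2p)}=O(\log\alpha)$, which is absorbed by the $p^{O(1)}$ factor. In that regime it should therefore suffice to fall back on the weaker hard instance, namely the gadget-embedding construction with constant-size triangle gadgets from \cref{thm:s-t-LB-const-alpha}, whose independence number is at most $\alpha$. That instance gives $\Omega(n/p)$, which dominates $\alpha^{1/(2p)}n/p^{O(1)}$ for a suitable constant in the exponent. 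Everything else is routine bookkeeping: the reduction preserves the pass count and the independence-number bound, so the corollary follows.
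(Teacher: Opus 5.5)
Your proposal is correct and follows the paper's route. The paper justifies the corollary only by the one-line remark that it follows from the reduction $G'=G\cup\set{(t,s)}$ of $\REACH^\alpha_n$ to $\SCC^\alpha_n$ together with \cref{thm:s-t-LB-const-alpha} and \cref{thm:s-t-LB}, which is exactly your argument. Your extra fallback is valid and closes a mismatch the paper does not address: the corollary assumes $p=o(\sqrt{\log n})$, whereas \cref{thm:s-t-LB} needs $p=O(\log\alpha/\log\log\alpha)$. For $p$ beyond that threshold you have $p\ge\sqrt{\log\alpha}$ and hence $\alpha^{1/(2p)}\le(\log\alpha)^{O(1)}\le p^{O(1)}$, so the $\Omega(n/p)$ bound from the constant-size-gadget instance (independence number at most $1\le\alpha$) already suffices.
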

\begin{corollary}\label{cor:LB-2SAT}
    For $\alpha\geq 2$, any $p$-pass algorithm for $\SAT^{\alpha}_{n}$ requires $\Omega(\alpha^{1/(2p)} n/p^{O(1)})$ space if $2\leq \alpha=O(1)$ or $\alpha=\omega(1)$, $p=o(\sqrt{\log n})$.
\end{corollary}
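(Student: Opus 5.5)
This follows by composing the reduction of the preceding lemma (from $\REACH_{n}^{\alpha}$ to $\SAT_n^{2\alpha}$) with the reachability lower bounds of \cref{thm:s-t-LB,thm:s-t-LB-const-alpha}, after a reparameterization of $\alpha$. Given a $p$-pass streaming algorithm $\mathcal{A}$ for $\SAT^{\alpha}_{n}$ with $\alpha\geq 2$, set $\alpha'\coloneqq\lfloor \alpha/2\rfloor\geq 1$. A $\REACH^{\alpha'}_{n'}$ instance $(G;s,t)$ with $n'=\Theta(n)$ maps to the formula $F$ of the preceding lemma. The implication graph of $F$ has $2n'$ literal nodes and independence number at most $2\alpha'\leq\alpha$. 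So $\mathcal{A}$, run with $n$ chosen as $2n'$ up to rounding, decides $\REACH^{\alpha'}_{n'}(G;s,t)$ by reporting whether $F$ is unsatisfiable.

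Next I check that the reduction is implementable in the streaming model without extra passes or space. Each arc $(u,v)$ of $G$ arriving in the stream is translated on the fly into the clause $(u\lor\neg v)$, that is, into the two implication arcs $(v,u)$ and $(\neg u,\neg v)$. The three fixed clauses on $s,t$ are injected at the start of each pass. This works in the insertion-only model, and also in the turnstile model, since deletions translate to deletions. The simulation therefore uses the same $p$ passes and the same space plus $O(\log n)$ bits.

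The space bound then follows from the two regimes. If $\alpha=\omega(1)$, then $\alpha'=\omega(1)$, and \cref{thm:s-t-LB} gives $\Omega(\alpha'^{1/(2p)}n'/p^{O(1)})=\Omega(\alpha^{1/(2p)}n/p^{O(1)})$, since $(\alpha/2)^{1/(2p)}=\Theta(\alpha^{1/(2p)})$. If $2\leq\alpha=O(1)$, then $\alpha'\geq 1$ is constant, and \cref{thm:s-t-LB-const-alpha} gives $\Omega(n/p)$, which matches the claimed form because $\alpha^{1/(2p)}=O(1)$.

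The main obstacle is matching the pass restriction in the statement. \cref{thm:s-t-LB} is stated for $p=O(\log\alpha/\log\log\alpha)$, whereas the corollary (like \cref{cor:LB-SCC}) is phrased with $p=o(\sqrt{\log n})$. I would state the result under the hypothesis inherited from \cref{thm:s-t-LB}, with $\alpha'$ in place of $\alpha$; the change from $\alpha$ to $\alpha'=\lfloor\alpha/2\rfloor$ only affects constants in that condition. I would also note that the requirement $\alpha\geq 2$ is exactly what guarantees $\alpha'\geq 1$, and is the reason the corollary excludes $\alpha=1$.
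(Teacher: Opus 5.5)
Your argument is the same as the paper's. The paper derives the corollary by composing the reduction from $\REACH^{\alpha}_n$ to $\SAT^{2\alpha}_n$ with \cref{thm:s-t-LB,thm:s-t-LB-const-alpha}. Your choice $\alpha'=\lfloor\alpha/2\rfloor$ and your check that the arc-to-clause translation is local fill in details the paper leaves implicit. A minor point: in the paper $\SAT_n$ counts variables, so an $n'$-node reachability instance becomes an instance of $\SAT_{n'}$ rather than one of size $2n'$. This does not affect the asymptotics.

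The one place you fall short of the statement as written is the pass range. You propose replacing $p=o(\sqrt{\log n})$ with the hypothesis of \cref{thm:s-t-LB}. The mismatch is real, and the paper's one-line derivation glosses over it. However, you can close it instead of weakening the statement.

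First, a graph with independence number $1$ is a valid instance for every larger $\alpha$. So \cref{thm:s-t-LB-const-alpha} with $\alpha'=1$, pushed through the reduction, gives an $\Omega(n/p)$ bound for $\SAT^{\alpha}_n$ for every $\alpha\ge 2$ and every $p$.

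Next, suppose $\alpha=\omega(1)$ and $p\ge c_0\ln\alpha/\ln\ln\alpha$, where $c_0$ is the constant implicit in \cref{thm:s-t-LB}; this is the range that theorem does not cover. Then $\alpha^{1/(2p)}\le(\ln\alpha)^{1/(2c_0)}$. Since $\ln\alpha/\ln\ln\alpha\ge\sqrt{\ln\alpha}$, you also get $\ln\alpha\le(p/c_0)^2$. Hence $\alpha^{1/(2p)}=O(p^{1/c_0})$, and therefore $n/p=\Omega(\alpha^{1/(2p)}n/p^{1+1/c_0})$.

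Combining this with \cref{thm:s-t-LB} for smaller $p$ gives $\Omega(\alpha^{1/(2p)}n/p^{O(1)})$ for all $p$ when $\alpha=\omega(1)$. In particular, the bound holds under the stated condition $p=o(\sqrt{\log n})$.
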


In addition to $\SCC$ and $\SAT$, \cite{CLT25} also presented algorithms for $\TopoSort$ and $\MCC$, by using a $1$-strong connectivity certificate. Our parameterized lower bounds do not extend to the latter two problems, since both problems require the input digraph to be acyclic. Nevertheless, we extend all four parameterized upper bound results from the insertion-only model \cite{CLT25} to the turnstile model by \cref{thm:main-1conn}.
\begin{theorem}\label{thm:algo-topo-scc}
For $p\in \dbN$, $\TopoSort_{n}^{\alpha}$, $\SCC_{n}^{\alpha}$, $\MCC_{n}^{\alpha}$ or $\SAT_{n}^{\alpha}$ can be solved by a $p$-pass deterministic algorithm using:
\begin{enumerate}[label=\textup{(\alph*)}]
    \item $O( \alpha n^{1+1/p} )$ space in the insertion-only model; and
    \item $O(\alpha n^{1+O(1/\sqrt{p}} )$ space in the turnstile model.
\end{enumerate}
\end{theorem}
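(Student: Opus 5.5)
The plan is to reduce each of the four problems to computing a $1$-strong connectivity certificate via \cref{thm:main-1conn}, and then solve the problem offline on the certificate. Since a $1$-strong connectivity certificate $H$ of $G$ is a subgraph with the same transitive closure, any problem whose answer depends only on the reachability relation of $G$ can be read off $H$. The certificate has arboricity at most $\alpha+2$, hence $O(\alpha n)$ arcs. It therefore fits within the stated space budget, and the offline phase costs no additional passes. The pass and space bounds in both the insertion-only and turnstile models are then exactly those of \cref{thm:main-1conn}.

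\textbf{SCC and 2-SAT.} Strongly connected components are the classes of mutual reachability, so $\SCC(G)=\SCC(H)$; we run Tarjan's algorithm on $H$. For $\SAT^\alpha_n$, the input $F$ is streamed as its implication graph $G_F$, and we assume (as in the problem definition) that $\ind(G_F)\le\alpha$. Each clause is converted on the fly into its two implication arcs, so the streaming algorithm runs directly on $G_F$. By the classical Aspvall--Plass--Tarjan criterion, $F$ is satisfiable iff no $x$ and $\neg x$ share an SCC, and a satisfying assignment can be read off a topological order of the condensation. Both of these depend only on the SCC decomposition and the condensation's reachability, which $H$ preserves.

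\textbf{Topological sort and minimum chain cover.} For a DAG $G$, $H$ is a DAG with the same transitive closure. Any topological order of $H$ respects every arc $u\arc v$ of $G$ because $u\reach v$ in $H$. For $\MCC$, chains are defined with respect to the reachability order, and a chain cover of the poset induced by $G^*$ is the same object for $G$ and $H$. We compute a minimum chain cover of $H$ offline, for instance via bipartite matching on the transitive closure, Dilworth/Fulkerson style.

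The only step needing care is the 2-SAT encoding. It requires checking that the independence-number assumption refers to the implication graph, and that a turnstile update to a clause translates to updates of exactly two arcs. I expect no real obstacle beyond fixing these conventions from \cref{apx:prob-def}. In every case, offline processing of $H$ uses $O(\alpha n)$ words, which is dominated by the streaming space bound.
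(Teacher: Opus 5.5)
Your proposal is correct and follows the paper's route. You compute a $1$-strong connectivity certificate $H$ via \cref{thm:main-1conn} and solve each problem offline on $H$, which preserves the transitive closure; the paper simply defers this offline step to \cite{CLT25}.

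One small caution for $\MCC$: do not materialize the transitive closure for the matching. It can have $\Theta(n^2)$ arcs and would exceed the $O(\alpha n^{1+1/p})$ budget. Instead, answer reachability queries by searching $H$ on demand, or use a min-flow formulation on $H$ itself, so the offline step really stays within $O(\alpha n)$ words.
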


\subsection{Distance-$d$ Dominating Set}
For a digraph $G=(V,A)$ and parameter $d$, a distance-$d$ dominating set is a subset $D\subseteq V$ such that for every $v\in V$, there is a path of at most $d$ arcs from some node $u\in D$ to $v$. Distance-$d$ dominating set has been studied in~\cite{wang2003,joeshi2021,meir1975}. In this section, we consider computing an $O(n/d)$-node distance-$d$ dominating set for strongly connected graphs. 

Let $T$ be a spanning out-branching of $G$ rooted at an arbitrary node $r$. Since $G$ is strongly connected, $T$ is guaranteed to exist by~\cref{thm:edmond}. If an $O(n/d)$-node set $S$ is a distance-$d$ dominating set of $T$, it is also a distance-$d$ dominating set of $G$. We now describe a procedure to find such a set $S$ in $T$. We initialize $S$ as an empty set. In each iteration, each leaf $v$ of $T$ marks its farthest ancestor in $T$ that can reach $v$ through a path of at most $d$ arcs. Among the marked nodes, we retain only those that are not ancestors of any other marked nodes in $T$. These remaining marked nodes are added to $S$, and all nodes reachable from any node in $S$ are removed from $T$. Repeat this procedure until $T$ is empty. $S$ is clearly a distance-$d$ dominating set of $T$. Since adding one node to $S$ results in removing at least $d$ nodes from $T$ in the procedure, $\lvert S\rvert\leq n/d$.
The procedure for finding a distance-$d$ dominating set can be performed in memory if a spanning out-branching is given. Therefore, by~\cref{thm:algo-branch1} we have the following theorem.

\begin{theorem}\label{thm:algo-dominat}
    Let $G$ be an $n$-node strongly connected digraph with independence number $\alpha$. For $p\in \dbN$ and $d\in [n]$, an $O(n/d)$-node distance-$d$ dominating set of $G$ can be computed by a $p$-pass deterministic algorithm using
\begin{enumerate}[label=\textup{(\alph*)}]
    \item $O(\alpha n^{1+1/p})$ space in the insertion-only model; and
    \item $O(\alpha n^{1+ O(1/\sqrt{p}) })$ space in the turnstile model.
\end{enumerate}
\end{theorem}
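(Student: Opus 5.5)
The plan is to reduce \cref{thm:algo-dominat} to computing a single spanning out-branching, which \cref{thm:algo-branch1} already provides within the stated budgets. Concretely, I will pick an arbitrary root $r\in V(G)$ and run the $p$-pass deterministic $\OutBranch^\alpha_n$ algorithm of \cref{thm:algo-branch1}. Internally this computes a $1$-strong connectivity certificate $H$ via \cref{thm:main-1conn}, using $O(\alpha n^{1+1/p})$ space in the insertion-only model and $O(\alpha n^{1+O(1/\sqrt{p})})$ in the turnstile model, and then extracts offline a spanning out-branching $T$ rooted at $r$. Such a $T$ exists because $G$ is strongly connected: every cut $(S,V\setminus S)$ with $r\in S$ has an outgoing arc, so \cref{thm:edmond} with $k=1$ applies, and $H$ preserves this property. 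Everything after that is offline on $T$, which has $n-1$ arcs, so it adds no passes and only $O(n)$ space.

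Next comes the observation that domination transfers from $T$ to $G$. If $S$ is a distance-$d$ dominating set of $T$, then each $v$ is reached from some $u\in S$ by a tree path of at most $d$ arcs. Since $T\subseteq G$, the same path exists in $G$, so $S$ is also a distance-$d$ dominating set of $G$.

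The remaining and main step is to construct, offline, a set $S$ of size $O(n/d)$ dominating $T$. I will use a depth-layer argument rather than the iterative leaf-marking procedure, since the counting is cleaner. Partition $V$ by depth $h(v)$ in $T$ (with $h(r)=0$) into residue classes $L_j=\set{v : h(v)\equiv j \pmod{d}}$ for $j=0,\ldots,d-1$. Some class $L_j$ has $\abs{L_j}\le n/d$. Set $S=L_j\cup\set{r}$.

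To see that $S$ dominates, take any node $v$. If $h(v)<d$, the root reaches $v$ within $h(v)<d$ arcs. Otherwise, among the ancestors of $v$ at depths $h(v),h(v)-1,\ldots,h(v)-d+1$, exactly one lies in $L_j$, and it reaches $v$ within at most $d-1$ arcs. Thus $\abs{S}\le n/d+1=O(n/d)$ for every $d\in[n]$. (When $d\ge n$ the bound $O(1)$ holds with $S=\set{r}$.)

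The obstacle I expect is just making sure the extraction step in \cref{thm:algo-branch1} is indeed deterministic and within the space bound. Since \cref{thm:main-1conn} is deterministic and the output $H$ has $O(\alpha n)$ arcs, running BFS from $r$ in $H$ in memory suffices.
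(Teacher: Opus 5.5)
Your proposal is correct. Its outer structure is the same as the paper's: extract a spanning out-branching $T$ rooted at an arbitrary $r$ offline from the deterministic $1$-strong connectivity certificate (\cref{thm:algo-branch1}), note that a distance-$d$ dominating set of $T$ dominates $G$ because $T\subseteq G$, and charge all passes and space to \cref{thm:main-1conn}.

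The difference is in the offline step on $T$. The paper uses an iterative bottom-up greedy. Every remaining leaf marks its farthest ancestor within distance $d$; the marked nodes that are not ancestors of other marked nodes go into $S$; their subtrees are deleted. The size bound comes from the claim that each selected node removes at least $d$ nodes. You instead make one pigeonhole choice over depth residues modulo $d$ and add the root.

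Your argument is shorter and gives the explicit bound $|S|\le n/d+1$, in fact with distance $d-1$. It also avoids a small wrinkle in the paper's count. In the final round of the greedy, the root may be selected while removing fewer than $d$ nodes, so the greedy's bound also needs a $+1$ rather than being exactly $n/d$.

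The greedy, in turn, adapts to the shape of $T$ and can be far smaller than your residue class on favourable trees. For example, with $d=1$ and a star-shaped $T$ it returns $\{r\}$, while your class is all of $V$. For the $O(n/d)$ guarantee in the theorem, both suffice.
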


\subsection{Transitive closure}
The transitive closure of a digraph $G$ is a graph $G'$ on the same node set, where each arc in $G'$ is defined according to the reachability of $G$. Note that $G'$ may contain substantially more arcs than $G$: for example, the transitive closure of a directed path on $n$ nodes is a size-$n$ tournament with $\Theta(n^2)$ arcs.

The task of computing the transitive closure of a digraph $G$ becomes straightforward if a $1$-strong certificate of $G$ is given, thus we have the following upper bound by \cref{thm:main-1conn}.
\begin{theorem}\label{thm:algo-tranclos}
For $p\in \dbN$, $\TranClos_{n}^{\alpha}$ can be computed by a $p$-pass deterministic algorithm using:
\begin{enumerate}[label=\textup{(\alph*)}]
    \item $O(\alpha n^{1+1/p})$ space in the insertion-only model; and
    \item $O(\alpha n^{1+O(1/\sqrt{p})})$ space in the turnstile model.
\end{enumerate}

\end{theorem}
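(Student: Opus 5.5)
The plan is to compute a $1$-strong connectivity certificate $H$ of $G$ with the algorithm of \cref{thm:main-1conn}, and then compute the transitive closure of $H$ offline. This gives both bounds at once: $p$ passes and $O(\alpha n^{1+1/p})$ space in the insertion-only model, and $O(\alpha n^{1+O(1/\sqrt{p})})$ space in the turnstile model. The algorithm is deterministic because \cref{thm:main-1conn} is.

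Correctness rests on one observation. By definition, a $1$-node (equivalently $1$-arc) strong connectivity certificate satisfies $\kappa_{st}(H)\ge \min\{1,\kappa_{st}(G)\}$ for all distinct $s,t$. Hence $t$ is reachable from $s$ in $H$ whenever it is reachable in $G$. Conversely, $H\subseteq G$ gives the reverse implication. So $H$ and $G$ have identical reachability relations, and the transitive closure of $H$ equals that of $G$. In fact, the proof of \cref{lem:alg-insert} already establishes this directly.

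The remaining step is to check that the offline phase fits in the space budget. The certificate has arboricity at most $\alpha+2$, so it has $O(\alpha n)$ arcs and can be held in memory. One subtlety is that the output closure may have $\Theta(n^2)$ arcs, which exceeds the working-space bound. I would handle this with the standard streaming convention that output is written to a write-only stream and not counted in working space.

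Under that convention, we run a BFS/DFS from each node $s$ on $H$ in turn. Each search uses $O(n+\alpha n)$ extra space, and the reachable pairs $(s,t)$ are emitted as they are found. No further passes over the stream are needed, so the pass count stays at $p$.

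The main (mild) obstacle is this output-size accounting. If one insists on storing the closure, there is an alternative: output $H$ itself as an implicit representation, from which reachability queries are answered. Either way, the space is dominated by the certificate computation of \cref{thm:main-1conn}.
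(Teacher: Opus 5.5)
Your proposal is correct and follows the paper's proof: compute a $1$-strong connectivity certificate with \cref{thm:main-1conn}, observe that it preserves all-pair reachability, and compute the transitive closure offline within the same pass and space bounds. Your write-only output convention for the possibly $\Theta(n^2)$-arc closure matches how the paper defines $\TranClos$ in \cref{apx:prob-def}.
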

\begin{proof}
    By definition, a $1$-node strong connectivity certificate of $G$ preserves all-pair reachability of $G$. Therefore, the transitive closure of $G$ can be computed offline once a $1$-node strong connectivity certificate is obtained. The pass and space usage follows from \cref{thm:main-1conn}.
\end{proof}
The above argument also implies that the transitive closure problem inherits the hardness of the all-pair reachability, and thus $n$-pair reachability problem.
\begin{lemma}
    $\PairReach_{n}^{\alpha}$ reduces to $\TranClos_{n}^{\alpha}$.
\end{lemma}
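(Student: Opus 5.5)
The reduction is the identity map on graphs. Given an instance of $\PairReach_{n}^{\alpha}$, consisting of a digraph $G$ with $n$ nodes and independence number at most $\alpha$ together with a list of $n$ pairs $(s_i,t_i)$, we feed the same stream of arcs of $G$ to a $\TranClos_{n}^{\alpha}$ algorithm. No arcs are added or deleted, so the instance lies in the same class: it still has $n$ nodes and independence number at most $\alpha$. Moreover, the pass and space usage carry over unchanged. The query pairs are not part of the stream, and the transitive closure algorithm does not need them.

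Next, we decode the answer from the output transitive closure $G^*$. By definition, $(s_i,t_i)$ is an arc of $G^*$ if and only if $t_i$ is reachable from $s_i$ in $G$. So for each $i\in[n]$ we read off the bit $[\,(s_i,t_i)\in A(G^*)\,]$, which is exactly the $i$-th output of $\PairReach$. This post-processing is offline and needs no additional passes. Its only space cost is the $O(n)$ bits needed to store the pairs and answers, which is dominated by the output size anyway.

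The only point needing care is the output model: $G^*$ may have $\Theta(n^2)$ arcs, so it is written to an output stream rather than stored. While $G^*$ is being emitted, we check each emitted arc against the $n$ stored query pairs (for example, indexed by a table on $s_i$), using $O(n)$ words. Since the paper's lower bound is $\Omega(\alpha n/p)$ and $\alpha\ge1$, this overhead does not affect the reduction. Combining this reduction with \cref{thm:pair-reach} then yields the $\Omega(\alpha n/p)$ lower bound for $\TranClos_n^\alpha$.
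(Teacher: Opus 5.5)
Your reduction (feed the arcs of $G$ unchanged to the $\TranClos_n^\alpha$ algorithm and set $w_i=1$ iff $(s_i,t_i)$ appears in the emitted closure) is exactly the paper's one-line justification: a structure that preserves all-pair reachability answers $\PairReach$. As a reduction it is correct.

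The step that does not hold is the overhead accounting in your last paragraph, and that accounting is what \cref{cor:LB-tranclos} relies on. Storing the $n$ query pairs costs $\Theta(n\log n)$ bits, which is your ``$O(n)$ words'' and not the ``$O(n)$ bits'' you state earlier. \cref{thm:pair-reach}, by contrast, gives $\Omega(\alpha n/p)$ \emph{bits}. From an $s$-bit $\TranClos$ algorithm you therefore only get $s + Cn\log n \ge c\,\alpha n/p$. This is vacuous whenever $\alpha = o(p\log n)$, in particular for $\alpha=O(1)$, so ``$\alpha\ge 1$'' does not make the overhead negligible. Even the $n$ answer bits alone would wash out the bound once $\alpha = O(p)$. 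The overhead is also not ``dominated by the output size'': the closure is written to write-only space and is not charged to working memory.

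The fix is to avoid storing pairs or answers at all. In the hard instances behind \cref{thm:pair-reach}, the queries are the fixed gadget pairs $(a_i,c_i)$, which can be recognized from node labels. Moreover, $\DISJ$ only needs to know whether \emph{some} $c_i$ is reachable from $a_i$. So reduce $\DISJ$ directly to $\TranClos_n^\alpha$ on the same gadget-embedding instance, and while closure arcs are emitted keep a single flag recording whether such an arc has appeared. This costs $O(\log n)$ scratch bits in the streaming algorithm, or $O(1)$ extra bits per message in the two-party simulation, and it preserves the $\Omega(\alpha n/p)$ bound for every $\alpha$.
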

\begin{corollary}\label{cor:LB-tranclos}
    Any $p$-pass algorithm for $\TranClos^\alpha_n$ requires $\Omega(\alpha n/p)$ space.
\end{corollary}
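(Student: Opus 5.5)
The statement is Corollary~\ref{cor:LB-tranclos}. I would obtain it by combining the immediately preceding lemma (that $\PairReach_n^\alpha$ reduces to $\TranClos_n^\alpha$) with \cref{thm:pair-reach}.

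The first step is to make the reduction explicit and check that it preserves the streaming parameters. Take an instance of $\PairReach^\alpha_n$: a digraph $G$ on $n$ nodes with $\ind(G)\le\alpha$, and a list of $n$ query pairs $(u_i,v_i)$. Feed the stream of $G$ unchanged to a $p$-pass algorithm for $\TranClos^\alpha_n$. The reduction introduces no new nodes or arcs, so the graph is still in the class with $n$ nodes and independence number at most $\alpha$. The output $G'$ contains the arc $(u_i,v_i)$ iff $v_i$ is reachable from $u_i$ in $G$, so every query can be answered by inspecting $G'$.

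There is one technical point about space. The output of $\TranClos$ can have $\Theta(n^2)$ arcs, and we should not count it as working memory. I would follow the standard convention that output arcs are written to a write-only stream. The pair-reachability solver then only needs to keep $n$ bits, one answer per query, recorded as the corresponding arcs are emitted. Since the queries are fixed in advance, these bits can be marked on the fly. The total space is therefore $s+O(n)$ with the same pass count $p$.

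By \cref{thm:pair-reach}, this combined algorithm must use $\Omega(\alpha n/p)$ space. Because $O(n)\le O(\alpha n/p)$ is not guaranteed when $p>\alpha$, the final step needs more care. The original lower bound already comes from $\DISJ$ via \cref{fact:DISJ}, and that argument applies to the whole $\TranClos$ procedure once each player can read the output. So I would instead argue directly: the reduction from $\DISJ_{\alpha n}$ to $\PairReach^\alpha_n$ in the proof of \cref{thm:pair-reach} extends to $\TranClos$ with only $O(1)$ extra communication. The $\DISJ$ answer is the OR over the queried arcs, which the last player can compute while streaming the output, needing one bit of memory. This gives $\Omega(\alpha n/p)$ cleanly.

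The main obstacle is this output-accounting convention, namely ensuring that storing the answers does not add non-negligible space. Handling it at the communication level avoids the issue entirely.
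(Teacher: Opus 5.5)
Your proposal is correct and follows the same route as the paper: the paper obtains the corollary directly from the reduction of $\PairReach_n^\alpha$ to $\TranClos_n^\alpha$ (each query is answered by checking whether its arc appears in the transitive closure), combined with \cref{thm:pair-reach}. Your extra step, taking the reduction down to $\DISJ_{\alpha n}$ and tracking only a single OR-bit over the queried arcs so that you never pay $O(n)$ to store the $\PairReach$ answers, is a legitimate refinement of a point the paper leaves implicit rather than a different argument.
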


\subsection{Maximum-colour Spanning Tree}
To close this section, we present an application of our digraph hardness results to an undirected graph problem.
The problem of finding a maximum-colour spanning tree of an undirected graph ($\MaxColorTree$) can be formulated as a matroid intersection problem~\cite{DBLP:journals/dmgt/BroersmaL97}.
Our hardness result implies a parameterized lower bound for this problem.
\begin{theorem}\label{thm:max-color}
    $\OutBranch^\alpha_n$ reduces to $\MaxColorTree_{n}^{\alpha}$.
\end{theorem}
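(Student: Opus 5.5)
We reduce an instance $(G;r)$ of $\OutBranch^\alpha_n$ to an edge-coloured undirected graph whose maximum number of colours in a spanning tree reveals whether $G$ has a spanning out-branching rooted at $r$. The construction follows the standard matroid-intersection view of branchings. Let $U$ be the underlying undirected multigraph of $G$ after deleting all arcs entering $r$. Give each arc $(u,v)$ of $G$ with $v\neq r$ the colour $v$, i.e.\ the colour of an arc is its head. The graphic matroid of $U$ and the partition matroid induced by the colours are then exactly the two matroids whose common independent sets are the branchings of $G$ with root-compatible in-degrees.

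The key claim is that $G$ admits a spanning out-branching rooted at $r$ if and only if $U$ has a spanning tree using $n-1$ distinct colours. For the forward direction, an out-branching has $n-1$ arcs, one entering each $v\neq r$. These arcs carry all $n-1$ colours and form a spanning tree of $U$.

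For the converse, suppose $T$ is a spanning tree of $U$ with $n-1$ distinct colours. Then $T$ contains exactly one arc entering each $v\neq r$ and no arc entering $r$. An orientation of a tree in which every vertex except $r$ has in-degree exactly one must be the out-branching rooted at $r$. To see this, follow in-arcs backwards from any vertex: since $T$ is acyclic, the backward walk must terminate, and it can only terminate at $r$, the unique vertex with in-degree zero.

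So the $\OutBranch$ answer is read off by testing whether the maximum number of colours equals $n-1$, and the branching itself is the tree returned. Streaming-wise, each arriving arc is mapped locally to a coloured edge, with arcs into $r$ discarded. This mapping is a one-to-one, stream-preserving transformation, so passes and space carry over to $\MaxColorTree$.

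The step I expect to need care is matching the parameter $\alpha$ in $\MaxColorTree^{\alpha}_{n}$. The undirected graph $U$ is obtained by forgetting orientations, and deleting arcs into $r$ removes at most one vertex's adjacencies. Hence $\ind(U)\le \ind(G-r)+1\le\alpha+1$. The exact bound depends on how $\MaxColorTree^\alpha_n$ is defined in \cref{apx:prob-def}, which I am assuming indexes the independence number of the underlying graph. To avoid the $+1$, I would first keep arcs into $r$ but give them a dummy colour shared by all such arcs. A tree could then use at most one of them, and the count becomes $n-1$ only with all head-colours present. But this breaks the count, since using a dummy edge displaces a needed colour; so with the dummy colour the criterion becomes ``$n-1$ non-dummy colours'', which the maximum-colour value still detects. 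Alternatively, I would accept $\alpha+1$, which suffices for lower bounds up to constants. Either way, the lower bound transfers from \cref{cor:LB-branching}.
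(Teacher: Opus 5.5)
Your proof is correct and follows the paper's argument: colour each arc by its head, discard arcs into the root, and note that a spanning tree using $n-1$ colours has exactly one in-arc at every non-root vertex, so it is an out-branching rooted at $r$. The paper simply assumes without loss of generality that the root has no incoming arcs, whereas you correctly account for the resulting $\ind(U)\le\alpha+1$.

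Drop the dummy-colour alternative, though. A tree using one dummy edge plus $n-2$ head colours also reaches $n-1$ colours without any $r$-rooted branching (e.g.\ arcs $x\arc r$ and $x\arc y$ with root $r$). So the maximum-colour value does not detect the criterion you state, and the $\alpha+1$ route is the one that works.
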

\begin{proof}
    Let $G$ be an instance of $\OutBranch^\alpha_n$, where we assume $V(G)=[n]$ and the designated root is $n$. We further assume without loss of generality that the root $n$ has no incoming arcs. We obtain the undirected graph $G'$ by disregarding the arc directions. For an arc $(u,v)\in G$, assign $v$ to be the edge colour of the corresponding edge $\set{u,v}$ in $G'$.

    Notice that the available edge colours are $[n-1]$. Also, for any subgraph $T'\subseteq G'$, the corresponding subgraph $T\subseteq G$ can be reconstructed together with the edge colours in $T$. We claim that the maximum-colour spanning tree $T'$ of $G'$ uses $n-1$ colours iff $G$ admits a spanning out-branching.

    The backward direction follows immediately from the definition of the edge colouring. For the forward direction, note that $T'$ is an $(n-1)$-edge subgraph uses all $n-1$ colours, this implies $|\delta^+_T(i)| = 1$ for each $i\in [n-1]$, thus $T$ is a spanning out-branching of $G$.
\end{proof}
\begin{corollary}\label{cor:LB-maxcolor}
    Any $p$-pass algorithm for $\MaxColorTree^\alpha_n$ requires $\Omega(\alpha^{1/(2p)} n/p^{O(1)})$ space if $\alpha=O(1)$ or $\alpha=\omega(1)$, $p=o(\sqrt{\log n})$. 
\end{corollary}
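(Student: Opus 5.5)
The plan is to compose the reduction of \cref{thm:max-color} with the lower bound of \cref{cor:LB-branching}, checking that the reduction respects both the independence-number parameter and the streaming model. Take an $\OutBranch^\alpha_n$ instance $G$ arriving as a stream of arcs. The reduction in \cref{thm:max-color} is a local, arc-by-arc transformation: each arc $(u,v)$ is replaced on the fly by the coloured undirected edge $\set{u,v}$ with colour $v$. So a $p$-pass $s$-space algorithm for $\MaxColorTree^\alpha_n$, fed this transformed stream, becomes a $p$-pass algorithm for $\OutBranch$ using space $s+O(\log n)$. From its output we then decide whether the returned tree uses $n-1$ colours, and if so read off the out-branching.

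First, I need the parameter to be preserved. An independent set in the underlying undirected graph $G'$ is exactly an independent set in $G$, since adjacency ignores direction. Hence $\ind(G')=\ind(G)\le\alpha$.

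Second, I need to handle the assumption in \cref{thm:max-color} that the root $n$ has no incoming arcs. The streaming reduction simply discards arcs whose head is $n$. I will argue this does not change the answer: no spanning out-branching rooted at $n$ uses an arc into $n$. It can, however, affect the independence number. Deleting the arcs into $n$ can only create new independent sets containing $n$, so the independence number rises by at most one, to $\alpha+1$. Alternatively, I would observe that the hard instances behind \cref{cor:LB-branching} already have an in-arc-free root. These are the instances of \cref{lem:reach-spanningbranching} built from the $\REACH$ constructions, where one can add a fresh source $s$ or note that $s_*$ has in-arcs only from its own component. In either case the class changes from $\alpha$ to $\alpha+O(1)$, which is absorbed in the asymptotics.

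Combining these steps, the $\Omega(\alpha^{1/(2p)}n/p^{O(1)})$ bound of \cref{cor:LB-branching}, valid in the same parameter regimes ($\alpha=O(1)$, or $\alpha=\omega(1)$ with the stated pass restriction), transfers directly to $\MaxColorTree^\alpha_n$. The only mildly delicate point is the root in-arc assumption and its effect on $\alpha$. I expect this to be the main, if small, obstacle, and I would resolve it by the ``increase by at most one'' argument above.
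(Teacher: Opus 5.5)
Your route matches the paper's: compose the reduction of \cref{thm:max-color} with \cref{cor:LB-branching}. You are more careful than the paper, which assumes without loss of generality that the root has no in-arcs and never checks what this does to the independence number. Your observation that deleting the arcs into the root raises $\ind$ by at most one is correct. It settles the regime $\alpha=\omega(1)$ and every constant $\alpha\ge 2$, by invoking the $\OutBranch$ bound with parameter $\alpha-1$.

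Two things need fixing.

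First, your alternative justification is false. \cref{lem:reach-spanningbranching} adds $(t,v)$ for every $v\ne t$, in particular $(t,s)$, so the root of those hard instances always has an in-arc. Moreover, $s_*=(n/\alpha,1)$ in $G^{\#}_{\vec{\Gamma}}$ receives arcs from every node of every earlier component, not only from its own component.

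Second, ``absorbed in the asymptotics'' fails at $\alpha=1$, which the regime $\alpha=O(1)$ includes. Your argument converts an $\OutBranch^{\beta}$ bound into a $\MaxColorTree^{\beta+1}$ bound, so $\alpha=1$ would need hard $\OutBranch^{0}$ instances. Nor can the in-arc-free requirement of \cref{thm:max-color} be met at $\alpha=1$ without trivializing the source problem. If $\ind(G)=1$ and the root $r$ has no in-arcs, every $v\ne r$ is joined to $r$ by the arc $(r,v)$, so the star at $r$ is already a spanning out-branching.

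The case $\alpha=1$ therefore needs a separate argument. One option is a direct reduction from $\DISJ_m$ on an edge-coloured $K_n$ with $n\ge 4m$:
\begin{itemize}
\item Take a matching $e_1,f_1,\ldots,e_m,f_m$.
\item Colour Alice's edge $e_i$ by $c_i$ if $x_i=1$, and by a fresh colour $a_i$ otherwise.
\item Colour Bob's edge $f_i$ by $c_i$ if $y_i=1$, and by a fresh colour $b_i$ otherwise.
\item Give all remaining edges a single junk colour.
\end{itemize}
The maximum number of colours in a spanning tree is then $2m+1-\abs{\set{i:x_i=y_i=1}}$, so \cref{fact:DISJ} gives $\Omega(n/p)$. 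The paper's own ``without loss of generality'' step has the same blind spot at $\alpha=1$.
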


%% file: distributed.tex
\section{Algorithms in the CONGEST model}\label{sec:congest}
In this section, we consider directed graph problems under distributed setting, namely the CONGEST model. Several directed graph problems have been studied in the CONGEST model, including single-source reachability \cite{ghaffari2015,jambulapati2019}, single-source shortest path \cite{forster2018faster,cao2021brief}, and SCC decomposition for planar graphs \cite{parter_planar}. 

\subsection{Model and Problem Definitions}

In the CONGEST model, the input graph $G=(V,E)$ represents a communication network, where each node $v\in V$ represents a processor with unlimited computation power, and each edge $e\in E$ represents a communication link with limited bandwidth. For directed graph problems where each arc $e\in E$ is directed, we follow the standard assumption that the communication link is bidirectional. Initially, each node $v$ knows nothing about $G$ except for its distinct ID and its adjacent arcs $\delta^{+}(v)$ and $\delta^{-}(v)$. The nodes of $G$ communicate in synchronous rounds through the communication links. In each round, each communication link can transmit an $O(\log{n})$-bit message in both directions. The performance of an algorithm is measured by the number of rounds being used in the worst case. We use $D$ to denote the diameter of the communication network. For standard protocols such as leader election, please refer to~\cite{peleg2000distributed}.

In this work, we investigate the problems of finding $k$-node strong connectivity certificates, performing SCC decomposition, and computing the topological sorting of SCCs in the CONGEST model. For any integer $k \geq 1$, the task of computing an $O(t)$-arc $k$-node strong certificate for a digraph $G=(V,E)$ requires each node $v$ to identify a subset of its incident edges in $\delta^{+}(v) \cup \delta^{-}(v)$ such that the union $Q$ of all marked edges forms a $k$-node strong connectivity certificate with $|Q| = O(t)$. The SCC decomposition problem requires each node to determine the unique identifier of the SCC to which it belongs. In the CONGEST model, SCC decomposition for planar graphs was previously studied in \cite{parter_planar}, which provides a near-optimal $\tilde{O}(D)$-round algorithm. Finally, the topological sorting of SCCs involves assigning each node $v$ a rank $t_v \in [n]$ such that for every edge $(a, b) \in E$, $t_a = t_b$ if $a$ and $b$ belong to the same SCC, and $t_a < t_b$ otherwise. Notably, if $G$ is a DAG, this problem reduces to the standard topological sorting problem.

\subsection{$k$-node Strong Connectivity Certificate}\label{sec:congest_knode}

By \cref{lem:kvc-main}, the task of computing a $k$-node strong connectivity certificate for $G$ can be reduced to computing $1$-node strong connectivity certificates for $r=O(\rho^{-2}\log n)$ subgraphs, each induced by a sampled node set $V_i (\rho)$, $i\in [r]$, where $V_i(\rho)$ is obtained by sampling each node with probability $\rho$. In this section, we show that the above result also yields efficient algorithms that compute $k$-node strong connectivity certificates in the CONGEST model. 

\begin{theorem}\label{thm:congest_knode}
    Given an algorithm $A$ in the CONGEST model that computes an $s(n)$-arc $1$-node strong certificate using $t(n)$ rounds. For any $k\in \dbN$, $\rho\in(0,1/k]$, there is an algorithm that computes an $O(r \cdot s(\rho n))$-arc $k$-node strong connectivity certificate using $O(r\rho+r\rho^2\cdot t(\rho n))$ rounds with high probability, where $r=O(\rho^{-2}\log n)$ is the number defined in~\cref{lem:kvc-main}.
\end{theorem}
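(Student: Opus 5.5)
We would prove \cref{thm:congest_knode} by simulating, in the CONGEST network $G$, the $r$ independent runs of $A$ on the induced subgraphs $G[V_i(\rho)]$, and outputting the union of the resulting certificates. Correctness is then a direct consequence of \cref{lem:kvc-main}: whatever 1-node strong certificate $Q_i$ of $G[V_i(\rho)]$ the simulation of $A$ produces, the union $Q=\bigcup_i Q_i$ is a $k$-node strong connectivity certificate of $G$ with probability $1-n^{-\lambda}$. The arc bound also follows directly. By a Chernoff bound (\cref{prop:Chernoff}) and a union bound over $i\in[r]$, every $\abs{V_i(\rho)}=O(\rho n)$ with high probability. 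Hence each $Q_i$ has $s(O(\rho n))=O(s(\rho n))$ arcs (assuming $s$ is polynomially bounded), and $\abs{Q}=O(r\cdot s(\rho n))$.

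The sampling is local. Each node $v$ independently flips, for every $i\in[r]$, a coin with bias $\rho$ to decide whether $v\in V_i(\rho)$. It then tells its neighbours the list of indices $i$ with $v\in V_i(\rho)$. Its expected length is $r\rho$, and it is $O(r\rho+\log n)$ with high probability by Chernoff. Sending this list over each link takes $O(r\rho)$ rounds (absorbing the $\log n$ term, since $r\rho=\Omega(\rho^{-1}\log n)\geq\log n$). Afterwards every node knows, for each incident arc $(u,v)$, the set of indices $i$ with $u,v\in V_i(\rho)$, so it knows its incident arcs in each $G[V_i(\rho)]$.

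The main step is scheduling the $r$ executions of $A$ so that congestion stays low. An arc $(u,v)$ lies in $G[V_i(\rho)]$ with probability $\rho^2$, independently over $i$. Hence the number of instances that use a given link is $O(r\rho^2+\log n)=O(r\rho^2)$ with high probability; this uses $r\rho^2=\Theta(\log n)$ together with a union bound over all links. Each round of the parallel simulation therefore requires every link to carry at most $O(r\rho^2)$ messages, one per instance that uses it. So one round of all instances can be simulated in $O(r\rho^2)$ real rounds, and the whole simulation takes $O(r\rho^2\cdot t(\rho n))$ rounds.

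There is a subtle point here. The algorithm $A$ runs on $G[V_i(\rho)]$, which may be disconnected or have larger diameter than $G$. We interpret $A$'s guarantee $t(\cdot)$ as stated for the induced subgraph used as its own communication network, matching how the theorem is phrased, so the sampled subgraph is run with communication restricted to its own links. The other subtle point is the $O(\log n)$ message tags needed to distinguish instances. Since $r=\poly(n)$, each message carries the index $i$ within $O(\log n)$ bits. Summing the sampling and simulation phases gives $O(r\rho+r\rho^2\cdot t(\rho n))$ rounds. The main obstacle is justifying this congestion bound and the validity of running $A$ on the induced subnetwork.
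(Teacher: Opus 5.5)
Your proposal is correct and takes essentially the same route as the paper. Both use local sampling with an $O(r\rho)$-round membership announcement, Chernoff bounds showing each link lies in $O(r\rho^2)$ sampled induced subgraphs, and simulation of one round of all $r$ instances of $A$ in $O(r\rho^2)$ real rounds, with correctness taken from \cref{lem:kvc-main}. You additionally make explicit the arc-count bound via $\abs{V_i(\rho)}=O(\rho n)$ and the $O(\log n)$-bit instance tags, which the paper leaves implicit.
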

\begin{proof}
    The proposed algorithm runs algorithm $A$ on all subgraphs $V_i(\rho)$, $i\in [r]$. Let each node sample itself with probability $\rho$ for $r$ times independently to decide whether it belongs to each of the sampled sets $V_i(\rho)$. Using standard Chernoff bounds, we know that each node is contained in $O(r\rho)$ sampled sets and each arc is contained in at most $O(r \rho^2)$ subgraphs with high probability. 
    
    After sampling, each node then sends messages to all of its neighbours announcing every sampled set it belongs to. Since each node is contained in $O(r\rho)$ sampled sets, all announcements will be completed within $O(r\rho)$ rounds. This allows each induced subgraph $G[V_i(\rho)]$ to execute algorithm $A$ without sending messages through arcs that are not contained in $G[V_i(\rho)]$.

    Next, since each arc is contained in at most $O(r \rho^2)$ subgraphs, $O(r \rho^2)$ rounds are enough to let each $G[V_i(\rho)]$ complete one round of its own computation. Hence, after $r\rho^2\cdot t(\rho n)$ rounds, every $G[V_i(\rho)]$ completes its computation of an $1$-node strong connectivity certificate.    
\end{proof}

Given a graph $G$ with independence number $\alpha$, each induced subgraph $G[V_i]$ also has independence number at most $\alpha$. Therefore, if we just take the naive algorithm of broadcasting all edges in each subgraph, we can have an $O(\lvert E\lvert)$-round algorithm that computes an $O(\alpha \lvert V\lvert)$-arc $1$-node strong connectivity certificate. Choosing $s(n) = O(\alpha n)$ and $t(n)=O(n^2)$ in \cref{thm:congest_knode} yields the following.

\begin{corollary}\label{cor:congest_knode}
    For any $k\in \dbN$, $\rho\in(0,1/k]$, there is an $O(r\rho (1+\rho^3 n^2))$-round algorithm that computes an $O(r\rho \alpha n)$-arc $k$-node strong connectivity certificate for an $n$-node digraph whose independence number is $\alpha$ with high probability, where $r=O(\rho^{-2}\log n)$ is the number defined in~\cref{lem:kvc-main}.
\end{corollary}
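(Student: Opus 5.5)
This corollary is a direct instantiation of \cref{thm:congest_knode}; the plan is to supply a concrete $1$-node certificate algorithm $A$ together with bounds $s(\cdot)$ and $t(\cdot)$, and then substitute.

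\textbf{The inner algorithm $A$.} On an $m$-node induced subgraph $G[V_i(\rho)]$, which has independence number at most $\alpha$ because induced subgraphs cannot increase $\alpha$, I use the naive protocol:
\begin{itemize}
\item Elect a leader in the component of the communication network.
\item Pipeline every arc of the subgraph to the leader along a BFS tree. Since there are at most $O(m^2)$ arcs, pipelining finishes in $O(m^2 + D)$ rounds.
\item The leader runs transitive-closure-preserving pruning (\cref{lem:bblock}) offline, taking $G' = G$. This yields a $1$-node strong connectivity certificate with arboricity at most $\alpha+2$, hence with $O(\alpha m)$ arcs.
\item The leader broadcasts the chosen arcs back down the tree, again in $O(m^2)$ rounds, so that every endpoint learns which of its incident arcs are marked.
\end{itemize}
This gives $s(m) = O(\alpha m)$ and $t(m) = O(m^2)$, the latter absorbing the additive diameter term as the paper does in the preceding remark.

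\textbf{Substitution.} By Chernoff, every $V_i(\rho)$ has size $O(\rho n)$ with high probability. Plugging into \cref{thm:congest_knode}:
\begin{itemize}
\item Certificate size: $O(r\cdot s(\rho n)) = O(r\rho\alpha n)$ arcs.
\item Round count:
\[
O\bigl(r\rho + r\rho^2 \cdot (\rho n)^2\bigr) = O\bigl(r\rho(1+\rho^3 n^2)\bigr),
\]
which is exactly the claimed bound.
\end{itemize}
Correctness of the union as a $k$-node certificate with high probability comes from \cref{lem:kvc-main}, which \cref{thm:congest_knode} already incorporates.

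\textbf{Main obstacle.} The delicate point is justifying $t(m) = O(m^2)$ inside a sampled subgraph. The subgraph $G[V_i(\rho)]$ may be disconnected, or may have a large diameter, so its collection step cannot simply be treated as costing $O(m^2)$.

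My first approach is to run the protocol separately on each weakly connected component of $G[V_i(\rho)]$. A certificate for $G[V_i(\rho)]$ is just the union of certificates for its weakly connected components, since reachability never crosses between them. Within a component with $m'$ nodes, the diameter is at most $m'$, so collection and broadcast cost $O(m'^2)$ rounds. This keeps $t(m) = O(m^2)$.

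If the bookkeeping for this per-component protocol becomes awkward, the fallback is to route through the whole network $G$ instead, at an extra additive $O(D)$ per simulated round. The paper's informal statement suggests this term is being suppressed.
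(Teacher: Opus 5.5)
Your proposal is correct and is essentially the paper's argument: instantiate \cref{thm:congest_knode} with the naive collect-everything protocol, so $s(m)=O(\alpha m)$ and $t(m)=O(m^2)$, and substitute to get $O(r\rho+r\rho^4n^2)=O(r\rho(1+\rho^3n^2))$ rounds and $O(r\rho\alpha n)$ arcs. Running the protocol separately on each weakly connected component of a possibly disconnected sample fills in a detail the paper glosses over, and it is the right choice: your fallback of routing through all of $G$ would not fit the $O(r\rho^2)$ per-arc congestion bound that \cref{thm:congest_knode} relies on.
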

Taking $\rho = n^{-2/3}$, \cref{cor:congest_knode} gives an $O(n^{2/3}\log n)$-round algorithm that computes a $k$-node strong connectivity certificate.

\subsection{SCC Decomposition}

We follow the approach of \cite{schudy2008} to find SCC decomposition in $O(\log^2 n)$ phases, each consisting of several multi-source reachability queries executed in parallel across disjoint subgraphs.\footnote{It is worth noting that the definition of multi-source reachability in \cite{schudy2008} which we adapt differs from that of \cite{parter_planar}: while the latter requires computing reachability for every pair in $S \times V$, the former only requires identifying the subset of nodes in $V$ reachable from at least one node in $S$.} We show that when the independence number $\alpha$ is small, these queries can be solved efficiently in the CONGEST model using established single-source reachability algorithms \cite{ghaffari2015, jambulapati2019}, thereby obtaining an efficient algorithm for SCC decomposition.

For completeness, we briefly describe the SCC decomposition algorithm of \cite{schudy2008}. Let $R^{+}(S)$ denote the set of nodes reachable from at least one node in $S$, and let $R^{-}(S)$ denote the set of nodes from which at least one node in $S$ is reachable. The algorithm employs a divide-and-conquer strategy: for each subproblem instance $G=(V,E)$, it generates a random permutation $\sigma$ of $V$ and identifies the smallest index $s$ such that the sum of nodes and edges in the subgraph induced by $R^{+}(\{\sigma(1), \dots, \sigma(s)\})$ is at least $(\lvert V\rvert + \lvert E\rvert)/2$. This index $s$ is determined via binary search over $\sigma$, using a multi-source reachability query for each step. 

Once $s$ is identified, the algorithm performs multi-source reachability queries to determine three sets: $A=R^{+}(\{\sigma(1), \dots, \sigma(s-1)\})$, $B=R^{+}(\sigma(s))$, and $C=R^{-}(\sigma(s)) \cap R^{+}(\sigma(s))$, where $R^{-}(\sigma(s))$ is computed by reversing the direction of all edges. As shown in \cite{schudy2008}, the SCC decomposition of $G$ can then be reduced to the decomposition of four disjoint sets: $V \setminus (A \cup B)$, $A \setminus B$, $B \setminus (A \cup C)$, and $(A \cap B) \setminus C$. The set $C$ is itself an SCC containing $\sigma(s)$. The recursion depth is $O(\log n)$ with high probability, resulting in a total of $O(\log^2 n)$ phases, where each phase consists of multi-source reachability queries executed in parallel across disjoint subgraphs.

The remainder of this section describes the implementation of the aforementioned algorithm in the CONGEST model. First, to obtain a random permutation, each node $v \in V$ independently selects a rank uniformly at random from the range $\{1, 2, \dots, n^3\}$. Under this scheme, all ranks are distinct with probability at least $1 - 1/n$. We then perform a binary search over these ranks to identify the minimum rank $t$ such that the total number of nodes and edges in the subgraph induced by $R^{+}(S)$ is at least $(\lvert V\rvert + \lvert E\rvert)/2$, where $S = \{v \in V \mid \text{rank}(v) \le t\}$.

To perform the binary search on $G$, we first execute a standard leader election protocol to select a leader $v \in V$. This leader coordinates the binary search by broadcasting rank thresholds and aggregating the resulting subgraph sizes. Notice that, without loss of generality, the communication network of each subproblem instance $G=(V,E)$ is connected. If a subproblem consists of multiple connected components, they cannot share nodes in the same SCC and can thus be processed independently.  

We construct a BFS tree $T$ rooted at $v$ via flooding. After computing multi-source reachability for a set of sources $S$, each node $u \in V$ broadcasts its membership status in $R^{+}(S)$ to its neighbors. This allows each node to determine the number of edges it contributes to the induced subgraph $G[R^{+}(S)]$. The total count of nodes and edges is then aggregated at $v$ using a convergecast over $T$. Given that the independence number of the original graph is $\alpha$, the diameter $D$ of any connected component of an induced subgraph is bounded by $O(\alpha)$. Consequently, each of these coordination operations—leader election, BFS tree construction, broadcasting, and convergecast—can be completed in $O(D) = O(\alpha)$ rounds.

We now demonstrate that the multi-source reachability problem can be reduced to single-source reachability in the CONGEST model \cite{ghaffari2015, jambulapati2019}. The core idea is to introduce a virtual super-source $\hat{s}$ that connects to all sources in the set $W$. Formally, given $G=(V,E)$ and $W \subseteq V$, let $\hat{G}$ be the augmented graph obtained by adding a new node $\hat{s}$ and directed arcs $(\hat{s}, w)$ for all $w \in W$.

Clearly, the set of nodes reachable from $W$ in $G$, denoted $R^{+}_{G}(W)$, is identical to $R^{+}_{\hat{G}}(\hat{s})$. Since $\hat{s}$ does not exist in the physical communication network, we must ensure it does not act as a relay for messages in the single-source algorithms, as simulating such a node could incur a $O(D)$ round overhead. Both \cite{ghaffari2015} and \cite{jambulapati2019} follow a framework that can be summarized as follows:

\begin{enumerate}[label={Step \arabic*.}, leftmargin=*]
    \item Sample each node with probability $p$. Let $S$ be the set of sampled nodes, and let $S' = S \cup \{\hat{s}\}$.
    \item Construct a skeleton graph $G'=(S', E_{S'})$, where an edge $(u, v)$ exists if there is a path of at most $h = \tilde{O}(1/p)$ edges from $u$ to $v$ in $\hat{G}$. This is computed by simulating $h$ BFS steps from each node in $S'$.
    \item Determine $R^{+}_{G'}(\hat{s})$ within the skeleton graph $G'$.\footnote{If $G'$ is sparse, the algorithm in \cite{ghaffari2015} broadcasts the edges of the skeleton graph to all nodes rather than explicitly simulating BFS on $G'$.}
    \item Broadcast the identities of the nodes in $R^{+}_{G'}(\hat{s})$ to the entire network $G$.
\end{enumerate}

Aside from Step 2, $\hat{s}$ is never used as a relay station. In Step 2, $\hat{s}$ is involved in the BFS simulations and could theoretically be reached by BFS steps originating from other sampled nodes. However, in our construction of $\hat{G}$, $\hat{s}$ has no incoming edges. Consequently, no node in $V$ can reach $\hat{s}$, ensuring it never serves as an intermediate relay for any messages during the simulation.

In summary, the algorithm of \cite{schudy2008} can be effectively simulated in the CONGEST model using $O(\log^2 n)$ phases. Each phase consists of multiple multi-source reachability queries (executed as single-source queries via our reduction) running in parallel across disjoint subgraphs. By combining the single-source reachability bounds from \cite{ghaffari2015, jambulapati2019} with our implementation, we obtain the following complexity result.

\begin{corollary}\label{cor:congest_scc}
    Let $G$ be a digraph whose independence number is $\alpha$. There is an $\tilde O(\sqrt{n}+n^{1/3+o(1)}\alpha^{2/3})$-round algorithm that computes an SCC decomposition for $G$ w.h.p. in the CONGEST model.
\end{corollary}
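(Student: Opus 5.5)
The plan is to plug a concrete single-source reachability routine into the Schudy-style divide-and-conquer already set up. There are $O(\log^2 n)$ phases w.h.p. Each phase runs multi-source reachability queries in parallel on node-disjoint subgraphs, together with $O(D)$-round coordination steps. So the total round complexity is $O(\log^2 n)$ times the cost of one phase. It therefore suffices to show that one multi-source reachability query on an induced subgraph $G[U]$ with $\ind(G[U])\le\alpha$ can be answered in $\tilde O(\sqrt{n}+n^{1/3+o(1)}\alpha^{2/3})$ rounds. The coordination steps (leader election, BFS tree, broadcast, convergecast) must also fit within this budget.

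\textbf{Coordination cost.} I would bound the diameter of each connected component of any induced subgraph by $O(\alpha)$. Take a shortest path $v_0,\dots,v_\ell$ in the underlying undirected graph. The nodes $v_0,v_2,v_4,\dots$ are pairwise non-adjacent, since any chord would shortcut the path. Hence $\ell/2\le\alpha$. Every coordination operation thus costs $O(\alpha)$ rounds. This is dominated by the target bound, because $\alpha\le n^{1/3}\alpha^{2/3}$ whenever $\alpha\le n$.

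\textbf{Reachability cost.} Via the virtual super-source reduction, a multi-source query becomes a single-source query in $\hat G$. Here $\hat s$ has no incoming arcs and is never used as a relay, as argued above. The simulation overhead is therefore $O(1)$ per round. Next I invoke the known single-source reachability bounds. Ghaffari--Udwani and Jambulapati et al.\ give $\tilde O(\sqrt{n}+n^{1/2+o(1)}D^{1/4}+D)$-type bounds, and the sharper form $\tilde O(\sqrt n + n^{1/3+o(1)}D^{2/3})$ comes from the skeleton-graph framework. Substituting $D=O(\alpha)$ gives $\tilde O(\sqrt n+n^{1/3+o(1)}\alpha^{2/3})$ per query.

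\textbf{Main obstacle.} The hardest step is justifying that the parallel queries on disjoint subgraphs do not interfere. A single query's complexity is stated in terms of $n$ and $D$ of the network on which it runs. The subproblems are node-disjoint induced subgraphs, so I would run them on their own components with no shared links. The relevant $D$ is the diameter of each component, which is still $O(\alpha)$ by the argument above. I must also check the Step 4 broadcast in the skeleton framework. It is performed within each component, so it costs $O(D+|S|)$ rounds there, and that is already accounted for in the cited bound.

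\textbf{Failure probability.} The failure events are that ranks are not distinct, that the recursion depth exceeds $O(\log n)$, and that the reachability routines fail. Each of these has probability $1/\mathrm{poly}(n)$. A union bound over the polynomially many queries gives the w.h.p.\ guarantee. Multiplying the per-phase cost by $O(\log^2 n)$ phases, each containing $O(\log n)$ binary-search queries, is absorbed into $\tilde O$ and yields the stated bound.
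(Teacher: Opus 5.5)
Your proposal is correct and follows the paper's argument essentially step for step: $O(\log^2 n)$ Schudy phases, and $O(\alpha)$-round coordination via the $O(\alpha)$ diameter bound for connected components of induced subgraphs (the paper only asserts this bound; your every-other-vertex argument justifies it). It then uses the virtual super-source with no incoming arcs and substitutes $D=O(\alpha)$ into the single-source reachability bound. One attribution fix: the $\tilde O(\sqrt n+n^{1/3+o(1)}D^{2/3})$ bound is the CONGEST theorem of Jambulapati, Liu and Sidford, while Ghaffari--Udwani give $\tilde O(\sqrt n\,D^{1/4}+D)$; it is not a generic consequence of the skeleton-graph framework, so cite it as that specific result.
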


\subsection{Topological Sorting}
The SCC decomposition algorithm presented in~\cite{schudy2008} also produces a topological ordering of the SCCs. In this section, we show how to implement their method in the CONGEST model, by modifying the SCC decomposition algorithm presented in the previous section. 

For a given digraph $G$, let $T(G)$ be a topological ordering of $G$. Then, using the same definition of $A$, $B$, and $C$ in the previous section,~\cite{schudy2008} shows a crucial property for topological ordering:

\begin{lemma}[{\cite[Lemma 7]{schudy2008}}]\label{lem:schudy_topo}
    The concatenation of $T(G[V \setminus (A \cup B)])$, $T(G[A \setminus B])$, $C$, $T(G[B\setminus (A\cup C)])$, $T(G[(A\cap B)\setminus C])$ is a topological ordering of the SCCs of $G$. 
\end{lemma}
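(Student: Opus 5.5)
The plan is to show that the five blocks, taken in the stated order
\[
P_1 \coloneqq V\setminus(A\cup B),\quad P_2\coloneqq A\setminus B,\quad C,\quad P_3\coloneqq B\setminus(A\cup C),\quad P_4\coloneqq (A\cap B)\setminus C,
\]
partition $V$ and that every arc of $G$ either stays inside one block or goes from an earlier block to a later one. Everything else then follows formally. That the blocks partition $V$ is immediate: since $C\subseteq B$, membership in $A$, in $B$ and in $C$ separates the cases.

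The key structural facts are the following.
\begin{itemize}
\item $A=R^{+}(\set{\sigma(1),\dots,\sigma(s-1)})$ and $B=R^{+}(\sigma(s))$ are forward-closed. That is, if $a\in A$ (resp.\ $a\in B$) and $(a,b)\in E$, then $b\in A$ (resp.\ $b\in B$), since anything reachable from a reachable node is reachable.
\item No arc enters $C$ from $B\setminus C$. If $(a,b)\in E$ with $b\in C$ and $a\in B$, then $a\reach b\reach\sigma(s)$, so $a\in R^{-}(\sigma(s))\cap R^{+}(\sigma(s))=C$.
\end{itemize}

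With these, I would do a case check on the block of the tail $a$ of an arc $(a,b)$.
\begin{itemize}
\item If $a\in P_1$, every target block is at or after $P_1$.
\item If $a\in P_2\subseteq A$, then $b\in A$. So $b$ lies in $P_2$, in $C\cap A$, or in $P_4$, all at or after $P_2$.
\item If $a\in C\subseteq B$, then $b\in B$. So $b\in C\cup P_3\cup P_4$.
\item If $a\in P_3\subseteq B\setminus C$, then $b\in B$, and by the second fact $b\notin C$. So $b\in P_3\cup P_4$.
\item If $a\in P_4$, then $b\in A\cap B$ by forward-closedness and $b\notin C$ by the second fact. So $b\in P_4$.
\end{itemize}
This case check is the only place needing care: it is exactly where the "no arc into $C$ from $B\setminus C$" observation is required, both for $P_3$ and for $P_4$. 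I expect it to be the main (though mild) obstacle.

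Finally, since all inter-block arcs point forward, any directed cycle lies inside a single block. Hence every SCC of $G$ is contained in one block, and the SCCs of $G$ inside a block $P$ coincide with the SCCs of $G[P]$, because paths between nodes of $P$ cannot leave $P$ and return. The set $C$ is itself a single SCC, namely that of $\sigma(s)$. Therefore concatenating topological orderings of the SCCs of each $G[P_i]$, with $C$ placed in its slot, respects every arc between distinct SCCs: arcs inside a block are handled by the recursive ordering, and arcs between blocks go forward. This proves that the concatenation is a topological ordering of the SCCs of $G$.
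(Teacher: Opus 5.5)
Your proof is correct: the out-closure of $A$ and $B$, together with the observation that any arc into $C$ whose tail lies in $B$ must start in $C$, shows that every inter-block arc points forward, and the conclusion about the SCCs then follows exactly as you describe. The paper gives no proof of its own and imports the lemma from \cite{schudy2008}; your argument is the standard verification behind that result, and it does not even need $C\cap A=\emptyset$, which holds anyway in the algorithm by the minimality of $s$.
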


We now describe how to modify the SCC decomposition algorithm to perform topological sorting of the SCCs. Each node $v$ initializes a counter $c_v \coloneqq 1$, which it maintains and updates throughout the recursive process. For each subproblem instance $G=(V,E)$, before splitting into subproblems, the sizes of the five sets—$V \setminus (A \cup B)$, $A \setminus B$, $C$, $B \setminus (A \cup C)$, and $(A \cap B) \setminus C$—are aggregated at the leader of $G$ via convergecasts over $T$. The leader then broadcasts these five values to all nodes in $V$. Each node increases its counter by the total number of nodes in all subproblems that precede its own in the topological ordering:

\begin{itemize}
    \item If $v \in V \setminus (A \cup B)$, $c_v \leftarrow c_v$.
    \item If $v \in A \setminus B$, $c_v \leftarrow c_v + \lvert V \setminus (A \cup B) \rvert$.
    \item If $v \in C$, $c_v \leftarrow c_v + \lvert V \setminus (A \cup B) \rvert + \lvert A \setminus B \rvert$.
    \item If $v \in B \setminus (A \cup C)$, $c_v \leftarrow c_v + \lvert V \setminus (A \cup B) \rvert + \lvert A \setminus B \rvert + \lvert C \rvert$.
    \item If $v \in (A \cap B) \setminus C$, $c_v \leftarrow c_v + \lvert V \setminus (A \cup B) \rvert + \lvert A \setminus B \rvert + \lvert C \rvert + \lvert B \setminus (A \cup C) \rvert$.
\end{itemize}

This recursive division continues until every subproblem consists of a single SCC. Upon termination, the counter $c_v$ for each node $v$ keeps the total number of nodes belonging to all preceding SCCs. Consequently, setting the rank $t_v = c_v$ yields a valid topological sorting of the SCCs. Integrating this ranking scheme with the divide-and-conquer algorithm in the previous section, we obtain the following result:


\begin{corollary}\label{cor:congest_topo}
    Let $G$ be a digraph with independence number $\alpha$. There is an $\tilde O(\sqrt{n}+n^{1/3+o(1)}\alpha^{2/3})$-round algorithm that computes a topological ordering for $G$ w.h.p. in the CONGEST model.
\end{corollary}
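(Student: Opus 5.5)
The plan is to reuse the SCC decomposition algorithm behind \cref{cor:congest_scc} unchanged, and to add the counter bookkeeping described above. I would then argue two things: (i) the final counters $t_v=c_v$ satisfy the topological-sorting requirement, using \cref{lem:schudy_topo}; (ii) the extra communication costs no more than the round bound of \cref{cor:congest_scc}.

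For correctness, I would argue by induction on the recursion, from the leaves upward. The invariant for a subproblem instance $G'=(V',E')$ is the following. Suppose each node $v\in V'$ enters $G'$ with the common offset $c$, and let $v$'s SCC occupy some position in the order $T(G')$. Then the counter at termination equals $c$ plus the number of nodes of $V'$ lying in SCCs that precede $v$'s SCC in $T(G')$.

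\begin{itemize}
\item \emph{Base case.} A subproblem that is a single SCC adds nothing, so the invariant holds trivially.
\item \emph{Inductive step.} The five increments listed above add, for each of the five parts, exactly the sizes of the parts placed before it in the concatenation order of \cref{lem:schudy_topo}. Combining this with the induction hypothesis inside each part gives the invariant for $G'$.
\end{itemize}

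Applying the invariant at the root with $c=1$, every node $v$ receives $1$ plus the number of nodes in SCCs preceding its own in a valid topological ordering of the SCCs. Three consequences follow.
\begin{itemize}
\item Nodes of the same SCC always land in the same part, so they get equal ranks.
\item For an arc $(a,b)$ between different SCCs, the SCC of $a$ precedes that of $b$, so $t_b\ge t_a+|{\rm SCC}(a)|>t_a$.
\item All ranks lie in $[n]$.
\end{itemize}

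One subtlety is subproblems whose communication graph is disconnected. There I would run the procedure independently in each connected component, each with the inherited offset. Distinct components have no arcs between them, so colliding ranks across components never violate the requirement, which only constrains pairs joined by an arc. Within each component the invariant above applies verbatim.

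For the round complexity, the only additions per subproblem are one convergecast of five integers over the BFS tree $T$ and one broadcast of five integers. Each takes $O(D)=O(\alpha)$ rounds, using the diameter bound for connected components of induced subgraphs already used for \cref{cor:congest_scc}. Subproblems at the same recursion level are disjoint, so they run in parallel. With recursion depth $O(\log n)$ w.h.p., the total overhead is $O(\alpha\log n)$. Since $\alpha\le n$, we have $\alpha=\alpha^{1/3}\alpha^{2/3}\le n^{1/3}\alpha^{2/3}$, so this overhead is absorbed into $\tilde O(\sqrt n+n^{1/3+o(1)}\alpha^{2/3})$.

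The step needing the most care is the claim that every connected component of every subproblem has diameter $O(\alpha)$, which the convergecast cost relies on. I would justify it as follows. Take a shortest path $v_0,v_1,\ldots,v_\ell$ in the underlying undirected graph. The nodes $v_0,v_2,v_4,\ldots$ are pairwise non-adjacent, since any chord would shorten the path. They therefore form an independent set of size about $\ell/2$, giving $\ell\le 2\alpha$. This bound holds in any induced subgraph, because induced subgraphs have independence number at most $\alpha$.
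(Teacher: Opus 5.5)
Your proposal is correct and follows the paper's argument. You reuse the SCC algorithm of \cref{cor:congest_scc} and maintain the counters $c_v$ via one convergecast and one broadcast of the five part sizes per subproblem, which costs $O(D)=O(\alpha)$ rounds per level and is absorbed into the bound. Correctness then comes from \cref{lem:schudy_topo}.

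Your explicit induction, the handling of disconnected subproblems, and the shortest-path proof of the $O(\alpha)$ diameter bound fill in details the paper leaves implicit. For disconnected subproblems, the invariant that actually survives is weaker than the exact "$c$ plus preceding nodes" formula. It is that ranks in a subproblem entered with offset $c$ lie in $[c,c+|V'|-1]$, with equal ranks within an SCC and arcs going strictly forward, and this is all your three consequences need.
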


%% file: appendix.tex
\section{Additional Notation and Preliminaries}
\label{apx:notations}
\subsection{Streaming Models}
Our model of computation follows the \emph{graph streaming model}~\cite{Muthu05,McGregor14}. The arcs of the input graph $G = (V, A)$ are presented in an order determined by an adversary who has access to the algorithm but not its random seed. The algorithm processes the arcs sequentially, making up to $p$ passes over the input while using at most $s$ space. During each pass, the algorithm can only read the stream forward and cannot backtrack. We refer to such an algorithm as a $p$-pass $s$-space streaming algorithm. If the arcs are static, meaning only arc insertions occur in the stream, the model is called the \emph{insertion-only model}. In contrast, if both arc insertions and deletions are allowed, where deletions can only remove previously inserted arcs, the model is referred to as the \emph{turnstile model}. 

\subsection{Graph Theory}\label{apx:graph}
For a digraph $G=(V,A)$ and a subset of nodes $S\subseteq V$, we denote $G[S]$ the subgraph of $G$ induced by $S$.
For a subset of nodes $V'\subseteq V$, we denote $G\setminus V'$ the subgraph obtained by deleting the nodes in $V'$ and all their incident arcs, i.e. $G\setminus V' = G[V\setminus V']$. For a subset of arcs $A'\subseteq A$, we denote $G\setminus A'$ the subgraph obtained by deleting the arcs in $A'$, i.e. $G\setminus A' = (V,A\setminus A')$. For two digraphs $G=(V,A)$ and $H=(V,A')$ on the same node set $V$, we write
$G\cup H \coloneqq (V,\, A\cup A')$, with duplicate arcs removed.

When we use notions standard for undirected graphs, such as \emph{independence number}, \emph{arboricity}, and \emph{degeneracy}, we mean these quantities for the underlying undirected graph of $G$.     

For a set of nodes $S$, we denote $\delta^{+}_{G}(S)$ (resp. $\delta^{-}_{G}(S)$) the set of incoming (resp. outgoing) edges of $S$ in $G$, the subscript $G$ is omitted when it is clear from context. We use $\delta^{+}_{G}(v)$ in place of $\delta^{+}_{G}(\{v\})$ for a singleton.

For a digraph, a subgraph is spanning if it includes all the nodes in the original digraph.
\begin{definition}[Spanning out-branching/in-branching] \label{def:spanning-branching}
    For a designated node $r\in V$, a \emph{spanning out-branching rooted at $r$} is a subgraph $T=(V,A')$ such that $|\delta^{+}_{T}(v)|=1$ for all $v\in V\setminus\set{r}$. We implicitly assume that the root satisfies $\delta^+_G(r) = \emptyset$. A spanning in-branching is defined similarly with $\delta^+_G(\cdot)$ replaced by $\delta^-_G(\cdot)$.
\end{definition}

A digraph is \emph{strongly connected} if for any two nodes $u$ and $v$, $u$ is reachable from $v$ and $v$ is reachable from $u$. 
For a strongly connected digraph, a \emph{minimum spanning strong subgraph} is a strongly connected spanning subgraph with the minimum number of edges. A \emph{strong bridge} of a digraph $G$ is an arc $e$ whose removal increases the number of strongly connected components of $G$.

The \emph{transitive closure} of a digraph $G=(V,A)$ is a digraph $G^{\textup{tc}}=(V,A^{\textup{tc}})$ such that for every $u,v\in V$, $(u,v)\in A^{\textup{tc}}$ iff $u\reach v$. A \emph{chain} of $G$ is a sequence of nodes which forms a directed path in $G^{\textup{tc}}$, and a \emph{chain cover} is a collection of chains that partitions the node set $V$.
A \emph{topological ordering} of a directed acyclic graph (DAG) $G$ is an ordering $(v_1, v_2, \ldots, v_n)$ of the nodes in $V$ such that $v_j\not\reach v_i$ for any $j > i$.

\subsection{Communication Complexity}
As in much of the streaming lower-bound literature, our hardness results are obtained via reductions from communication problems. We summarize the essentials of communication complexity needed for this work and refer the reader to \cite{KN96,Rou16} for further details.

We focus exclusively on the two-player communication model. For a function $F:\mathcal{X}\times \mathcal{Y}\to \mathcal{Z}$, Alice receives $x\in \mathcal{X}$ and Bob receives $y\in \mathcal{Y}$; their goal is to compute $F(x,y)$ while exchanging as few bits as possible. The randomized communication complexity with error $\delta$, denoted $\Rcc_\delta(F)$, is the minimum number of bits exchanged by any protocol that outputs $F(x,y)$ correctly with probability at least $1-\delta$ on every input. We omit $\delta$ when using the canonical value $\delta=1/3$. For convenience in information-complexity arguments, we work with private randomness; switching to public randomness only incurs a polylogarithmic overhead by Newman's theorem \cite{New91}.

A particular type of communication problem, namely \emph{decomposable function}, will be central to our communication-to-streaming reductions.
\begin{definition}[Decomposable function] \label{def:decomposable}
    $F:(\mathcal{X}\times \mathcal{Y})^n\to \set{0,1}$ is a decomposable function if it can be written as $F(\vec{x},\vec{y}) = g(h(x_1,y_1),\ldots,h(x_n,y_n))$ for some $g:\set{0,1}^n\to \set{0,1}$ and $h:\mathcal{X}\times \mathcal{Y}\to \set{0,1}$. We also use the notation $F=g\circ^n h$ for the above decomposition.
\end{definition}

\section{Formal Definition of Problems}\label{apx:prob-def}
We provide the formal definitions of the graph problems studied in this work. In each of these problems, the input digraph $G$ has $n$ nodes.
\begin{itemize}
    \item $3\text{-}\CYCLE$: decide whether $G$ contained a directed 3-cycle.
    \item $\REACH$: for a given source $s$ and sink $t$, decide whether there is a directed path from $s$ to $t$.
    \item $\PairReach$: on $n$ pairs of nodes $\set{(u_i,v_i)}_{i=1}^n$, output $w\in\set{0,1}^n$, where $w_i=1$ iff there is a directed path from $u_i$ to $v_i$.
\end{itemize}
The following problems are search problems. For problems where a solution may not exist, a valid algorithm should return $\perp$. With a slight abuse of notation, we sometimes use the same symbol for both the search problem and its decision version, the intended meaning being clear from context.
\begin{itemize}
    \item $\HamCycle$: output a Hamiltonian cycle of $G$ if exists.
    \item $\HamPath$: output a Hamiltonian path of $G$ if exists.
    \item $\OutBranch$: output a spanning out-branching of $G$ rooted at a designated root $r$ if exists.
    \item $\InBranch$: output a spanning in-branching of $G$ rooted at a designated root $r$ if exists.
    \item $k\text{-}\DisjBranch$: output $k$ arc-disjoint spanning out-branchings of $G$ rooted at a mutual designated node $r$ if exist.
    \item $k\text{-}\AConnCert$: output a $k$-arc strong connectivity certificate.
    \item $k\text{-}\NConnCert$: output a $k$-node strong connectivity certificate. In view of \cref{thm:node-vs-arc}, we also use $k\text{-}\ConnCert$ when we do not specify for node-connectivity or arc-connectivity.
    \item $\SCC$: output an SCC decomposition of $G$.
    \item $\MSSS$ : output a minimum spanning strong subgraph of $G$ if exists.
    \item $\StrBridge$: output all strong bridges of $G$.
    \item $2\text{-}\indBranch$: output a pair of $r$-independent spanning out-branchings of $G$ rooted at a mutual designated root $r$ if exist.
    \item $\TranClos$: output all arcs of the transitive closure of $G$. Notice that the output may consist of more arcs than the original graph, so the output graph is assumed to be stored in a separate write-only space.
\end{itemize}
The following problems require the input digraph $G$ to be acyclic:
\begin{itemize}
    \item $\TopoSort$: output a topological ordering of $G$.
    \item $\MCC$: output a minimum chain cover of $G$.
\end{itemize}
The following problem requires the input digraph $G$ to be strongly-connected:
\begin{itemize}
    \item $d\text{-}\DistDom$: output an $O(n/d)$-node distance-$d$ dominating set of $G$.
\end{itemize}

In the following problem, for a parameter $M>1$, the input is an $n$-node weighted digraph $G$ with arc weights between 1 and $M$. In the data stream, an arc is inserted together with its weight.
\begin{itemize}
    \item $\MaxW\OutBranch^{[M]}$: output the spanning out-branching of $G$ rooted at a designated root $r$ (if exists) with the maximum total weight.
\end{itemize}
The problems $\MaxW\InBranch^{[M]}$, $\MinW\OutBranch^{[M]}$, $\MinW\OutBranch^{[M]}$ are defined similarly.

In the following problem, the input is an $n$-node undirected edge-coloured graph $G$. In the data stream, an edge is inserted together with its colour.
\begin{itemize}
    \item $\MaxColorTree$: output a spanning tree of $G$ together with the edge colours, such that the number of colours used is maximized.
\end{itemize}

\subsection{Additional notations for Streaming Problems}
For an optimization problem that outputs an optimal object $C^*$, and a ratio $\rho\in [1,\infty)$, we define the \emph{$\rho$-approximation} to be the problem of returning an object $C$ whose objective value, $OBJ(C)$, is within a multiplicative factor of $\rho$ from $OBJ(C^*)$. Obviously, there is always one side of the range equal to $OBJ(C^*)$; our definition here does not distinguish between a minimization or maximization problem. 
We append the suffix ``$r\apx$'' for the corresponding approximation problem.

We append $n$ as a subscript to a problem whenever we want to emphasize the number of nodes in the digraph.

We append $\alpha$ as a superscript to a problem to the restricted problem that the independence number of the input graph is at most $\alpha$. 
For example, $\HamPath_n$ refers to finding a Hamiltonian path on an $n$-node graph, and $\HamPath_n^\alpha$ refers to finding a Hamiltonian path on an $n$-node graph with independence number at most $\alpha$.

\subsection{2-Satisfiability}
The non-graph problem of $\SAT$ can be modelled as a digraph problem and thus fits within the independence-number parameterization framework. Recall that an $n$-variable $\SAT$ instance is a boolean formula $F$ in the form of $\bigwedge^m_{i=1} (\ell_{1,i}\vee \ell_{2,i})$, where each $\ell_{b,i}$ is a literal of one of the $n$ variables. 
A $\SAT$ instance $F$ is associated with the \emph{implication graph} $G_F$.
$G_F$ consists of $x$ and $\neg x$ for each variable in $F$, giving a total of $2n$ nodes. 
For each clause $(\ell\vee \ell')$ of $F$ on two literals $\ell$ and $\ell'$, construct two arcs $(\neg \ell,\ell')$ and $(\neg \ell', \ell)$.

Naturally, we can extend the notion of independence to $\SAT$ formulas. We define the \emph{stability number} of a $\SAT$ formula $F$ as the independence number of $G_F$.
Following the notations of the other graph streaming problems, we denote: 
\begin{itemize}
    \item $\SAT_n$: output a satisfiable assignment (if one exists) for an $n$-variable $\SAT$ formula.
\end{itemize}
We append the superscript $\alpha$ to indicate the restriction that the stability number is at most $\alpha$.

\section{Streaming Algorithm for $k$-Arc Strong Connectivity Certificates}\label{sec:karc}
In this section, we present a streaming algorithm that computes a $k$-arc strong connectivity certificate directly. As noted in the remark succeeding \cref{thm:node-vs-arc}, our $k$-node strong connectivity certificate algorithm (\cref{algo:kconn}) outperforms the algorithm described here. Thus, \cref{thm:node-vs-arc} implies that a more efficient $k$-arc strong connectivity certificate can be obtained simply by using the $k$-node strong connectivity approach.

The main idea for the arc-strong connectivity certificate algorithm is similar to that of the node-strong connectivity algorithm: adopting the sampling approach \cite{GuhaMT15,AssadiS23} reduces the task of computing a $k$-arc strong connectivity certificate to multiple independent instances of $1$-arc strong connectivity certificates, which can then be executed in parallel. 
Since our goal is to compute an arc-strong connectivity certificate, the most natural modification to \cref{algo:kconn} is to replace the independent node samples with independent arc samples. For the sampling probability parameter $\rho$, we denote $A_i(\rho)$ the $i$-th random subset of $A(G)$ obtained by including each $a\in A(G)$ with probability $\rho$ and independent from other $A_j(\rho)$'s. The algorithm is formalized as follows. 

\begin{figure}[hbtp!]
\begin{algorithm}[H]
    \setlength{\leftskip}{0.75cm}
    \setlength{\parindent}{-0.75cm}
    \caption{$k$-arc Ctrong Connectivity Certificate Algorithm}\label{algo:karcconn}
    \KwIn{A digraph $G=(V,A)$, sampling probability $\rho$, depth parameter $p$}
    Set $r'=O(\rho^{-1}\log n)$\;
    Sample $A_1(\rho),\ldots,A_{r'}(\rho)$ independently\;
    \For{$i=1,\ldots,r'$}{
        Apply {\OneConnCertAlgo} (\cref{algo:1conn}) of depth $p$ on $G_i=(V(G),A_i(\rho))$ to obtain a $1$-strong connectivity certificate $Q_i$\; 
    }
    \KwRet{$Q = \bigcup_{i=1}^{r'} Q_i$}\;
\end{algorithm}
\caption{Randomized algorithm for $k$-arc strong connectivity certificate}
\end{figure}

As before, one can adapt the algorithm into the turnstile model by incorporating the multi-ary search described in \cref{sec:turnstile}. The same argument applies here, so we omit the discussion for the turnstile model.

\begin{lemma}\label{lem:arc-algo}
Let $G=(V,A)$ be an $n$-node digraph with $\ind(G)\leq \alpha$. For $p\in \dbN$ and $k \ge 2$, with sampling probability $\rho=1/k$, \cref{algo:karcconn} computes a $k$-arc strong connectivity certificate of $G$ of size $O(k^2\alpha n\log^2 n)$ with probability $1 - 1/n^{\Omega(1)}$ by a $p$-pass randomized algorithm using:
\begin{enumerate}[label=\textup{(\alph*)}]
    \item $O(k^2 \alpha n^{1+1/p} \log^2 n )$ space in the insertion-only model; and
    \item $O(k^2 \alpha n^{1+O(1/\sqrt{p})} \log^2 n )$ space in the turnstile model.
\end{enumerate}
\end{lemma}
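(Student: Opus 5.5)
Two things need proving: correctness (the union $Q$ is a $k$-arc strong connectivity certificate with high probability) and resources (certificate size, space, passes). The correctness argument will mirror \cref{lem:kvc-main}, with arc cuts in place of node cuts. I will take $r'=c\,k\log n$ samples for a suitable constant $c$, using $\rho=1/k$.

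\textbf{Step 1: reduce to a per-arc property.} I will show that, with high probability, every arc $(a,b)\in A(G)$ satisfies one of two properties. (a) If $\lambda_{ab}(G)\ge 2k$, then $\lambda_{ab}(Q)\ge k$. (b) If $\lambda_{ab}(G)<2k$, then $(a,b)\in A(Q)$. The reduction from these properties to the certificate property follows the proof of \cref{lem:kvc-main}, translated to arc cuts.

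Suppose some pair $s,t$ has $\lambda_{st}(Q)<\min\{k,\lambda_{st}(G)\}$. By Menger's theorem there is an arc cut $(S,T)$ in $Q$ with fewer than $\min\{k,\lambda_{st}(G)\}$ crossing arcs. Since $G$ has more crossing arcs over this cut, some arc $(u,v)\in A(G)\setminus A(Q)$ goes from $S$ to $T$. By (b), this arc has $\lambda_{uv}(G)\ge 2k$, and then by (a), $\lambda_{uv}(Q)\ge k$. But every $u$--$v$ path in $Q$ must cross the cut, which has fewer than $k$ arcs, so $\lambda_{uv}(Q)<k$. This is a contradiction.

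\textbf{Step 2: verify (b) and (a).} For (b), fix an arc $e=(a,b)$ with $\lambda_{ab}(G)<2k$. Then $G\setminus\{e\}$ has an $a$--$b$ arc cut $F$ with $|F|\le 2k-2$. In sample $i$, with probability $\rho(1-\rho)^{2k-2}\ge \rho/16$, the arc $e$ is sampled and no arc of $F$ is. When that happens, $e$ is the only arc of $G_i$ crossing the cut, so $Q_i$ must contain it. The probability that this fails in all $r'$ samples is at most $(1-\rho/16)^{r'}\le n^{-\Omega(c)}$. A union bound over all $n^2$ arcs then gives (b).

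For (a), fix $(a,b)$ and an arbitrary set $F$ of at most $k-1$ arcs. Removing $e$ and $F$ still leaves at least $k$ arc-disjoint $a$--$b$ paths, so it suffices that some sample $i$ avoids $F\cup\{e\}$ and some such path survives through sampled arcs. Here the plan needs adjusting, because a whole path survives in a single arc sample only with probability $\rho^{\text{length}}$, which is far too small. Instead I will use the cut characterization. Fix an $(X,Y)$ cut with $a\in X$, $b\in Y$. In $G\setminus\{e\}$ this cut has at least $2k-1$ crossing arcs. For each crossing arc $f\notin F$, some sample contains $f$, avoids $F\cup\{e\}$, and then $Q_i$ contains an $a$-side-to-$b$-side connection whenever $G_i$ has one. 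The issue is that this connectivity must also hold inside $G_i$, not just in $G$.

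\textbf{Main obstacle.} The hard part is a union bound over exponentially many cuts. I expect to resolve it with a Karger-style cut-counting argument: the number of cuts of value at most $\beta$ times the minimum is at most $n^{O(\beta)}$. Combined with the observation that each sample $G_i$ preserves each fixed crossing arc with probability $\Omega(\rho)$, and boosting over the $r'=O(k\log n)$ samples, this should give (a). The extra $\log n$ factor in the size bound, together with $r'\cdot\rho=O(\log n)$ copies of each arc, supplies the headroom for this boosting.

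\textbf{Step 3: resources.} All $r'$ invocations of \cref{algo:1conn} run in parallel within $p$ passes. Each $G_i$ is a subgraph of $G$, so $\ind(G_i)$ may grow compared with $\ind(G)$. To control this I will apply \cref{lem:smallincrease} to the complement $G\setminus G_i$, or else bound $\ind(G_i)=O(k\alpha\log n)$ with high probability by a Chernoff bound over the missing arcs. By \cref{lem:alg-insert}, each run then uses $O(k\alpha\log n\cdot n^{1+1/p})$ space and outputs $O(k\alpha n\log n)$ arcs. Multiplying by $r'=O(k\log n)$ gives the stated size and space bounds, and the turnstile bound follows in the same way via \cref{sec:turnstile}.
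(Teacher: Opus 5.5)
\textbf{Property (a) is not proved.} Your Step 1 reduction and your proof of Property (b) are correct; Step 1 is even slightly cleaner than the node version, since no arc needs to be deleted first. The obstacle you raise for (a) comes from requiring a surviving path to lie inside a single sample, and nothing requires that. The arc analogue of \cref{clm:survival} works arc by arc. If $f=(u,v)\in A_i$ and $A_i\cap F=\emptyset$, then $f$ itself is a $u$--$v$ path in $G_i$. So $Q_i\subseteq G_i\subseteq G\setminus F$ contains a $u$--$v$ path avoiding $F$. Hence, if every arc of some $a$--$b$ path of $G\setminus F$ lies in some $F$-avoiding sample (possibly a different one for each arc), then $Q\setminus F$ contains an $a$--$b$ path. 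This also removes your worry that the connectivity must hold ``inside $G_i$'': the arc is its own connection.

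The remedy you propose does not work as stated. The $n^{O(\beta)}$ bound on near-minimum cuts is Karger's bound for global cuts of undirected graphs, and it fails for $a$--$b$ cuts, directed or not. For example, the digraph with arcs $(a,v_j)$ and $(v_j,b)$ for $j\in[m]$ has $2^m$ minimum $a$--$b$ cuts, each with a distinct crossing set. A cut-based union bound can be rescued only by the trivial count of at most $n^{2j}$ crossing sets of size $j$, weighed against a conditional failure probability $n^{-\Omega(cj)}$, which is the same per-arc calculation given below. As written, the step is only asserted (``this should give (a)'').

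\textbf{The missing calculation.} This is what the paper means by repeating \cref{clm:R-start,clm:R-middle} with a simplified case analysis; no start/middle/end split is needed, because endpoints play no special role for arc cuts.
\begin{itemize}
\item Fix $(a,b)$ and $F$ with $|F|\le k-1$, and let $R_F=\{i: A_i\cap F=\emptyset\}$. Each index lies in $R_F$ independently with probability $(1-\rho)^{|F|}\ge 1/4$. Hence $\Pr[|R_F|<r'/8]\le e^{-r'/32}=n^{-\Omega(ck)}$ for $r'=ck\log n$.
\item Condition on the memberships of the arcs of $F$, which determine $R_F$. The memberships of all other arcs remain independent with probability $\rho$. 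So a fixed arc outside $F$ misses every sample indexed by $R_F$ with probability at most $(1-\rho)^{r'/8}\le n^{-c/8}$, and a fixed path fails with probability at most $n^{1-c/8}$.
\item Since $\lambda_{ab}(G\setminus F)\ge 2k-(k-1)=k+1$, there are $k+1$ arc-disjoint $a$--$b$ paths in $G\setminus F$, and their failures are conditionally independent. So all of them fail with probability $n^{-\Omega(ck)}$.
\item A union bound over the $n^{O(k)}$ triples $(a,b,F)$ gives (a) with $c$ a constant.
\end{itemize}

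\textbf{Step 3: the first option fails.} $G\setminus G_i$ retains a $(1-1/k)$-fraction of the arcs and can have degeneracy $\Theta(n)$, so \cref{lem:smallincrease} gives nothing. The bound $\ind(G_i)=O(k\alpha\log n)$ is \cref{lem:alphaincrease}, and the argument is not a Chernoff bound. By Tur\'an's theorem, every node set $B$ with $|B|\ge\alpha(1+4k\log n)$ contains at least $2|B|k\log n$ adjacent pairs of $G$. So $B$ stays independent in $G_i$ with probability at most $n^{-2|B|}$, and a union bound over the at most $n^{|B|}$ such sets finishes the proof. With these two pieces in place, your size, space and pass accounting is correct.
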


The key technical claim for \cref{algo:karcconn} is similar to \cref{clm:kvc}.
\begin{claim}\label{clm:kac}
    Let $k\geq 2$, $\rho\in (0,1/k]$ and $n$ be sufficiently large. Write $r'=\frac{192\lambda}{\rho}\log n$ for some $\lambda\geq 1$. With probability $1 - n^{-\lambda}$, every arc $(a,b)\in A(G)$ satisfies the following properties:
    \begin{enumerate}[label=\textup{(\alph*)}]
        \item\label{prop.a-kac} If $\lambda_{ab}(G) \ge 2k$, then $\lambda_{ab}(Q) \ge k$;
        \item\label{prop.b-kac} If $\lambda_{ab}(G) < 2k$, then $(a, b) \in A(Q)$.
    \end{enumerate}
\end{claim}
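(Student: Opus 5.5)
Here $Q_i$ denotes a $1$-strong connectivity certificate of $G_i=(V,A_i(\rho))$ and $Q=\bigcup_{i\in[r']}Q_i$. I would mirror the proof of \cref{clm:kvc}, replacing node sampling by arc sampling. The structure has three steps: prove Property \ref{prop.a-kac} via surviving paths, prove Property \ref{prop.b-kac} via a cut argument, and finish with a union bound over all arcs $(a,b)$ and all small arc sets $F$ (the analogue of $X$). The combinatorial skeleton is easier than in the node case. For arc-disjoint paths, an arc of a path in $G$ is ``preserved'' as soon as it lies in some sample $A_i(\rho)$ that avoids the deleted arc set $F$ and the arc $(a,b)$. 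In that case $Q_i$ contains a $u$--$v$ path avoiding $F\cup\set{(a,b)}$, because $G_i$ contains neither.

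For Property \ref{prop.a-kac}, fix $(a,b)$ with $\lambda_{ab}(G)\ge 2k$ and an arc set $F\subseteq A(G)\setminus\set{(a,b)}$ with $|F|\le k-1$. By Menger's theorem (\cref{thm:menger}), $G\setminus(F\cup\set{(a,b)})$ contains $k$ pairwise arc-disjoint $a$--$b$ paths $P_1,\ldots,P_k$. Let $R_F=\set{i: A_i(\rho)\cap(F\cup\set{(a,b)})=\emptyset}$. Then $\Ex|R_F|=r'(1-\rho)^{|F|+1}\ge r'/4$, since $\rho\le 1/k$, and Chernoff gives $|R_F|\ge r'/8$ except with probability $\exp(-r'/32)$. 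This bound is tiny: with $r'=192\lambda\rho^{-1}\log n$ it is $n^{-6\lambda k}$. Given $|R_F|\ge r'/8$, a fixed arc $e$ of some $P_j$ lies in no $A_i(\rho)$ with $i\in R_F$ with probability $(1-\rho)^{r'/8}\le n^{-24\lambda}$, independently across arcs. A union bound over at most $n^2$ arcs shows that $P_1$ alone survives except with probability $n^{-20\lambda}$. Concatenating the $Q_i$-paths, as in \cref{clm:survival}, yields an $a$--$b$ path in $Q\setminus F$. By Menger, $\lambda_{ab}(Q)\ge k$ unless some $F$ fails.

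The main obstacle is that union bound. There are up to $n^{2(k-1)}$ choices of $F$ and $n^2$ choices of $(a,b)$, so the per-$F$ failure probability must be at most about $n^{-2k-\lambda}$. The $\exp(-r'/32)$ term is fine, but the path-survival term $n^{-20\lambda}$ is not $k$-dependent. To fix this I would use all $k$ disjoint paths, as in \cref{clm:R-middle}. Their arc sets are disjoint, so the survival events are mutually independent. The probability that all $k$ paths die is then at most $n^{-20\lambda k}$, which beats $n^{2k+2}$ when $\lambda\ge 1$. Only one surviving path is needed, so niceness thresholds are unnecessary.

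For Property \ref{prop.b-kac}, take $(a,b)$ with $\lambda_{ab}(G)<2k$. By Menger there is an arc cut $(S,T)$ with $a\in S$, $b\in T$, such that $(a,b)$ together with a set $C$ of at most $2k-2$ other arcs forms all arcs from $S$ to $T$. For each $i$, the event that $(a,b)\in A_i(\rho)$ and $C\cap A_i(\rho)=\emptyset$ has probability at least $\rho(1-\rho)^{2k}\ge\rho/16$. On this event, $(a,b)$ is the only $S\to T$ arc of $G_i$, and $b$ is reachable from $a$ in $G_i$, so $Q_i\ni(a,b)$. Therefore $\Pr[(a,b)\notin Q]\le(1-\rho/16)^{r'}\le n^{-12\lambda}$. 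A union bound over $n^2$ arcs finishes the proof, giving total failure probability at most $n^{-\lambda}$.
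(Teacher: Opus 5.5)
Your proof is correct and follows the route the paper indicates (it omits the details, saying they are the calculations of \cref{clm:R-start,clm:R-middle} with a simplified case analysis): a single index set $R_F$ replaces the start/middle/end split, independence across the $k$ arc-disjoint paths absorbs the union bound over $F$, and the cut argument gives Property~(b). The only point worth stating explicitly is that the survival events are independent \emph{conditionally on} $R_F$; this holds because the path arcs are disjoint from $F\cup\set{(a,b)}$, which is also why $R_F$ leaves their sampling untouched.
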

This claim is proven with essentially the same probabilistic calculations as in \cref{clm:R-start,clm:R-middle} but with a much simplified case analysis. We omit the details here.

The crucial difference between the arc-connectivity and node-connectivity cases, which ultimately accounts for the inferior performance of \cref{algo:karcconn}, is that, unlike the induced subgraphs $G[V_i(\rho)]$ sampled in the node-strong connectivity certificate algorithm (\cref{algo:kconn}), the random arc samples in \cref{algo:karcconn} create graphs $G_i$ with potentially higher independence numbers.

Through a careful analysis, we can show that the independence number of $G_i$ remains bounded with high probability, even if it may exceed $\alpha$. This upper bound on the independence number is essential for the efficiency of executing the $1$-strong connectivity certificate algorithm.
\begin{lemma}\label{lem:alphaincrease}
    Let $G$ be an $n$-node \emph{undirected} graph with independence number $\alpha$. For $\rho\in (0,1)$, let $G(\rho)$ be a random subgraph of $G$ such that each arc in $G$ is included in $G(\rho)$ with probability $\rho$ independently. Then, with probability $1-1/n^{\Omega(1)}$, $G(\rho)$ has independence number $O(\alpha\rho^{-1} \log n)$.
\end{lemma}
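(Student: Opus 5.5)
The plan is a first-moment (union bound) argument over candidate independent sets, driven by a Turán-type edge count. Set $s \coloneqq \lceil C\alpha\rho^{-1}\log n\rceil$ for a constant $C$ to be fixed later. If $s>n$ there is nothing to prove, so assume $s\le n$. If $G(\rho)$ has an independent set of size at least $s$, then it has one of size exactly $s$, since every subset of an independent set is independent. Hence it suffices to show that, with high probability, no $s$-subset $S\subseteq V(G)$ is independent in $G(\rho)$.

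First, fix an $s$-subset $S$. The induced subgraph $G[S]$ has independence number at most $\alpha$. By Turán's theorem in complementary form (equivalently, by the Caro--Wei bound $\alpha(H)\ge \sum_v 1/(d(v)+1)$ combined with convexity), any $s$-node graph with independence number at most $\alpha$ has at least
\[
\frac{s}{2}\left(\frac{s}{\alpha}-1\right)\;\ge\;\frac{s^2}{4\alpha}
\]
edges. The inequality uses $s\ge 2\alpha$, which holds once $C\ge 2$ because $\rho^{-1}\log n\ge 1$.

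Next, $S$ is independent in $G(\rho)$ only if every one of these edges is dropped. The edges are retained independently, so this happens with probability at most
\[
(1-\rho)^{s^2/(4\alpha)}\le \exp\left(-\frac{\rho s^2}{4\alpha}\right).
\]
Taking a union bound over the at most $\binom{n}{s}\le n^s=\exp(s\log n)$ choices of $S$, the failure probability is at most
\[
\exp\left(s\log n-\frac{\rho s^2}{4\alpha}\right)=\exp\left(-s\log n\left(\frac{\rho s}{4\alpha\log n}-1\right)\right).
\]
With $C=8$ the bracket is at least $1$, so the bound is $\exp(-s\log n)\le n^{-s}\le n^{-1}$. Enlarging $C$ gives any desired polynomial $1/n^{\Omega(1)}$.

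The only genuinely nontrivial step is the edge-density lower bound for graphs of bounded independence number. Everything else is routine calculation, and the one thing to check is the condition $s\ge 2\alpha$ that makes the Turán bound simplify. When the lemma is applied to the directed sampled graphs $G_i$ of \cref{algo:karcconn}, the argument is run on the underlying undirected graph. Antiparallel arcs only add independent chances for a pair to stay adjacent, so the same bound holds.
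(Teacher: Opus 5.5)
Your proposal is correct and follows essentially the same route as the paper: it uses the Caro--Wei/Tur\'{a}n lower bound on the number of edges inside any large vertex set, bounds by $(1-\rho)^{m}$ the probability that all of them are dropped, and takes a union bound over vertex subsets. Fixing the size to exactly $s$ and keeping the quadratic bound $s^2/(4\alpha)$ is just a slightly tidier bookkeeping of the same computation than the paper's sum over all $\beta\ge\alpha_\Delta$.
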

\begin{proof}
    Let $B$ be a $\beta$-vertex subset of vertices in $G$. We claim that, if $\beta \ge \alpha_{\Delta} \coloneqq \alpha(1+\rho^{-1} \log n)$, then $B$ is not an independent set in $G(\rho)$ with an overwhelming probability.

Let $d_B$ denote the average degree of the induced subgraph $G[B]$. By Tur\'{a}n's Theorem (see the second formulation in \cite{TV07}), every $n$-vertex $m$-edge graph contains an independent set with at least $\frac{n}{1+2m/n}$ vertices. Together with $\ind(G)=\alpha$ gives an upper bound on $\ind(G[B])$:
\[
    \alpha \ge \frac{\beta}{1+d_{B}} \ge \frac{4 \alpha\rho^{-1} \log n + \alpha}{(1+d_B)}.
\]
Thus, $d_B \ge 4\rho^{-1} \log n$. Equivalently, the number $m_B$ of edges in $G[B]$ is at least $2 \beta\rho^{-1} \log n$. The probability that $G[B]$ forms an independent set in $G(\rho)$ is 
\[
    (1-\rho)^{m_B} \le \exp\left(-\rho\cdot \frac{2\beta \log n}{\rho}\right) = \frac{1}{n^{2\beta}}. 
\]
Taking an union bound over all $\beta$-node $B$ for $\beta \ge \alpha_\Delta$, the probability that $G(\rho)$ contains an independent set containing at least $\alpha_{\Delta}$ nodes is at most
\[
    \sum_{\beta = \alpha_\Delta}^n \binom{n}{\beta} \frac{1}{n^{2\beta}} \le \sum_{\beta = \alpha_\Delta}^n \frac{1}{n^{\beta}} \le \frac{2}{n^{\alpha_\Delta}} \le \frac{2}{n^{\alpha}}. 
\]
This completes the proof.
\end{proof}
\cref{lem:alphaincrease} also applies to digraphs by simply ignoring arc orientations. Note that for any two adjacent vertices in a digraph $G$, there can be up to two edges in the corresponding undirected graph $G'$. Consequently, in a stochastic sense, $\ind(G(\rho)) \le \ind(G'(\rho))$.

Piecing \cref{clm:kac,lem:alphaincrease} together, for $\rho=1/k$, \cref{algo:karcconn} runs {\OneConnCertAlgo} on $r'=O(k\log n)$ independent sample arc samples. With high probability, \cref{clm:kac} holds and the independence numbers of all $G_i$'s are bounded by $O(k\alpha\log n)$, so the overall space usage is
\[
    r'\cdot O(k\alpha\log n \cdot n^{1+1/p}) = O(k^2\alpha n^{1+1/p}\log n),
\]
and the certificate size is $r'\cdot O(k\alpha\log n\cdot n) = O(k^2\alpha n\log^2 n)$.


%% file: ref.bib
@inproceedings{AssadiS23,
  author       = {Sepehr Assadi and
                  Vihan Shah},
  title        = {Tight Bounds for Vertex Connectivity in Dynamic Streams},
  booktitle    = {6th Symposium on Simplicity in Algorithms {(SOSA)}},
  pages        = {213--227},
  publisher    = {{SIAM}},
  year         = {2023},
}

@inproceedings{SunW15,
  author       = {Xiaoming Sun and
                  David P. Woodruff},
  title        = {Tight Bounds for Graph Problems in Insertion Streams},
  booktitle    = {Approximation, Randomization, and Combinatorial Optimization. Algorithms
                  and Techniques {(APPROX/RANDOM)}},
  series       = {LIPIcs},
  volume       = {40},
  pages        = {435--448},
  publisher    = {Schloss Dagstuhl - Leibniz-Zentrum f{\"{u}}r Informatik},
  year         = {2015},
}

@inproceedings{ChangFHT20,
  author       = {Yi{-}Jun Chang and
                  Martin Farach{-}Colton and
                  Tsan{-}sheng Hsu and
                  Meng{-}Tsung Tsai},
  title        = {Streaming Complexity of Spanning Tree Computation},
  booktitle    = {37th International Symposium on Theoretical Aspects of Computer Science,
                  {(STACS)}},
  series       = {LIPIcs},
  volume       = {154},
  pages        = {34:1--34:19},
  publisher    = {Schloss Dagstuhl - Leibniz-Zentrum f{\"{u}}r Informatik},
  year         = {2020},
}

@article{CheriyanKT93,
  author       = {Joseph Cheriyan and
                  Ming{-}Yang Kao and
                  Ramakrishna Thurimella},
  title        = {Scan-First Search and Sparse Certificates: An Improved Parallel Algorithms for $k$-Vertex Connectivity},
  journal      = {{SIAM} J. Comput.},
  volume       = {22},
  number       = {1},
  pages        = {157--174},
  year         = {1993},
}

@article{EppsteinGIN97,
  author       = {David Eppstein and
                  Zvi Galil and
                  Giuseppe F. Italiano and
                  Amnon Nissenzweig},
  title        = {Sparsification - a technique for speeding up dynamic graph algorithms},
  journal      = {J. {ACM}},
  volume       = {44},
  number       = {5},
  pages        = {669--696},
  year         = {1997},
}

@inproceedings{GuhaMT15,
  author       = {Sudipto Guha and
                  Andrew McGregor and
                  David Tench},
  title        = {Vertex and Hyperedge Connectivity in Dynamic Graph Streams},
  booktitle    = {Proceedings of the 34th {ACM} Symposium on Principles of Database Systems {(PODS)}},
  pages        = {241--247},
  publisher    = {{ACM}},
  year         = {2015},
}

@book{TV07,
  author       = {Terence Tao and
                  Van H. Vu},
  title        = {Additive combinatorics},
  series       = {Cambridge studies in advanced mathematics},
  volume       = {105},
  publisher    = {Cambridge University Press},
  year         = {2007},
}

@book{Diestel12,
  author       = {Reinhard Diestel},
  title        = {Graph Theory, 4th Edition},
  series       = {Graduate texts in mathematics},
  volume       = {173},
  publisher    = {Springer},
  year         = {2012},
}

@inproceedings{CLT25,
  author       = {Ho-Lin Chen and Peng-Ting Lin and Meng-Tsung Tsai},
  title        = {Parameterized Streaming Algorithms for Topological Sorting},
  booktitle    = {Algorithms and Data Structures - 19th International Symposium {(WADS)}},
  series       = {LIPIcs},
  pages        = {18:1-18:20},
  publisher    = {Schloss Dagstuhl - Leibniz-Zentrum f{\"{u}}r Informatik},
  year         = {2025},
}

@article{GargJS23,
  author       = {Paritosh Garg and
                  Linus Jordan and
                  Ola Svensson},
  title        = {Semi-streaming algorithms for submodular matroid intersection},
  journal      = {Math. Program.},
  volume       = {197},
  number       = {2},
  pages        = {967--990},
  year         = {2023},
}

@inproceedings{CrouchS14,
  author       = {Michael S. Crouch and
                  Daniel M. Stubbs},
  title        = {Improved Streaming Algorithms for Weighted Matching, via Unweighted
                  Matching},
  booktitle    = {Approximation, Randomization, and Combinatorial Optimization. Algorithms
                  and Techniques {(APPROX/RANDOM)}},
  series       = {LIPIcs},
  volume       = {28},
  pages        = {96--104},
  publisher    = {Schloss Dagstuhl - Leibniz-Zentrum f{\"{u}}r Informatik},
  year         = {2014},
}

@inproceedings{CKP+21,
  author       = {Lijie Chen and
                  Gillat Kol and
                  Dmitry Paramonov and
                  Raghuvansh R. Saxena and
                  Zhao Song and
                  Huacheng Yu},
  title        = {Almost optimal super-constant-pass streaming lower bounds for reachability},
  booktitle    = {53rd Annual {ACM} {SIGACT} Symposium on Theory of Computing {(STOC)} },
  pages        = {570--583},
  publisher    = {{ACM}},
  year         = {2021},
}

@incollection{Edmonds73,
  author    = {J. Edmonds},
  title     = {Edge-disjoint Branchings},
  booktitle = {Combinatorial Algorithms},
  pages     = {91--96},
  publisher = {Academic Press},
  address   = {New York},
  year      = {1973}
}

@article{Menger27,
  author  = {K. Menger},
  title   = {Zur allgemeinen Kurventheorie},
  journal = {Fund. Math.},
  volume  = {10},
  pages   = {96--115},
  year    = {1927},
  language = {German}
}

@inproceedings{AhnGM12,
  author       = {Kook Jin Ahn and
                  Sudipto Guha and
                  Andrew McGregor},
  title        = {Analyzing graph structure via linear measurements},
  booktitle    = {Proceedings of the Twenty-Third Annual {ACM-SIAM} Symposium on Discrete
                  Algorithms {(SODA)}},
  pages        = {459--467},
  publisher    = {{SIAM}},
  year         = {2012},
}

@article{McGregor14,
  author    = {Andrew McGregor},
  title     = {Graph stream algorithms: a survey},
  journal   = {{SIGMOD} Rec.},
  volume    = {43},
  number    = {1},
  pages     = {9--20},
  year      = {2014},
}

@inproceedings{ABB+19,
  author       = {Akanksha Agrawal and
                  Arindam Biswas and
                  {\'{E}}douard Bonnet and
                  Nick Brettell and
                  Radu Curticapean and
                  D{\'{a}}niel Marx and
                  Tillmann Miltzow and
                  Venkatesh Raman and
                  Saket Saurabh},
  title        = {Parameterized Streaming Algorithms for {M}in-{O}nes d-{SAT}},
  booktitle    = {39th Annual Conference on Foundations of Software Technology
                  and Theoretical Computer Science {(FSTTCS)}},
  series       = {LIPIcs},
  volume       = {150},
  pages        = {8:1--8:20},
  publisher    = {Schloss Dagstuhl - Leibniz-Zentrum f{\"{u}}r Informatik},
  year         = {2019},
}

@inproceedings{CGM+20,
  author       = {Amit Chakrabarti and
                  Prantar Ghosh and
                  Andrew McGregor and
                  Sofya Vorotnikova},
  title        = {Vertex Ordering Problems in Directed Graph Streams},
  booktitle    = {Proceedings of the 2020 {ACM-SIAM} Symposium on Discrete Algorithms {(SODA)}},
  pages        = {1786--1802},
  publisher    = {{SIAM}},
  year         = {2020},
}

@inproceedings{BJW22,
  author       = {Anubhav Baweja and
                  Justin Jia and
                  David P. Woodruff},
  title        = {An Efficient Semi-Streaming {PTAS} for Tournament Feedback Arc Set
                  with Few Passes},
  booktitle    = {13th Innovations in Theoretical Computer Science Conference {(ITCS)}},
  series       = {LIPIcs},
  volume       = {215},
  pages        = {16:1--16:23},
  publisher    = {Schloss Dagstuhl - Leibniz-Zentrum f{\"{u}}r Informatik},
  year         = {2022},
}

@article{FJ81,
  author       = {Greg N. Frederickson and
                  Joseph F. J{\'{a}}J{\'{a}}},
  title        = {Approximation Algorithms for Several Graph Augmentation Problems},
  journal      = {{SIAM} J. Comput.},
  volume       = {10},
  number       = {2},
  pages        = {270--283},
  year         = {1981},
}

@inproceedings{GK24,
  author       = {Prantar Ghosh and
                  Sahil Kuchlous},
  title        = {New Algorithms and Lower Bounds for Streaming Tournaments},
  booktitle    = {32nd Annual European Symposium on Algorithms {(ESA)}},
  series       = {LIPIcs},
  volume       = {308},
  pages        = {60:1--60:19},
  publisher    = {Schloss Dagstuhl - Leibniz-Zentrum f{\"{u}}r Informatik},
  year         = {2024},
}

@article{DBLP:journals/dmgt/BroersmaL97,
  author       = {Hajo Broersma and
                  Xueliang Li},
  title        = {Spanning trees with many or few colors in edge-colored graphs},
  journal      = {Discuss. Math. Graph Theory},
  volume       = {17},
  number       = {2},
  pages        = {259--269},
  year         = {1997},
  url          = {https://doi.org/10.7151/dmgt.1053},
  doi          = {10.7151/DMGT.1053},
  bibsource    = {dblp computer science bibliography, https://dblp.org}
}

@article{Muthu05,
 author = {Muthukrishnan, S.},
 title = {Data Streams: Algorithms and Applications},
 journal = {Found. Trends Theor. Comput. Sci.},
 issue_date = {August 2005},
 volume = {1},
 number = {2},
 month = aug,
 year = {2005},
 issn = {1551-305X},
 pages = {117--236},
 numpages = {120},
 publisher = {Now Publishers Inc.},
 address = {Hanover, MA, USA},
}

@book{garey2002computers,
  title={Computers and intractability},
  author={Garey, Michael R. and Johnson, David S.},
  volume={29},
  year={2002},
  publisher={W. H. Freeman and Company}
}

@inproceedings{vetta2001approximating,
  author       = {Adrian Vetta},
  editor       = {S. Rao Kosaraju},
  title        = {Approximating the minimum strongly connected subgraph via a matching
                  lower bound},
  booktitle    = {Proceedings of the Twelfth Annual Symposium on Discrete Algorithms {(SODA)}},
  pages        = {417--426},
  publisher    = {{ACM/SIAM}},
  year         = {2001},
}

@inproceedings{bacrach2019hardness,
  title={Hardness of distributed optimization},
  author={Bacrach, Nir and Censor-Hillel, Keren and Dory, Michal and Efron, Yuval and Leitersdorf, Dean and Paz, Ami},
  booktitle={Proceedings of the 2019 ACM Symposium on Principles of Distributed Computing {(PODC)}},
  pages={238--247},
  year={2019}
}

@InProceedings{Ter25,
  author =	{Terao, Tatsuya},
  title =	{{Deterministic $(2/3 - \epsilon)$-Approximation of Matroid Intersection Using Nearly-Linear Independence-Oracle Queries}},
  booktitle =	{19th International Symposium on Algorithms and Data Structures {(WADS)}},
  pages =	{50:1--50:18},
  series =	{Leibniz International Proceedings in Informatics (LIPIcs)},
  year =	{2025},
  volume =	{349},
  publisher =	{Schloss Dagstuhl -- Leibniz-Zentrum f{\"u}r Informatik},
}

@article{GO16,
  title = {Superlinear Lower Bounds for Multipass Graph Processing},
  volume = {76},
  ISSN = {1432-0541},
  url = {http://dx.doi.org/10.1007/s00453-016-0138-7},
  DOI = {10.1007/s00453-016-0138-7},
  number = {3},
  journal = {Algorithmica},
  publisher = {Springer Science and Business Media LLC},
  author = {Guruswami,  Venkatesan and Onak,  Krzysztof},
  year = {2016},
  month = mar,
  pages = {654–683}
}

@inproceedings{BJKS02,
author = {{Bar-Yossef}, Ziv and Jayram, T. S. and Kumar, Ravi and Sivakumar, D.},
title = {An Information Statistics Approach to Data Stream and Communication Complexity},
year = {2002},
isbn = {0769518222},
publisher = {IEEE Computer Society},
booktitle = {Proceedings of the 43rd Symposium on Foundations of Computer Science {(FOCS)}},
pages = {209–218},
numpages = {10},
}

@inproceedings{jayram2003two,
  title={Two applications of information complexity},
  author={Jayram, Thathachar S and Kumar, Ravi and Sivakumar, D},
  booktitle={Proceedings of the thirty-fifth annual ACM symposium on Theory of computing},
  pages={673--682},
  year={2003}
}

@inbook{BOV13,
  title = {How Hard Is Counting Triangles in the Streaming Model?},
  ISBN = {9783642392061},
  ISSN = {1611-3349},
  url = {http://dx.doi.org/10.1007/978-3-642-39206-1_21},
  DOI = {10.1007/978-3-642-39206-1_21},
  booktitle = {Automata,  Languages,  and Programming},
  publisher = {Springer Berlin Heidelberg},
  author = {Braverman,  Vladimir and Ostrovsky,  Rafail and Vilenchik,  Dan},
  year = {2013},
  pages = {244–254}
}

@book{CT05,
  title = {Elements of Information Theory},
  ISBN = {9780471748823},
  url = {http://dx.doi.org/10.1002/047174882X},
  DOI = {10.1002/047174882x},
  publisher = {Wiley},
  author = {Cover,  Thomas M. and Thomas,  Joy A.},
  year = {2005},
  month = apr 
}

@article{Wei15,
author = {Weinstein, O.},
title = {Information Complexity and the Quest for Interactive Compression},
year = {2015},
issue_date = {June 2015},
publisher = {Association for Computing Machinery},
address = {New York, NY, USA},
volume = {46},
number = {2},
issn = {0163-5700},
url = {https://doi.org/10.1145/2789149.2789161},
doi = {10.1145/2789149.2789161},
journal = {SIGACT News},
month = jun,
pages = {41–64},
numpages = {24}
}

@book{KN96,
  title = {Communication Complexity},
  ISBN = {9780511574948},
  url = {http://dx.doi.org/10.1017/CBO9780511574948},
  DOI = {10.1017/cbo9780511574948},
  publisher = {Cambridge University Press},
  author = {Kushilevitz,  Eyal and Nisan,  Noam},
  year = {1996},
  month = dec 
}

@article{Rou16,
author = {Roughgarden, Tim},
title = {Communication Complexity (for Algorithm Designers)},
year = {2016},
issue_date = {May 2016},
publisher = {Now Publishers Inc.},
address = {Hanover, MA, USA},
volume = {11},
number = {3–4},
issn = {1551-305X},
url = {https://doi.org/10.1561/0400000076},
doi = {10.1561/0400000076},
journal = {Found. Trends Theor. Comput. Sci.},
month = may,
pages = {217–404},
numpages = {188}
}

@article{New91,
  title = {Private vs. common random bits in communication complexity},
  volume = {39},
  ISSN = {0020-0190},
  url = {http://dx.doi.org/10.1016/0020-0190(91)90157-D},
  DOI = {10.1016/0020-0190(91)90157-d},
  number = {2},
  journal = {Information Processing Letters},
  publisher = {Elsevier BV},
  author = {Newman,  Ilan},
  year = {1991},
  month = jul,
  pages = {67–71}
}

@article{italiano2012finding,
  title={Finding strong bridges and strong articulation points in linear time},
  author={Italiano, Giuseppe F and Laura, Luigi and Santaroni, Federico},
  journal={Theoretical Computer Science},
  volume={447},
  pages={74--84},
  year={2012},
  publisher={Elsevier}
}

@article{jones1976new,
  title={New problems complete for nondeterministic log space},
  author={Jones, Neil D and Lien, Y Edmund and Laaser, William T},
  journal={Mathematical systems theory},
  volume={10},
  number={1},
  pages={1--17},
  year={1976},
  publisher={Springer}
}

@book{ford1956maximal,
  author    = {Lester R. Ford and Delbert R. Fulkerson},
  title     = {Maximal Flow through a Network},
  series    = {Canadian Journal of Mathematics},
  volume    = {8},
  pages     = {399--404},
  year      = {1956},
  publisher = {Cambridge University Press}
}

@article{even1975network,
  author    = {Shimon Even and Robert Endre Tarjan},
  title     = {Network Flow and Testing Graph Connectivity},
  journal   = {SIAM Journal on Computing},
  volume    = {4},
  number    = {4},
  pages     = {507--518},
  year      = {1975}
}

@article{lipton1979separator,
  title={A separator theorem for planar graphs},
  author={Lipton, Richard J and Tarjan, Robert Endre},
  journal={SIAM Journal on Applied Mathematics},
  volume={36},
  number={2},
  pages={177--189},
  year={1979},
  publisher={SIAM}
}

@inproceedings{leighton1988approximate,
  author    = {F. Thomson Leighton and Satish Rao},
  title     = {An Approximate Max-Flow Min-Cut Theorem for Uniform Multicommodity Flow Problems with Applications to Approximation Algorithms},
  booktitle = {Proceedings of the 29th Annual Symposium on Foundations of Computer Science (FOCS)},
  pages     = {422--431},
  year      = {1988}
}

@article{OvL24,
  title = {Parameterized Complexity of Streaming Diameter and Connectivity Problems},
  volume = {86},
  ISSN = {1432-0541},
  url = {http://dx.doi.org/10.1007/s00453-024-01246-z},
  DOI = {10.1007/s00453-024-01246-z},
  number = {9},
  journal = {Algorithmica},
  publisher = {Springer Science and Business Media LLC},
  author = {Oostveen,  Jelle J. and van Leeuwen,  Erik Jan},
  year = {2024},
  month = jun,
  pages = {2885–2928}
}

@inproceedings{AY19,
author = {Alman, Josh and Yu, Huacheng},
title = {Faster update time for turnstile streaming algorithms},
year = {2020},
publisher = {Society for Industrial and Applied Mathematics},
address = {USA},
booktitle = {Proceedings of the Thirty-First Annual ACM-SIAM Symposium on Discrete Algorithms},
pages = {1803–1813},
numpages = {11},
location = {Salt Lake City, Utah},
series = {SODA '20}
}

@article{CFR10,
author = {Coppersmith, Don and Fleischer, Lisa K. and Rurda, Atri},
title = {Ordering by weighted number of wins gives a good ranking for weighted tournaments},
year = {2010},
issue_date = {June 2010},
publisher = {Association for Computing Machinery},
address = {New York, NY, USA},
volume = {6},
number = {3},
issn = {1549-6325},
url = {https://doi.org/10.1145/1798596.1798608},
journal = {ACM Trans. Algorithms},
month = jul,
articleno = {55},
numpages = {13}
}

@inproceedings{nickelsen2002reachability,
  title={On reachability in graphs with bounded independence number},
  author={Nickelsen, Arfst and Tantau, Till},
  booktitle={International Computing and Combinatorics Conference},
  pages={554--563},
  year={2002},
  organization={Springer}
}

@article{fradkin2015edge,
  title={Edge-disjoint paths in digraphs with bounded independence number},
  author={Fradkin, Alexandra and Seymour, Paul},
  journal={Journal of Combinatorial Theory, Series B},
  volume={110},
  pages={19--46},
  year={2015},
  publisher={Elsevier}
}

@inproceedings{Chitnis2014,
  title = {Parameterized Streaming: Maximal Matching and Vertex Cover},
  url = {http://dx.doi.org/10.1137/1.9781611973730.82},
  DOI = {10.1137/1.9781611973730.82},
  booktitle = {Proceedings of the Twenty-Sixth Annual ACM-SIAM Symposium on Discrete Algorithms},
  publisher = {Society for Industrial and Applied Mathematics},
  author = {Chitnis,  Rajesh and Cormode,  Graham and Hajiaghayi,  Mohammad Taghi and Monemizadeh,  Morteza},
  year = {2014},
  month = dec 
}

@inproceedings{CC19,
  author       = {Rajesh Chitnis and
                  Graham Cormode},
  editor       = {Bart M. P. Jansen and
                  Jan Arne Telle},
  title        = {Towards a Theory of Parameterized Streaming Algorithms},
  booktitle    = {14th International Symposium on Parameterized and Exact Computation,
                  {IPEC} 2019, September 11-13, 2019, Munich, Germany},
  series       = {LIPIcs},
  volume       = {148},
  pages        = {7:1--7:15},
  publisher    = {Schloss Dagstuhl - Leibniz-Zentrum f{\"{u}}r Informatik},
  year         = {2019},
  url          = {https://doi.org/10.4230/LIPIcs.IPEC.2019.7},
  doi          = {10.4230/LIPICS.IPEC.2019.7},
  bibsource    = {dblp computer science bibliography, https://dblp.org}
}

@inbook{Fafianie2014,
  title = {Streaming Kernelization},
  ISBN = {9783662444658},
  ISSN = {1611-3349},
  url = {http://dx.doi.org/10.1007/978-3-662-44465-8_24},
  DOI = {10.1007/978-3-662-44465-8_24},
  booktitle = {Mathematical Foundations of Computer Science 2014},
  publisher = {Springer Berlin Heidelberg},
  author = {Fafianie,  Stefan and Kratsch,  Stefan},
  year = {2014},
  pages = {275–286}
}

@inproceedings{CCE+16,
author = {Chitnis, Rajesh and Cormode, Graham and Esfandiari, Hossein and Hajiaghayi, MohammadTaghi and McGregor, Andrew and Monemizadeh, Morteza and Vorotnikova, Sofya},
title = {Kernelization via sampling with applications to finding matchings and related problems in dynamic graph streams},
year = {2016},
isbn = {9781611974331},
publisher = {Society for Industrial and Applied Mathematics},
address = {USA},
booktitle = {Proceedings of the Twenty-Seventh Annual ACM-SIAM Symposium on Discrete Algorithms},
pages = {1326–1344},
numpages = {19},
location = {Arlington, Virginia},
series = {SODA '16}
}

@article{lokshtanov2018known,
  title={Known algorithms on graphs of bounded treewidth are probably optimal},
  author={Lokshtanov, Daniel and Marx, D{\'a}niel and Saurabh, Saket},
  journal={ACM Transactions on Algorithms (TALG)},
  volume={14},
  number={2},
  pages={1--30},
  year={2018},
  publisher={ACM New York, NY, USA}
}

@inproceedings{BKS02,
  title={Reductions in streaming algorithms, with an application to counting triangles in graphs},
  author={{Bar-Yossef}, Ziv and Kumar, Ravi and Sivakumar, D},
  booktitle={SODA},
  volume={2},
  pages={623--632},
  year={2002}
}

@InProceedings{GS24,
  author =	{Ghosh, Prantar and Shah, Vihan},
  title =	{{New Lower Bounds in Merlin-Arthur Communication and Graph Streaming Verification}},
  booktitle =	{15th Innovations in Theoretical Computer Science Conference (ITCS 2024)},
  pages =	{53:1--53:22},
  series =	{Leibniz International Proceedings in Informatics (LIPIcs)},
  ISBN =	{978-3-95977-309-6},
  ISSN =	{1868-8969},
  year =	{2024},
  volume =	{287},
  editor =	{Guruswami, Venkatesan},
  publisher =	{Schloss Dagstuhl -- Leibniz-Zentrum f{\"u}r Informatik},
  address =	{Dagstuhl, Germany},
  URL =		{https://drops.dagstuhl.de/entities/document/10.4230/LIPIcs.ITCS.2024.53},
  URN =		{urn:nbn:de:0030-drops-195815},
  doi =		{10.4230/LIPIcs.ITCS.2024.53}
}

@article{GM60,
  author    = {T. Gallai and A. N. Milgram},
  title     = {Verallgemeinerung eines graphentheoretischen Satzes von R\'edei},
  journal   = {Acta Sci.\ Math.},
  volume    = {21},
  pages     = {181--186},
  year      = {1960}
}

@inproceedings{parter_planar,
  title={Distributed planar reachability in nearly optimal time},
  author={Parter, Merav},
  booktitle={34th International Symposium on Distributed Computing (DISC 2020)},
  pages={38--1},
  year={2020},
  organization={Schloss Dagstuhl--Leibniz-Zentrum f{\"u}r Informatik}
}

@inproceedings{schudy2008,
  title={Finding strongly connected components in parallel using O (log2 n) reachability queries},
  author={Schudy, Warren},
  booktitle={Proceedings of the twentieth annual symposium on Parallelism in algorithms and architectures},
  pages={146--151},
  year={2008}
}

@inproceedings{ghaffari2015,
  title={Brief announcement: Distributed single-source reachability},
  author={Ghaffari, Mohsen and Udwani, Rajan},
  booktitle={Proceedings of the 2015 ACM Symposium on Principles of Distributed Computing},
  pages={163--165},
  year={2015}
}

@inproceedings{jambulapati2019,
  title={Parallel reachability in almost linear work and square root depth},
  author={Jambulapati, Arun and Liu, Yang P and Sidford, Aaron},
  booktitle={2019 IEEE 60th Annual Symposium on Foundations of Computer Science (FOCS)},
  pages={1664--1686},
  year={2019},
  organization={IEEE}
}

@inproceedings{forster2018faster,
  title={A faster distributed single-source shortest paths algorithm},
  author={Forster, Sebastian and Nanongkai, Danupon},
  booktitle={2018 IEEE 59th Annual Symposium on Foundations of Computer Science (FOCS)},
  pages={686--697},
  year={2018},
  organization={IEEE}
}

@inproceedings{cao2021brief,
  title={Brief announcement: An improved distributed approximate single source shortest paths algorithm},
  author={Cao, Nairen and Fineman, Jeremy T and Russell, Katina},
  booktitle={Proceedings of the 2021 ACM Symposium on Principles of Distributed Computing},
  pages={493--496},
  year={2021}
}

@article{MP80,
  title = {Selection and sorting with limited storage},
  volume = {12},
  ISSN = {0304-3975},
  url = {http://dx.doi.org/10.1016/0304-3975(80)90061-4},
  DOI = {10.1016/0304-3975(80)90061-4},
  number = {3},
  journal = {Theoretical Computer Science},
  publisher = {Elsevier BV},
  author = {Munro,  J.I. and Paterson,  M.S.},
  year = {1980},
  month = nov,
  pages = {315–323}
}

@book{peleg2000distributed,
  title={Distributed computing: a locality-sensitive approach},
  author={Peleg, David},
  year={2000},
  publisher={SIAM}
}

@article{NI92,
  title={A linear-time algorithm for finding a sparse k-connected spanning subgraph of ak-connected graph},
  author={Nagamochi, Hiroshi and Ibaraki, Toshihide},
  journal={Algorithmica},
  volume={7},
  number={1},
  pages={583--596},
  year={1992},
  publisher={Springer}
}

@article{whitty1987,
  title={Vertex-disjoint paths and edge-disjoint branchings in directed graphs},
  author={Whitty, Robin W},
  journal={Journal of Graph Theory},
  volume={11},
  number={3},
  pages={349--358},
  year={1987},
  publisher={Wiley Online Library}
}

@article{huck1995,
  title={Disproof of a conjecture about independent branchings in k-connected directed graphs},
  author={Huck, Andreas},
  journal={Journal of Graph Theory},
  volume={20},
  number={2},
  pages={235--239},
  year={1995},
  publisher={Wiley Online Library}
}

@article{wang2003,
  title={Distributed algorithms for finding the unique minimum distance dominating set in directed split-stars},
  author={Wang, Fu-Hsing and Chang, Jou-Ming and Wang, Yue-Li and Huang, Sun-Jen},
  journal={Journal of Parallel and Distributed Computing},
  volume={63},
  number={4},
  pages={481--487},
  year={2003},
  publisher={Elsevier}
}

@article{joeshi2021,
  title = {A study on distance $k$-domination in digraphs},
  volume = {1770},
  ISSN = {1742-6596},
  url = {http://dx.doi.org/10.1088/1742-6596/1770/1/012073},
  DOI = {10.1088/1742-6596/1770/1/012073},
  number = {1},
  journal = {Journal of Physics: Conference Series},
  publisher = {IOP Publishing},
  author = {Joeshi,  Annie Vetha V. and A. Anto,  Kinsley},
  year = {2021},
  month = mar,
  pages = {012073}
}

@article{meir1975,
  title={Relations between packing and covering numbers of a tree},
  author={Meir, Amram and Moon, John},
  journal={Pacific Journal of Mathematics},
  volume={61},
  number={1},
  pages={225--233},
  year={1975},
  publisher={Mathematical Sciences Publishers}
}
